\theoremstyle{plain}
\newtheorem{theorem}{Theorem}[section]
\newtheorem{lemma}[theorem]{Lemma}
\newtheorem{corollary}[theorem]{Corollary}
\theoremstyle{definition}
\newtheorem{definition}[theorem]{Definition}
\newtheorem{remark}[theorem]{Remark}
\newcommand{\ignore}[1]{}
\newcommand{\goesto}[1]{\stackrel {#1} \longrightarrow}
\newcommand{\comesfrom}[1]{\stackrel {#1} \longleftarrow}
\newcommand{\symb}{\sigma}
\newcommand{\trans}[3]{{#1} \goesto {#2} {#3}} % {(#1,#2,#3)\in\delta}
\newcommand{\prefix}[2]{#1[0..#2]}
\newcommand{\suffix}[2]{#1[#2..]}
\newcommand{\A}{\mathcal A}
\newcommand{\B}{\mathcal B}
\newcommand{\C}{\mathcal C}
\newcommand{\lang}[1]{\mathcal{L}(#1)}
\newcommand{\xsim}[1]{\sqsubseteq^{#1}}
\newcommand{\xsimrev}[1]{\sqsupseteq^{#1}}
\newcommand{\disim}{\xsim{\mathsf{di}}}
\newcommand{\disimrev}{\xsimrev{\mathsf{di}}}
\newcommand{\desim}{\xsim{\mathsf{de}}}
\newcommand{\fsim}{\xsim{\mathsf f}}
\newcommand{\fsimrev}{\xsimrev{\mathsf f}}
\newcommand{\bwdisim}{\xsim{\mathsf{bw\textrm{-}di}}}
\newcommand{\bwdisimrev}{\xsimrev{\mathsf{bw\textrm{-}di}}}
\newcommand{\medsim}{\xsim{\mathsf{m}}}
\newcommand{\medsimrev}{\xsimrev{\mathsf{m}}}
\newcommand{\xincl}[1]{\subseteq^{#1}}
\newcommand{\xinclrev}[1]{\supseteq^{#1}}
\newcommand{\xstrictincl}[1]{\subset^{#1}}
\newcommand{\directtraceinclusion}{\xincl{\mathsf{di}}}
\newcommand{\directtraceinclusionrev}{\xinclrev{\mathsf{di}}}
\newcommand{\delayedtraceinclusion}{\xincl{\mathsf{de}}}
\newcommand{\delayedtraceinclusionrev}{\xinclrev{\mathsf{de}}}
\newcommand{\fairtraceinclusion}{\xincl{\mathsf f}}
\newcommand{\fairtraceinclusionrev}{\xinclrev{\mathsf f}}
\newcommand{\strictfairtraceinclusion}{\xstrictincl{\mathsf f}}
\newcommand{\bwdirecttraceinclusion}{\xincl{\mathsf{bw\textrm{-}di}}}
\newcommand{\bwdirecttraceinclusionrev}{\xinclrev{\mathsf{bw\textrm{-}di}}}
\newcommand{\accblindbwdirecttraceinclusion}{\bwfinincl}
\newcommand{\countingbwtraceinclusion}{\xincl{\mathsf{bw}\textrm{-}\mathsf{c}}}
\newcommand{\fixedwordsimulation}[1]{\xsim{\mathsf{fx\textrm-}#1}}
\newcommand{\directfixedwordsimulation}{\xsim{\mathsf{fx\textrm-di}}}
\newcommand{\delayedfixedwordsimulation}{\xsim{\mathsf{fx\textrm-de}}}
\newcommand{\delayedfixedwordsimulationrev}{\xsimrev{\mathsf{fx\textrm-de}}}
\newcommand{\fairfixedwordsimulation}{\xsim{\mathsf{fx\textrm-f}}}
\newcommand{\delayedwordsimulation}[1]{\xsim{\mathsf{fx\textrm-de}}_{#1}}
\newcommand{\languageinclusion}{\xincl {}}%{\subseteq^L}
\newcommand{\languageequivalence}{\approx}%{\subseteq^L}
\newcommand{\strictdirecttraceinclusion}{\xstrictincl{\mathsf{di}}}
\newcommand{\strictbwdirecttraceinclusion}{\xstrictincl{\mathsf{bw\textrm{-}di}}}
\newcommand{\strictlanguageinclusion}{\strictfairtraceinclusion}%{\subset^L}
\newcommand{\fwsim}{\xsim{\mathsf{fw}}}
\newcommand{\bwsim}{\xsim{\mathsf{bw}}}
\newcommand{\finincl}{\xincl{\mathsf{fw}}}
\newcommand{\bwfinincl}{\xincl{\mathsf{bw}}}
\newcommand{\strictfinincl}{\xstrictincl{\mathsf{fw}}}
\newcommand{\strictbwfinincl}{\xstrictincl{\mathsf{bw}}}
\newcommand{\fininclrev}{\xinclrev{\mathsf{fw}}}
\newcommand{\bwfininclrev}{\xinclrev{\mathsf{bw}}}
\newcommand{\brel}{\mathrel{R_b}}
\newcommand{\frel}{\mathrel{R_{\!f}}}
\newcommand{\prune}[2]{{\it Prune}({#1},{#2})}
\newcommand{\prunerel}{P}
\newcommand{\makeprunerel}[2]{\mathrel{\prunerel({#1},{#2})}}
\newcommand{\makeprunerelAB}[2]{\mathrel{\prunerel_{\A, \B}({#1},{#2})}}
\newcommand{\makeprunereltransient}[2]{\mathrel{\prunerel_t({#1},{#2})}}
\newcommand{\xprune}[3]{{\it Prune}({#1},{#2},{#3})}
\newcommand{\saturationrel}{S}
\newcommand{\makesaturationrel}[2]{\mathrel{\saturationrel({#1},{#2})}}
\newcommand{\saturate}[2]{{\it Sat}({#1},{#2})}
\newcommand{\id}{{\it id}} % Identity relation
\newcommand{\set}[1]{\{#1\}}
\newcommand{\st}{\; |\; }
\newcommand{\setof}[2]{\set{#1 \st #2}}
\newcommand{\strictxsim}[1]{\sqsubset^{#1}}
\newcommand{\strictxsimrev}[1]{\sqsupset^{#1}}
\newcommand{\strictdisim}{\strictxsim{\mathsf{di}}}
\newcommand{\strictdesim}{\strictxsim{\mathsf{de}}}
\newcommand{\strictdesimrev}{\strictxsimrev{\mathsf{de}}}
\newcommand{\strictfsim}{\strictxsim{\mathsf f}}
\newcommand{\strictbwsim}{\strictxsim{\mathsf{bw}}}
\newcommand{\strictbwdisim}{\strictxsim{\mathsf{bw\textrm-di}}}
\newcommand{\ksim}[1]{\sqsubseteq^{#1}}
\newcommand{\kxsim}[2]{\ksim{#1\textrm - #2}}
\newcommand{\kbwdisim}{\kxsim k {\mathsf{bw\textrm-di}}}
\newcommand{\ksimeq}[1]{\approx^{#1}}
\newcommand{\transksim}[1]{\preceq^{#1}}
\newcommand{\transksimrev}[1]{\succeq^{#1}}
\newcommand{\transksimx}{\preceq^{k\textrm{-}x}}
\newcommand{\transkbwsim}{\preceq^{k\textrm{-}\mathsf{bw}}}
\newcommand{\transkbwsimrev}{\succeq^{k\textrm{-}\mathsf{bw}}}
\newcommand{\transkbwdisim}{\preceq^{k\textrm{-}\mathsf{bw\textrm{-}di}}}
\newcommand{\transkbwdisimrev}{\succeq^{k\textrm{-}\mathsf{bw\textrm{-}di}}}
\newcommand{\accblindkbwsim}{\sqsubseteq^{k\textrm{-}\mathsf{bw}}}
\newcommand{\accblindtranskbwsim}{\preceq^{k\textrm{-}\mathsf{bw}}}
\newcommand{\countingkbwsim}{\sqsubseteq^{k\textrm{-}\mathsf{bw}\textrm{-}\mathsf{c}}}
\newcommand{\countingtranskbwsim}{\preceq^{k\textrm{-}\mathsf{bw}\textrm{-}\mathsf{c}}}
\newcommand{\transkdisim}{\preceq^{k\textrm{-}\mathsf{di}}}
\newcommand{\transkdisimrev}{\succeq^{k\textrm{-}\mathsf{di}}}
\newcommand{\transkdesim}{\preceq^{k\textrm{-}\mathsf{de}}}
\newcommand{\transkdesimrev}{\succeq^{k\textrm{-}\mathsf{de}}}
\newcommand{\transkfsim}{\preceq^{k\textrm{-}\mathsf f}}
\newcommand{\transkdisimnumber}[1]{\preceq^{{#1}\textrm{-}\mathsf{di}}}
\newcommand{\transkdesimnumber}[1]{\preceq^{{#1}\textrm{-}\mathsf{de}}}
\newcommand{\stricttransksimx}{\prec^{k\textrm{-}x}}
\newcommand{\stricttranskbwsim}{\prec^{k\textrm{-}\mathsf{bw}}}
\newcommand{\stricttranskbwdisim}{\prec^{k\textrm{-}\mathsf{bw\textrm{-}di}}}
\newcommand{\stricttranskdisim}{\prec^{k\textrm{-}\mathsf{di}}}
\newcommand{\stricttranskfsim}{\prec^{k\textrm{-}\mathsf f}}
\newcommand{\cprex}[2]{\mathsf{CPre}^{#1}(#2)}
\newcommand{\cpredi}[1]{\cprex {\mathrm{di}} {#1}}
\newcommand{\cprebwdi}[1]{\cprex {\mathrm{bw\textrm{-}di}} {#1}}
\newcommand{\cprelong}[3]{\mathsf{CPre}(#1, #2, #3)}
\newcommand{\cpreone}[2]{\mathsf{CPre}^1(#1, #2)}
\newcommand{\cpretwo}[2]{\mathsf{CPre}^2(#1, #2)}
\newcommand{\tickOK}{\checkmark}
\newcommand{\tickNO}{\times}
\newcommand{\NA}{-}
\newcommand{\defeq}{\;\coloneqq\;}
\tikzset{
%\ignore{
	smallstate/.style={state,
		% The shape: 
		circle, 
		% The size: 
		minimum size=4mm, 
		% The border: 
		%very thick, 
		%draw=blue!30!black!70,
		% The filling:
		%top color=blue!30!white!30,
		%bottom color=blue!50!black!50,
		% Font 
		font=\itshape
	},
	%existential/.style={state, circle},
	%smallexistential/.style={existential, minimum size = -1mm, radius = -1mm},
	%universal/.style={state, rectangle},
	initial text=,
	>=stealth',
	>/.style={ultra thick}
}
\title[Efficient Reduction of Automata and Language Inclusion]
{Efficient reduction of nondeterministic automata\\ with application to
  language inclusion testing
}
\author[L.~Clemente]{Lorenzo Clemente\rsuper a}
\address{{\lsuper a}University of Warsaw, 
     Faculty of Mathematics, Informatics and Mechanics,
     Banacha 2,
     02-097 Warszawa,
     Poland
 }
\author[R.~Mayr]{Richard Mayr\rsuper b}
\address{{\lsuper b}University of Edinburgh, School of Informatics, LFCS,
10 Crichton Street, Edinburgh EH89AB, UK}
\subjclass{F.1.1; D.2.4}
\keywords{Automata reduction; Inclusion testing; Simulation}
\begin{document}
%%%%%%%%%%%%%%%%%%%%%%%%%%%%%%%%%%%%%%%%%%%%%%%%%%%%%%%%%%%%%%%%%%%%%%%%%%%%%%%

\newcommand\blfootnote[1]{%
  \begingroup
  \renewcommand\thefootnote{}\footnote{#1}%
  \addtocounter{footnote}{-1}%
  \endgroup
}
\blfootnote{Extended version of results presented at POPL 2013 \cite{CM:POPL2013}.}

\begin{abstract}
\noindent
We present efficient algorithms to reduce the size of nondeterministic 
B\"uchi word automata (NBA) and nondeterministic finite word automata (NFA),
while retaining their languages.
Additionally, we describe methods to solve PSPACE-complete automata problems
like language universality, equivalence, and inclusion for much larger instances 
than was previously possible ($\ge 1000$ states instead of 10-100). 
This can be used to scale up applications of automata in formal verification tools
and decision procedures for logical theories.

The algorithms are based on new techniques for removing transitions (pruning)
and adding transitions (saturation), as well as extensions of classic
quotienting of the state space. 
These techniques use criteria based on combinations of backward and forward trace
inclusions and simulation relations.
Since trace inclusion relations are themselves PSPACE-complete,
we introduce \emph{lookahead simulations} as good polynomial time computable approximations thereof.

Extensive experiments show that the average-case time complexity of our algorithms
scales slightly above quadratically. (The space complexity is worst-case
quadratic.)
The size reduction of the automata depends very much on
the class of instances, but our algorithm consistently reduces the size far
more than all previous techniques.
We tested our algorithms on NBA derived from
LTL-formulae, NBA derived from mutual exclusion protocols and
many classes of random NBA and NFA,
and compared their performance to the well-known automata tool GOAL \cite{GOAL_survey_paper}.
\end{abstract}

\maketitle

\section{Introduction}\label{sec:introduction}

\noindent
Nondeterministic B\"uchi automata (NBA) are a fundamental data structure to represent and manipulate $\omega$-regular languages \cite{Handbook_B}.
They appear in many automata-based formal software verification methods, as well as in
decision procedures for logical theories.
For example, in LTL software model checking 
\cite{Holzmann:Spinbook,optimizing:concur2000}, temporal logic
specifications are converted into NBA. 
In other cases, different versions of a program (obtained by abstraction or
refinement of the original) are translated into automata whose languages are then
compared. Testing the conformance of an implementation with its requirements 
specification thus reduces to a language inclusion problem.
Another application of NBA in software engineering is program termination
analysis by the size-change termination method \cite{Lee:SCT2001,seth:buchi}. 
Via an abstraction of the effect of program operations on
data, the termination problem can often be reduced to a
language inclusion problem between two derived NBA. 

Our goal is to improve the efficiency and scalability of automata-based formal software verification methods.
Our contribution is threefold:
We describe a very effective \emph{automata reduction} algorithm,
which is based on novel, efficiently computable \emph{lookahead simulations},
and we conduct an extensive \emph{experimental evaluation} of our reduction algorithm.

This paper is partly based on results presented at POPL'13 \cite{CM:POPL2013},
but contains several large parts that have not appeared previously.
While \cite{CM:POPL2013} only considered nondeterministic B\"uchi automata
(NBA), we additionally present corresponding results on 
nondeterministic finite automata (NFA).
We also present more extensive empirical results for both NBA and NFA
(cf.~ Sec.~\ref{sec:experiments}).
Moreover, we added a section on the new saturation technique (cf.~ Sec.~\ref{sec:extensions}).
Finally, we added some notes on the implementation (cf.~ Sec.~\ref{sec:implementation}).

\subsection{Automata reduction.}

We propose a novel, efficient, practical, and very effective algorithm to reduce the size of automata,
in terms of both states and transitions.
It is well-known that, in general, there are several non-isomorphic nondeterministic automata of minimal size recognizing a given language,
and even testing the minimality of the number of states of a given automaton is PSPACE-complete \cite{ravikumar:hard:1991}.
Instead, our algorithm produces a smaller automaton recognizing the same language,
though not necessarily one with the absolute minimal possible number of states,
thus avoiding the complexity bottleneck.
The reason to perform reduction is that smaller automata are in general more efficient to handle in a subsequent computation.
Thus, there is an algorithmic tradeoff between the effort for the reduction
and the complexity of the problem later considered for this automaton.
If only computationally easy algorithmic problems are considered,
like reachability or emptiness (which are solvable in NLOGSPACE),
then extensive reduction does not pay off
since in these cases it is faster to solve the initial problem directly.
Instead, the main applications are the following.
\begin{enumerate}

	\item
	PSPACE-complete automata problems like language universality, equivalence, and inclusion \cite{kupfermanvardi:fair_verification}.
	Since exact algorithms are exponential for these problems,
	one should first reduce the automata as much as possible before applying them.

	\item
	LTL model checking \cite{Holzmann:Spinbook},
	where one searches for loops in a graph
	that is the {\em product} of a large system specification with an NBA derived from an LTL-formula.
	Smaller automata often make this easier,
	though in practice it also depends on the degree of nondeterminism \cite{Sebastiani-Tonetta:2003}.
	Our reduction algorithm, based on transition pruning techniques,
	yields automata that are not only smaller, but also sparser
	(fewer transitions per state, on average),
	and thus contain less nondeterministic branching.
	
	\item
	Procedures that combine and modify automata repeatedly.
	Model checking algorithms and automata-based decision procedures for logical theories (cf.~the TaPAS tool \cite{Talence:Presburger})
	compute automata products, unions, complements, projections, etc., and thus the sizes of automata grow rapidly.
	Another example is in the use of automata for the reachability analysis of safe Petri nets \cite{Rathke:RP2014}.
	Thus, it is important to intermittently reduce the automata to keep their size manageable.
	
\end{enumerate}

%other reduction algorithms \cite{optimizing:concur2000,etessami:etal:fairsimulations:05,piterman:generalized06,buchiquotient:ICALP11}

Our reduction algorithm combines the following techniques:
\begin{itemize}
\item
The removal of dead states. These are states that trivially do not contribute to
the language of the automaton, either because they cannot be reached from any
initial state or because no accepting loop in the NBA (resp.~no accepting state
in the NFA) is reachable from them.
\item
Quotienting. Here one finds a suitable equivalence relation on the set of
states and quotients w.r.t.~it, i.e., one merges each equivalence class 
into a single state.
\item
\emph{Transition pruning} (i.e., removing transitions) and \emph{transition saturation}
(i.e., adding transitions), using suitable criteria such that the language of 
the automaton is preserved.
\end{itemize}
The first technique is trivial
and the second one is well-understood \cite{etessami:etal:fairsimulations:05,buchiquotient:ICALP11}.
Here, we investigate thoroughly transition pruning and
transition saturation.

For pruning, the idea is that certain transitions can be removed,
because other `better' transitions remain.
The `better' criterion compares the source and target states of transitions
w.r.t.~certain semantic preorders, e.g., forward and backward simulations and trace inclusions.
We provide a complete picture of which combinations of relations are correct to use for pruning.
Pruning transitions reduces not only the number of transitions, but also, indirectly, the number of states.
By removing transitions, some states may become dead,
%in the sense that they are unreachable from any initial state,
%or that it is impossible to reach any accepting loop (in NBA) or any accepting state (in NFA) from them.
and can thus be removed from the automaton without changing its language.
The reduced automata are generally much sparser than the originals
(i.e., use fewer transitions per state and less nondeterministic branching),
which yields additional performance advantages in subsequent computations.

Dually, for saturation, the idea is that certain transitions can be added,
because other `better' transitions are already present.
Again, the `better' criterion relies on comparing the source/target states 
of the transitions w.r.t.~semantic preorders like forward and backward
simulations and trace inclusions.
We provide a complete picture of which combinations of relations are correct to use for saturation.
Adding transitions does not change the number of states, but it may pave the
way for further quotienting that does.
Moreover, adding some transitions might allow the subsequent pruning of other
transitions, and the final result might even have fewer transitions than
before.
It often happens, however, that there is a tradeoff between the numbers of states
and transitions. 

\begin{figure}

	\begin{tikzpicture}[on grid, node distance=1.5cm and 1.5cm]
		\tikzstyle{vertex} = [smallstate]

		\path node [vertex, initial]		(p0) {$p_0$};
		\path node [vertex]					(p1) [right = of p0] {$p_1$};
		\path node [vertex]					(p2) [right = of p1] {$p_2$};
		\path node []						(dots) [right = of p2] {$\cdots$};
		\path node [vertex]					(pn) [right = of dots] {$p_n$};
		\path node [vertex, accepting]		(p) [right = of pn] {$p$};
		
		\path[->]

			(p0) edge [loop above] node {$a, b$} ()
			(p0) edge node [above] {$a$} (p1)
			(p1) edge node [above] {$a, b$} (p2)
			(p2) edge node [above] {$a, b$} (dots)
			(dots) edge node [above] {$a, b$} (pn)
			(pn) edge node [above] {$\$$} (p)
			(p) edge [loop above] node {$\$$} ();

	\end{tikzpicture}

	\caption{Family of NBA accepting languages $L_n = \set{a, b}^* a \set{a, b}^{n-1} \$^\omega$
	for which the minimal WDBA has size $\Omega(2^n)$.}
	
	\label{fig:WDBA}

\end{figure}
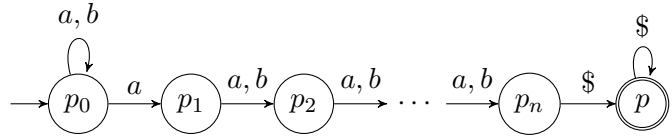
%%% Local Variables:
%%% mode: latex
%%% TeX-master: "ROOT.tex"
%%% End:

Finally, it is worth mentioning that the minimization problem can sometimes be solved efficiently
if one considers minimization \emph{within} a restricted class of languages.
For instance, for the class of \emph{weak deterministic B\"uchi languages}
(a strict subclass of the $\omega$-regular languages)
it is well-known that given a weak deterministic B\"uchi automaton (WDBA) one can find in time $O(n\log n)$
a minimal equivalent automaton in the same class \cite{Loding:IPL:2001}
(essentially by applying Hopcroft's DFA minimization algorithm \cite{Hopcroft:1971:nlogn}).
However, it is possible that a weak deterministic language admits only large WDBA,
but succinct NBA; cf.~ Fig.~\ref{fig:WDBA} (this is similar to what happens for DFA vs. NFA over finite words).
Thus, minimizing a WDBA in the class of WDBA
and minimizing a WDBA in the larger class of NBA are two distinct problems.
Since in this paper we consider size reduction of NBA (and thus WDBA) in the larger class of all NBA,
our method and the one of \cite{Loding:IPL:2001} are incomparable.

\subsection{Lookahead simulations}

Simulation preorders play a central role in automata reduction via pruning, saturation and quotienting,
because they provide PTIME-computable under-approximations of the PSPACE-hard trace inclusions.
However, the quality of the approximation is insufficient in many practical examples.
Multipebble simulations \cite{etessami:hierarchy02} yield better under-approximations of trace inclusions;
while theoretically they can be computed in PTIME for a fixed number of pebbles,
in practice they are not easily computed.

We introduce \emph{lookahead simulations}
as an efficient and practical method to compute good under-approximations of trace inclusions and multipebble simulations.
For a fixed lookahead, lookahead simulations are computable in PTIME,
and it is correct to use them instead of the more expensive trace inclusions and multipebble simulations.
Lookahead itself is a classic concept,
which has been used in parsing and many other areas of computer science,
like in the uniformization problem of regular relations \cite{HoschLandweber:ICALP:1972},
in the composition of e-services
(under the name of lookahead delegators \cite{Gerede:2004:ACE:1035167.1035203,RavikumarSantean:Lookahead:SOFSEM:2007,BrzozowskiSantean:Predictable:TCS:2009,LodingRepke:LookaheadDelegators:FSTTCS13}), % are used as a mechanism to accept words in NFAs.
and in infinite games \cite{HoltmannKaiserThomas,FridmanLodingZimmermann:Lookahead:CSL:2011,KleinZimmermann:Lookahead:ICALP:2015,KleinZimmermann:Lookahead:CSL:2015}.
However, lookahead can be defined in many different variants.
Our contribution is to identify and formally describe the lookahead-variant for simulation preorders
that gives the optimal compromise between efficient computability
and maximizing the sizes of the relations; cf.~Sec.~\ref{sec:lookahead}.
%
% Maybe move this to later
From a practical point of view, we use degrees of lookahead ranging from 4 to
25 steps, depending on the size and shape of the automata.
Our experiments show that even a moderate lookahead often yields much larger
approximations of trace-inclusions and multipebble simulations 
than normal simulation preorder.
Notions very similar to the ones we introduce are discussed in \cite{lange:lookahead:2013}
under the name of \emph{multi-letter simulations}
and \emph{buffered simulations} \cite{LangeBuffered:2014};
% gives complexity results for computing these relations for unbounded lookahead.
cf.~ Remark~\ref{rem:lookahead:related} for a comparison of multi-letter and buffered simulations w.r.t.~lookahead simulations.

\subsection{Experimental results}

We performed an extensive experimental evaluation of our techniques based on lookahead simulations
on tasks of automata reduction and language universality/inclusion testing.
(The raw data of the experiments is stored together with the arXiv version of this
paper \cite{CM:arxiv2018}.)

\subsubsection*{Automata reduction.}

We applied our reduction algorithm on automata of up-to $20000$ states.
These included 1) random automata according to the Tabakov-Vardi model \cite{tabakov:model},
2) automata obtained from LTL formulae, and
3) real-world mutual exclusion protocols.
The empirically determined average-case time complexity on random automata is slightly above quadratic,
while the (never observed) worst-case complexity is $O(n^4)$.
The worst-case space complexity is quadratic. 
Our algorithm reduces the size of automata much more strongly, on average,
than previously available practical methods as implemented in the popular GOAL
automata tool \cite{GOAL_survey_paper}.
However, the exact advantage varies, depending on the type of instances; cf.~Sec.~\ref{sec:experiments}.
For example, consider random automata with 100--1000 states,
binary alphabet and varying transition density ${\it td}$. 
Random automata with ${\it td}=1.4$ cannot be reduced much by {\em any method}.
The only substantial effect is achieved by the trivial removal of dead states which, on
average, yields automata of $78\%$ of the original size.
On the other hand, for ${\it td}=1.8,\dots,2.2$, the best previous reduction methods
yielded automata of $85\%$--$90\%$ of the original size on average,
while our algorithm yielded automata of $3\%$--$15\%$ of the original size on average.

\subsubsection*{Language universality/inclusion.}

Language universality and inclusion of NBA/NFA are PSPACE-complete problems \cite{kupfermanvardi:fair_verification},
but many practically efficient methods have been developed 
\cite{dill:inclusion:1992,doyen:raskin:antichains,Pit06,Abdulla:whensimulation2010,seth:buchi,seth:efficient,abdulla:simulationsubsumption,Rabit_CONCUR2011}.
Still, these all have exponential worst-case time complexity and do not scale well.
Typically they are applied to automata with 15--100 states
(unless the automaton has a particularly simple structure),
and therefore one should first reduce the automata before applying these exact exponential-time methods.

Even better, already the {\em polynomial time} reduction algorithm alone
can solve many instances of the PSPACE-complete universality, equivalence, and inclusion problems.
E.g., an automaton might be reduced to the trivial universal automaton, thus
witnessing language universality,
or when one checks inclusion of two automata,
it may compute a small (polynomial size) certificate for language inclusion in
the form of a (lookahead-)simulation.
Thus, the complete {\em exponential time} methods above need only be invoked in a minority of the cases,
and on much smaller instances.
This allows to scale language inclusion testing to much larger instances
(e.g., automata with $\ge 1000$ states) which are beyond previous methods.

\subsection{Nondeterministic finite automata.}

We present our methods mostly in the framework of nondeterministic B\"uchi
automata (NBA),
but they directly carry over to the simpler case of nondeterministic finite-word automata (NFA).
The main differences are the following:
\begin{itemize}
	\item
          Since NFA accept finite words, it matters in exactly which step an
          accepting state is reached (unlike for NBA where the acceptance
          criterion is to visit accepting states infinitely often).
          Therefore, lookahead-simulations for NFA need to treat accepting states
          in a way which is more restrictive than for NBA.
          Thus, in NFA, one is limited to a smaller range of semantic preorders/equivalences,
	  namely direct and backward simulations 
          (and the corresponding multipebble simulations, lookahead
          simulations and trace inclusions),
          while more relaxed notions (like delayed and fair simulations) can
          be used for NBA.
	\item
          On the other hand, unlike NBA, an NFA can always be transformed into an equivalent
		NFA with just one accepting state without any outgoing transitions
		(unless the language contains the empty word).
		This special form makes it much easier to compute good approximations of direct and backward trace inclusion,
		which greatly helps in the NFA reduction algorithm.
\end{itemize}

{\small
\begin{table}
	\begin{center}% \small
		\begin{tabular}{l|c|cl|cl|cl}
			relations on NBA & complexity
				& \multicolumn{2}{|c|}{quotienting}
				& \multicolumn{2}{|c|}{inclusion}
				& \multicolumn{2}{|c}{pruning$\mbox{}^{(1)}$} \\
		 \hline
			direct simulation $\disim$ & PTIME
				& $\tickOK$ & \cite{somenzi:efficient,optimizing:concur2000}
				& $\tickOK$ & \cite{dill:inclusion:1992}
				& $\tickOK$ & \cite{simulationminimization:03}, Thm.~\ref{thm:bwtrace-fwsim}
			\\
			delayed simulation $\desim$ & PTIME
				& $\tickOK$ & \cite{etessami:etal:fairsimulations:05}
				& $\tickOK$ & \cite{etessami:etal:fairsimulations:05}
				& $\tickNO$ & Fig.~\ref{fig:desim:not:GFP}
                        \\
			fair simulation $\fsim$ & PTIME
				& $\tickNO$ & \cite{etessami:etal:fairsimulations:05}
				& $\tickOK$ & \cite{fairsimulation:02}
				& $\tickNO$ & Fig.~\ref{fig:desim:not:GFP}
			\\
			backward direct sim. $\bwdisim$ & PTIME
				& $\tickOK$ & \cite{somenzi:efficient}
				& $\tickOK$ & Thm.~\ref{lem:bwincl_GFI}
				& $\tickOK$ & Thm.~\ref{thm:bwsim-fwtrace}
			\\
			direct trace inclusion $\directtraceinclusion$ & PSPACE
				& $\tickOK$ & \cite{etessami:hierarchy02}
				& $\tickOK$ & obvious
				& $\tickOK$ & Thm.~\ref{thm:prune_id_strictdirecttraceinclusion}, \ref{thm:bwsim-fwtrace}
			\\
			delayed trace inclusion $\delayedtraceinclusion$ & PSPACE
				& $\tickNO$ & Fig.~\ref{fig:decont:not:GFQ} \cite{buchiquotient:ICALP11}
				& $\tickOK$ & obvious
				& $\tickNO$ & cf.~ Thm.~\ref{thm:prune_transient}
                        \\
			fair trace inclusion $\fairtraceinclusion$ & PSPACE
				& $\tickNO$ & Fig.~\ref{fig:decont:not:GFQ} \cite{buchiquotient:ICALP11}
				& $\tickOK$ & obvious
				& $\tickNO$ & cf.~ Thm.~\ref{thm:prune_transient}
			\\
			direct fixed-word sim. $\directfixedwordsimulation$ & PSPACE
				& $\tickOK$ & Lem.~\ref{lem:GFQ-delayedfixedwordsimulation}\cite{buchiquotient:ICALP11}
				& $\tickOK$ & obvious
				& $\tickOK$ & by $\directtraceinclusion$ % direct trace incl.
			\\
			delayed fixed-word sim. $\delayedfixedwordsimulation$ & PSPACE
				& $\tickOK$ & Lem.~\ref{lem:GFQ-delayedfixedwordsimulation}\cite{buchiquotient:ICALP11}
				& $\tickOK$ & obvious
				& $\tickNO$ & by delayed sim.
			\\
			fair fixed-word sim. $\fairfixedwordsimulation$ & PSPACE
				& $\tickNO$ & by $\fairtraceinclusion$ % fair trace incl.
				& $\tickOK$ & obvious
				& $\tickNO$ & by delayed sim.
			\\
			bwd.~direct trace incl. $\bwdirecttraceinclusion$ & PSPACE 
				& $\tickOK$ & Thm.~\ref{lem:bwincl_GFQ}
				& $\tickOK$ & Thm.~\ref{lem:bwincl_GFI}
				& $\tickOK$ & Thm.~\ref{thm:bwtrace-id}, \ref{thm:bwtrace-fwsim}
			\\
			direct lookahead sim. $\transkdisim$ & PTIME$\mbox{}^{(2)}$
				& $\tickOK$ & Lemma~\ref{lem:lookahead_sim_GFI_GFQ}
				& $\tickOK$ & Lemma~\ref{lem:lookahead_sim_GFI_GFQ}
				& $\tickOK$ & Sec.~\ref{subsec:heavyandlight-Buchi}
			\\
			delayed lookahead sim. $\transkdesim$ & PTIME$\mbox{}^{(2)}$
				& $\tickOK$ & Lemma~\ref{lem:lookahead_sim_GFI_GFQ}
				& $\tickOK$ & Lemma~\ref{lem:lookahead_sim_GFI_GFQ}
				& $\tickOK$ & Sec.~\ref{subsec:heavyandlight-Buchi}
			\\
			fair lookahead sim. $\transkfsim$ & PTIME$\mbox{}^{(2)}$
				& $\tickNO$ & by fair sim.
				& $\tickOK$ & Lemma~\ref{lem:lookahead_sim_GFI_GFQ}
				& $\tickNO$ & Sec.~\ref{subsec:heavyandlight-Buchi}
		 	\\
		 	bwd.~di.~lookahead sim. $\transkbwdisim$ & PTIME$\mbox{}^{(2)}$
				& $\tickOK$ & by $\bwdirecttraceinclusion$
				& $\tickOK$ & by $\bwdirecttraceinclusion$
				& $\tickOK$ & by $\bwdirecttraceinclusion$
			\\
		 	\hline
			\hline
			relations on NFA
			\\
			\hline
			forward direct sim. $\disim$ & PTIME
				& $\tickOK$ & Thm.~\ref{thm:GFQ:NFA}
				& $\tickOK$ & Thm.~\ref{thm:GFI:NFA}
				& $\tickOK$ & Thm.~\ref{thm:bwsim-fwtrace:NFA}, \ref{thm:bwtrace-fwsim:NFA}
			\\
			bwd.~finite-word sim. $\bwsim$ & PTIME
				& $\tickOK$ & Thm.~\ref{thm:GFQ:NFA}
				& $\tickOK$ & Thm.~\ref{thm:GFI:NFA}
				& $\tickOK$ & Thm.~\ref{thm:bwsim-fwtrace:NFA}, \ref{thm:bwtrace-fwsim:NFA}
			\\
			fwd.~finite trace incl. $\finincl$ & PSPACE
				& $\tickOK$ & Thm.~\ref{thm:GFQ:NFA}
				& $\tickOK$ & Thm.~\ref{thm:GFI:NFA}
				& $\tickOK$ & Thm.~\ref{thm:prune_id_strictdirecttraceinclusion:NFA}--\ref{thm:bwtrace-fwsim:NFA}
			\\
			bwd.~finite trace incl. $\bwfinincl$ & PSPACE
				& $\tickOK$ & Thm.~\ref{thm:GFQ:NFA}
				& $\tickOK$ & Thm.~\ref{thm:GFI:NFA}
				& $\tickOK$ & Thm.~\ref{thm:prune_id_strictdirecttraceinclusion:NFA}--\ref{thm:bwtrace-fwsim:NFA}
			\\
			fwd.~di.~lookahead sim. $\transkdisim$ & PTIME$\mbox{}^{(2)}$
				& $\tickOK$ & Sec.~\ref{subsec:heavyandlight-NFA}
				& $\tickOK$ & Sec.~\ref{subsec:heavyandlight-NFA}
				& $\tickOK$ & Sec.~\ref{subsec:heavyandlight-NFA}
			\\
			bwd.~lookahead sim. $\transkbwsim$ & PTIME$\mbox{}^{(2)}$
				& $\tickOK$ & Sec.~\ref{subsec:heavyandlight-NFA}
				& $\tickOK$ & Sec.~\ref{subsec:heavyandlight-NFA}
				& $\tickOK$ & Sec.~\ref{subsec:heavyandlight-NFA}
		\end{tabular}
	\end{center}
	
	\caption{
		Summary of old and new results for simulation-like relations on NBA and NFA.
		$(1)$ For pruning, cf.~ also Table~\ref{fig:GFP_relations} in Sec.~\ref{sec:pruning:NBA} for NBA,
		and Sec.~\ref{sec:pruning:NFA} for NFA.
		$(2)$ PTIME for fixed lookahead.
	}
	
	\label{fig:summary}
\end{table}
}

\subsection*{Outline of the paper.}

A summary of old and new results about simulation-like preorders
as used in inclusion checking, quotienting, and pruning transitions
can be found in Table~\ref{fig:summary}.

The rest of the paper is organized as follows.
In Sec.~\ref{sec:preliminaries}, we define basic notation for automata and
languages.
Sec.~\ref{sec:quotienting} introduces basic semantic preorders and
equivalences between states of automata and considers quotienting methods,
while Sec.~\ref{sec:language_inclusion} shows which preorders witness
language inclusion.
In Sec.~\ref{sec:pruning}, we present the main results on transition
pruning.
Lookahead simulations are introduced in Sec.~\ref{sec:lookahead}
and used in the algorithms for automata reduction and language inclusion
testing in Sections~\ref{sec:heavyandlight} and \ref{sec:inclusion},
respectively.
These algorithms are empirically evaluated in Sec.~\ref{sec:experiments}.
In Sec.~\ref{sec:extensions} we describe and evaluate an extended reduction
algorithm that additionally uses transition saturation methods.
Sec.~\ref{sec:implementation} describes some algorithmic optimizations 
in the implementation, and 
Sec.~\ref{sec:conclusion} contains a summary and directions for future
work.

%%% Local Variables:
%%% mode: latex
%%% TeX-master: "ROOT.tex"
%%% End:

\section{Preliminaries}\label{sec:preliminaries}

A \emph{preorder} $R$ is a reflexive and transitive relation,
a \emph{partial order} is a preorder which is antisymmetric ($xRy \wedge yRx \Rightarrow x=y$),
and a \emph{strict partial order} is an irreflexive ($\neg xRx$), asymmetric ($xRy \Rightarrow \neg yRx$), and transitive relation.
We often denote preorders by $\sqsubseteq$, and when we do so,
with $\sqsubset$ we denote its strict version,
i.e., $x \sqsubset y$ if $x \sqsubseteq y$ and $y \not\sqsubseteq x$;
we follow a similar convention for $\subseteq$.

A \emph{nondeterministic B\"uchi automaton (NBA)} $\A$ is a tuple $(\Sigma, Q, I, F, \delta)$
where $\Sigma$ is a finite alphabet, $Q$ is a finite set of states,
$I \subseteq Q$ is the set of \emph{initial} states,
$F \subseteq Q$ is the set of \emph{accepting} states,
and $\delta \subseteq Q \times \Sigma \times Q$ is the transition relation.
We write $p \goesto \symb q$ for $(p, \symb, q) \in \delta$.
A state of a B\"uchi automaton is \emph{dead} if either it is not reachable from any initial state,
or it cannot reach any accepting loop (i.e., a loop that contains at least one
accepting state).
In our simplification techniques, we always remove dead states, since this
does not affect the language of the automaton.
To simplify the presentation, we assume that automata are \emph{forward and backward complete}, i.e.,
for any state $p \in Q$ and symbol $\symb \in \Sigma$,
there exist states $q_0, q_1 \in Q$ s.t.~$q_0 \goesto \symb p \goesto \symb q_1$.
Every automaton can be converted into an equivalent complete one by adding
at most two states and at most $2 \cdot (|Q| + 2) \cdot |\Sigma|$ transitions.%
\footnote{
For efficiency reasons, our implementation works directly on incomplete automata.
Completeness is only assumed to simplify the technical development.}
A B\"uchi automaton $\A$ describes a set of infinite words (its language), i.e., a subset of $\Sigma^\omega$.
An \emph{infinite trace} of $\A$ on an infinite word $w = \symb_0\symb_1 \cdots \in \Sigma^\omega$ (or \emph{$w$-trace})
\emph{starting} in a state $q_0 \in Q$ is an infinite sequence of transitions 
$\pi = q_0 \goesto {\symb_0} q_1 \goesto {\symb_1} \cdots$.
Similarly, a \emph{finite trace} on a finite word $w = \symb_0\symb_1 \cdots \symb_{m-1} \in \Sigma^*$ (or \emph{$w$-trace})
starting in a state $q_0 \in Q$ and \emph{ending} in a state $q_m \in Q$
is a finite sequence of transitions $\pi = q_0 \goesto {\symb_0} q_1 \goesto {\symb_1} \cdots \goesto {\symb_{m-1}} q_m$.
By convention, a finite trace over the empty word $\varepsilon$ is just a single state $\pi = q$ (where the trace both starts and ends).
For an infinite trace $\pi$ and index $i \geq 0$,
we denote by $\prefix \pi i$ the finite prefix trace $\prefix \pi i = q_0 \goesto {\symb_0} \cdots \goesto {\symb_{i-1}} q_i$,
and by $\suffix \pi i$ the infinite suffix trace $\suffix \pi i = q_i \goesto {\symb_i} q_{i+1} \goesto {\symb_{i+1}} \cdots$.
A finite or infinite trace is \emph{initial} if it starts in an initial state $q_0 \in I$,
and a finite trace is \emph{final} if it ends in an accepting state $q_m \in F$.
A trace is \emph{fair} if it is infinite and $q_i \in F$ for infinitely many $i$'s.
A transition is \emph{transient} if it appears at most once in any trace of the automaton.
%(Equivalently, a transient transition does not belong to a strongly-connected component containing an accepting state reachable from an initial state.)
%
The \emph{language of an NBA $\A$} is $\lang \A = \{w \in \Sigma^\omega \mid \mbox{$\A$ has an initial and fair trace on $w$} \}$.

A \emph{nondeterministic finite automaton (NFA)} $\A = (\Sigma, Q, I, F, \delta)$ has the same syntax as an NBA,
and all definitions from the previous paragraph carry over to NFA.
(Sometimes, accepting states in NBA are called \emph{final} in the context of NFA.)
However, since NFA recognize languages of finite words, their semantics is different.
The language of an NFA $\A$ is thus defined as $\lang \A = \{w \in \Sigma^* \mid \mbox{$\A$ has an initial and final trace on $w$} \}$.

When the distinction between NBA and NFA is not important, we just call $\A$ an automaton.
Given two automata $\A$ and $\B$ we write $\A \languageinclusion \B$ if
$\lang \A \subseteq \lang \B$ and
$\A \languageequivalence \B$ if $\lang \A = \lang \B$.

%%% Local Variables:
%%% mode: latex
%%% TeX-master: "ROOT.tex"
%%% End:

\section{Quotienting reduction techniques}
\label{sec:quotienting}

An interesting problem is how to simplify an automaton while preserving
its semantics, i.e., its language. Generally, one tries to reduce the number
of states/transitions. 
This is useful because the complexity of decision procedures usually depends
on the size of the input automata.
A classical operation for reducing the number of states of an automaton is that of quotienting,
where states of the automaton are identified according to a given equivalence, and transitions are projected accordingly.
Since in practice we obtain quotienting equivalences from suitable preorders,
we directly define quotienting w.r.t.~a preorder.
In the rest of the section, fix an automaton $\A = (\Sigma, Q, I, F, \delta)$,
and let $\sqsubseteq$ be a preorder on $Q$,
with induced equivalence $\approx \defeq (\sqsubseteq \cap \sqsupseteq)$.
Given a state $q \in Q$, we denote by $[q]$ its equivalence class w.r.t.~$\approx$ (which is left implicit for simplicity),
and, for a set of states $P \subseteq Q$, $[P]$ is the set of equivalence classes $[P] = \{ [p] \st p \in P \}$.

\begin{definition}
	The \emph{quotient} of $\A$ by $\sqsubseteq$ is
	$\A/\!\sqsubseteq = (\Sigma, [Q], [I], [F], \delta')$,
	where transitions are induced element-wise as
	$\delta' = \{([q_1],\symb,[q_2]) \st \exists q_1' \in [q_1], q_2' \in [q_2] \cdot (q_1',\symb,q_2') \in \delta\}$.
\end{definition}

\noindent
Clearly, every trace $q_0 \goesto {\symb_0} q_1 \goesto {\symb_1} \cdots$ in $\A$
immediately induces a corresponding trace $[q_0] \goesto {\symb_0} [q_1] \goesto {\symb_1} \cdots$ in $\A/\!\sqsubseteq$,
which is fair/initial/final if the former is fair/initial/final, respectively.
Consequently, $\A\; \languageinclusion\; (\A/\!\sqsubseteq)$ for \emph{any} preorder $\sqsubseteq$.
If, additionally, $(\A/\!\sqsubseteq) \; \languageinclusion\; \A$,
then we say that the preorder $\sqsubseteq$ is \emph{good for quotienting} (GFQ).
\begin{definition}
	A preorder $\sqsubseteq$ is \emph{good for quotienting} (GFQ) if
	$\A \languageequivalence \A/\!\sqsubseteq$.
\end{definition}
%
% Thus, GFQ preorders give a sufficient condition for quotienting to preserve the language.
%
GFQ preorders are downward closed (since a smaller preorder induces a smaller
equivalence, which quotients `less').
We are interested in finding coarse and efficiently computable GFQ preorders for NBA and NFA.
Classical examples are given by forward simulation relations (Sec.~\ref{sec:simulations})
and forward trace inclusions (Sec.~\ref{sec:trace_inclusions}),
which are well known GFQ preorders for NBA.
A less known GFQ preorder for NBA is given by their respective backward variants (Sec.~\ref{sec:backward}).
For completeness, we also consider suitable simulations and trace inclusions for NFA (Sec.~\ref{sec:simulations:finitewords}).
In Sec.~\ref{sec:language_inclusion}, the previous preorders are applied to language inclusion for both NBA and NFA.
In Sec.~\ref{sec:pruning}, we present novel language-preserving transition pruning techniques based on simulations and trace inclusions.
While simulations are efficiently computable, e.g., in PTIME, trace inclusions are PSPACE-complete.
In Sec.~\ref{sec:lookahead}, we present \emph{lookahead simulations},
which are novel efficiently computable GFQ relations coarser than simulations. % and included in the corresponding trace inclusions.

\subsection{Forward simulation relations}
\label{sec:simulations}

Forward simulation \cite{Park:Simulation:1981,Milner:communication:1989}
is a binary relation on the states of $\A$;
it relates states whose behaviors are step-wise related,
which allows one to reason about the internal structure of automaton $\A$---%
i.e., \emph{how} a word is accepted, and not just \emph{whether} it is accepted.
Formally, simulation between two states $p_0$ and $q_0$ can be described in
terms of a game between two players, Spoiler (he) and Duplicator (she),
where the latter wants to prove that $q_0$ can step-wise mimic any behavior of $p_0$, and the former wants to disprove it.
The game starts in the initial configuration $(p_0, q_0)$.
Inductively, given a game configuration $(p_i, q_i)$ at the $i$-th round of the game,
Spoiler chooses a symbol $\symb_i \in \Sigma$ and a transition $\trans{p_i}{\symb_i}{p_{i+1}}$.
Then, Duplicator responds by choosing a matching transition $\trans{q_i}{\symb_i}{q_{i+1}}$,
and the next configuration is $(p_{i+1}, q_{i+1})$.
%If one player cannot move then the other player wins, and Duplicator wins every infinite game. 
%We says that $(p_0,r_0)$ are in simulation preorder iff Duplicator has a winning strategy in the game starting from  $(p_0,r_0)$. 
%While this basic simulation does not consider final states,
%the following variants \cite{etessami:hierarchy02,etessami:etal:fairsimulations:05} have different acceptance conditions.
Since the automaton is assumed to be complete, the game goes on forever,
and the two players build two infinite traces
$\pi_0 = p_0 \goesto {\symb_0} p_1 \goesto {\symb_1} \cdots$ and $\pi_1 = q_0 \goesto {\symb_0} q_1 \goesto {\symb_1} \cdots$.
The winning condition for Duplicator is a predicate on the two traces $\pi_0, \pi_1$,
and it depends on the type of simulation.
%Different simulations have been considered depending on whether one is interested in GFQ or GFI relations.
%
For our purposes, we consider \emph{direct} \cite{dill:inclusion:1992},
\emph{delayed} \cite{etessami:etal:fairsimulations:05}
and \emph{fair simulation} \cite{fairsimulation:02}.
Let $x \in \{\mathrm{di, de, f}\}$.
Duplicator wins the play if $\mathcal C^x(\pi_0, \pi_1)$ holds, where %\cite{etessami:hierarchy02}
\begin{align*}
	\mathcal C^{\mathrm {di}}(\pi_0, \pi_1)	&\quad\iff\quad \forall (i \geq 0) \cdot p_i \in F \implies q_i \in F \\
	\mathcal C^{\mathrm {de}}(\pi_0, \pi_1)	&\quad\iff\quad \forall (i \geq 0) \cdot p_i \in F \implies \exists (j \geq i) \cdot q_j \in F \\
	\mathcal C^{\mathrm f}(\pi_0, \pi_1)	&\quad\iff\quad \textrm{ if $\pi_0$ is fair, then $\pi_1$ is fair }
\end{align*}
Intuitively, direct simulation requires that accepting states are matched immediately (the strongest condition),
while in delayed simulation Duplicator is allowed to accept only after a finite delay.
In fair simulation (the weakest condition),
Duplicator must visit accepting states infinitely often only if Spoiler does so.
Thus, the three conditions are presented in increasing degree of coarseness.
%Thus, $\mathcal C^{\mathrm {di}}(\pi_0, \pi_1)$ implies $\mathcal C^{\mathrm {de}}(\pi_0, \pi_1)$,
%which, in turn, implies $\mathcal C^{\mathrm f}(\pi_0, \pi_1)$.
%
We define $x$-simulation relation $\xsim x \subseteq Q \times Q$, for $x \in  \{\mathrm{di, de, f}\}$,
by stipulating that $p_0 \xsim x q_0$ holds if Duplicator has a winning strategy in the $x$-simulation game,
starting from configuration $(p_0, q_0)$.
Thus, $\disim\, \subseteq\, \desim\, \subseteq\, \fsim$.
Simulation between states in different automata $\A$ and $\B$ can be computed as a simulation on their disjoint union.
%
%All these simulation relations are GFI preorders which can be computed in polynomial time
%\cite{dill:inclusion:1992,HHK:FOCS95,etessami:etal:fairsimulations:05}.
%
%Moreover, direct and delayed simulation are GFQ \cite{etessami:etal:fairsimulations:05},
%while fair simulation is not \cite{fairsimulation:02}. %fairbisimulation:00
%
\begin{lemC}[\cite{dill:inclusion:1992,HHK:FOCS95,fairsimulation:02,etessami:etal:fairsimulations:05}]
	For $x \in \{ \mathrm{di, de, f} \}$, $x$-simulation $\sqsubseteq^x$
        is a PTIME computable preorder.
	For $y \in \{ \mathrm{di, de} \}$, $\sqsubseteq^y$ is GFQ on NBA.
\end{lemC}
\noindent
Notice that fair simulation $\fsim$ is not GFQ.
A simple counterexample can be found in \cite{etessami:etal:fairsimulations:05} (even for fair \emph{bi}simulation);
cf.~ also the automaton from Fig.~\ref{fig:decont:not:GFQ},
where all states are fair bisimulation equivalent,
and thus the quotient automaton would recognize $\Sigma^\omega$.
However, the interest in fair simulation stems from the fact that it is a
PTIME computable under-approximation of fair trace inclusion
(introduced in the next Sec.~\ref{sec:trace_inclusions}).
Trace inclusions between certain states can be used to establish language inclusion between automata,
as discussed in Sec.~\ref{sec:language_inclusion};
this is a part of our inclusion testing presented in Sec.~\ref{sec:inclusion}.

\subsection{Multipebble simulations}
\label{sec:multipebble_simulations}

While simulations are efficiently computable,
their use is often limited by their size,
which can be much smaller than other GFQ preorders.
\emph{Multipebble simulations} \cite{etessami:hierarchy02} offer a generalization of simulations
where Duplicator is given several pebbles that she can use to hedge her bets
and delay the resolution of nondeterminism.
This increased power of Duplicator yields coarser GFQ preorders.

\begin{lemC}[\cite{etessami:hierarchy02}]
	\label{lem:GFQ-multipebbledelayedsimulation}
	Multipebble direct and delayed simulations are GFQ preorders on NBA
	coarser than direct and delayed simulations, respectively.
	They are PTIME computable for a fixed number of pebbles.
\end{lemC}

However, computing multipebble simulations is PSPACE-hard in general \cite{Clemente:PhD},
and in practice it is exponential in the number of pebbles.
For this reason, we study (cf.~Sec.~\ref{sec:lookahead}) lookahead simulations,
which are efficiently computable under-approximations of multipebble simulations,
and, more generally, of trace inclusions, which we introduce next.

\subsection{Forward trace inclusions}
\label{sec:trace_inclusions}

There are other generalizations of simulations (and their multipebble extensions) that are GFQ.
One such example of coarser GFQ preorders is given by \emph{trace inclusions},
which are obtained through the following modification of the simulation game.
In a simulation game, the players build two paths $\pi_0, \pi_1$ by choosing single transitions in an alternating fashion.
That is, Duplicator moves by a single transition by knowing only the next single transition chosen by Spoiler.
We can obtain coarser relations by allowing Duplicator a certain amount of \emph{lookahead} on Spoiler's chosen transitions.
In the extremal case of infinite lookahead, i.e.,
where Spoiler has to reveal his entire path in advance,
we obtain trace inclusions.
Analogously to simulations, we define direct, delayed, and fair trace inclusion, as binary relations on $Q$.
Formally, for $x \in \{\mathrm{di, de, f}\}$, \emph{$x$-trace inclusion} holds between $p$ and $q$, written $p \xincl x q$ if,
for every word $w = \symb_0\symb_1 \cdots \in \Sigma^\omega$,
and for every infinite $w$-trace $\pi_0 = p_0 \goesto {\symb_0} p_1 \goesto {\symb_1} \cdots$ starting at $p_0 = p$, % (as chosen by Spoiler),
there exists an infinite $w$-trace $\pi_1 = q_0 \goesto {\symb_0} q_1 \goesto {\symb_1} \cdots$ starting at $q_0 = q$, % (as chosen by Duplicator),
s.t.~$\mathcal C^x(\pi_0, \pi_1)$ holds.
(Recall the definition of $\mathcal C^x(\pi_0, \pi_1)$ from Sec.~\ref{sec:simulations}).

Like simulations, trace inclusions are preorders.
Clearly, direct trace inclusion $\directtraceinclusion$ is a subset of delayed trace inclusion $\delayedtraceinclusion$,
which, in turn, is a subset of fair trace inclusion $\fairtraceinclusion$.
Moreover, since Duplicator has more power in the trace inclusion game than in the corresponding simulation game,
trace inclusions subsume the corresponding simulation (and even the corresponding multipebble simulation%
\footnote{It turns out that multipebble direct simulation with the maximal
  number of pebbles in fact \emph{coincides} with direct trace inclusion, while the other inclusions are strict for the delayed and fair variants \cite{Clemente:PhD}.}).
In particular, fair trace inclusion $\fairtraceinclusion$ is not GFQ,
since it subsumes fair simulation $\fsim$
which we have already observed not to be GFQ in Sec.~\ref{sec:simulations}.

\begin{figure}
	\begin{tabular}{ccc}
		\begin{tikzpicture}[on grid, node distance= .6cm and 1.3cm]
			\tikzstyle{vertex} = [smallstate]

			\path node [vertex, initial above, accepting] (p) {$p$};
			\path node [vertex] (q) [right = of p] {$q$};
			\path node [vertex, accepting] (r) [right = of q] {$r$};
			\path node [vertex] (s) [right = of r] {$s$};

			\path[->]
				(p) edge node [above] {$a$} (q)
				(q) edge [loop above] node {$a$} ()
				(q) edge node [above] {$a$} (r)
				(r) edge node [above] {$a$} (s)
				(s) edge [loop above] node {$a$} ();

		\end{tikzpicture}
		&
		\quad
		\begin{tabular}{c|c|c|c|c}
			$\delayedtraceinclusion$ & $p$ & $q$ & $r$ & $s$ \\
			\hline
			$p$ & $\tickOK$ & $\tickOK$ & $\tickNO$ & $\tickNO$ \\
			$q$ & $\tickOK$ & $\tickOK$ & $\tickNO$ & $\tickNO$ \\
			$r$ & $\tickOK$ & $\tickOK$ & $\tickOK$ & $\tickNO$ \\
			$s$ & $\tickOK$ & $\tickOK$ & $\tickOK$ & $\tickOK$ \\
			\multicolumn{5}{c}{}
		\end{tabular}
		\quad
		&
		\begin{tikzpicture}[on grid, node distance= .6cm and 1.4cm]
			\tikzstyle{vertex} = [smallstate]

			\path node [vertex, initial, accepting] (p') {$[p]$};
			\path node [vertex, accepting] (r') [right = of p'] {$[r]$};
			\path node [vertex] (s') [right = of r'] {$[s]$};

			\path[->]
				(p') edge [loop above] node {$a$} ()
				(p') edge node [above] {$a$} (r')
				(r') edge node [above] {$a$} (s')
				(s') edge [loop above] node {$a$} ();

		\end{tikzpicture}
		\\
		The original automaton $\A$
		&
		\quad
		Delayed trace inclusion
		\quad
		&
		The quotient automaton $\A/\!\delayedtraceinclusion$

	\end{tabular}
	\caption{Delayed trace inclusion $\delayedtraceinclusion$ is not GFQ.}
	\label{fig:decont:not:GFQ}

\end{figure}

%%% Local Variables:
%%% mode: latex
%%% TeX-master: "ROOT.tex"
%%% End:

We further observe that even the finer delayed trace inclusion $\delayedtraceinclusion$ is not GFQ.
Consider the automaton $\A$ on the left in Fig.~\ref{fig:decont:not:GFQ} (taken from \cite{buchiquotient:ICALP11}).
The states $p$ and $q$ are equivalent w.r.t.~delayed trace inclusion (and are the only two equivalent states),
and thus $[p] = [q]$,
but merging them induces the quotient automaton $\A/\!\delayedtraceinclusion$ on the right in the figure,
which accepts the new word $a^\omega$ that was not previously accepted.

It thus remains to decide whether direct trace inclusion $\directtraceinclusion$ is GFQ.
This is the case, since $\directtraceinclusion$ in fact coincides with multipebble direct simulation,
which is GFQ by Lemma~\ref{lem:GFQ-multipebbledelayedsimulation}.

\begin{lemC}[\cite{etessami:hierarchy02,buchiquotient:ICALP11}]
	Forward trace inclusions are PSPACE computable preorders.
	Moreover, direct trace inclusion $\directtraceinclusion$ is GFQ for NBA,
	while delayed $\delayedtraceinclusion$ and fair $\fairtraceinclusion$ trace inclusions are not.
\end{lemC}

The fact that direct trace inclusion $\directtraceinclusion$ is GFQ
also follows from a more general result presented in the next section,
where we consider a different way to give lookahead to Duplicator.

%\begin{lemma}[\cite{etessami:hierarchy02}]
%	\label{lem:GFQ-directtraceinclusion}
%	Direct trace inclusion $\directtraceinclusion$ is a PSPACE-complete GFQ preorder.
%\end{lemma}

%direct trace inclusion $\directtraceinclusion$ subsumes the GFQ direct simulation $\disim$, and it is itself GFQ.
%We have observed before that fair simulation is not GFQ,
%which entails that the coarser fair trace inclusion $\fairtraceinclusion$ is not GFQ either.
%
%Since delayed simulation $\desim$ is GFQ,
%it remains to establish whether delayed trace inclusion $\delayedtraceinclusion$ is GFQ.
%This question has been answered negatively,
%and a simple counterexample showing that $\delayedtraceinclusion$ is not GFQ can be found in \cite{buchiquotient:ICALP11} (cf.~also Fig.~\ref{fig:decont:not:GFS}).
%
%Notice that delayed simulation $\desim$ and direct trace inclusion $\directtraceinclusion$ are incomparable GFQ preorders.
%This raises the question whether there exists a common GFQ generalization.
%This is indeed the case, and a positive example is given by the so-called \emph{delayed fixed-word simulation} \cite{buchiquotient:ICALP11}.

\subsection{Fixed-word simulations}
\label{sec:fixed-word_simulations}

\emph{Fixed-word simulation} \cite{buchiquotient:ICALP11} is a variant of simulation
where Duplicator has infinite lookahead only on the input word $w$,
but \emph{not} on Spoiler's actual $w$-trace $\pi_0$.
Formally, for $x \in \set{\mathrm{di, de, f}}$,
one considers the family of preorders indexed by infinite words ${\set{\fixedwordsimulation x_w}_{w \in \Sigma^\omega}}$,
where $\fixedwordsimulation x_w$ for a fixed infinite word $w$
is like $x$-simulation, but Spoiler is forced to play the word $w$.
%s.t.~for each fixed word $aw \in \Sigma^\omega$ and states $p, q \in Q$ s.t.~$p \fixedwordsimulation x_{aw} q$,
%whenever $p \goesto a p'$, there exists $q \goesto a q'$ and $p' \fixedwordsimulation x_w q'$;
%the winning condition is determined by $\mathcal C^x$ as before.
%
Then, $x$-fixed-word simulation $\fixedwordsimulation x$ is defined by requiring that Duplicator wins for every infinite word $w$,
that is, $\fixedwordsimulation x = \bigcap_{w \in \Sigma^\omega} \fixedwordsimulation x_w$.
Thus, $x$-fixed-word simulation, by definition, falls between $x$-simulation and $x$-trace inclusion.
What is surprising is that delayed fixed-word simulation $\delayedfixedwordsimulation$
is \emph{coarser} than multipebble delayed simulation
(and thus direct trace inclusion $\directtraceinclusion$,
since this one turns out to coincide with direct multipebble simulation $\disim_n$,
which is included in $\desim_n$ by definition),
and not incomparable as one could have assumed;
this fact is non-trivial \cite{buchiquotient:ICALP11}.
Since delayed fixed-word simulation is GFQ for NBA,
this completes the classification of GFQ preorders for NBA
and makes $\delayedfixedwordsimulation$ the coarsest simulation-like GFQ preorder known to date.
The reader is referred to \cite{buchiquotient:ICALP11} for a more exhaustive
discussion of the results depicted in Fig.~\ref{fig:GFQ_relations}.
%(and even its multipebble variant from \cite{etessami:hierarchy02}),
%
%
%that it is GFQ directly follows from the proof that $\desim$ is GFQ \cite{etessami:etal:fairsimulations:05}.
%
%One can also prove that 
%
%In the rest of the paper we shall not be concerned directly with delayed fixed-word simulation.
%For the purposes of quotienting, we will use only the facts that 
%delayed multipebble simulation and direct trace inclusions are GFQ 
%(Lemma~\ref{lem:GFQ-multipebbledelayedsimulation} and Lemma~\ref{lem:GFQ-directtraceinclusion}).
%
\begin{lemC}[\cite{buchiquotient:ICALP11}]\label{lem:GFQ-delayedfixedwordsimulation}
	Direct/delayed fixed-word simulations $\directfixedwordsimulation,\delayedfixedwordsimulation$
	are PSPACE-complete GFQ preorders.
\end{lemC}

\noindent
The simulations and trace inclusions considered so far explore the state space of the automaton in a forward manner.
Their relationship and GFQ status are summarized in Fig.~\ref{fig:GFQ_relations},
where an arrow means inclusion and a double arrow means equality.
Notice that there is a backward arrow from fixed-word direct simulation to multipebble direct simulation,
and not the other way around as one might expect \cite{buchiquotient:ICALP11}.
In a dual fashion, one can exploit the backward behavior of the automaton to recognize structural relationships
allowing for quotienting states,
which is the topic of the next section.

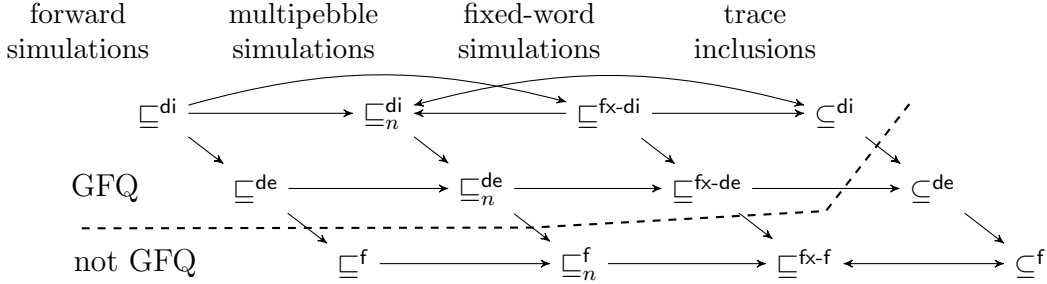
\begin{figure}
	
	\begin{tikzpicture}[on grid, node distance= 1cm and 1.3cm]
%		\tikzstyle{vertex} = [smallstate]

		\path node (disim) {$\disim$};
		\path node (desim) [below right = of disim] {$\desim$};
		\path node (fsim) [below right = of desim] {$\fsim$};
		
		\path node (dimsim) [right = 3cm of disim] {$\disim_n$};
		\path node (demsim) [below right = of dimsim] {$\desim_n$};
		\path node (fmsim) [below right = of demsim] {$\fsim_n$};

		\path node (difxsim) [right = 3cm of dimsim] {$\directfixedwordsimulation$};
		\path node (defxsim) [below right = of difxsim] {$\delayedfixedwordsimulation$};
		\path node (ffxsim) [below right = of defxsim] {$\fairfixedwordsimulation$};

		\path node (dicont) [right = 3cm of difxsim] {$\directtraceinclusion$};
		\path node (decont) [below right = of dicont] {$\delayedtraceinclusion$};
		\path node (fcont) [below right = of decont] {$\fairtraceinclusion$};

		\path (desim) -- (fsim) node [midway,align=center,xshift=-3cm] (start) {};
		\path (demsim) -- (fmsim) node [midway,align=center] (middle0) {}; %{$\qquad\qquad\qquad\qquad$};
		\path (defxsim) -- (decont) node [midway,align=center,yshift=-.3cm] (middle1) {}; %{$\qquad\qquad\qquad\qquad$};
		\path (decont) node [above = 1cm,align=center,xshift=-.3cm] (end) {}; %{$\qquad\qquad\qquad\qquad$};
		
		\path node (sim) [above left = 1.5cm of disim] {$\begin{array}{c}\textrm{forward}\\ \textrm{simulations}\end{array}$};
		\path node (msim) [above left = 1.5cm of dimsim] {$\begin{array}{c}\textrm{multipebble}\\ \textrm{simulations}\end{array}$};
		\path node (msim) [above left = 1.5cm of difxsim] {$\begin{array}{c}\textrm{fixed-word}\\ \textrm{simulations}\end{array}$};
		\path node (msim) [above left = 1.5cm of dicont] {$\begin{array}{c}\textrm{trace}\\ \textrm{inclusions}\end{array}$};
		
		\path[->]

			(disim) edge (desim)
			(disim) edge [->, bend left = 15] node (topnode) {} (difxsim)
			(desim) edge (fsim)
			(disim) edge (dimsim)
			(dimsim) edge (demsim)
			(demsim) edge (fmsim)
			(dimsim) edge [<-] (difxsim)
			(dimsim) edge [<->, bend left = 15] node (topnode) {} (dicont)
			(difxsim) edge (defxsim)
			(defxsim) edge (ffxsim)
			(difxsim) edge (dicont)
			(dicont) edge (decont)
			(decont) edge (fcont)
			(desim) edge (demsim)
			(demsim) edge (defxsim) %[dashed]
			(defxsim) edge (decont)
			(fsim) edge (fmsim)
			(fmsim) edge (ffxsim)
			(ffxsim) edge [<->] (fcont)
			;
%			(r) edge [loop above] node {$a$} ();

		\begin{pgfonlayer}{background}
					
%			\node [fill=blue!20, rectangle, rounded corners = 15pt,
%				fit=(disim) (dimsim) (difxsim) (dicont) (desim) (demsim) (defxsim) (decont) (topnode)] {};
			%\node [fill=blue!20, rectangle, rounded corners = 15pt, fit=(desim) (demsim) (defxsim) (decont)] {};			
%			\node [fill=red!20, rectangle, rounded corners = 15pt, fit=(fsim) (fmsim) (ffxsim) (fcont)] {};

			\newcommand{\thepath}{(start.center) -- (middle0.center) -- (middle1.center) -- (end.center)}
			\newcommand{\therest}{-- (fcont.north east) -- cycle}
			
			\draw [dashed, thick] \thepath;
%			\path [clip] \thepath \therest;
		
%			\node [fill=red!20, rectangle, rounded corners = 15pt,
%				fit=(desim) (demsim) (defxsim) (decont)] {};			

			%\fill [color = red!50, top color = red!100!transparent!50, bottom color = transparent!100, fill opacity = .33] \thepath \therest;

		\end{pgfonlayer}

		\node [above = 0cm of start, left = 2cm of desim] {{\large GFQ}};
		\node [below = 0cm of start, left = 2.9cm of fsim] {{\large not GFQ}};

	\end{tikzpicture}
	
	\caption{Forward-like preorders on NBA}
	\label{fig:GFQ_relations}

\end{figure}

%%% Local Variables:
%%% mode: latex
%%% TeX-master: "ROOT.tex"
%%% End:

\subsection{Backward direct simulation and backward direct trace inclusion}
\label{sec:backward}

Another way of obtaining GFQ preorders is to consider variants of simulation/trace inclusion which go backwards w.r.t.~transitions.
\emph{Backward direct simulation} $\bwdisim$
(called \emph{reverse simulation} in \cite{somenzi:efficient})
is defined like ordinary simulation, except that transitions are taken backwards:
From configuration $(p_i, q_i)$, Spoiler selects a transition $\trans {p_{i+1}} {\symb_i} {p_i}$,
Duplicator replies with a transition $\trans {q_{i+1}} {\symb_i} {q_i}$,
and the next configuration is $(p_{i+1}, q_{i+1})$.
Let $\pi_0 = \cdots \goesto {\symb_1} p_1 \goesto {\symb_0} p_0$
and $\pi_1 = \cdots \goesto {\symb_1} q_1 \goesto {\symb_0} q_0$ be the two infinite backward traces built in this way.
%Duplicator wins the game if, for every position $i$, $p_i \in F \implies q_i \in F$ and $p_i \in I \implies q_i \in I$.
The corresponding winning condition $\mathcal C^\mathrm{bw}_{I, F}$ requires Duplicator to match \emph{both} accepting and initial states:
%
% \begin{align}
\[
	\mathcal C^\mathrm{bw}_{I, F}(\pi_0, \pi_1) \quad \iff \quad \forall (i \geq 0) \cdot
		p_i \in F \implies q_i \in F \textrm{ and } p_i \in I \implies q_i \in I
\]
%\end{align}
%
Then, $p \bwdisim q$ holds if Duplicator has a winning strategy in the backward simulation game starting from $(p, q)$
with winning condition $C^\mathrm{bw}_{I, F}$.
Backward simulation $\bwdisim$ is an efficiently computable GFQ preorder \cite{somenzi:efficient} on NBA incomparable with forward simulations.
%It can be used to establish language inclusion by matching final states of $\A$ with final states of $\B$
%(dually to forward simulations); in this sense, it is GFI.
%
\begin{lemC}[\cite{somenzi:efficient}]
	\label{lem:bwsim:GFQ:GFI}
	Backward simulation $\bwdisim$ is a PTIME computable GFQ preorder on NBA.
\end{lemC}
\noindent
The corresponding notion of \emph{backward direct trace inclusion} $\bwdirecttraceinclusion$ is defined as follows:
$p \bwdirecttraceinclusion q$ if,
for every finite word $w = \symb_0\symb_1 \cdots \symb_{m-1} \in \Sigma^*$,
and for every initial, finite $w$-trace
$\pi_0 = p_0 \goesto {\symb_0} p_1 \goesto {\symb_1} \cdots \goesto {\symb_{m-1}} p_m$ ending in $p_m = p$,
there exists an initial, finite $w$-trace
$\pi_1 = q_0 \goesto {\symb_0} q_1 \goesto {\symb_1} \cdots \goesto {\symb_{m-1}} q_m$ ending in $q_m = q$,
s.t.~$C^\mathrm{bw}_F(\pi_0, \pi_1)$ holds, where
%
% \begin{align}
\[
	\mathcal C^\mathrm{bw}_F(\pi_0, \pi_1) \quad \iff \quad \forall (0 \leq i \leq m) \cdot	p_i \in F \implies q_i \in F
\]
%\end{align}
%
%where the latter condition holds if $p_i \in F$ implies $q_i \in F$ for any $i \geq 0$.
%
Note that backward direct trace inclusion deals with \emph{finite traces} (unlike forward trace inclusions),
which is due to the asymmetry between past and future in $\omega$-automata.

As for their forward counterparts,
backward direct simulation $\bwdisim$ is included in backward direct trace inclusion $\bwdirecttraceinclusion$.
Notice that there is a slight mismatch between the two notions,
since the winning condition of the former is defined over infinite traces,
while the latter is on finite ones.
In any case, inclusion holds thanks to the automaton being backward complete.
Indeed, assume $p \bwdisim q$, and let $\pi_0$ be an initial, finite $w$-trace starting in some $p_0 \in I$ and ending in $p$.
We play the backward direct simulation game from $(p, q)$ by letting Spoiler take transitions according to $\pi_0$
until configuration $(p_0, q_0)$ is reached for some state $q_0$,
and from there we let Spoiler play for ever according to any strategy
(which is possible since the automaton is backward complete).
We obtain a backward infinite path $\pi'_0$ with suffix $\pi_0$ for Spoiler,
and a corresponding $\pi'_1$ with suffix $\pi_1$ for Duplicator s.t.~$\mathcal C^\mathrm{bw}_{I, F}(\pi'_0, \pi'_1)$.
Since $p_0 \in I$, we obtain $q_0 \in I$. Similarly, accepting states are matched all along,
as required in the winning condition for backward direct trace inclusion.
Thus, $p \bwdirecttraceinclusion q$.

In Lemma~\ref{lem:bwsim:GFQ:GFI} we recalled that backward direct simulation $\bwdisim$ is GFQ on NBA.
We now prove that even backward direct trace inclusion $\bwdirecttraceinclusion$ is GFQ on NBA,
thus generalizing the previous result.
\begin{theorem}\label{lem:bwincl_GFQ}
	Backward direct trace inclusion $\bwdirecttraceinclusion$ is a PSPACE-complete GFQ preorder on NBA.%
\end{theorem}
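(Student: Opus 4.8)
The plan is to prove three things: that $\bwdirecttraceinclusion$ is a preorder, that it is PSPACE-complete to decide, and that it is GFQ. The preorder claim is routine (reflexivity matches a trace to itself, and transitivity holds because the acceptance-matching condition $\mathcal{C}^{\mathrm{bw}}_F$ composes along two successive matchings). For GFQ, recall that $\A \languageinclusion \A/\!\bwdirecttraceinclusion$ holds for \emph{every} preorder, so the entire content is the reverse inclusion $\A/\!\bwdirecttraceinclusion \languageinclusion \A$; this strictly generalises Lemma~\ref{lem:bwsim:GFQ:GFI}, since $\bwdisim\,\subseteq\,\bwdirecttraceinclusion$.

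For the reverse inclusion, I would fix $w = \sigma_0\sigma_1\cdots \in \lang{\A/\!\bwdirecttraceinclusion}$ together with an initial fair quotient trace $[p_0] \goesto{\sigma_0} [p_1] \goesto{\sigma_1} \cdots$, and let $i_0 < i_1 < \cdots$ be the positions where the class $[p_{i_k}]$ contains an accepting state $f_{i_k} \in F$. The first step is to build, for every prefix length $j$, an \emph{initial} concrete trace $\tau_j$ in $\A$ on $\sigma_0\cdots\sigma_{j-1}$ that ends inside the class $[p_j]$ and visits $F$ at every accepting position below $j$. The induction uses a witness transition $r_j \goesto{\sigma_j} s_{j+1}$ for each quotient edge $[p_j] \goesto{\sigma_j} [p_{j+1}]$: since the current endpoint $b_j$ and $r_j$ lie in the same class we have $b_j \bwdirecttraceinclusion r_j$, so the initial trace $\tau_j$ ending at $b_j$ can be \emph{rerouted} to an initial trace ending at $r_j$ on the same word, and then extended by the witness transition to reach $[p_{j+1}]$. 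The decisive point, and the reason backward trace inclusion suffices where backward simulation already did, is that $\mathcal{C}^{\mathrm{bw}}_F$ forces the rerouted trace to carry an accepting state at \emph{every} position where the old one did: accepting positions are only ever added, never destroyed. Hence, rerouting through $f_{i_k}$ exactly when $j = i_k$ and never losing it afterwards yields, for each $k$, an initial concrete trace $\rho_k$ on $\sigma_0\cdots\sigma_{i_k-1}$ visiting $F$ at all of $i_0,\dots,i_k$.

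The main obstacle is that these rerouted prefixes $\rho_k$ are \emph{not} nested, since each repair rewrites the whole past, so one cannot simply take their union. I would resolve this by a compactness argument: organise the initial concrete traces that visit $F$ at $i_0,\dots,i_m$ into a tree whose parent map sends such a trace to its prefix of length $i_{m-1}$ (which still visits $F$ at $i_0,\dots,i_{m-1}$). This tree is finitely branching because $\A$ is finite, and it is infinite because every $\rho_k$ contributes a node at level $k$. By K\"onig's lemma it has an infinite branch, whose union is an initial trace on $w$ visiting $F$ at every $i_m$, i.e.\ an initial fair trace; hence $w \in \lang\A$.

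For PSPACE-completeness, membership follows from a backward subset construction: $p \bwdirecttraceinclusion q$ is decided by scanning the word from $p$ towards the initial states while maintaining, as a subset of $Q$, the states from which $q$ still admits a matching initial trace, and tracking the safety constraint imposed by $\mathcal{C}^{\mathrm{bw}}_F$; reachability in this exponential-size graph is in NPSPACE, hence in PSPACE by Savitch's theorem. Hardness I would obtain by reduction from NFA language inclusion, which is PSPACE-complete: given NFA $N_1, N_2$ (w.l.o.g.\ with a single final state each), place disjoint copies in one automaton with empty accepting set, so that $\mathcal{C}^{\mathrm{bw}}_F$ becomes vacuous, take as initial states those of $N_1$ and $N_2$, and let $p$ and $q$ be their respective final states, so that an initial trace ending at $p$ (resp.\ $q$) encodes acceptance by $N_1$ (resp.\ $N_2$). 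With a routine fresh-sink gadget restoring completeness without feeding spurious words into $p$ or $q$, this gives $p \bwdirecttraceinclusion q \iff \lang{N_1} \subseteq \lang{N_2}$, establishing PSPACE-hardness.
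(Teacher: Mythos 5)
Your proposal is correct and follows essentially the same route as the paper's proof: the GFQ part is the same inductive rerouting argument (using a concrete witness transition for each quotient edge, with $\mathcal{C}^{\mathrm{bw}}_F$ guaranteeing that accepting positions are preserved under rerouting, then K\"onig's Lemma to pass from the non-nested finite prefixes to an infinite fair trace), and the complexity part likewise matches (membership via reachability in an exponential-size product/backward-subset graph, hardness by reduction from NFA language inclusion). Your write-up is in fact somewhat more explicit than the paper at two points it treats tersely — the tree organization underlying the K\"onig step and the details of the hardness reduction — but these are elaborations of the same argument, not a different approach.
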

\begin{proof}
	%$n$-pebble delayed simulation is known to be GFQ \cite{etessami:hierarchy02}.
	%For backward trace inclusion, we argue as follows.
	%We first prove that $\bwdirecttraceinclusion$ is GFQ.
	We first show that $\bwdirecttraceinclusion$ is GFQ.
	Let $w = \symb_0 \symb_1 \cdots \in \lang{\A/\!\bwdirecttraceinclusion}$, and we show $w \in \lang \A$.
	There exists an initial and fair $w$-trace
	$\pi = [q_0] \goesto {\symb_0} [q_1] \goesto {\symb_1} \cdots$.
	For $i \geq -1$, let $w_i = \symb_0 \symb_1 \cdots \symb_i$ (with $w_{-1} = \varepsilon$),
	and, for $i \geq 0$, let $\pi[0..i]$ be the $w_{i-1}$-trace prefix of
        $\pi$ ending in $[q_i]$.
	
	For any $i \geq 0$, we build by induction an initial and finite $w_{i-1}$-trace $\pi_i$ of $\A$ ending in $q_i$
	and visiting at least as many accepting states as $\pi[0..i]$ (and at the same time as $\pi[0..i]$ does).
	For $i = 0$, just take the empty $\varepsilon$-trace $\pi_0 = q_0$.
	For $i > 0$, assume that an initial $w_{i-2}$-trace $\pi_{i-1}$ of $\A$ ending in $q_{i-1}$ has already been built.
	We have the transition $\trans {[q_{i-1}]} {\symb_{i-1}} {[q_i]}$ in $\A/\!\bwdirecttraceinclusion$.
	There exist $\hat q \in [q_{i-1}]$ and $\hat q' \in [q_i]$ 
	s.t.~we have a transition $\trans {\hat q} {\symb_{i-1}} {\hat q'}$ in $\A$.
	W.l.o.g.~we can assume that $\hat q' = q_{i}$, since $[\hat q'] = [q_{i}]$.
	By $q_{i-1} \bwdirecttraceinclusion \hat q$, there exists an initial and finite $w_{i-1}$-trace $\pi'$ of $\A$ ending in $\hat q$.
	By the definition of backward direct trace inclusion, $\pi'$ visits at least as many accepting states as $\pi_{i-1}$,
	which, by inductive hypothesis, visits at least as many accepting states as $\pi[0..i-1]$.
	Therefore, $\pi_i := \pi' \goesto {\symb_{i-1}} q_{i}$ is an initial and finite $w_{i-1}$-trace of $\A$ ending in $q_i$.
	Moreover, if $[q_i] \in F' = [F]$, then, since backward direct trace inclusion respects accepting states, $[q_i] \subseteq F$,
	hence $q_i \in F$, and, consequently, $\pi_i$ visits at least as many accepting states as $\pi[0..i]$.
    
	Since $\pi$ is fair, we have thus built a sequence of finite and initial traces
	$\pi_0, \pi_1, \cdots$ visiting unboundedly many accepting states.
	Since $\A$ is finitely branching, by K\"onig's Lemma there exists an initial and fair (infinite) $w$-trace $\pi_\omega$.
	Therefore, $w \in \lang \A$.
	
	Regarding complexity, PSPACE-hardness follows from an immediate reduction from language inclusion of NFA,
	and membership in PSPACE can be shown by reducing to a reachability problem in a finite graph $G$ of exponential size.
	Since reachability in graphs is in NLOGSPACE, we get the desired complexity.
	The finite graph $G = (V, \to)$ is obtained by a product construction combined with a backward determinization construction:
	Vertices are those in
	\begin{align*}
		V = \setof
			{ (p, \hat p) \in Q \times 2^Q }
			{ p \in F \implies \hat p \subseteq F }
	\end{align*}
	and there is an edge $(q, \hat q) \to (p, \hat p)$ if there exists a symbol $\symb \in \Sigma$
	s.t.~$p \goesto \symb q$ and for every $s \in \hat q$ there exists $r \in \hat p$
	s.t.~$r \goesto \symb s$.
	Consider the target set of vertices
	$$T =
		% not needed thanks to the assumption of backward completeness
		%Q \times \set \emptyset \cup
		I \times \setof { \hat p \subseteq Q } {\hat p \cap I = \emptyset}.$$
	We clearly have $p \not\bwdirecttraceinclusion q$ iff
	from vertex $(p, \set q)$ we can reach $T$.
\end{proof}

The results on backward-like simulations established in this section are summarized in Fig.~\ref{fig:backward:GFQ_relations},
where the arrow indicates inclusion.
Notice that backward relations are in general incomparable
with the corresponding forward notions from Fig.~\ref{fig:GFQ_relations}.
In the next section we explore suitable GFQ relations for NFA.

\begin{figure}
	
	\begin{tikzpicture}[on grid, node distance= 1cm and 1.3cm]
%		\tikzstyle{vertex} = [smallstate]

		\path node (bwdisim) {$\bwdisim$};

		\path node (bwdicont) [right = 3cm of bwdisim] {$\bwdirecttraceinclusion$};
		
		\path node (sim) [above = 1cm of bwdisim] {$\begin{array}{c}\textrm{backward direct} \\ \textrm{simulation}\end{array}$};
		\path node (msim) [above = 1cm of bwdicont] {$\begin{array}{c}\textrm{backward direct} \\ \textrm{trace inclusion}\end{array}$};
		
		\path[->]

			(bwdisim) edge (bwdicont)
			;

		\begin{pgfonlayer}{background}

		\end{pgfonlayer}

		\node [above = 0cm of start, left = 2cm of bwdisim] {{\large GFQ}};

	\end{tikzpicture}
	
	\caption{Backward-like preorders on NBA}
	\label{fig:backward:GFQ_relations}

\end{figure}
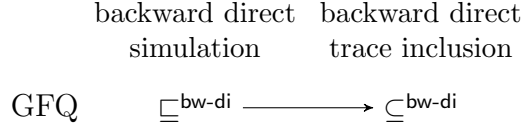
%%% Local Variables:
%%% mode: latex
%%% TeX-master: "ROOT.tex"
%%% End:

\subsection{Simulations and trace inclusions for NFA}
\label{sec:simulations:finitewords}

The preorders presented so far were designed for NBA (i.e., infinite words).
For NFA (i.e., finite words), the picture is much simpler.
Both forward and backward direct simulations $\disim, \bwdisim$ are GFQ also on NFA.
%though it is more commonly called \emph{forward simulation} in the context of finite words.
%
However, over finite words one can consider a backward simulation coarser than $\bwdisim$
where only initial states have to be matched (but not necessarily final ones).
In \emph{backward finite-word simulation} $\bwsim$ the two players play as in backward direct simulation,
except that Duplicator wins the game when the following coarser condition is satisfied
\begin{align*}
	\mathcal C^{\mathrm {bw}}_I(\pi_0, \pi_1)	&\quad\iff\quad \forall (i \geq 0) \cdot p_i \in I \implies q_i \in I
\end{align*}
The corresponding trace inclusions are as follows.
In \emph{forward finite trace inclusion} $\finincl$ Spoiler plays a finite, final trace, and Duplicator has to match it with a final trace.
Dually, in \emph{backward finite trace inclusion} $\bwfinincl$, moves are backward and initial traces must be matched.
Clearly, direct simulation $\disim$ is included in $\finincl$,
and similarly for $\bwsim$ and $\bwfinincl$.
While $\finincl,\bwsim,\bwfinincl$ are not GFQ for NBA
(they are not designed to consider the infinitary acceptance condition of NBA,
which can be shown with trivial examples)
they are for NFA.
The following theorem can be considered as folklore
and its proof is just an adaptation of similar proofs for NBA
in the simpler setting of NFA.
The PSPACE-completeness is an immediate consequence of the fact that
language inclusion for NFA is also PSPACE-complete \cite{MeyerStockmeyer:Equivalence:1972}.
\begin{theorem}
	\label{thm:GFQ:NFA}
	Forward direct simulation $\disim$ and backward finite-word simulation $\bwsim$ are PTIME GFQ preorders on NFA.
	Forward $\finincl$ and backward $\bwfinincl$ finite trace inclusions
	are PSPACE-complete GFQ preorders on NFA.
\end{theorem}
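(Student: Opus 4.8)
The plan is to reduce all four claims to showing that the two coarser trace inclusions $\finincl$ and $\bwfinincl$ are GFQ, and then to obtain $\disim$ and $\bwsim$ for free. Recall that $\A \languageinclusion \A/\!\sqsubseteq$ holds for every preorder $\sqsubseteq$, so being GFQ amounts to the reverse inclusion $\A/\!\sqsubseteq \languageinclusion \A$. Since $\disim\,\subseteq\,\finincl$ and $\bwsim\,\subseteq\,\bwfinincl$, and GFQ preorders are downward closed, it suffices to treat $\finincl$ and $\bwfinincl$; the two simulations then follow immediately. Both arguments are finite-word specializations of the proof of Theorem~\ref{lem:bwincl_GFQ}, but simpler: neither K\"onig's Lemma nor any bookkeeping of accepting-state counts is needed, because acceptance for an NFA is a condition on a single endpoint rather than an infinitary one.

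For $\bwfinincl$, take $w = \symb_0 \cdots \symb_{m-1} \in \lang{\A/\!\bwfinincl}$, witnessed by an initial and final quotient trace $[q_0] \goesto{\symb_0} \cdots \goesto{\symb_{m-1}} [q_m]$ with $[q_m] \in [F]$. By forward induction on $i$ I build an initial, finite, real $\symb_0\cdots\symb_{i-1}$-trace $\pi_i$ of $\A$ ending in $q_i$. For the base case, $[q_0] \in [I]$ yields an initial representative $g \approx q_0$, and since $\bwfinincl$ respects initial states (if $p \bwfinincl q$ and $p \in I$ then $q \in I$, because the trivial trace at $p$ must be matched at $q$), we get $q_0 \in I$. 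For the step, the quotient edge $[q_{i-1}] \goesto{\symb_{i-1}} [q_i]$ comes from some $\hat q \goesto{\symb_{i-1}} q_i$ in $\A$ with $\hat q \approx q_{i-1}$ (relabelling the target so that it is exactly $q_i$); applying $q_{i-1} \bwfinincl \hat q$ to $\pi_{i-1}$ gives an initial trace ending in $\hat q$, to which I append $\hat q \goesto{\symb_{i-1}} q_i$. The endpoint $q_m$ need not itself be final, so I close with one extra application of the relation: picking $f \in F \cap [q_m]$ and using $q_m \bwfinincl f$ turns $\pi_m$ into an initial real $w$-trace ending in $f \in F$, whence $w \in \lang{\A}$.

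The case of $\finincl$ is the exact temporal dual. Here I build, by backward induction from $i=m$ down to $0$, a \emph{final} real $\symb_i \cdots \symb_{m-1}$-trace $\rho_i$ of $\A$ \emph{starting} in $q_i$, using that $[q_m]\in[F]$ and that $\finincl$ now respects \emph{final} states (if $p \finincl q$ and $p \in F$ then $q \in F$, since $\varepsilon$ accepted from $p$ must be accepted from $q$) to seed the base case $\rho_m = q_m$. Each step prepends an edge $q_{i-1} \goesto{\symb_{i-1}} \hat q'$ underlying the quotient transition and transfers the final suffix trace from $q_i$ to $\hat q'$ via $q_i \approx \hat q'$. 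Finally, $[q_0]\in[I]$ gives an initial representative $g \approx q_0$, and $q_0 \finincl g$ relocates $\rho_0$ to a trace that is both initial and final, giving $w \in \lang{\A}$.

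For complexity, $\disim$ and $\bwsim$ are computed in PTIME by the standard simulation fixpoint (exactly as for NBA) and are plainly preorders. The relations $\finincl$ and $\bwfinincl$ are preorders as well, and their PSPACE-completeness is inherited from NFA language inclusion \cite{MeyerStockmeyer:Equivalence:1972}: hardness holds because $p \finincl q$ coincides with $\lang{\A_p} \subseteq \lang{\A_q}$ for the automata $\A_p,\A_q$ obtained by relocating the initial state, and membership follows from a subset-construction/product graph of exponential size on which the witnessing condition becomes reachability (in NLOGSPACE), exactly as in the membership argument of Theorem~\ref{lem:bwincl_GFQ}. The one genuinely non-routine point --- and the place I would be most careful --- is the boundary handling: backward trace inclusion does not preserve membership in $F$ (nor forward in $I$), so the straightforward induction yields a trace ending (resp.\ starting) at the chosen representative rather than at a genuine accepting/initial state, and the single closing application of the relation is precisely what repairs this.
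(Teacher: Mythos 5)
Your proof is correct and follows essentially the route the paper intends: the paper declares the theorem folklore, obtained by adapting the NBA arguments (in particular Theorem~\ref{lem:bwincl_GFQ}) to the simpler finite-word setting with PSPACE-completeness inherited from NFA language inclusion, and that is precisely the adaptation you carry out, with the simulations handled via downward closure of GFQ. Your single closing application of the relation at each endpoint correctly repairs the one genuinely non-mechanical point of the adaptation, namely that $\bwfinincl$ does not respect $F$ and $\finincl$ does not respect $I$, whereas the NBA proof could rely on positionwise preservation of acceptance.
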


%%% Local Variables:
%%% mode: latex
%%% TeX-master: "ROOT.tex"
%%% End:

\section{Language inclusion}
\label{sec:language_inclusion}

When automata are viewed as finite representations of languages,
it is natural to ask whether two different automata represent the same language,
or, more generally, to compare these languages for inclusion.
Recall that, for two automata $\A$ and $\B$ over the same alphabet $\Sigma$,
%$
we write $\A \languageinclusion \B$ iff $\lang \A \subseteq \lang \B$, and $\A \languageequivalence \B$ iff $\lang \A = \lang \B$.
The \emph{language inclusion/equivalence problem} consists in determining whether $\A \languageinclusion \B$ or $\A \languageequivalence \B$ holds, respectively.
For nondeterministic finite and B\"uchi automata,
language inclusion and equivalence are PSPACE-complete \cite{MeyerStockmeyer:Equivalence:1972,kupfermanvardi:fair_verification}.
This entails that, under standard complexity theoretic assumptions,
there exists no efficient deterministic algorithm for deciding the inclusion/equivalence problem.
Therefore, we consider suitable under-approximations thereof.
% , which are obtained by taking a closer look inside the automata.

\begin{remark}\label{rem:Kurshan}
A partial approach to NBA language inclusion testing has been described by
Kurshan in \cite{Kurshan:Complementing:JCSS:1987}.
Given an NBA $\B$ with $n$ states,
Kurshan's construction builds an NBA $\B'$ with $2n$ states 
such that $\overline{\lang\B} \subseteq \lang{\B'}$, 
i.e., $\B'$ over-approximates the complement of $\B$.
Moreover, if $\B$ is deterministic then $\overline{\lang\B} = \lang{\B'}$.

This yields a sufficient test for inclusion,
since $\lang\A \cap \lang {\B'} = \emptyset$ implies
$\lang \A \subseteq \lang \B$ (though generally not vice-versa).
This condition can be checked in polynomial time.

Of course, for general NBA, this sufficient inclusion test cannot replace
a complete test. Depending on the input automaton $\B$, the
over-approximation $\overline{\lang\B} \subseteq \lang{\B'}$ could be rather coarse. 
\end{remark}

The following definition captures in which sense a preorder on states can be used as a sufficient inclusion test.
\begin{definition}
	Let $\A = (\Sigma, Q_\A, I_\A, F_\A, \delta_\A)$ and $\B = (\Sigma, Q_\B, I_\B, F_\B, \delta_\B)$ be two automata.
	A preorder $\sqsubseteq$ on $Q_\A \times Q_\B$ is \emph{good for inclusion} (GFI) if either one of the following two conditions holds:
	\begin{align*}
		&1. \textrm{ whenever } \forall p\in I_\A \cdot \exists q \in I_\B \cdot p \sqsubseteq q \textrm{, then } \A \languageinclusion \B, \textrm{ or } \\
		&2. \textrm{ whenever } \forall p\in F_\A \cdot \exists q \in F_\B \cdot p \sqsubseteq q \textrm{, then } \A \languageinclusion \B.
	\end{align*}
\end{definition}
\noindent
In other words, GFI preorders give a sufficient condition for inclusion,
by either matching initial states of $\A$ with initial states of $\B$ (case 1),
or by matching accepting states of $\A$ with accepting states of $\B$ (case 2).
However, a GFI preorder is not necessary for inclusion in general%
\footnote{In the presence of multiple initial $\B$ states
it might be the case that inclusion holds
but the language of $\A$ is not included in the language of any of the initial states of $\B$,
only in their ``union''.}.
Usually, forward-like simulations are GFI for case 1, and backward-like simulations are GFI for case 2.
Moreover, if computing a GFI preorder is efficient, then this leads to a sufficient test for inclusion.
Finally, if a preorder is GFI, then all smaller preorders are GFI too, i.e., GFI is $\subseteq$-downward closed.

It is obvious that fair trace inclusion is GFI for NBAs (by matching initial states of $\A$ with initial states of $\B$).
Therefore, all variants of direct, delayed, and fair simulation from Sec.~\ref{sec:simulations},
and the corresponding trace inclusions from Sec.~\ref{sec:trace_inclusions}, are GFI.
We notice here that backward direct trace inclusion $\bwdirecttraceinclusion$ is GFI for NBA
(by matching accepting states of $\A$ with accepting states of $\B$),
which entails that the finer backward direct simulation is GFI as well.

\begin{theorem}\label{lem:bwincl_GFI}
	Backward direct simulation $\bwdisim$ and backward direct trace inclusion $\bwdirecttraceinclusion$
	are GFI preorders for NBA.%
\end{theorem}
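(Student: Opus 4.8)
The plan is to prove the stronger statement, that backward direct trace inclusion $\bwdirecttraceinclusion$ is GFI (matching accepting states), since backward direct simulation $\bwdisim \subseteq \bwdirecttraceinclusion$ and GFI is $\subseteq$-downward closed, as noted in the text. So it suffices to verify condition~2 of the GFI definition for $\bwdirecttraceinclusion$: assuming that for every $p \in F_\A$ there is some $q \in F_\B$ with $p \bwdirecttraceinclusion q$, I would show $\lang\A \subseteq \lang\B$.

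First I would take an arbitrary $w = \symb_0\symb_1\cdots \in \lang\A$ and fix an initial and fair trace $\pi = p_0 \goesto{\symb_0} p_1 \goesto{\symb_1} \cdots$ of $\A$ on $w$, so that $p_0 \in I_\A$ and $p_i \in F_\A$ for infinitely many $i$. The idea is to transfer each accepting ``checkpoint'' of $\pi$ into $\B$ using the hypothesis and backward trace inclusion, and then assemble the pieces into an infinite fair trace of $\B$ via König's Lemma, exactly as in the proof of Theorem~\ref{lem:bwincl_GFQ}. Concretely, fix an index $i$ with $p_i \in F_\A$. By assumption there is $q_i \in F_\B$ with $p_i \bwdirecttraceinclusion q_i$. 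The finite prefix $\prefix\pi i = p_0 \goesto{\symb_0}\cdots\goesto{\symb_{i-1}} p_i$ is an initial finite $w_{i-1}$-trace of $\A$ ending in $p_i$, where $w_{i-1} = \symb_0\cdots\symb_{i-1}$; by the definition of $p_i \bwdirecttraceinclusion q_i$ there is an initial finite $w_{i-1}$-trace $\rho_i$ of $\B$ ending in $q_i \in F_\B$, i.e.\ a \emph{final} (and initial) trace of $\B$ on $w_{i-1}$.

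This already shows that every finite prefix $w_{i-1}$ (for accepting $i$) is accepted by $\B$ as a \emph{finite} word, but for NBA I need a single infinite fair trace. Here I would build a finitely branching tree of these finite $\B$-traces $\rho_i$: order them by the length of the word they read, and observe that each $\rho_i$ is a trace of the finitely branching automaton $\B$ starting from an initial state and reading a prefix of the fixed word $w$. Since there are infinitely many accepting indices $i$, the lengths $|w_{i-1}|$ are unbounded, so the set $\{\rho_i\}$ contains arbitrarily long initial $\B$-traces along the fixed infinite word $w$. By König's Lemma applied to the (finitely branching) tree of $w$-prefix traces of $\B$ containing these $\rho_i$, there is an infinite initial $w$-trace $\sigma$ of $\B$ such that infinitely many of its finite prefixes end in the accepting states $q_i \in F_\B$; hence $\sigma$ visits $F_\B$ infinitely often and is fair. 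Therefore $w \in \lang\B$, and since $w$ was arbitrary, $\lang\A \subseteq \lang\B$.

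The main obstacle is the König's Lemma step: matching each accepting checkpoint of $\pi$ gives independently chosen finite $\B$-traces $\rho_i$ that need not be prefixes of one another, so one cannot simply take their union. The careful point is that all $\rho_i$ are traces of the fixed finitely branching automaton $\B$ along the single fixed word $w$, so they all live in the finitely branching tree of initial $w$-prefix traces of $\B$; this tree is infinite (the $\rho_i$ have unbounded length), and its infinite branch provided by König's Lemma passes through infinitely many accepting states because infinitely many nodes on the relevant level are accepting. This is precisely the argument already carried out in the proof of Theorem~\ref{lem:bwincl_GFQ}, so I expect the present proof to be a direct and shorter adaptation of it.
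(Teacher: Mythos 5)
Your overall plan coincides with the paper's proof---reduce to $\bwdirecttraceinclusion$ by downward closure of GFI, transfer the prefix of the fair $\A$-trace at each accepting checkpoint into $\B$, and pass to the limit with K\"onig's Lemma---but your execution of the K\"onig step has a genuine gap. From $p_i \bwdirecttraceinclusion q_i$ you retain only that the matching trace $\rho_i$ \emph{ends} in the accepting state $q_i$, i.e., that arbitrarily long prefixes of $w$ are accepted by $\B$ as finite words. That intermediate fact is strictly weaker than $w \in \lang\B$: take $\B$ with a non-accepting initial state $s$ carrying an $a$-self-loop, a transition $s \goesto a f$ with $f \in F_\B$, and $f \goesto a d \goesto a d$ with $d \notin F_\B$. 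For every $i \geq 1$ there is an initial final trace on $a^i$, yet $a^\omega \notin \lang\B$. Correspondingly, your justification that the K\"onig branch ``passes through infinitely many accepting states because infinitely many nodes on the relevant level are accepting'' is fallacious: K\"onig's Lemma yields \emph{some} infinite branch, and nothing prevents all the accepting endpoints $q_i$ from lying on side branches that the limit branch avoids (in the example above the limit branch is the $s$-spine, which never visits $F_\B$).

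The missing ingredient is the full strength of the winning condition $\mathcal C^{\mathrm{bw}}_F$, which matches accepting states \emph{positionwise along the entire trace}, not merely at its endpoint. Let $i_1 < i_2 < \cdots$ be the accepting positions of $\pi$. The trace $\rho_k$ obtained for the prefix ending at $p_{i_k}$ is then accepting at \emph{all} of the positions $i_1, \dots, i_k$, not just the last one. Now apply K\"onig's Lemma to the prefix-closed, finitely branching, infinite tree of all prefixes of the $\rho_k$: any node of length $\ell > i_j$ is a prefix of some $\rho_k$ reading $i_k \geq \ell > i_j$ symbols, hence with $k > j$, hence accepting at position $i_j$; consequently the infinite branch is accepting at every position $i_j$ and is therefore fair. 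This positional bookkeeping is exactly what the paper's proof uses (``longer and longer finite, initial traces visiting unboundedly many accepting states'', with the same-time matching made explicit in the twin proof of Theorem~\ref{lem:bwincl_GFQ}), and it is the one ingredient your write-up drops. Note that even ``unboundedly many accepting visits'' without fixing their positions would not suffice: a variant of the counterexample with accepting chains of unbounded length hanging off the non-accepting spine has initial traces with arbitrarily many accepting visits but no fair infinite run.
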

\begin{proof}
	Every accepting state in $\A$ is in relation with an accepting state in $\B$.
	Let $w = \symb_0 \symb_1 \cdots \in \lang \A$,
	and let $\pi_0 = p_0 \goesto {\symb_0} p_1 \goesto {\symb_1} \cdots$ be an initial and fair $w$-path in $\A$.
	Since $\pi_0$ visits infinitely many accepting states,
	and since each such state is $\bwdirecttraceinclusion$-related to an accepting state in $\B$,
	by using the definition of $\bwdirecttraceinclusion$
	it is possible to build in $\B$ longer and longer finite, initial traces in $\B$
	visiting unboundedly many accepting states.
	Since $\B$ is finitely branching,
	by K\"onig's Lemma there exists an initial and fair (infinite) $w$-trace $\pi_\omega$ in $\B$.
	Thus, $w \in \lang \B$.
\end{proof}

For NFA, we observe that forward finite trace inclusion $\finincl$ is GFI by matching initial states,
and backward finite trace inclusion $\bwfinincl$ is GFI by matching accepting states.
The proof of the following theorem is immediate.

\begin{theorem}
	\label{thm:GFI:NFA}
	Forward $\finincl$ and backward $\bwfinincl$ finite trace inclusions are GFI preorders on NFA.
\end{theorem}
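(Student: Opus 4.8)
The plan is to verify the two \textbf{GFI} conditions directly from the definitions, pairing each relation with the appropriate clause: forward finite trace inclusion $\finincl$ with condition~1 (matching initial states of $\A$ with initial states of $\B$), and backward finite trace inclusion $\bwfinincl$ with condition~2 (matching accepting states). This follows the same template as the argument for backward direct trace inclusion on NBA in Theorem~\ref{lem:bwincl_GFI}, but in the finite-word setting it is considerably simpler: all traces involved are finite, so no compactness (König's Lemma) step is needed to pass to an infinite limit trace. The fact that both relations are preorders is inherited immediately from reflexivity and transitivity of trace matching, so the real content is the GFI part.

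For $\finincl$, I would assume the hypothesis of condition~1, namely that every $p \in I_\A$ satisfies $p \finincl q$ for some $q \in I_\B$, and then show $\A \languageinclusion \B$. Taking any $w \in \lang \A$, there is an initial and final $w$-trace $\pi_0$ in $\A$; let $p \in I_\A$ be its starting state, so that $\pi_0$ is in particular a final $w$-trace from $p$. Choosing $q \in I_\B$ with $p \finincl q$, the definition of forward finite trace inclusion yields a final $w$-trace $\pi_1$ in $\B$ starting from $q$. Since $q \in I_\B$, the trace $\pi_1$ is simultaneously initial and final, whence $w \in \lang \B$.

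The case of $\bwfinincl$ is dual and uses condition~2. Assuming every $p \in F_\A$ satisfies $p \bwfinincl q$ for some $q \in F_\B$, I would again take $w \in \lang \A$ witnessed by an initial and final $w$-trace $\pi_0$ in $\A$, now letting $p \in F_\A$ be its \emph{ending} state, so that $\pi_0$ is an initial $w$-trace ending in $p$. Selecting $q \in F_\B$ with $p \bwfinincl q$, the definition of backward finite trace inclusion produces an initial $w$-trace $\pi_1$ in $\B$ ending in $q$; since $q \in F_\B$, this $\pi_1$ is both initial and final, so $w \in \lang \B$.

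The only point requiring care---and it is the closest thing to an obstacle in an otherwise routine unwinding---is to check that the witnessing trace in $\B$ meets \emph{both} the initial and the accepting requirements at once. In the forward case, the inclusion $\finincl$ supplies the accepting endpoint while the choice $q \in I_\B$ supplies initiality; in the backward case the roles are symmetric, with $\bwfinincl$ supplying initiality and the choice $q \in F_\B$ supplying the accepting endpoint. This is exactly the asymmetry that forces us to use condition~1 for $\finincl$ and condition~2 for $\bwfinincl$, and it is why each relation is GFI only for the matching clause.
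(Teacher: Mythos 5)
Your proof is correct and follows exactly the route the paper intends: the paper states the result is ``immediate'' after noting that $\finincl$ is GFI by matching initial states (condition~1) and $\bwfinincl$ by matching accepting states (condition~2), which is precisely the unwinding you carry out. Your closing observation about the asymmetry---the inclusion relation supplying one endpoint property and the choice of $q$ the other---is the same reason the paper pairs each relation with its respective GFI clause.
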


%%% Local Variables:
%%% mode: latex
%%% TeX-master: "ROOT.tex"
%%% End:

\section{Transition pruning reduction techniques} \label{sec:pruning}

While quotienting-based reduction techniques reduce the number of states by merging them,
we explore an alternative method which prunes (i.e., removes) transitions.
The intuition is that certain transitions can be removed from an automaton without changing its language
when other `better' transitions remain. 
\begin{definition}
	Let $\A = (\Sigma, Q, I, F, \delta)$ be an automaton,
	let $\prunerel \subseteq \delta \times \delta$ be a relation on $\delta$,
	and let $\max \prunerel$ be the set of maximal elements of $\prunerel$, i.e.,
	\begin{align*}
		\max\prunerel = \{(p,\symb,r) \in \delta \st \nexists (p',\symb',r') \in \delta \cdot ((p,\symb,r), (p',\symb',r')) \in \prunerel \}
	\end{align*}
	The \emph{pruned automaton} is defined as $\prune{\A}{\prunerel} \defeq (\Sigma, Q, I, F, \delta')$, where
	%
%	\begin{align*}
		$\delta' = \max \prunerel$. % \\
%		I' &= \{ p \in I \st \exists (p, )
%	\end{align*}
\end{definition}

%\noindent
In most practical cases, $\prunerel$ will be a strict partial order, but this condition
is not absolutely required. 

While the computation of $\prunerel$ depends on $\delta$ in general,
\emph{all subsumed transitions are removed `in parallel'},
and thus $\prunerel$ is {\em not} re-computed even if the removal of a single transition changes $\delta$,
and thus $\prunerel$ itself.
This is important for computational reasons.
Computing $\prunerel$ may be expensive, and
thus it is beneficial to remove at once all transitions that can be witnessed with the $\prunerel$ at hand.
E.g., one might remove thousands of transitions in a single step without
re-computing $\prunerel$. 
On the other hand, removing transitions in parallel makes arguing about correctness much more difficult due to potential mutual dependencies between the involved transitions.

Regarding correctness, note that removing transitions cannot introduce new words in the language,
thus $\prune{\A}{\prunerel} \languageinclusion \A$.
When also the converse inclusion holds (so the language is preserved),
we say that $\prunerel$ is good for pruning (GFP).
\begin{definition}
	A relation $\prunerel \subseteq \delta \times \delta$ is \emph{good for pruning} (GFP)
	if $\prune{\A}{\prunerel} \languageequivalence \A$.
\end{definition}
\noindent
Like GFQ, also GFP is $\subseteq$-downward closed in the space of relations.
We study specific GFP relations obtained by comparing the endpoints of transitions over the same input symbol.
Formally, given two binary state relations
$\brel, \frel\, \subseteq Q \times Q$
for the source and target endpoints, respectively, we define
\begin{align}
	\label{eq:prunerel}
	\makeprunerel{\brel}{\frel} = \{((p,\symb,r),(p',\symb,r')) \in \delta \times \delta \st p \brel p' \textrm{ and } r \frel r' \}.
\end{align}
$\makeprunerel{\cdot}{\cdot}$ is monotone in both arguments.

In the following section,
we explore which state relations $\brel,\frel$ induce GFP relations $\makeprunerel{\brel}{\frel}$ for NBA.
In Sec.~\ref{sec:pruning:NFA}, we present similar GFP relations for NFA.

\subsection{Pruning NBA}
\label{sec:pruning:NBA}

\begin{table}
	
	\begin{center}
		\begin{tabular}{c|ccccc||ccc}
			$\brel\backslash\frel$
								& $\id$		& $\strictdisim$	& $\disim$		& $\strictdirecttraceinclusion$	& $\directtraceinclusion$
																					&	$\strictdesim$	&	$\strictfsim$
																					&	$\strictlanguageinclusion$ \\
			\hline
			$\id$				& $\NA $	& $\tickOK$			& $\NA$			& $\tickOK$				& $\NA$	&	$\tickNO$ &	$\tickNO$ & $\tickNO$		\\
			$\strictbwdisim$	&	$\tickOK$	&	$\tickOK$		&	$\tickOK$&	$\tickOK$&	$\tickOK$	&	$\tickNO$		&	$\tickNO$ &	$\tickNO$		\\
			$\bwdisim$	&	$\NA$	&	$\tickOK$		&	$\NA$&	$\tickNO$&	$\NA$	&	$\tickNO$		&	$\tickNO$ &	$\tickNO$		\\
			
			$\strictbwdirecttraceinclusion$
								&	$\tickOK$	&	$\tickOK$		&	$\tickNO$	&	$\tickNO$&	$\tickNO$&	$\tickNO$		&	$\tickNO$ &	$\tickNO$ \\
			$\bwdirecttraceinclusion$
								&	$\NA$	&	$\tickOK$		&	$\NA$	&	$\tickNO$&	$\NA$&	$\tickNO$		&	$\tickNO$ &	$\tickNO$
								
		\end{tabular}
	\end{center}
	%
	%\vspace{-.5cm}
	%
	%\caption*{ \scriptsize %\hfill
	%\begin{tabular}{rl}
	%\end{tabular}
	
	\caption{GFP relations $\makeprunerel{\brel}{\frel}$ for NBA.
        $\tickOK$ denotes yes, $\tickNO$ denotes no, and $\NA$ denotes the case
        where GFP does not hold for the trivial reason that the relation is not irreflexive.}
	\label{fig:GFP_relations}

\end{table}

%%% Local Variables:
%%% mode: latex
%%% TeX-master: "ROOT.tex"
%%% End:

Our results are summarized in Table~\ref{fig:GFP_relations}.
%
%Intuitively, a GFP relation has the property that a removed transition cannot eventually depend on itself while building runs not using it.
%
It has long been known that $\makeprunerel{\id}{\strictdisim}$ and $\makeprunerel{\strictbwdisim}{\id}$ are GFP
(see \cite{simulationminimization:03}, where the removed transitions are called `little brothers').
However, already slightly relaxing direct simulation to delayed simulation is incorrect,
i.e., $\makeprunerel{\id}{\strictdesim}$ is not GFP.
This is shown in the counterexample in Fig.~\ref{fig:desim:not:GFP},
where $q \strictdesim p$, but removing the dashed transition $p \goesto a q$ (due to $p \goesto a p$) makes the language empty.
The essential problem is that $q \strictdesim p$ holds precisely thanks to the presence of transition $p \goesto a q$,
without which we would have $q \not\strictdesim p$,
thus creating a cyclical dependency.
%even though ${\A} = {\A}/\!\desim$.
Consequently, $\makeprunerel{\id}{\strictfsim}$ and $\makeprunerel{\id}{\strictlanguageinclusion}$ are not GFP either.

\begin{figure}
	\centering
	%\begin{figure}
%	
	\subfigure[
		$\makeprunerel{\id}{\strictdesim}$ is not GFP.
		\label{fig:desim:not:GFP}
	]{
	\begin{tikzpicture}[on grid, node distance= .6cm and 1.7cm]
		\tikzstyle{vertex} = [smallstate]

		\path node [vertex, initial] (p) {$p$};
		\path node (hidden) (x) [below = of p] {}; % just for formatting reasons
		\path node [vertex, accepting] (q) [right = of p] {$q$};

		\path[->]

			(p) edge [dashed] node [above] {$a$} node [below] {$\strictdesimrev$} (q)
			(p) edge [loop above] node {$a,b$} ()
			(q) edge [loop above] node {$a$} ();

	\end{tikzpicture}
	}
%	
%\end{figure}
%%% Local Variables:
%%% mode: latex
%%% TeX-master: "ROOT.tex"
%%% End:
	\qquad\qquad
	\subfigure[
	$\makeprunerel{\id}{\strictdisim} \cup \makeprunerel{\strictbwdisim}{\id}$ is not GFP.
	]
	{
	\label{fig:union:not:GFP}
	\begin{tikzpicture}[on grid, node distance= 1cm and 2.25cm]
		\tikzstyle{vertex} = [smallstate]

		\path node [vertex, initial] (p) {$p$};
		\path node [vertex] (q) [above right = of p] {$q$};
		\path node [vertex] (r) [below right = of p] {$r$};
		\path node [vertex, accepting] (s) [below right = of q] {$s$};

		\path[->]

			(p) edge node [above left] {$a$} (q)
			(q) edge [bend left = 10, dashed] node [above] {$a$} (s)
			(q) edge [bend right = 10] node [below] {$b$} (s)
			(p) edge [bend left = 10, dashed] node [above] {$a$} (r)
			(p) edge [bend right = 10] node [below] {$b$} (r)			
			(r) edge node [below right] {$a$} (s)

			(s) edge [loop right] node {$c$} ();
			
		\path
			(r) -- node [midway, left] {\rotatebox{-90}{$\strictbwdisim$}} (q)
			(r) -- node [midway, right] {\rotatebox{90}{$\strictdisim$}} (q);

	\end{tikzpicture}
}
%%% Local Variables:
%%% mode: latex
%%% TeX-master: "ROOT.tex"
%%% End:
	\caption{Two pruning counterexamples.}
\end{figure}
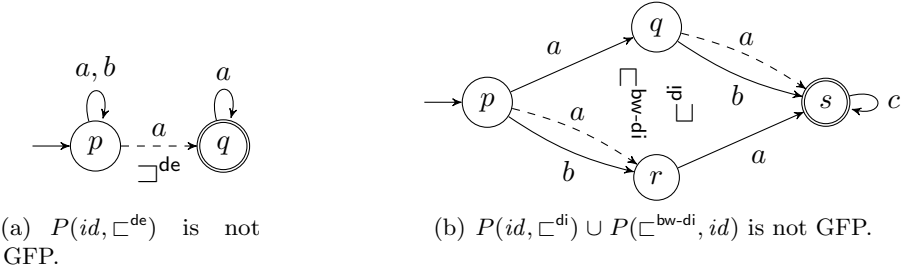

Moreover, while $\makeprunerel{\id}{\strictdisim}$ and $\makeprunerel{\strictbwdisim}{\id}$ are GFP,
their union $\makeprunerel{\id}{\strictdisim} \cup \makeprunerel{\strictbwdisim}{\id}$ (or the transitive closure thereof) is not.
A counterexample is shown in Fig.~\ref{fig:union:not:GFP},
where pruning would remove both the transitions
$p \goesto a r$ (subsumed by $p \goesto a q$ with $r \strictdisim q$)
and $q \goesto a s$ (subsumed by $r \goesto a s$ with $q \strictbwdisim r$),
and $aac^\omega$ would no longer be accepted.
Again, the essential issue is a cyclical dependency:
$r \strictdisim q$ holds only if $q \goesto a s$ is not pruned,
and symmetrically $q \strictbwdisim r$ holds only if $p \goesto a r$ is not pruned.
Therefore, removing any single one of these two transitions is sound, but not removing both.

However, it is possible to relax simulation in
$\makeprunerel{\id}{\strictdisim}$ and $\makeprunerel{\strictbwdisim}{\id}$
to direct trace inclusion $\strictdirecttraceinclusion$, resp., backward trace inclusion $\strictbwdirecttraceinclusion$,
and prove that $\makeprunerel{\id}{\strictdirecttraceinclusion}$ and $\makeprunerel{\strictbwdirecttraceinclusion}{\id}$ are GFP.
This is shown below in Theorems~\ref{thm:prune_id_strictdirecttraceinclusion} and \ref{thm:bwtrace-id}.
\begin{theorem}\label{thm:prune_id_strictdirecttraceinclusion}
	For every strict partial order $R \subseteq \directtraceinclusion$,
	$\makeprunerel{\id}{R}$ is GFP on NBA. In particular, $\makeprunerel{\id}{\strictdirecttraceinclusion}$ is GFP.
\end{theorem}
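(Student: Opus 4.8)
The plan is to prove the general statement for an arbitrary strict partial order $R \subseteq \directtraceinclusion$; the displayed special case $R = \strictdirecttraceinclusion$ then follows at once, since $\strictdirecttraceinclusion$ is a strict partial order contained in $\directtraceinclusion$. Write $\A' = \prune{\A}{\makeprunerel{\id}{R}}$ for the pruned automaton and $\delta' = \max\makeprunerel{\id}{R}$ for its transition relation. Since pruning only removes transitions we already have $\A' \languageinclusion \A$, so it suffices to show $\A \languageinclusion \A'$, i.e.\ that every fair initial trace of $\A$ can be rerouted into a fair initial trace of $\A'$. Observe that, because the source component of $\makeprunerel{\id}{R}$ is $\id$, pruning only deletes transitions by \emph{target}, so $\A'$ has the same states and the same initial states as $\A$.

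The first key step is a \emph{redirection claim}: for every transition $(p,\symb,r) \in \delta$ there is a transition $(p,\symb,\hat r) \in \delta'$ with $r \directtraceinclusion \hat r$. I would prove this by starting at $r$ and, whenever the current transition is not retained (hence some strictly $R$-larger target exists among the $\symb$-successors of $p$), replacing the target by such a larger one. Since $R$ is irreflexive and transitive on the finite set $Q$, every $R$-increasing chain is acyclic and must terminate at an $R$-maximal target $\hat r$, whose transition survives in $\delta'$. Transitivity of $\directtraceinclusion$ together with $R \subseteq \directtraceinclusion$ (and reflexivity of $\directtraceinclusion$ in the trivial case $\hat r = r$) then gives $r \directtraceinclusion \hat r$.

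Now fix a fair initial $w$-trace $\pi = q_0 \goesto{\symb_0} q_1 \goesto{\symb_1} \cdots$ of $\A$. I would build an initial $w$-trace $\pi' = q_0 \goesto{\symb_0} q_1' \goesto{\symb_1} \cdots$ in $\A'$ by recursion, maintaining the invariant that there is an infinite $\suffix w i$-trace $\tau^{(i)}$ of $\A$ starting in $q_i'$ which \emph{directly dominates} $\suffix \pi i$, meaning $\mathcal C^{\mathrm{di}}(\suffix \pi i, \tau^{(i)})$ holds. Initially $q_0' = q_0$ and $\tau^{(0)} = \pi$. For the inductive step, inspect the first transition $q_i' \goesto{\symb_i} \tau^{(i)}_1$ of $\tau^{(i)}$. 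If it already lies in $\delta'$, set $q_{i+1}' = \tau^{(i)}_1$ and $\tau^{(i+1)} = \suffix{\tau^{(i)}}1$. Otherwise apply the redirection claim to obtain $(q_i',\symb_i,\hat r) \in \delta'$ with $\tau^{(i)}_1 \directtraceinclusion \hat r$, set $q_{i+1}' = \hat r$, and apply the definition of $\directtraceinclusion$ to the trace $\suffix{\tau^{(i)}}1$ to obtain an $\suffix w{i+1}$-trace $\tau^{(i+1)}$ from $\hat r$ with $\mathcal C^{\mathrm{di}}(\suffix{\tau^{(i)}}1, \tau^{(i+1)})$. In both cases the chosen step $q_i' \goesto{\symb_i} q_{i+1}'$ lies in $\delta'$, and the invariant is re-established because $\mathcal C^{\mathrm{di}}$ composes transitively along $\suffix\pi{i+1}$, $\suffix{\tau^{(i)}}1$ and $\tau^{(i+1)}$ (using, in the first case, that $\mathcal C^{\mathrm{di}}(\suffix\pi i, \tau^{(i)})$ restricts to positions $\geq 1$).

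Fairness of $\pi'$ then falls out of the invariant: evaluating $\mathcal C^{\mathrm{di}}(\suffix \pi i, \tau^{(i)})$ at position $0$ gives $q_i \in F \implies \tau^{(i)}_0 = q_i' \in F$, so since $\pi$ visits $F$ infinitely often, so does $\pi'$; and $\pi'$ is initial because $q_0' = q_0 \in I$. Hence $w \in \lang{\A'}$, completing $\A \languageinclusion \A'$. The main obstacle, and the reason the argument is not routine, is that the seemingly natural invariant $q_i \directtraceinclusion q_i'$ \emph{cannot} be maintained: corresponding states of two directly matching traces need not be $\directtraceinclusion$-related, so committing to a single successor destroys the relation. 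Carrying the witness trace $\tau^{(i)}$ forward instead of the relation is what circumvents this. It is equally essential that $\directtraceinclusion$ and all witness traces are evaluated in the original automaton $\A$ and never in $\A'$; this is precisely what avoids the cyclic dependency between pruned transitions that dooms the analogous statements for delayed simulation and for combined source/target pruning (cf.\ Fig.~\ref{fig:desim:not:GFP} and Fig.~\ref{fig:union:not:GFP}).
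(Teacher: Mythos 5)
Your proof is correct and follows essentially the same route as the paper's: redirect each deleted transition to an $R$-maximal surviving successor (finiteness of $Q$ plus $R$ being a strict partial order), use $R \subseteq \directtraceinclusion$ to keep matching the remaining suffix by traces of the \emph{original} automaton $\A$, and read off fairness from the pointwise preservation of accepting states. The only difference is bookkeeping: the paper improves a sequence of whole infinite traces ($i$-good traces agreeing on ever longer prefixes, each directly dominating the previous one) and passes to the limit, whereas you build the pruned run in a single forward pass while threading the witness trace $\tau^{(i)}$ through an explicit invariant --- the two constructions are interchangeable, and your closing observation that the naive state-wise invariant $q_i \directtraceinclusion q_i'$ fails is exactly the reason both proofs must carry traces rather than states.
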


\begin{proof}
	Let $\B = \prune{\A}{\makeprunerel{\id}{R}}$. 
	We show $\A \languageinclusion \B$.
	If $w = \symb_0\symb_1 \cdots\in \lang{\A}$ then there exists an infinite fair
	initial trace $\hat{\pi}$ on $w$ in $\A$. We show $w \in \lang{\B}$.

	We call a trace $\pi = q_0 \goesto {\symb_0} q_1 \goesto {\symb_1} \cdots$ on
	$w$ in $\A$ $i$-{\em good} if it does not contain any
	transition $q_j \goesto {\symb_j} q_{j+1}$ for $j < i$ s.t.~there exists an $\A$ transition
	$q_j \goesto {\symb_j} q_{j+1}'$ with $q_{j+1} \mathrel{R} q_{j+1}'$ (i.e., no such
	transition is used within the first $i$ steps).
	Since $\A$ is finitely branching, for every state and symbol there exists at
	least one $R$-maximal successor that is still present in
	$\B$, because $R$ is asymmetric and transitive.
	Thus, for every $i$-good trace $\pi$ on
	$w$ there exists an $(i+1)$-good trace $\pi'$ on $w$ 
        s.t.~$\pi$ and $\pi'$ are identical on the first $i$ steps and 
	$\mathcal C^{\mathrm {di}}(\pi, \pi')$, because
	$R \subseteq \directtraceinclusion$.
	Since $\hat{\pi}$ is an infinite fair initial trace on $w$ (which is trivially
	$0$-good), there exists an
	infinite initial trace $\tilde{\pi}$ on $w$ that is
	$i$-good for every $i$ and $\mathcal C^{\mathrm{di}}(\hat{\pi}, \tilde{\pi})$.
	Moreover, $\tilde{\pi}$ is a trace in $\B$.
	Since $\hat{\pi}$ is fair and $\mathcal C^{\mathrm{di}}(\hat{\pi}, \tilde{\pi})$,  
	$\tilde{\pi}$ is an infinite fair initial trace on $w$ that is $i$-good for every $i$.
	Therefore $\tilde{\pi}$ is a fair initial trace on $w$ in $\B$ and thus $w \in \lang{\B}$.
\end{proof}
\begin{theorem}\label{thm:bwtrace-id}
	For every strict partial order $R \subseteq \bwdirecttraceinclusion$, $\makeprunerel{R}{\id}$ is GFP on NBA.
	In particular, $\makeprunerel{\strictbwdirecttraceinclusion}{\id}$ is GFP.
\end{theorem}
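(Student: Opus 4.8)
The plan is to mirror the GFQ argument for $\bwdirecttraceinclusion$ (Theorem~\ref{lem:bwincl_GFQ}), adapting it from quotienting to pruning. Write $\B = \prune{\A}{\makeprunerel{R}{\id}}$. Since pruning only removes transitions, $\B \languageinclusion \A$ is immediate, so it suffices to prove $\A \languageinclusion \B$. Fix $w = \symb_0\symb_1\cdots \in \lang\A$ together with an initial fair $w$-trace $\hat\pi = q_0 \goesto{\symb_0} q_1 \goesto{\symb_1} \cdots$ in $\A$; the goal is to produce an initial fair $w$-trace in $\B$.

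The core step I would isolate as a lemma, proved by induction on the length $m$: every initial finite $w_{m-1}$-trace $\pi$ of $\A$ ending in a state $p$ can be turned into an initial finite $w_{m-1}$-trace $\tau$ of $\B$ ending in the \emph{same} state $p$ and satisfying $\mathcal C^\mathrm{bw}_F(\pi, \tau)$, so that $\tau$ is accepting at every position where $\pi$ is. The base case $m = 0$ is the trivial $\varepsilon$-trace $\tau = p$. For the step, write $\pi = \pi' \goesto{\symb_{m-1}} p$ with $\pi'$ ending in $q$. If the last transition $q \goesto{\symb_{m-1}} p$ survives in $\B$, I apply the induction hypothesis to $\pi'$ and re-append the transition. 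If it is pruned, then by definition of $\makeprunerel{R}{\id}$ there is an $R$-maximal state $\bar q$ with $q \mathrel{R} \bar q$ and $\bar q \goesto{\symb_{m-1}} p$ in $\B$ (such a maximal $\bar q$ exists because $R$ is a strict partial order on the finite set of $\symb_{m-1}$-predecessors of $p$). Since $R \subseteq \bwdirecttraceinclusion$, we have $q \bwdirecttraceinclusion \bar q$, which lets me reroute $\pi'$ into an initial $\A$-trace $\rho$ ending in $\bar q$ with $\mathcal C^\mathrm{bw}_F(\pi', \rho)$.

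I expect the main obstacle to be exactly here: backward trace inclusion supplies a rerouted prefix $\rho$ that lives in $\A$ but not necessarily in $\B$, so I cannot use it directly. The device that resolves it is to apply the induction hypothesis a \emph{second} time, now to the strictly shorter $\A$-trace $\rho$, obtaining a genuine $\B$-trace $\tau''$ ending in $\bar q$ with $\mathcal C^\mathrm{bw}_F(\rho, \tau'')$; then $\tau \defeq \tau'' \goesto{\symb_{m-1}} p$ is a $\B$-trace ending back in $p$, and composing $\mathcal C^\mathrm{bw}_F$ along $\pi'$, $\rho$, $\tau''$ gives $\mathcal C^\mathrm{bw}_F(\pi, \tau)$. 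The recursion is legitimate precisely because the repair preserves the endpoint $p$ --- this is where I use that the target relation is $\id$ --- so re-appending the last transition is always possible and both induction hypotheses are invoked at length $m-1$. This closure feature is available here but would fail for a naive one-shot use of $\bwdirecttraceinclusion$.

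To conclude, I apply the lemma to every finite prefix $\prefix{\hat\pi}{i}$, obtaining initial $\B$-traces $\tau_i$ on the prefix $w_{i-1}$ that are accepting at every accepting position of $\hat\pi$. These are infinitely many finite initial $\B$-traces on prefixes of the fixed word $w$, hence nodes of a finitely branching tree; by a K\"onig/pigeonhole limit I extract an infinite initial $\B$-trace $\sigma$ on $w$ that agrees with infinitely many $\tau_i$ on each of its prefixes. For every accepting position $j$ of $\hat\pi$, all but finitely many of the agreeing $\tau_i$ satisfy $i \ge j$ and are accepting at position $j$, so $\sigma$ is accepting at $j$ as well; as $\hat\pi$ is fair, $\sigma$ visits accepting states infinitely often. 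Thus $w \in \lang\B$, which gives $\A \languageinclusion \B$ and hence $\A \languageequivalence \B$, so $\makeprunerel{R}{\id}$ is GFP. The stated special case follows because $\strictbwdirecttraceinclusion$ is a strict partial order contained in $\bwdirecttraceinclusion$.
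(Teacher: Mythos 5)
Your proof is correct and follows essentially the same route as the paper's: the same $R$-maximal-predecessor trick (justified by asymmetry, transitivity, and finiteness) to obtain a transition surviving in $\prune{\A}{\makeprunerel{R}{\id}}$, the same rerouting of the initial prefix via $R \subseteq \bwdirecttraceinclusion$ followed by a recursive repair of the rerouted prefix, and a final K\"onig-style limit to assemble an infinite fair initial trace in the pruned automaton. The only difference is packaging: you phrase the repair as a finite-prefix lemma with a fixed endpoint (induction on the length $m$, tracking $\mathcal C^{\mathrm{bw}}_F$ positionwise), whereas the paper phrases it as a property of infinite traces with identical suffixes from step $i$ onwards (induction on $i$, tracking $\mathcal C^{\mathrm{di}}$), which lets it read off fairness of the limit trace directly, while your formulation needs the extra pigeonhole refinement in the K\"onig step to transfer the accepting positions --- both are sound.
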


\begin{proof}
	Let $\B = \prune{\A}{\makeprunerel{R}{\id}}$. We show $\A \languageinclusion \B$.
	If $w = \symb_0\symb_1 \cdots\in \lang{\A}$ then there exists an infinite fair
	initial trace $\hat{\pi}$ on $w$ in $\A$. We show $w \in \lang{\B}$.

	We call a trace 
	$\pi = q_0 \goesto {\symb_0} q_1 \goesto {\symb_1} \cdots$ on $w$ in $\A$
	$i$-{\em good} if it does not contain any
	transition $q_j \goesto {\symb_j} q_{j+1}$ for $j < i$ s.t.~there exists an $\A$ transition
	$q_j' \goesto {\symb_j} q_{j+1}$ with $q_j \mathrel{R} q_j'$ (i.e., no such
	transition is used within the first $i$ steps).
	We show, by induction on $i$, the following property (P):
	For every $i$ and every initial trace $\pi$ on $w$ in $\A$ there exists an
	initial $i$-good trace $\pi'$ on $w$ in $\A$ 
	s.t.~$\pi$ and $\pi'$ have identical suffixes from step $i$ onwards
	and $\mathcal C^{\mathrm {di}}(\pi, \pi')$.
	The base case $i=0$ is trivial with $\pi'=\pi$.
	For the induction step there are two cases. If $\pi$ is $(i+1)$-good
	then we can take $\pi' = \pi$.
	Otherwise there exists a transition $q_i' \goesto {\symb_i} q_{i+1}$ with
	$q_i \mathrel{R} q_i'$. Without restriction (since $\A$
	is finite and $R$ is asymmetric and transitive) we assume
	that $q_i'$ is $R$-maximal among the
	$\symb_i$-predecessors of $q_{i+1}$.
	In particular, the transition $q_i' \goesto {\symb_i} q_{i+1}$
	is present in $\B$.
	Since $R \subseteq \bwdirecttraceinclusion$, there exists
	an initial trace $\pi''$ on $w$ that has suffix
	$q_i' \goesto {\symb_i} q_{i+1} \goesto{\symb_{i+1}} q_{i+2} \dots$
	and $\mathcal C^{\mathrm {di}}(\pi, \pi'')$.
	Then, by induction hypothesis, there exists an initial $i$-good
	trace $\pi'$ on $w$ that has suffix
	$q_i' \goesto {\symb_i} q_{i+1} \goesto{\symb_{i+1}} q_{i+2} \dots$
	and $\mathcal C^{\mathrm {di}}(\pi'', \pi')$.
	Since $q_i'$ is $R$-maximal among the
	$\symb_i$-predecessors of $q_{i+1}$, we obtain that $\pi'$ is also 
	$(i+1)$-good. Moreover, $\pi'$ and $\pi$ have identical suffixes
	from step $i+1$ onwards. Finally, 
	by $\mathcal C^{\mathrm {di}}(\pi, \pi'')$
	and $\mathcal C^{\mathrm {di}}(\pi'', \pi')$, we obtain
	$\mathcal C^{\mathrm {di}}(\pi, \pi')$.
	% as required.

	Given the infinite fair initial trace $\hat{\pi}$ on $w$ in $\A$,
	it follows from property (P) and K\"onig's Lemma that there
	exists an infinite initial trace $\tilde{\pi}$ on $w$ that is
	$i$-good for every $i$ and $\mathcal C^{\mathrm {di}}(\hat{\pi},
	\tilde{\pi})$.
	Therefore $\tilde{\pi}$ is an infinite fair initial trace 
	on $w$ in $\B$ and thus $w \in \lang{\B}$.
	%, as required.
\end{proof}

One can also compare both endpoints of transitions,
i.e., using relations larger than the identity as in the previous cases.
The following Theorems~\ref{thm:bwsim-fwtrace} and \ref{thm:bwtrace-fwsim} prove that
$\makeprunerel{\strictbwdisim}{\directtraceinclusion}$, resp., $\makeprunerel{\bwdirecttraceinclusion}{\strictdisim}$ are GFP.
Consequently, $\makeprunerel{\strictbwdisim}{\strictdisim}$ is also GFP.
This is already a non-trivial fact.
To witness this, notice that while pruning w.r.t.~$\makeprunerel{\id}{\strictdisim}$/$\makeprunerel{\strictbwdisim}{\id}$
preserves forward/backward simulation, respectively,
pruning w.r.t.~$\makeprunerel{\id}{\strictdisim}$ disrupts backward simulation,
and pruning w.r.t.~$\makeprunerel{\strictbwdisim}{\id}$ disrupts forward simulation.
Therefore, when pruning simultaneously w.r.t.~the coarser $\makeprunerel{\strictbwdisim}{\strictdisim}$
both simulations are disrupted and the structure of the automaton can change radically.

Let us also notice that,
while $\makeprunerel{\strictbwdisim}{\directtraceinclusion}$ and $\makeprunerel{\bwdirecttraceinclusion}{\strictdisim}$ are GFP on NBA,
neither $\makeprunerel{\strictbwdirecttraceinclusion}{\directtraceinclusion}$ subsuming the first one,
nor $\makeprunerel{\bwdirecttraceinclusion}{\strictdirecttraceinclusion}$ subsuming the second one, are GFP on NBA.
Indeed, $\makeprunerel{\strictbwdirecttraceinclusion}{\strictdirecttraceinclusion}$ is already not GFP.
A counter-example is shown in Fig.~\ref{fig:incl:not:GFP},
where removing the dashed transitions $p_0 \goesto a q_0$ (due to $p_1 \goesto a q_1$)
and $r_1 \goesto a s_1$ (due to $r_0 \goesto a s_0$)
causes the word $a^5e^\omega$ to be no longer accepted;
the extra transitions going up from the initial state to the unnamed state and to $r_0$,
and the extra transitions going down from $q_1$ to the unnamed state and to the accepting state
are used in order to ensure that the trace inclusions are strict,
which shows that the two transitions $p_0 \goesto a q_0$ and $r_1 \goesto a s_1$ are even strictly subsumed,
and yet they cannot both be removed, lest the language be altered.
Thus, one cannot use trace inclusions on both endpoints, i.e., at least one endpoint must be a simulation.

\begin{figure}

	\begin{tikzpicture}[on grid, node distance=1.5cm and 1.5cm]
		\tikzstyle{vertex} = [smallstate]

		\path node [vertex, initial] (i) {};
		
		\path node [vertex] (p0) [above right = 1 cm and 1 cm of i] {$p_0$};
		\path node [vertex] (q0) [right = of p0] {$q_0$};
		\path node [vertex] (r0) [right = of q0] {$r_0$};
		\path node [vertex] (s0) [right = of r0] {$s_0$};
		
		\path node [vertex] (p1) [below right = 1 cm and 1 cm of i] {$p_1$};
		\path node [vertex] (q1) [right = of p1] {$q_1$};
		\path node [vertex] (r1) [right = of q1] {$r_1$};
		\path node [vertex] (s1) [right = of r1] {$s_1$};
		
		\path node [vertex, accepting] (f) [below right = .6 and 1 cm of s0] {};
		
		\path node [vertex] (x0) [above = 1.6cm of p0] {};% {$x_0$};
		\path node [vertex] (y0) [right = of x0] {};% {$y_0$};

		\path node [vertex] (x1) [below = 1.6cm of r1] {};% {$x_1$};
		\path node [vertex] (y1) [right = of x1] {};% {$y_1$};

		\path[->]

			(i) edge node [above left = .2cm and -.2 cm] {$a$} (p0)
			(i) edge [bend left = 30] node [above left] {$c$} (x0)
			(i) edge [bend left = 75] node [above left] {$b$} (r0)
			
			(p0) edge [dashed] node [above] {$a$} (q0)
			(q0) edge node [above] {$a$} (r0)
			(r0) edge node [above] {$a$} (s0)
			(s0) edge node [above right] {$a,d$} (f)
			
			(x0) edge node [above] {$a$} (y0)
			(y0) edge [bend left = 30] node [above right] {$a$} (r0)
			
			(i) edge node [below left] {$a,c$} (p1)
			(q1) edge [bend right = 45] node [below left] {$a$} (x1)
			(q1) edge [bend right = 90] node [below left] {$b$} (f)
			
			(p1) edge node [above] {$a$} (q1)
			(q1) edge node [above] {$a$} (r1)
			(r1) edge [dashed] node [above] {$a$} (s1)
			(s1) edge node [above left = .2 cm and -.2 cm] {$a$} (f)
			
			(x1) edge node [below] {$a$} (y1)
			(y1) edge [bend right = 45] node [below right] {$d$} (f)
			
			(f) edge [loop right] node {$e$} ();

		\path
			(p0) -- node [midway] {\rotatebox{-90}{$\strictbwdirecttraceinclusion$}} (p1)
			(q0) -- node [midway] {\rotatebox{-90}{$\strictdirecttraceinclusion$}} (q1)
			(r0) -- node [midway] {\rotatebox{90}{$\strictbwdirecttraceinclusion$}} (r1)
			(s0) -- node [midway] {\rotatebox{90}{$\strictdirecttraceinclusion$}} (s1);

		\begin{pgfonlayer}{background}
			%\node () [color=blue!50, inner sep=4pt, ellipse, fit=(p0) (q)] {};
		\end{pgfonlayer}

	\end{tikzpicture}

	\caption{$\makeprunerel{\strictbwdirecttraceinclusion}{\strictdirecttraceinclusion}$ is not GFP.}
	\label{fig:incl:not:GFP}

\end{figure}

\begin{figure}

	\begin{tikzpicture}[on grid, node distance=1.5cm and 1.5cm]
		\tikzstyle{vertex} = [smallstate]

		\path node [vertex, initial] (p0) {$p_0$};
		\path node [vertex] (q0) [right = of p0] {$q_0$};
		\path node [vertex] (r0) [right = of q0] {$r_0$};
		\path node [vertex] (s0) [right = of r0] {$s_0$};
		
		\path node [vertex] (q1) [below = 2.2cm of q0] {$q_1$};
		\path node [vertex] (r1) [right = of q1] {$r_1$};
		\path node [vertex] (s1) [right = of r1] {$s_1$};
		
		\path node [vertex, accepting] (f) [below right = 1.1 cm and 1 cm of s0] {};
		
		\path node [vertex] (q2) [above = 1cm of q0] {};% {$x_0$};

		\path[->]

			(p0) edge node [above left] {$b,c$} (q2)
			(q2) edge node [above right] {$a$} (r0)
			
			(p0) edge node [above] {$a$} (q0)
			(q0) edge [dashed] node [above] {$a$} (r0)
			(r0) edge node [above] {$a$} (s0)
			(s0) edge node [above right] {$a$} (f)
			
			(p0) edge node [below left] {$a,b$} (q1)
			(q1) edge node [above] {$a$} (r1)
			(r1) edge [dashed] node [above] {$a$} (s1)
			(s1) edge node [below right] {$a$} (f)
			
			(f) edge [loop right] node {$a$} ();

		\path
			(q0) -- node [midway] {\rotatebox{-90}{$\strictbwdirecttraceinclusion$}} (q1)
			(r0) -- node [midway] {$\rotatebox{-90}{\!\!\!\!$\disim$}$ $\rotatebox{90}{$\strictbwdirecttraceinclusion$}$} (r1)
			(s0) -- node [midway] {\rotatebox{90}{$\disim$}} (s1);

	\end{tikzpicture}

	\caption{$\makeprunerel{\strictbwdirecttraceinclusion}{\disim}$ is not GFP.}
	\label{fig:strict:incl:sim:not:GFP}

\end{figure}

%%% Local Variables:
%%% mode: latex
%%% TeX-master: "ROOT.tex"
%%% End:

Moreover, the endpoint using simulation must actually use strict simulation, and not just simulation.
In fact, while $\makeprunerel{\strictbwdisim}{\directtraceinclusion}$ and $\makeprunerel{\bwdirecttraceinclusion}{\strictdisim}$ are GFP on NBA,
neither $\makeprunerel{\bwdisim}{\strictdirecttraceinclusion}$ nor $\makeprunerel{\strictbwdirecttraceinclusion}{\disim}$ is GFP.
A counter-example for the second case is shown in Fig.~\ref{fig:strict:incl:sim:not:GFP} (the first case is symmetric).
If both dashed transitions are removed, the automaton stops recognizing $a^\omega$.

%(Note that $\A = \A/\!\bwdirecttraceinclusion = \A/\!\directtraceinclusion$;
%this example even holds for $\stricttranskbwsim, \stricttranskdisim$ and $k=3$; cf.~Sec.~\ref{sec:lookahead}).

%
\begin{theorem}\label{thm:bwsim-fwtrace}
	%For every partial order $R \subseteq \directtraceinclusion$,
	The relation $\makeprunerel{\strictbwdisim}{\directtraceinclusion}$ is GFP on NBA.
\end{theorem}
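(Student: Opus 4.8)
The plan is to establish the nontrivial inclusion $\A \languageinclusion \prune{\A}{\makeprunerel{\strictbwdisim}{\directtraceinclusion}}$, the converse being automatic since pruning only removes transitions. Write $\B = \prune{\A}{\makeprunerel{\strictbwdisim}{\directtraceinclusion}}$. First I would record two structural facts. Since $\strictbwdisim$ is irreflexive, asymmetric and transitive while $\directtraceinclusion$ is a preorder, the relation $\makeprunerel{\strictbwdisim}{\directtraceinclusion}$ is a strict partial order on $\delta$, so $\B$ is well defined and every transition lies below some surviving ($\makeprunerel{\strictbwdisim}{\directtraceinclusion}$-maximal) one. Crucially, any transition whose source is $\strictbwdisim$-maximal survives in $\B$, because pruning requires a \emph{strict} improvement of the source; this is the point where strictness of $\strictbwdisim$ is essential (cf.~Fig.~\ref{fig:strict:incl:sim:not:GFP}).

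Given $w = \symb_0\symb_1\cdots \in \lang{\A}$, I fix an initial fair trace $\hat\pi$ on $w$ in $\A$ and manufacture an initial fair trace on $w$ using only surviving transitions. The basic tool is a \emph{repair move}: if some step $q_i \goesto{\symb_i} q_{i+1}$ of a trace is pruned, then there is a surviving transition $p' \goesto{\symb_i} r'$ with $q_i \strictbwdisim p'$ and $q_{i+1} \directtraceinclusion r'$. I would then (i) use backward direct simulation $q_i \strictbwdisim p'$ to reroute the initial prefix up to position $i$ into a new initial prefix ending in $p'$, which by $\mathcal C^{\mathrm{bw}}_{I,F}$ stays initial and matches accepting states at every position $\le i$; and (ii) use $q_{i+1} \directtraceinclusion r'$ to reroute the infinite suffix from $q_{i+1}$ into one starting in $r'$, which by $\mathcal C^{\mathrm{di}}$ matches accepting states at every position $\ge i+1$. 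Glueing the new prefix, the surviving transition $p' \goesto{\symb_i} r'$, and the new suffix yields an initial trace $\pi'$ on $w$ with $\mathcal C^{\mathrm{di}}(\pi, \pi')$; in particular $\pi'$ is again fair, and its step $i$ now survives.

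The main obstacle is that the two halves of the repair pull in opposite directions and interfere with one another: rerouting the prefix to obtain a $\strictbwdisim$-larger source destroys whatever goodness had already been established before position $i$, while rerouting the suffix introduces fresh, possibly pruned transitions after position $i$. This is precisely what is \emph{not} an issue in Theorems~\ref{thm:prune_id_strictdirecttraceinclusion} and \ref{thm:bwtrace-id}, where one endpoint is the identity $\id$ and the repair frontier moves monotonically (the prefix, resp.\ the suffix, can be frozen); here the frontier may have to retreat. I expect this to be the crux, and I would resolve it by a K\"onig's-Lemma limit argument combined with a well-founded measure: strictness of $\strictbwdisim$ on the finite state set $Q$ bounds the length of any chain of source improvements, so each position's source can be improved only finitely often, which keeps the frontier from retreating indefinitely; the positional, safety-style nature of the direct condition $\mathcal C^{\mathrm{di}}$ then guarantees that the trace obtained in the limit is still fair. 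Since every transition of the limit trace survives, it witnesses $w \in \lang{\B}$, and therefore $\makeprunerel{\strictbwdisim}{\directtraceinclusion}$ is GFP on NBA.
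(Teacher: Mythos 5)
Your proposal is correct in outline, but it takes a genuinely different route from the paper, and two points in your sketch need to be made precise for it to go through. The paper never tracks the syntactic property ``the first $i$ transitions survive in $\B$'' during the induction, precisely because of the interference you identify. Instead it defines a \emph{semantic} invariant: a trace is $i$-good if no position $j \le i$ admits a $\strictbwdisim$-strictly-larger source from which an infinite suffix trace exists satisfying $\mathcal C^{\mathrm{di}}$ against the old suffix. The inductive step picks a $\strictbwdisim$-\emph{maximal} such replacement, and goodness at positions $j < i$ is preserved automatically because the rerouted prefix satisfies $q_j \bwdisim q_j'$ at \emph{every} position, so any violation for the new trace would transfer (by transitivity) to a violation for the old one. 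This makes the invariant monotone, eliminates any termination measure, and defers the pruning check entirely to the end: a pruned transition in the K\"onig limit would itself be a witness contradicting $j$-goodness. Your repair-process-plus-measure argument can be completed, but note: (1) your prefix rerouting must be obtained by playing a winning strategy in the backward simulation \emph{game}, which yields $q_j \bwdisim q_j'$ at every intermediate position --- the winning condition $\mathcal C^{\mathrm{bw}}_{I,F}$ alone, which backward direct trace inclusion $\bwdirecttraceinclusion$ also provides, is not enough for your sources at a fixed position to form a $\bwdisim$-chain; this is exactly the paper's parenthetical remark on why $\bwdirecttraceinclusion$ does not suffice on the source side (cf.~Fig.~\ref{fig:incl:not:GFP}); and (2) your flat claim that ``each position's source can be improved only finitely often'' is not literally true, since a repair at position $j$ replaces the entire suffix and can \emph{decrease} the $\bwdisim$-classes at positions beyond $j$; you need the nested (lexicographic) induction --- position $0$ stabilizes first because all repairs weakly increase it, then position $1$, and so on --- after which your K\"onig argument over all-surviving prefixes, with the positional invariant $\mathcal C^{\mathrm{di}}(\hat\pi,\cdot)$ maintained through both halves of each repair, does yield a fair initial trace in $\B$. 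What each approach buys: the paper's semantic $i$-goodness gives a one-shot monotone induction with no termination analysis; your version is more operational and makes the role of strictness of $\strictbwdisim$ (as a well-founded resource) explicit, at the cost of the more delicate stabilization argument.
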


\begin{proof}
	%It suffices to prove that $\makeprunerel{\strictbwdisim}{\directtraceinclusion}$ is GFP,
	%since $\strictdisim$ is included in $\directtraceinclusion$.
	%
	Let $\B = \prune{\A}{\makeprunerel{\strictbwdisim}{\directtraceinclusion}}$. We show $\A \languageinclusion \B$.
	Let $w = \symb_0\symb_1 \cdots\in \lang\A$. There exists an infinite fair
	initial trace $\hat{\pi}$ on $w$ in $\A$. We show $w \in \lang{\B}$.

	Let $i \geq 0$. We call a trace $\pi = q_0 \goesto {\symb_0} q_1 \goesto {\symb_1} \cdots$ in $\A$ on $w$ $i$-{\em good}
	if there is no $j \leq i$ 
	s.t.~there exists a state $q_j'$ with $q_j \strictbwdisim q_j'$
	and there exists an infinite trace $\suffix {\pi'} j$ from $q_j'$ on the word $\symb_j\symb_{j+1}\cdots$ with
	$\mathcal C^{\mathrm {di}}(\suffix \pi j, \suffix {\pi'} j)$.
	First, we show that, for every $i \geq 0$, there are $i$-good traces in $\A$.
	For the base case, it suffices to choose the state $q_0$ to be $\strictbwdisim$-maximal
	amongst the starting points of all infinite initial traces $\pi$ on $w$ s.t.~$\mathcal C^{\mathrm {di}}(\pi, \hat\pi)$.
	(This set is non-empty since it contains $\hat\pi$.)
	For the inductive step, let $i \geq 1$ and let $\pi$ be an infinite $(i-1)$-good trace on $w$.
	If $\pi$ is also $i$-good, then we are done.
	Otherwise, $\pi$ is not $i$-good,
	and there exist a state $q_i'$ and an infinite path $\suffix {\pi'} i$ from $q_i'$
	s.t.~$q_i \strictbwdisim q_i'$ and $C^{\mathrm {di}}(\suffix \pi i, \suffix {\pi'} i)$.
	We further choose $q_i'$ to be $\strictbwdisim$-maximal with the former property.
	By the definition of $q_i \strictbwdisim q_i'$,
	there exists an initial path $\prefix {\pi'} i = q_0' \goesto {\symb_0} q_1' \goesto {\symb_1} \cdots \goesto {\symb_{i-1}} q_i'$ ending in $q_i'$
	s.t., for every $j \leq i$, $q_j \bwdisim q_j'$.
	(This last property uses the fact that $\bwdisim$ propagates backward. 
	Backward direct trace inclusion $\bwdirecttraceinclusion$ does not suffice.)
	Thus, $\pi' = q_0' \goesto {\symb_0} q_1' \goesto {\symb_1} \cdots$ is an initial, infinite, and fair trace on $w$.
	Moreover, it is $i$-good: By contradiction, let $q''_j$ be s.t.~$q'_j \strictbwdisim q''_j$ with $j \leq i$.
	It cannot be the case that $j = i$ by the maximality of $q_i'$.
	Since $q_j \bwdisim q'_j$, it also cannot be the case that $j < i$ since $\pi$ is $(i-1)$-good.
	Thus, $\pi'$ is $i$-good.
	
	Second, by K\"onig's Lemma it follows that there exists an initial, infinite, trace $\tilde{\pi} = q_0 \goesto {\symb_0} q_1 \goesto {\symb_1} \cdots $ on $w$
	that is $i$-good for every $i$ and $\mathcal C^{\mathrm {di}}(\hat{\pi}, \tilde{\pi})$.
	In particular, this implies that $\tilde{\pi}$ is fair. 
	We show that such a $\tilde{\pi}$ is also possible in $\B$
	by assuming the opposite and deriving a contradiction.
	Suppose that $\tilde{\pi}$ contains a transition $q_j \goesto {\symb_j} q_{j+1}$ that is not present in $\B$.
	Then there exists a transition $q_j' \goesto {\symb_j} q_{j+1}'$ in $\B$
	s.t.~$q_j \strictbwdisim q_j'$ and 
	$q_{j+1} \directtraceinclusion q_{j+1}'$.
	%
%	We cannot have $j=0$, because in this case $\tilde{\pi}$ would not be $0$-good. So we get $j \ge 1$.
	%
	Since $q_j' \goesto {\symb_j} q_{j+1}'$ and $q_{j+1} \directtraceinclusion q_{j+1}'$,
	there exists an infinite, fair, trace $\suffix {\pi'} j$ from $q_j'$ with
	$\mathcal C^{\mathrm {di}}(\suffix \pi j, \suffix {\pi'} j)$.
	Since $q_j \strictbwdisim q_j'$, this contradicts the fact that $\tilde\pi$ is $j$-good.
	Therefore $\tilde{\pi}$ is a fair initial trace on $w$ in $\B$, and thus $w \in \lang{\B}$.
\end{proof}

\begin{theorem}\label{thm:bwtrace-fwsim}
	The relation $\makeprunerel{\bwdirecttraceinclusion}{\strictdisim}$ is GFP on NBA.
\end{theorem}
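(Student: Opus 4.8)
The reverse inclusion $\prune{\A}{\makeprunerel{\bwdirecttraceinclusion}{\strictdisim}} \languageinclusion \A$ is automatic, so the plan is to prove $\A \languageinclusion \B$ where $\B = \prune{\A}{\makeprunerel{\bwdirecttraceinclusion}{\strictdisim}}$. This theorem is the ``transpose'' of Theorem~\ref{thm:bwsim-fwtrace} under the exchange of source/target and forward/backward, so I would set up the argument in parallel: fix $w \in \lang{\A}$ with a fair initial trace $\hat\pi$ and build from it a fair initial trace lying entirely in $\B$. The role played by backward simulation in Theorem~\ref{thm:bwsim-fwtrace} (rebuilding prefixes pointwise) is now played by forward direct simulation $\disim$, which propagates \emph{forward} pointwise: from $q \disim q'$ and any infinite trace from $q$, Duplicator produces an infinite trace from $q'$ matching it position-by-position under $\mathcal C^{\mathrm{di}}$ and keeping the two states $\disim$-related at every step. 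This is exactly the property that fails for $\directtraceinclusion$, which is why the target endpoint must be a \emph{simulation} and must be \emph{strict} (cf.\ the counterexamples in Fig.~\ref{fig:incl:not:GFP} and Fig.~\ref{fig:strict:incl:sim:not:GFP}).

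I would call a position $j$ of a trace $\pi$ on $w$ \emph{improvable} if there is a state $q'$ with $\pi_j \strictdisim q'$ admitting an initial finite $\symb_0\cdots\symb_{j-1}$-trace ending in $q'$ that dominates $\prefix{\pi}{j}$ under $\mathcal C^{\mathrm{bw}}_F$ (such dominating initial prefixes are supplied precisely by $\bwdirecttraceinclusion$), and call $\pi$ \emph{$i$-good} if no position $j \le i$ is improvable. The survival step is then the easy direction, exactly as in Theorem~\ref{thm:bwsim-fwtrace}: a pruned transition $\pi_j \goesto{\symb_j} \pi_{j+1}$, witnessed by $\pi_j \bwdirecttraceinclusion q_j'$ and $\pi_{j+1} \strictdisim q_{j+1}'$ with $q_j' \goesto{\symb_j} q_{j+1}'$, makes position $j+1$ improvable (extend a $\bwdirecttraceinclusion$-dominating prefix to $q_j'$ by the surviving transition, using that $\pi_{j+1}\strictdisim q_{j+1}'$ forces $\pi_{j+1}\in F \Rightarrow q_{j+1}'\in F$ to match acceptance at position $j+1$). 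Hence a trace good at every position uses only transitions of $\B$.

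The heart of the proof is the existence of good traces, which I would package, following the proof of Theorem~\ref{thm:bwtrace-id}, as a property (P): every initial trace $\pi$ on $w$ can be turned into an initial $i$-good trace $\pi'$ with $\mathcal C^{\mathrm{di}}(\pi,\pi')$ and $\pi_k \disim \pi'_k$ for all $k \ge i$ (so the suffix from step $i$ is forward-dominated). The induction on $i$ fixes position $i+1$ by moving to a state $q^*$ that is \emph{$\strictdisim$-maximal among the improvements of $\pi_{i+1}$ that still admit a $\mathcal C^{\mathrm{bw}}_F$-dominating initial prefix}, rebuilding the suffix from $q^*$ by forward simulation and the prefix to $q^*$ by backward direct trace inclusion, and then re-applying the induction hypothesis to the resulting initial trace in order to repair the freshly built (and a priori un-good) prefix.

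The main obstacle — and the place where the two relations genuinely interact — is to see that repairing the prefix does not destroy the goodness just installed at position $i+1$. Re-applying (P) replaces $q^*$ by some forward-dominating $\pi'_{i+1} \sqsupseteq q^*$, and one must argue that $\pi'_{i+1}$ is still non-improvable. This is where maximality of $q^*$ together with the fact that $a \disim b$ and $b \strictdisim c$ imply $a \strictdisim c$ is used: if $\pi'_{i+1}$ were improvable to some $x$, then $q^* \disim \pi'_{i+1} \strictdisim x$ gives $q^* \strictdisim x$, and composing the dominating prefix of $x$ with the maintained $\mathcal C^{\mathrm{di}}(\pi,\pi')$ (which restricts to $\mathcal C^{\mathrm{bw}}_F$ on prefixes) yields a $\strictdisim$-larger improvement of $\pi_{i+1}$ with a dominating prefix, contradicting the maximal choice of $q^*$. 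Because $\mathcal C^{\mathrm{di}}$ and $\mathcal C^{\mathrm{bw}}_F$ both track acceptance in the ``old $\Rightarrow$ new'' direction, all these matchings compose, and the invariant $\pi_k \disim \pi'_k$ on suffixes is exactly what carries acceptance from $\hat\pi$ through. Finally, since improvability of a position depends only on the finite prefix up to that position, a standard K\"onig's Lemma argument over the finitely-branching tree of good prefixes that $\mathcal C^{\mathrm{di}}$-dominate $\hat\pi$ yields a single initial trace good at every position and dominating $\hat\pi$; it is therefore fair and lies in $\B$, so $w \in \lang{\B}$.
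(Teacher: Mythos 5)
Your proof is correct and follows essentially the same route as the paper's: an induction that fixes position $i+1$ by a $\strictdisim$-maximality argument, rebuilds the prefix via $\bwdirecttraceinclusion$ and the suffix via pointwise forward simulation (noting, as the paper does, that $\directtraceinclusion$ would not suffice here), repairs the freshly built prefix by re-applying the induction hypothesis, and concludes with K\"onig's Lemma. The only difference is bookkeeping: the paper defines $i$-goodness directly as ``the first $i$ transitions are possible in $\B$'' and takes a trace $\pi^1$ maximal w.r.t.\ $\prec_{i+1}$ among $i$-good dominating traces, whereas you define goodness intrinsically as non-improvability (in the style of the paper's proof of Theorem~\ref{thm:bwsim-fwtrace}) with state-level maximality of $q^*$ plus an explicit survival step linking non-improvability to membership in $\B$ --- the two formulations are interchangeable, and your composition and maximality arguments (including the key step $q^* \disim \pi'_{i+1} \strictdisim x \Rightarrow q^* \strictdisim x$) match the paper's contradiction mechanism exactly.
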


\begin{proof}
	Let $\B = \prune{\A}{\makeprunerel{\bwdirecttraceinclusion}{\strictdisim}}$. We show $\A \languageinclusion \B$.
	Let $w = \symb_0\symb_1 \cdots\in \lang{\A}$. 
	There exists an infinite fair initial trace $\hat{\pi}$ on $w$ in $\A$. We show $w \in \lang{\B}$.

	For an index $i \geq 0$,
	we define the following preorder $\preceq_i$ on infinite initial traces on $w$:
	Given two infinite initial traces $\pi, \pi'$ on $w$, with
	$\pi = q_0 \goesto {\symb_0} q_1 \goesto {\symb_1} \cdots$ and
	$\pi' = q_0' \goesto {\symb_0} q_1' \goesto {\symb_1} \cdots$,
	we write $\pi \preceq_i \pi'$ whenever the following condition is satisfied:
	\begin{align*}
		\pi \preceq_i \pi' \quad \iff \quad \mathcal C^{\mathrm {di}}(\pi, \pi') \textrm{ and } \forall j \ge i \cdot q_j \disim q_j',
	\end{align*}
	and we write $\pi \prec_i \pi'$ when, additionally, $q_i \strictdisim q_i'$.
	%or $q_i = q_i'$ and $(q_{i-1} \goesto {\symb_{i-1}} q_i) \, P \, (q_{i-1}' \goesto {\symb_{i-1}} q_i')$.
	%
	%Notice that $\preceq_i \subseteq \preceq_{i+1}$.
	%
	%Notice that $\prec_i$ is acyclic and it satisfies $\prec_i \circ \preceq_i\ \subseteq\ \prec_i$.
	%
	Moreover, we say that $\pi$ is \emph{$i$-good} whenever its first $i$ transitions are also possible in $\B$.
	We show, by induction on $i$, the following property (PP):
	For every infinite initial trace $\pi$ on $w$ and every $i \ge 0$,
	there exists an infinite initial trace $\pi'$ on $w$ s.t.~$\pi \preceq_i \pi'$ and $\pi'$ is $i$-good.
	The base case $i=0$ is trivially true with $\pi' = \pi$.
	For the induction step, consider an infinite initial trace $\pi$ on $w$.
	By induction hypothesis, there exists an infinite initial trace 
	$\pi^1 = q_0^1 \goesto {\symb_0} q_1^1 \goesto {\symb_1} \cdots$ on $w$
	s.t.~$\pi \preceq_i \pi^1$ and $\pi^1$ is $i$-good.
	Consequently, $\pi \preceq_{i+1} \pi^1$ holds,
	and we may additionally assume that $\pi^1$ is \emph{maximal}
	in the sense that there is no other $\pi'$ which is $i$-good and $\pi^1 \prec_{i+1} \pi'$.
	We argue that such a maximal $\pi^1$ is necessarily $(i+1)$-good.
	By contradiction, assume that the transition $q_i^1 \goesto {\symb_i} q_{i+1}^1$ is not in $\B$.
	Then there exists a transition $q_i^2 \goesto {\symb_i} q_{i+1}^2$ in $\B$
	%s.t.~$(q_i^1 \goesto {\symb_i} q_{i+1}^1) \, P \, (q_i^2 \goesto {\symb_i} q_{i+1}^2)$.
	%
	s.t.~$q_i^1 \bwdirecttraceinclusion q_i^2$ and $q_{i+1}^1 \strictdisim q_{i+1}^2$.
	From the definitions of $\bwdirecttraceinclusion$ and $\disim$
	it follows that there exists an infinite initial trace $\pi^2 = q_0^2 \goesto {\symb_0} q_1^2 \goesto {\symb_1} \cdots$ on $w$
	s.t.~$\pi^1 \prec_{i+1} \pi^2$.
	(This last property uses the fact that $\disim$ propagates forward. 
	Direct trace inclusion $\directtraceinclusion$ does not suffice.)
	%
	%Since $(q_i^1 \goesto {\symb_i} q_{i+1}^1) \, P \, (q_i^2 \goesto {\symb_i} q_{i+1}^2)$,
	%we either have $q_{i+1}^1 \strictdisim q_{i+1}^2$ or $q_{i+1}^1 = q_{i+1}^2$, % and $q_i^1 \bwdirecttraceinclusion q_i^2$,
	%and thus $\pi^1 \prec_{i+1} \pi^2$.
	%
	By induction hypothesis,
	there exists an infinite initial trace $\pi^3 = q_0^3 \goesto {\symb_0} q_1^3 \goesto {\symb_1} \cdots$ on $w$
	s.t.~$\pi^2 \preceq_i \pi^3$ (and thus $\pi^2 \preceq_{i+1} \pi^3$) and $\pi^3$ is $i$-good.
	Thus, $\pi^1 \prec_{i+1} \pi^3$, which contradicts the maximality of $\pi^1$.
	Therefore, $\pi^1$ is $(i+1)$-good.

	Given the infinite fair initial trace $\hat{\pi}$ on $w$ in $\A$,
	it follows from property (PP) and K\"onig's Lemma that there
	exists an infinite initial trace $\tilde{\pi}$ on $w$ that is
	$i$-good for every $i$ and $\mathcal C^{\mathrm {di}}(\hat{\pi},
	\tilde{\pi})$.
	Therefore $\tilde{\pi}$ is an infinite fair initial trace 
	on $w$ in $\B$, and thus $w \in \lang{\B}$.
	%, as required.
\end{proof}

Notice that Theorems~\ref{thm:prune_id_strictdirecttraceinclusion} (about $\makeprunerel{\id}{\strictdirecttraceinclusion}$)
and \ref{thm:bwsim-fwtrace} (about $\makeprunerel{\strictbwdisim}{\directtraceinclusion}$) are incomparable:
In the former, we require the source endpoints to be the same (which is forbidden by the latter),
and in the latter we allow the destination endpoints to be the same (which is forbidden by the former),
and it is not clear whether one can find a common GFP generalization.
For the same reason, Theorems \ref{thm:bwtrace-id} (about $\makeprunerel{\strictbwdirecttraceinclusion}{\id}$)
and \ref{thm:bwtrace-fwsim} (about $\makeprunerel{\bwdirecttraceinclusion}{\strictdisim}$) are also incomparable.

\subsubsection*{Pruning w.r.t.~transient transitions.}

Recall that a transition is \emph{transient} when it appears at most once on
every path of the automaton.
(Analogously, one can define transient states.
E.g., \cite{somenzi:efficient} consider variants of direct/backward simulations that
do not care about the accepting status of transient states.)
While at the beginning of the section we observed that $\makeprunerel{\id}{\strictlanguageinclusion}$ is not GFP,
it is correct to remove a transition w.r.t.~$\makeprunerel{\id}{\strictlanguageinclusion}$
when it is subsumed by a transient one \cite[Theorem 3]{somenzi:efficient}.

This motivates us to define the following transient variant of $\makeprunerel{\brel}{\frel}$,
for $\brel, \frel \subseteq Q \times Q$:
\begin{align*}
	\makeprunereltransient{\brel}{\frel} =
	\{((p,\symb,r),(p',\symb,r')) \in \delta \times \delta \st p \brel p', r \frel r', \textrm{ and } (p',\symb,r') \textrm{ is transient} \}.
\end{align*}
The relation $\makeprunereltransient{\id}{\strictlanguageinclusion}$
using the very coarse fair trace inclusion $\strictlanguageinclusion$
is GFP for NBA \cite{somenzi:efficient}.
We note that one cannot relax the source endpoint to go beyond the identity.
In fact, $\makeprunereltransient{\strictbwdisim}{\strictlanguageinclusion}$
---and even $\makeprunereltransient{\strictbwdisim}{\strictdesim}$---is already not GFP.
A counterexample is shown in Fig.~\ref{fig:transient:not:GFP}:
Both transitions $p \goesto a q$ and $q \goesto a r$ are transient,
and $(q, a, r) \makeprunereltransient{\strictbwdisim}{\strictdesim} (p, a, q)$.
%$r \strictlanguageinclusion q$ (even $r \strictdesim q$),
%and $q \strictbwdisim p$.
%
However, removing the smaller transition $q \goesto a r$ changes the language,
since $a^\omega$ is no longer accepted.

However, one can combine pruning w.r.t.~transient transitions using the coarse fair trace inclusion,
and \emph{simultaneously} pruning w.r.t.~all transitions using direct trace inclusion.
Let $R_t \subseteq \delta \times \delta$ be the relation on transitions defined as
$R_t = \makeprunerel{\id}{\strictdirecttraceinclusion} \cup \makeprunereltransient{\id}{\strictlanguageinclusion}$.

%
% \begin{align*}
% 	
% \end{align*}
%
We will use the fact that $R_t$ is GFP when describing our automata reduction algorithm in Sec.~\ref{sec:heavyandlight}.
The following result thus generalizes Theorem~\ref{thm:prune_id_strictdirecttraceinclusion}.
\begin{theorem}
	\label{thm:prune_transient}
	The relation $R_t$ is GFP on NBA.
\end{theorem}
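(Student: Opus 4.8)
The plan is to establish the nontrivial inclusion $\A \languageinclusion \B$, where $\B = \prune{\A}{R_t}$, by adapting the $i$-good trace construction of Theorem~\ref{thm:prune_id_strictdirecttraceinclusion} so that it copes simultaneously with the two kinds of transitions removed by $R_t = \makeprunerel{\id}{\strictdirecttraceinclusion} \cup \makeprunereltransient{\id}{\strictlanguageinclusion}$. Fix $w = \symb_0 \symb_1 \cdots \in \lang{\A}$ together with an infinite fair initial trace $\hat\pi$ on $w$ in $\A$. First I would record a structural fact that keeps redirections inside $\B$: restricted to transitions sharing a source and a symbol, $R_t$ is acyclic, because both constituent relations strictly increase the target in the strict partial order $\strictfairtraceinclusion$ (recall $\strictdirecttraceinclusion \subseteq \strictfairtraceinclusion$). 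Hence, since $\A$ is finitely branching, for every state $q$ and symbol $\symb$ there is an $R_t$-maximal $\symb$-successor transition out of $q$, and such a maximal transition survives in $\B$.

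As in Theorem~\ref{thm:prune_id_strictdirecttraceinclusion}, call a trace on $w$ \emph{$i$-good} if its first $i$ transitions are present in $\B$. The core is a propagation step turning an $i$-good fair initial trace $\pi$ into an $(i+1)$-good fair initial trace $\pi'$ agreeing with $\pi$ on the first $i$ transitions. If the transition $q_i \goesto{\symb_i} q_{i+1}$ of $\pi$ already lies in $\B$ there is nothing to do; otherwise it is dominated by a surviving transition, and I would split into two cases according to which part of $R_t$ witnesses this. In the \emph{direct} case the transition is removed by $\makeprunerel{\id}{\strictdirecttraceinclusion}$: I redirect to an $\strictdirecttraceinclusion$-maximal successor surviving in $\B$ and, exactly as in Theorem~\ref{thm:prune_id_strictdirecttraceinclusion}, use $\directtraceinclusion$ to replace the suffix $\suffix{\pi}{i+1}$ by one with $\mathcal C^{\mathrm{di}}(\suffix{\pi}{i+1},\suffix{\pi'}{i+1})$, so that $\mathcal C^{\mathrm{di}}(\pi,\pi')$ holds and fairness is preserved position-wise. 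In the \emph{transient} case the transition is removed only by $\makeprunereltransient{\id}{\strictlanguageinclusion}$, hence it is dominated under $\strictfairtraceinclusion$ by a transition into some $R_t$-maximal $q_{i+1}'$ surviving in $\B$; I reroute through $q_i \goesto{\symb_i} q_{i+1}'$ and invoke fair trace inclusion $q_{i+1} \fairtraceinclusion q_{i+1}'$ to pick \emph{any} fair tail from $q_{i+1}'$ on $\symb_{i+1}\symb_{i+2}\cdots$. The resulting $\pi'$ is again fair and $(i+1)$-good, but only \emph{globally} fair: the position-wise guarantee $\mathcal C^{\mathrm{di}}(\pi,\pi')$ is lost.

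Iterating the propagation step fixes one position at a time, the finite prefixes stabilise, and by K\"onig's Lemma their limit $\tilde\pi$ is an infinite initial trace of $\B$ on $w$. The hard part is to argue that $\tilde\pi$ is \emph{fair}: direct redirections preserve acceptance position-wise through $\mathcal C^{\mathrm{di}}$ and therefore compose harmlessly, but transient redirections preserve it only in the limit, so a naive limit of fair traces need not be fair. This is precisely where the transience hypothesis is essential and where the argument goes beyond Theorem~\ref{thm:prune_id_strictdirecttraceinclusion}. The key is that a transient transition occurs at most once on any trace: once a transient reroute has placed such a transition at position $i$, that position can never again host a prunable transition, so it is \emph{permanently} fixed, and one may reset the fair reference to the newly produced fair trace and resume maintaining $\mathcal C^{\mathrm{di}}$ against it. I would exploit this one-shot behaviour to confine the non-$\mathcal C^{\mathrm{di}}$ effect of each transient reroute to a bounded window and, by choosing each rerouted fair tail so that it reaches an accepting state before the next reset, force $\tilde\pi$ to visit $F$ infinitely often. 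Reconciling the position-wise $\mathcal C^{\mathrm{di}}$ bookkeeping of the direct part with the one-shot, global-only guarantee of the transient part, so that the limit is certifiably fair, is the main obstacle of the proof.
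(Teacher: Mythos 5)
Your setup---acyclicity of $R_t$, freezing an $R_t$-maximal transition at each processed position, and a K\"onig-style limit---matches the paper's proof, and you correctly isolate the one point where the argument must go beyond Theorem~\ref{thm:prune_id_strictdirecttraceinclusion}: a reroute via $\makeprunereltransient{\id}{\strictlanguageinclusion}$ destroys the position-wise guarantee $\mathcal C^{\mathrm{di}}$, so fairness of the limit trace is not automatic. But your proposed repair does not close this gap. There is no ``bounded window'' confining the effect of a transient reroute, and you cannot ``choose each rerouted fair tail so that it reaches an accepting state before the next reset'': the position of the next transient reroute is not under your control---it is dictated by which transitions of the freshly chosen tail happen to be prunable, and it may occur at the very next position, before any accepting state of that tail has entered the frozen prefix. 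If transient reroutes could recur infinitely often, each one discarding the pending accepting visits of the previous tail, the limit could fail to be fair, and nothing in your sketch rules this scenario out.

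The missing idea, which is the actual crux of the paper's proof, is a global counting argument showing that \emph{only finitely many transient reroutes can ever occur}. Let ${\it tt}_i(\pi)$ denote the number of transient transitions among the first $i$ steps of $\pi$. Since consecutive traces $\tilde{\pi}_{i-1},\tilde{\pi}_i$ in the construction agree on the first $i-1$ steps, one has ${\it tt}_{i-1}(\tilde{\pi}_{i-1}) = {\it tt}_{i-1}(\tilde{\pi}_{i}) \le {\it tt}_i(\tilde{\pi}_i)$, so the sequence ${\it tt}_i(\tilde{\pi}_i)$ is non-decreasing; and because a transient transition occurs at most once on any single trace, it is bounded by the finite number of transient transitions in $\B$. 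Hence it stabilizes at some value $N$, first reached at some index $N'$, and for all $i \ge N'$ the segment $\tilde{\pi}_i[N'..]$ up to the processed position contains no transient transition---in particular, every reroute beyond step $N'$ uses $\makeprunerel{\id}{\strictdirecttraceinclusion}$ only. Consequently $\mathcal C^{\mathrm{di}}(\tilde{\pi}_i[N'..],\tilde{\pi}_j[N'..])$ holds for all $N' \le i \le j$, so the infinitely many accepting positions of the fair trace $\tilde{\pi}_{N'}$ persist into the limit, which is therefore fair. Your ``one-shot'' observation---that a transient reroute permanently freezes a transient transition into the prefix---is exactly the germ of this argument, but without the monotone-and-bounded counting, and the resulting conclusion that transient reroutes cease altogether after step $N'$, the fairness of the limit remains unproven.
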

\begin{proof}
        Even though the relation $R_t$ is not transitive in general, it is
        acyclic since $R_t \subseteq \makeprunerel{\id}{\strictlanguageinclusion}$.
        Let $\B = \prune{\A}{R_t}$.
	To show $\A \languageinclusion \B$, let $w = \symb_0\symb_1 \cdots\in \lang{\A}$,
	and let $\hat{\pi}$ be any infinite fair initial trace on $w$ in $\A$.
	We call a trace $\pi = q_0 \goesto {\symb_0} q_1 \goesto {\symb_1} \cdots$ on $w$ in $\A$
	$i$-{\em maximal} if it does not contain any transition $q_j \goesto {\symb_j} q_{j+1}$ for $j < i$
	s.t.~there exists an $\A$ transition $q_j \goesto {\symb_j} q_{j+1}'$
        with $(q_j,\symb_j,q_{j+1}) R_t (q_j,\symb_j,q_{j+1}')$.
        Moreover, let ${\it tt}_i(\pi)$ be the number of transient transitions
        occurring in the first $i$ steps of $\pi$.

	Since $\A$ is finitely branching and $R_t$ is acyclic,
	for every state and symbol there exists at least one $R_t$-maximal successor that is still present in $\B$.
	Thus, for every $i$-maximal fair trace $\pi$ on $w$
	there exists an $(i+1)$-maximal fair trace $\pi'$ on $w$ 
	s.t.~$\pi$ and $\pi'$ are identical on the first $i$ steps.

	Since $\hat{\pi}$ is an infinite fair initial trace on $w$
	(which is trivially $0$-maximal),
	for every $i$ there exists an infinite fair initial trace $\tilde{\pi}_i$ which is $i$-maximal
	and agrees with $\tilde{\pi}_{i-1}$ on the first $i-1$ steps.
        Consider the sequence of these traces $\tilde{\pi}_i$ for increasing
        $i$. We have 
        ${\it tt}_{i-1}(\tilde{\pi}_{i-1}) = {\it tt}_{i-1}(\tilde{\pi}_{i}) \le
        {\it tt}_i(\tilde{\pi}_i)$.

	Since no transient transition can repeat twice in a run,
        the limit $\lim_{i \rightarrow \infty} {\it tt}_i(\tilde{\pi}_i)$
        is bounded from above by the finite number of transient transitions in $\B$.
	Thus there exists a finite number 
        $N = \lim_{i \rightarrow \infty} {\it tt}_i(\tilde{\pi}_i)$.
        Let $N'$ be the smallest number where the limit is reached,
        i.e., $N' := \min\{i\ |\ {\it tt}_i(\tilde{\pi}_i) = N\}$.
        In particular, $N' \ge N$.
        Since, for every $i \geq N'$, the trace $\tilde{\pi}_i$ agrees with
        $\tilde{\pi}_{N'}$ on the first $N'$ steps, it follows that
        $\tilde{\pi}_i[N'..]$ does not contain any transient transition.
        Thus for every $N' \le i \leq j$, $\mathcal C^{\mathrm {di}}(\tilde{\pi}_i[N'..],\tilde{\pi}_j[N'..])$.
        I.e., after $N'$ steps we are effectively pruning w.r.t.\ $\makeprunerel{\id}{\strictdirecttraceinclusion}$
	(and not $\makeprunereltransient{\id}{\strictlanguageinclusion}$),
	and $\directtraceinclusion$ preserves the position of accepting states.
	By arranging the $\tilde{\pi}_i$'s in a finitely-branching tree,
	by K\"onig's lemma there exists a infinite fair initial trace $\tilde{\pi}_\infty$ which is $i$-maximal for every $i$.
	Therefore, $\tilde{\pi}_\infty$ is a trace in $\B$ (by maximality),
	and thus $w \in \lang{\B}$.
\end{proof}

%
%
%In particular, one can show that $R_t(\strictlanguageinclusion)$ is GFP.

\begin{figure}
	
	\begin{tikzpicture}[on grid, node distance= .6cm and 1.4cm]
		\tikzstyle{vertex} = [smallstate]

		\path node [vertex, initial] (p) {$p$};
		\path node [vertex, initial] (q) [right = of p] {$q$};
		\path node [vertex, accepting] (r) [right = of q] {$r$};

		\path[->]

			(p) edge node [above] {$a$} (q)
			(q) edge [bend left = 15, dashed] node [above] {$a$} (r)
			(q) edge [bend right = 15] node [below] {$b$} (r)
			(p) edge [loop above] node {$a,b$} ()
			(r) edge [loop above] node {$a$} ();

	\end{tikzpicture}
	
	%\makeprunereltransient{\strictbwdisim}{\strictlanguageinclusion}$ and even 
	\caption{$\makeprunereltransient{\strictbwdisim}{\strictdesim}$ is not GFP.}	
	\label{fig:transient:not:GFP}
	
\end{figure}

%%% Local Variables:
%%% mode: latex
%%% TeX-master: "ROOT.tex"
%%% End:

\subsection{Pruning NFA}
\label{sec:pruning:NFA}

The proofs of the following theorems are entirely similar to their equivalents for NBA from the previous section---
except for the fact that a simple induction on the length of the word suffices (and thus K\"onig's Lemma is not needed),
and thus they will not be repeated here.
The difference is that forward trace inclusion $\finincl$ needs only to match accepting states at the end of the computation
(and not throughout the computation as in NBA),
and, symmetrically, backward trace inclusion $\bwfinincl$ needs only to match initial states at the beginning of the computation
(and not also accepting states throughout as in NBA).
An analogue of pruning transient transitions for NBA as in Theorem~\ref{thm:prune_transient} is missing for NFA,
since pruning w.r.t.~coarser acceptance conditions like in delayed or fair trace inclusion does not apply to finite words.

\begin{theorem}\label{thm:prune_id_strictdirecttraceinclusion:NFA}
	For every strict partial order $R \mathrel{\subseteq} \finincl$,
	$\makeprunerel{\id}{R}$ is GFP on NFA. In particular, $\makeprunerel{\id}{\strictfinincl}$ is GFP.
\end{theorem}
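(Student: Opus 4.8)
The plan is to mirror the proof of Theorem~\ref{thm:prune_id_strictdirecttraceinclusion}, adapting it from infinite to finite words. The key simplification promised in the text above is that, words being finite, a plain induction on the word length replaces the K\"onig's-Lemma argument; moreover, since forward finite trace inclusion $\finincl$ only needs to match the single accepting state at the \emph{end} of a trace (rather than the accepting status at every position, as $\directtraceinclusion$ does for NBA), the trace-replacement step becomes cleaner and no step-wise $\mathcal C^{\mathrm{di}}$ bookkeeping is required.

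Set $\B = \prune{\A}{\makeprunerel{\id}{R}}$. As usual, $\B \languageinclusion \A$ holds trivially because pruning only removes transitions, so it suffices to establish $\A \languageinclusion \B$. Fix $w = \symb_0 \cdots \symb_{m-1} \in \lang\A$ together with an initial, final trace $\hat\pi = q_0 \goesto{\symb_0} q_1 \cdots \goesto{\symb_{m-1}} q_m$ in $\A$, so that $q_0 \in I$ and $q_m \in F$. I call an initial trace $\pi = q_0 \goesto{\symb_0} q_1 \goesto{\symb_1} \cdots$ on $w$ \emph{$i$-good} if none of its first $i$ transitions is $R$-subsumed, i.e., there is no $j < i$ with an $\A$-transition $q_j \goesto{\symb_j} q_{j+1}'$ such that $q_{j+1} \mathrel{R} q_{j+1}'$.

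The core of the argument is an induction on $i$ from $0$ to $m$, proving that there is an $i$-good initial \emph{and final} trace on $w$ in $\A$. The base case $i=0$ is witnessed by $\hat\pi$ itself. For the inductive step I take an $i$-good initial final trace $\pi = q_0 \goesto{\symb_0} \cdots \goesto{\symb_{m-1}} q_m$ and inspect its $(i+1)$-th transition $q_i \goesto{\symb_i} q_{i+1}$. If this transition survives in $\B$ (i.e.\ is $R$-maximal), then $\pi$ is already $(i+1)$-good. Otherwise, since $\A$ is finitely branching and $R$ is asymmetric and transitive, I can pick an $R$-maximal $\symb_i$-successor $q_{i+1}'$ of $q_i$ with $q_{i+1} \mathrel{R} q_{i+1}'$, so that $q_i \goesto{\symb_i} q_{i+1}'$ is present in $\B$. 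From $q_{i+1} \mathrel{R} q_{i+1}' \subseteq \finincl$ and the fact that the suffix $q_{i+1} \goesto{\symb_{i+1}} \cdots \goesto{\symb_{m-1}} q_m$ is a final trace from $q_{i+1}$ on $\symb_{i+1}\cdots\symb_{m-1}$, forward finite trace inclusion yields a final trace from $q_{i+1}'$ on the same suffix word. Grafting this suffix onto $q_0 \goesto{\symb_0} \cdots \goesto{\symb_i} q_{i+1}'$ produces an initial final trace $\pi'$ whose first $i$ steps coincide with those of $\pi$ (hence are unsubsumed, as $\pi$ is $i$-good) and whose $(i+1)$-th step is $R$-maximal; thus $\pi'$ is $(i+1)$-good.

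Finally, an $m$-good trace uses only unsubsumed transitions in all $m$ of its steps, i.e.\ only transitions present in $\B$; hence the $m$-good initial final trace delivered by the induction is a trace of $\B$, giving $w \in \lang\B$ and therefore $\A \languageinclusion \B$. The ``in particular'' clause is immediate, since $\strictfinincl$ is a strict partial order contained in $\finincl$. The one step to verify with care is the replacement: that $q_{i+1} \finincl q_{i+1}'$ really supplies a matching final trace for the \emph{entire} remaining suffix, and that the maximality of $q_{i+1}'$ keeps $\pi'$ good. This is precisely where the finite-word setting is gentler than the NBA one, so I expect no genuine obstacle here beyond transcribing the NBA argument faithfully.
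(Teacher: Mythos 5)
Your proof is correct and is precisely the argument the paper intends: it transcribes the NBA proof of Theorem~\ref{thm:prune_id_strictdirecttraceinclusion} ($i$-good traces, $R$-maximal replacement successors existing by finiteness and asymmetry/transitivity of $R$), replacing the K\"onig's-Lemma limit construction with a finite induction up to the word length $m$ and using that $\finincl$ only requires matching the final state at the end of the suffix. No gap: the grafting step preserves initiality, finality, and the word, and the $m$-good trace lies entirely in $\B$, exactly as the paper's remark in Sec.~\ref{sec:pruning:NFA} describes.
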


\begin{theorem}\label{thm:bwtrace-id:NFA}
	For every strict partial order $R \mathrel{\subseteq} \bwfinincl$,
	$\makeprunerel{R}{\id}$ is GFP on NFA.
	In particular, $\makeprunerel{\strictbwfinincl}{\id}$ is GFP.
\end{theorem}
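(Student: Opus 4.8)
The plan is to transcribe the argument of Theorem~\ref{thm:bwtrace-id} (its NBA analogue) into the finite-word setting, exploiting the two simplifications flagged above: in an NFA acceptance depends only on the single final state of a trace, so the companion condition $\mathcal C^{\mathrm{di}}$ that accompanies $\bwdirecttraceinclusion$ in the NBA proof can be dropped, and words are finite, so a bare induction on the word length replaces the appeal to K\"onig's Lemma. Write $\B = \prune{\A}{\makeprunerel{R}{\id}}$. The inclusion $\B \languageinclusion \A$ is immediate since pruning only deletes transitions, so the task is $\A \languageinclusion \B$. Fix $w = \symb_0 \cdots \symb_{m-1} \in \lang\A$ together with an initial and final $w$-trace $\hat\pi = q_0 \goesto{\symb_0} \cdots \goesto{\symb_{m-1}} q_m$, where $q_0 \in I$ and $q_m \in F$.

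The bookkeeping device is the same as in the NBA case. Call a $w$-trace $i$-\emph{good} if none of its first $i$ transitions $q_j \goesto{\symb_j} q_{j+1}$ (with $j < i$) is subsumed, i.e.\ there is no $\A$-transition $q_j' \goesto{\symb_j} q_{j+1}$ with $q_j \mathrel R q_j'$; equivalently, the first $i$ transitions all survive in $\B$, so an $m$-good trace lies entirely within $\B$. Because $R$ is a strict partial order and $\A$ is finitely branching, the finite set of $\symb$-predecessors of any state has $R$-maximal elements, so from every subsumed transition one can ascend to an $R$-maximal transition with the same target and symbol, which is consequently retained in $\B$. This is exactly the ingredient that neutralises the cyclic dependencies created by removing subsumed transitions in parallel.

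First I would prove, by induction on $i$, the property (P$_i$): for every initial $w$-trace $\pi$ in $\A$ there is an initial, $i$-good $w$-trace $\pi'$ with $\suffix{\pi'}i = \suffix{\pi}i$. The base case $i = 0$ is trivial with $\pi' = \pi$. For the step from $i$ to $i+1$, take an initial $\pi$; if $\pi$ is already $(i+1)$-good we keep $\pi' = \pi$, otherwise its $i$-th transition $q_i \goesto{\symb_i} q_{i+1}$ is subsumed and we pick $q_i'$ that is $R$-maximal among the $\symb_i$-predecessors of $q_{i+1}$ with $q_i \mathrel R q_i'$, so that $q_i' \goesto{\symb_i} q_{i+1}$ is present in $\B$. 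The crucial use of the hypothesis $R \subseteq \bwfinincl$ is now that $q_i \bwfinincl q_i'$ and $\prefix{\pi}{i}$ is an initial $(\symb_0 \cdots \symb_{i-1})$-trace ending in $q_i$, so there is an initial trace on the same prefix word ending in $q_i'$; concatenating it with $q_i' \goesto{\symb_i} q_{i+1}$ and the unchanged suffix $\suffix{\pi}{i+1}$ yields an initial trace $\pi''$ whose $i$-th transition is $R$-maximal and which agrees with $\pi$ from step $i+1$ on. Applying (P$_i$) to $\pi''$ gives an $i$-good $\pi'$ with $\suffix{\pi'}i = \suffix{\pi''}i$; since that suffix begins with the $R$-maximal transition, $\pi'$ is in fact $(i+1)$-good, and $\suffix{\pi'}{i+1} = \suffix{\pi''}{i+1} = \suffix{\pi}{i+1}$.

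Finally, instantiating (P$_m$) at $\hat\pi$ yields an initial, $m$-good $w$-trace $\tilde\pi$ with $\suffix{\tilde\pi}m = \suffix{\hat\pi}m = q_m \in F$, so $\tilde\pi$ is final; being $m$-good it uses only transitions of $\B$, whence $w \in \lang\B$. The statement for $R = \strictbwfinincl$ is then the special case. I expect the one genuinely delicate point to be this induction step --- choosing an $R$-maximal rerouting predecessor and then re-invoking the hypothesis to repair the earlier part of the trace --- which is precisely what handles the parallel, mutually dependent removal of transitions; everything else is lighter than in the NBA proof, since no acceptance-along-the-way condition and no K\"onig argument are required.
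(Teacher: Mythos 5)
Your proof is correct and is essentially the paper's own: the paper explicitly proves Theorem~\ref{thm:bwtrace-id:NFA} by exactly the adaptation you carry out, namely transplanting the $i$-good / $R$-maximal-predecessor induction of Theorem~\ref{thm:bwtrace-id} to finite words, dropping the $\mathcal C^{\mathrm{di}}$ side condition since acceptance is checked only at the single final state, and replacing K\"onig's Lemma by induction on the word length. The only blemish, inherited verbatim from the paper's NBA proof, is the case split ``if $\pi$ is $(i+1)$-good\ldots otherwise its $i$-th transition is subsumed'': failure of $(i+1)$-goodness could also stem from a transition before step $i$, but that sub-case is handled by applying (P$_i$) to $\pi$ directly, so nothing is lost.
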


\begin{theorem}\label{thm:bwsim-fwtrace:NFA}
	%For every partial order $R \subseteq \directtraceinclusion$,
	The relation $\makeprunerel{\strictbwsim}{\finincl}$ is GFP on NFA.
\end{theorem}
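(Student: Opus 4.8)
The plan is to mirror the proof of Theorem~\ref{thm:bwsim-fwtrace} (the NBA statement for $\makeprunerel{\strictbwdisim}{\directtraceinclusion}$), replacing the infinite fair trace by a finite final trace and the appeal to K\"onig's Lemma by a plain induction on the length of the input word. I would set $\B = \prune{\A}{\makeprunerel{\strictbwsim}{\finincl}}$. Since removing transitions can only shrink the language, $\B \languageinclusion \A$ is immediate, so it suffices to show $\A \languageinclusion \B$. Fix $w = \symb_0 \cdots \symb_{m-1} \in \lang{\A}$ together with an initial and final trace $\hat{\pi}$ on $w$ in $\A$; the goal is to produce an initial and final trace on $w$ in $\B$.

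First I would reintroduce, exactly as in the NBA case, the notion of an $i$-good trace (for $0 \le i \le m$): a finite initial trace $\pi = q_0 \goesto{\symb_0} \cdots \goesto{\symb_{m-1}} q_m$ is $i$-good if there is no index $j \le i$ and state $q_j'$ with $q_j \strictbwsim q_j'$ admitting a final suffix trace $\suffix{\pi'}{j}$ from $q_j'$ on $\symb_j \cdots \symb_{m-1}$ matching $\suffix{\pi}{j}$ in the sense of $\finincl$ (i.e., $\suffix{\pi}{j}$ final implies $\suffix{\pi'}{j}$ final). By induction on $i$ I would establish that an $i$-good initial final trace always exists. For the base case I pick the starting state to be $\strictbwsim$-maximal among the starting points of all initial final traces matching $\hat{\pi}$ at the endpoint (this set is nonempty, as it contains $\hat{\pi}$, and a maximum exists because $Q$ is finite and $\strictbwsim$ is a strict partial order); crucially, any strictly better $q_0'$ is forced to be \emph{initial} because $\bwsim$ matches initial states, so a matching final suffix from $q_0'$ would itself be an initial final trace and contradict maximality. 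For the step, if an $(i-1)$-good trace fails to be $i$-good I choose a $\strictbwsim$-maximal witness $q_i'$ with a matching final suffix $\suffix{\pi'}{i}$, and then \emph{propagate $\bwsim$ backward} to obtain an initial prefix $\prefix{\pi'}{i}$ ending in $q_i'$ with $q_j \bwsim q_j'$ for all $j \le i$, gluing the two halves into an initial final trace that is $i$-good by maximality of $q_i'$ at level $i$ and by $(i-1)$-goodness below it.

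Because $w$ is finite, I would simply run this induction up to $i = m$ (no K\"onig's Lemma is needed), obtaining an initial final trace $\tilde{\pi}$ that is $m$-good. It remains to check that $\tilde{\pi}$ survives in $\B$: if some transition $q_j \goesto{\symb_j} q_{j+1}$ of $\tilde{\pi}$ had been pruned, then there would be a transition $q_j' \goesto{\symb_j} q_{j+1}'$ present in $\B$ with $q_j \strictbwsim q_j'$ and $q_{j+1} \finincl q_{j+1}'$; matching the final suffix $\suffix{\tilde{\pi}}{j+1}$ through $\finincl$ and prepending this transition yields a final suffix from $q_j'$, contradicting $m$-goodness at index $j$. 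Hence every transition of $\tilde{\pi}$ lies in $\B$, so $\tilde{\pi}$ is an initial final trace on $w$ in $\B$ and $w \in \lang{\B}$.

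The two features that make everything fit the finite-word acceptance condition are that $\bwsim$ matches \emph{initial} states backward, which guarantees the reconstructed prefix is initial, and that $\finincl$ constrains only the \emph{final} endpoint forward, which guarantees the reconstructed suffix is accepting. The main obstacle I anticipate is the backward-propagation step: it is essential that the source relation be backward \emph{simulation} $\bwsim$ rather than backward finite trace inclusion $\bwfinincl$, since only the former lets me rebuild an entire initial prefix state-by-state in lockstep with $\pi$ (the analogous caveat is flagged in Theorem~\ref{thm:bwsim-fwtrace}); the delicate bookkeeping is verifying that the glued trace is genuinely $i$-good, which relies simultaneously on the maximality of the chosen witness at level $i$ and on the inductive $(i-1)$-goodness at all lower indices.
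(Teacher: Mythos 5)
Your proposal is correct and takes essentially the same route as the paper: Sec.~\ref{sec:pruning:NFA} explicitly proves Theorem~\ref{thm:bwsim-fwtrace:NFA} by declaring it ``entirely similar'' to the NBA proof of Theorem~\ref{thm:bwsim-fwtrace}, with a plain induction on word length replacing K\"onig's Lemma, which is exactly your adaptation. You also correctly fill in the two NFA-specific points the paper flags---that $\bwsim$ matches initial states (so the reconstructed prefix is initial) and that $\finincl$ constrains only the final endpoint (so the glued suffix is accepting)---so nothing is missing.
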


\begin{theorem}\label{thm:bwtrace-fwsim:NFA}
	The relation $\makeprunerel{\bwfinincl}{\strictdisim}$ is GFP on NFA.
%	Let $P$ be a strict preorder subset of $\makeprunerel \bwdirecttraceinclusion {\id\ \cup \strictdisim}$.
%	Then, $P$ is GFP.
\end{theorem}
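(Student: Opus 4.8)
The plan is to replay the proof of Theorem~\ref{thm:bwtrace-fwsim} (for $\makeprunerel{\bwdirecttraceinclusion}{\strictdisim}$ on NBA), specialising it from infinite fair traces to finite final ones. I would set $\B = \prune{\A}{\makeprunerel{\bwfinincl}{\strictdisim}}$; since pruning never adds words, only $\A \languageinclusion \B$ needs proof. Fix $w = \symb_0 \cdots \symb_{m-1} \in \lang\A$ together with an initial final trace $\hat\pi = \hat q_0 \goesto{\symb_0} \cdots \goesto{\symb_{m-1}} \hat q_m$ (so $\hat q_0 \in I$ and $\hat q_m \in F$), and construct a corresponding initial final trace lying entirely in $\B$. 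Because $w$ is finite, the construction proceeds by induction on the position index $i$ from $0$ to $m$ and halts after $m$ steps, so no appeal to K\"onig's Lemma is required.

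The one genuine simplification over the NBA case is that NFA acceptance depends only on the \emph{last} state of a trace, so intermediate accepting states need not be matched. Accordingly I would carry the lighter invariant: for $0 \le i \le m$ put $\pi \preceq_i \pi'$ iff $q_j \disim q_j'$ for every $j \ge i$, with $\pi \prec_i \pi'$ when additionally $q_i \strictdisim q_i'$, and call $\pi$ \emph{$i$-good} when its first $i$ transitions survive in $\B$. The target is property (PP): for every initial $w$-trace $\pi$ and every $i$ there is an $i$-good initial $w$-trace $\pi'$ with $\pi \preceq_i \pi'$, proved by induction on $i$. The base case $i=0$ is immediate with $\pi' = \pi$. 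For the step, I would apply (PP) at $i$ to obtain an $i$-good $\pi^1$ with $\pi \preceq_i \pi^1$ (hence $\pi \preceq_{i+1} \pi^1$), chosen $\prec_{i+1}$-maximal among such traces. If the transition $q_i^1 \goesto{\symb_i} q_{i+1}^1$ lies in $\B$ we are done; otherwise it is subsumed, yielding a transition $q_i^2 \goesto{\symb_i} q_{i+1}^2$ in $\B$ with $q_i^1 \bwfinincl q_i^2$ and $q_{i+1}^1 \strictdisim q_{i+1}^2$. Here backward finite trace inclusion $\bwfinincl$ supplies an initial prefix of length $i$ reaching $q_i^2$ (it need only match the initial endpoint, not accepting states), and forward direct simulation, propagated step by step from $q_{i+1}^2$, produces a matching suffix with $q_j^1 \disim q_j^2$ for all $j \ge i+1$. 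Concatenating gives an initial $w$-trace $\pi^2$ with $\pi^1 \prec_{i+1} \pi^2$; feeding $\pi^2$ through (PP) at $i$ produces an $i$-good $\pi^3$ with $\pi^2 \preceq_{i+1} \pi^3$, whence $\pi^1 \prec_{i+1} \pi^3$ since $\disim$ is transitive and the strictness at position $i+1$ is preserved, contradicting maximality. Thus the maximal $\pi^1$ is already $(i+1)$-good.

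Applying (PP) at $i = m$ to $\hat\pi$ then produces an $m$-good initial $w$-trace $\tilde\pi$ lying entirely in $\B$ and satisfying $\hat\pi \preceq_m \tilde\pi$; the latter forces the final state of $\hat\pi$ to be $\disim$-simulated by that of $\tilde\pi$, and since direct simulation matches accepting states and $\hat\pi$ is final, $\tilde\pi$ is final too, so $w \in \lang\B$. I expect the induction step to be the crux, exactly as in Theorem~\ref{thm:bwtrace-fwsim}: the $\prec_{i+1}$-maximality must be combined with the fact that forward direct simulation $\disim$ propagates along the suffix---the coarser $\finincl$ would only yield \emph{some} matching final suffix, not the pointwise relation needed to keep $\preceq_{i+1}$ intact---while the backward endpoint contributes only an initial prefix, for which the weaker finite-word condition of $\bwfinincl$ (matching initial, not accepting, states) is all that is needed.
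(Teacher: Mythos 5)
Your proposal is correct and takes essentially the same route as the paper: for this theorem the paper gives no separate proof but states that the NBA argument of Theorem~\ref{thm:bwtrace-fwsim} carries over with a simple induction on the word length replacing K\"onig's Lemma, which is exactly your (PP)-with-$\prec_{i+1}$-maximality construction. Your specific adaptations---dropping the $\mathcal C^{\mathrm{di}}$ clause from $\preceq_i$ since $\bwfinincl$ need only supply an initial prefix, insisting that the pointwise-propagating $\disim$ (not the coarser $\finincl$) be used on the target endpoint, and extracting finality from the single comparison $\hat q_m \disim \tilde q_m$ at $i=m$---are precisely the differences the paper itself identifies at the start of Sec.~\ref{sec:pruning:NFA}.
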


%%% Local Variables:
%%% mode: latex
%%% TeX-master: "ROOT.tex"
%%% End:

\section{Lookahead Simulation}
\label{sec:lookahead}

While trace inclusions are theoretically appealing as GFQ/GFI/GFP preorders coarser than simulations,
it is not feasible to use them in practice, because they are too hard to compute (even their membership problem is PSPACE-complete \cite{MeyerStockmeyer:Equivalence:1972,kupfermanvardi:fair_verification}).
Multipebble simulations (\cite{etessami:hierarchy02}; cf.~Sec.~\ref{sec:multipebble_simulations})
constitute sound under-approximations to trace inclusions,
%where Duplicator is allowed to control several pebbles instead of just one,
and by varying the number of pebbles one can achieve a better tradeoff between complexity and size than just computing the full trace inclusion.
%
%For $k > 0$, $k$-pebble simulation is coarser than ordinary simulation and it implies trace inclusion,
%and by increasing $k$, one can control the quality of the approximation to trace inclusion.
%Direct, delayed, fair and backward pebble simulations are not transitive in general,
%but their transitive closures are GFI preorders;
%the direct, delayed and backward variants are also GFQ.
%
However, computing multipebble simulations with $k>0$ pebbles requires solving a game of size $n \cdot n^k$
(where $n$ is the number of states of the automaton),
which is not feasible in practice, even for modest values for $k$.
(Even for $k=2$ one has a cubic best-case complexity, which severely limits
the size of $n$ that can be handled.)
For this reason, we consider a different way to extend Duplicator's power, % in the simulation game,
i.e., by using \emph{lookahead} on the moves of Spoiler.
While lookahead itself is a classic concept,
it can be defined in several ways in the context of adversarial games, like simulation.
We compare different variants for computational efficiency and approximation quality:
\emph{multistep simulation} (Sec.~\ref{sec:multistep:simulation}),
\emph{continuous simulation} (Sec.~\ref{sec:continuous:simulation}),
culminating in \emph{lookahead simulation} (\ref{sec:lookahead:simulation}),
which offers the best compromise,
and it is the main object of study of this section.
We will use lookahead simulation in our automata reduction (Sec.~\ref{sec:heavyandlight})
and inclusion testing algorithms (Sec.~\ref{sec:inclusion}).
In the following, we let $n$ be the number of the states of the automaton.

\subsection{Multistep simulation.}
\label{sec:multistep:simulation}

In \emph{$k$-step simulation} the players select sequences of transitions of length $k > 0$ instead of single transitions.
%Formally, from configuration $(p_i, q_i)$ Spoiler chooses $k$ transitions
%$p_i \goesto {\symb_i} p_{i+1} \goesto {\symb_{i+1}} \dots \goesto {\symb_{i+k-1}} p_{i+k}$,
%and Duplicator responds by choosing a matching sequence
%$q_i \goesto {\symb_i} q_{i+1} \goesto {\symb_{i+1}} \dots \goesto {\symb_{i+k-1}} q_{i+k}$;
%the next configuration is $(p_{i+k}, q_{i+k})$.
%That is, $k$ steps of the simulation game are compressed into one round ($k=1$ corresponds to ordinary simulation):
This gives Duplicator more information,
and thus yields a larger simulation relation.
%
%The size of the game graph for $k$-step simulation is bounded by ${\cal O}(n^2\cdot(d^k+1))$,
%${\cal O}(n^2+k\cdot\log(d))$.
%
%(where $d$ is the maximal out-degree of the automaton),
%while the time is polynomial in $n$ and in $d^k$.
In general, $k$-step simulation %(called \emph{static $k$-letter simulation} in \cite{lange:lookahead:2013})
and $k$-pebble simulation are incomparable,
but $k$-step simulation is strictly contained in $n$-pebble simulation.
However, the rigid use of lookahead in big-steps causes at least two issues:
First, for an NBA with maximal out-degree $d$,
in {\em every round} we have to explore up-to $d^k$ different moves for each player,
which is too large in practice (e.g., $d=4$, $k=12$).
Second, Duplicator's lookahead varies between $1$ and $k$,
depending where she is in her response to Spoiler's long move.
Thus, Duplicator might lack lookahead where it is most needed,
while having a large lookahead in other situations where it is not useful.
In the next notion, we attempt at ameliorating this.

\subsection{Continuous simulation.}
\label{sec:continuous:simulation}

In \emph{$k$-continuous simulation},
Duplicator is continuously kept informed about Spoiler's next $k > 0$ moves,
i.e., she always has lookahead $k$.
Initially, Spoiler makes $k$ moves,
and from this point on they alternate making one move each (and matching the corresponding input symbols).
Thus, $k$-continuous simulation is coarser than $k$-step simulation.
In general, it is incomparable with $k$-pebble simulation for $k < n$,
but it is always contained in $k$-pebble simulation for $k = n$,
and there are examples where the containment is strict.
Note that here the configuration of the game consists not only of the current 
states of Spoiler and Duplicator, but also of the announced $k$ next moves of Spoiler.
While this is arguably the strongest way of giving lookahead to Duplicator,
it requires storing $n^2 \cdot d^{k-1}$ configurations (for branching degree $d$), 
which is infeasible for non-trivial $n$ and $k$ (e.g., $n=10000$, $d=4$, $k=12$).

\subsection{Lookahead simulation.}
\label{sec:lookahead:simulation}

We introduce $k$-lookahead simulation as an optimal compromise between the finer $k$-step and the coarser $k$-continuous simulation.
Intuitively, we put the lookahead under Duplicator's control,
who can choose \emph{at each round} and \emph{depending on Spoiler's move} 
how much lookahead she needs (up to $k$).
Formally, configurations are pairs $(p_i, q_i)$ of states.
From configuration $(p_i, q_i)$, one round of the game is played as follows.
\begin{itemize}
	\item Spoiler chooses a sequence of $k$ consecutive transitions
		$p_i \goesto {\symb_i} {p_{i+1}} \goesto {\symb_{i+1}} \cdots \goesto {\symb_{i+k-1}} p_{i+k}$.
	\item Duplicator chooses a degree of lookahead $m$ such that $1 \le m \le k$.
	\item Duplicator responds with a sequence of $m$ transitions 
		$q_i \goesto {\symb_i} {q_{i+1}} \goesto {\symb_{i+1}} \cdots \goesto {\symb_{i+m-1}} q_{i+m}$.
\end{itemize}
The remaining $k-m$ moves of Spoiler
$p_{i+m} \goesto {\symb_{i+m}} {p_{i+m+1}} \goesto {\symb_{i+m+1}} \cdots \goesto {\symb_{i+k-1}} p_{i+k}$
\emph{are forgotten},
and the next configuration is $(p_{i+m}, q_{i+m})$;
in particular, in the next round Spoiler can chose a different attack from $p_{i+m}$.
In this way, the players build as usual two infinite traces $\pi_0$ from $p_0$ and $\pi_1$ from $q_0$.
Backward simulation is defined similarly with backward transitions.
For any acceptance condition $x \in \{\mathrm{di, de, f}\}$,
Duplicator wins this play if $\mathcal C^x(\pi_0, \pi_1)$ holds,
for $x = \mathrm{bw\textrm-di}$ we require $\mathcal C^\mathrm{bw}_{I, F}(\pi_0, \pi_1)$ (cf.~ Sec.~\ref{sec:backward}),
and for $x = \mathrm{bw}$ we require $\mathcal C^\mathrm{bw}_I(\pi_0, \pi_1)$ (cf.~ Sec.~\ref{sec:simulations:finitewords}).
\begin{definition}\label{def:lookahead-sim}
	Two states $(p_0, q_0)$ are in \emph{$k$-lookahead $x$-simulation},
	written $p_0 \kxsim k x q_0$,
	iff	Duplicator has a winning strategy in the above game.
\end{definition}

In general, greater lookahead yields coarser lookahead relations,
i.e., $\kxsim h x \subseteq \kxsim k x$ whenever $h \leq k$,
and moreover it is not difficult to find examples where the inclusion is actually strict when $h < k$.
A simple such example (not depending on the choice of $x$) for the case $h = 1$ and $k = 2$ can be found in Fig.~\ref{fig:lookhead_non_transitive} (which is also used below to show non-transitivity):
First, we have $p_0 \not\ksim 1 q_0$,
since Duplicator must choose whether to go to $q_1$ (and then Spoiler wins by playing $b$)
or to $q_2$ (and then Spoiler wins by playing $a$).
Moreover, $p_0 \ksim 2 q_0$ holds,
since now with lookahead $k = 2$ we let Duplicator take the transition via $q_1$ or $q_2$
depending on whether Spoiler plays the word $(a+b)a$ or $(a+b)b$, respectively.
%and Fig.~\ref{fig:lookhead_non_preserved}.

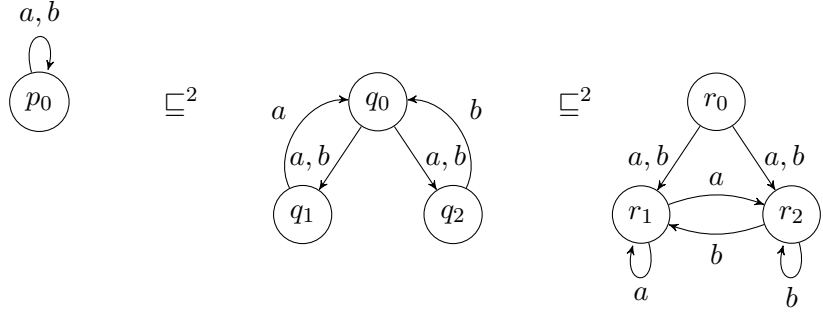
\begin{figure} \centering
	\begin{tikzpicture}[on grid, node distance=2cm and 2cm]
		\tikzstyle{vertex} = [smallstate]
				
		\path node [vertex] (p0) {$p_0$};
		
		\path node [vertex] (q0) [right = 4.5cm of p0] {$q_0$};
		\path node [vertex] (q1) [below left = 1.5cm and 1cm of q0] {$q_1$};
		\path node [vertex] (q2) [below right = 1.5cm and 1cm of q0] {$q_2$};
		
		\path node [vertex] (r0) [right = 4.5cm of q0] {$r_0$};
		\path node [vertex] (r1) [below left = 1.5cm and 1cm of r0] {$r_1$};
		\path node [vertex] (r2) [below right = 1.5cm and 1cm of r0] {$r_2$};
		
		\path[->]
		
			(p0) edge [loop above] node {$a, b$} ()
			
			(q0) edge node [left] {$a,b$} (q1)
			(q0) edge node [right] {$a,b$} (q2)
			(q1) edge [bend left = 60] node [above left] {$a$} (q0)
			(q2) edge [bend right = 60] node [above right] {$b$} (q0)
			
			(r0) edge node [left] {$a,b$} (r1)
			(r0) edge node [right] {$a,b$} (r2)
			(r1) edge [loop below] node {$a$} ()
			(r1) edge [bend left = 20] node [above] {$a$} (r2)
			(r2) edge [loop below] node {$b$} ()
			(r2) edge [bend left = 20] node [below] {$b$} (r1);
			
		\path
			(p0) -- node [pos = .4] {$\ksim 2$} (q0)
			(q0) -- node [pos = .6] {$\ksim 2$} (r0);
		
		\begin{pgfonlayer}{background}
			%\node () [color=blue!50, inner sep=4pt, ellipse, fit=(p0) (q)] {};
		\end{pgfonlayer}
			
	\end{tikzpicture}
	
	\caption{A lookeahead simulation example.}
	
	\label{fig:lookhead_non_transitive}
	
\end{figure}
%%% Local Variables:
%%% mode: latex
%%% TeX-master: "ROOT.tex"
%%% End:

\begin{remark}
	\label{remark:non_transitive}
	$k$-lookahead simulation is not transitive for $k \geq 2$.
	Consider again the example in Fig.~\ref{fig:lookhead_non_transitive}.
	We have $p_0 \ksim k q_0 \ksim k r_0$ (and $k = 2$ suffices),
	but $p_0 \not \ksim k r_0$ for any $k > 0$.
	We have already seen above that $p_0 \ksim k q_0$ holds for $k = 2$.
	Moreover, $q_0 \ksim k r_0$ holds by the following reasoning, with $k = 2$:
		If Spoiler goes to $q_1$ or $q_2$, then Duplicator goes to $r_1$ or $r_2$, respectively.
		Then, it can be shown that $q_1 \ksim k r_1$ holds as follows
		(the case $q_2 \ksim k r_2$ is similar).
		If Spoiler takes transitions $q_1 \goesto a q_0 \goesto a q_1$,
		then Duplicator does $r_1 \goesto a r_1 \goesto a r_1$,
		and	if Spoiler does $q_1 \goesto a q_0 \goesto b q_1$,
		then Duplicator does $r_1 \goesto a r_2 \goesto b r_1$.
		The other cases are similar.
	Finally, $p_0 \not \ksim k r_0$, for any $k > 0$:
	From $r_0$,	Duplicator can play a trace for any word $w$ of length $k > 0$,
	but in order to extend it to a trace of length $k + 1$ for any $w' = wa$ or $wb$,
	she needs to know whether the last $(k + 1)$-th symbol is $a$ or $b$.
	Thus, no finite lookahead suffices for Duplicator.
	%
	%(Incidentally, notice that $r_0$ simulates $p_0$ with $k$-continuous simulation, and $k = 2$ suffices.)
	%
\end{remark}

Non-transitivity of lookahead simulation $\kxsim k x$ (unless $k=1$) is not an obstacle to its applications.
Since we use it to under-approximate suitable preorders,
we consider its transitive closure instead (which is a preorder),
which we denote by $\transksimx$.
Moreover, we denote its asymmetric restriction by $\stricttransksimx = \transksimx \setminus (\transksimx)^{-1}$.
\begin{lemma}
	\label{lem:lookahead_sim_GFI_GFQ}
	For $k > 0$ and $x \in \{\mathrm{di, de, f, bw\textrm-di}\}$,
	the transitive closure of $k$-lookahead $x$-simulation $\transksimx$ is GFI.
	Moreover, it is GFQ for $x \neq \mathrm f$.
\end{lemma}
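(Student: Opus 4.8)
The plan is to reduce both claims to statements already available for the infinite-lookahead trace inclusions and for multipebble simulations, exploiting two soft facts proved earlier: both GFI and GFQ are $\subseteq$-downward closed, and each target relation below is a genuine preorder and hence absorbs transitive closure. Thus the whole argument has the shape ``$\kxsim k x$ sits inside a known GFI (resp.\ GFQ) preorder $P$; since $P$ is transitive, $\transksimx \subseteq P$; downward closure transfers the property to $\transksimx$.''

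First I would record the generic containment of lookahead simulation inside the matching trace inclusion: for forward $x\in\{\mathrm{di,de,f}\}$ I claim $\kxsim k x \subseteq \xincl x$, and for the backward case $\kbwdisim \subseteq \bwdirecttraceinclusion$. The point is that infinite lookahead is never worse than lookahead $k$: given a winning Duplicator strategy in the $k$-lookahead $x$-game from $(p,q)$, I build a Duplicator response in the trace-inclusion game by feeding her, at each round, the next (at most) $k$ transitions of Spoiler's now fully revealed trace $\pi_0$. The infinite trace $\pi_1$ produced from $q$ satisfies $\mathcal C^x(\pi_0,\pi_1)$ by hypothesis, which is exactly $p \xincl x q$. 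For the backward variant I would first pad finite initial traces into infinite backward traces using backward completeness (precisely as in the argument $\bwdisim \subseteq \bwdirecttraceinclusion$ of Sec.~\ref{sec:backward}), run the strategy, and truncate.

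With this in hand, GFI follows uniformly. For forward $x$ the relation lies inside $\fairtraceinclusion$ (as $\directtraceinclusion \subseteq \delayedtraceinclusion \subseteq \fairtraceinclusion$), which is GFI by matching initial states (Sec.~\ref{sec:language_inclusion}); for $x=\mathrm{bw\textrm-di}$ it lies inside $\bwdirecttraceinclusion$, which is GFI by Theorem~\ref{lem:bwincl_GFI}. Since each trace inclusion is transitive, $\transksimx$ is contained in the same GFI preorder, and downward closure of GFI gives the claim. The GFQ cases $x\in\{\mathrm{di},\mathrm{bw\textrm-di}\}$ are analogous: $\transkdisim \subseteq \directtraceinclusion$, which is GFQ (Sec.~\ref{sec:trace_inclusions}), and $\transkbwdisim \subseteq \bwdirecttraceinclusion$, which is GFQ by Theorem~\ref{lem:bwincl_GFQ}; downward closure finishes both.

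The delicate point, and the step I expect to be hardest, is GFQ for $x=\mathrm{de}$, because here the trace-inclusion route is \emph{unavailable}: delayed trace inclusion $\delayedtraceinclusion$ is not GFQ (Fig.~\ref{fig:decont:not:GFQ}), so containment in it proves nothing. Instead I would route through multipebble delayed simulation, aiming to show $\kdesim \subseteq \desim_n$ by using that $k$-lookahead delayed simulation is an under-approximation of multipebble delayed simulation: a Duplicator who may spread pebbles over all (at most $n$) states can defer her commitments and so emulate any finite lookahead, while still discharging the delayed obligation within finite delay. As $\desim_n$ is a GFQ preorder (Lemma~\ref{lem:GFQ-multipebbledelayedsimulation}), transitivity then gives $\transkdesim \subseteq \desim_n$ and downward closure of GFQ yields the result. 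The real obstacle is verifying this lookahead-to-pebbles strategy translation under the \emph{delayed} acceptance condition, where the obligation must be met by the whole team of pebbles rather than by a single chosen trace; one must argue that dropping dominated pebbles keeps the team finite and that no pending obligation is stranded. Here Fig.~\ref{fig:decont:not:GFQ} is reassuring rather than fatal: unlike $\delayedtraceinclusion$, finite lookahead does \emph{not} equate $p$ and $q$, since Spoiler can stall at $q$ beyond any lookahead $k$ and strike only after Duplicator has committed, so the non-GFQ witness for $\delayedtraceinclusion$ simply does not arise for $\transkdesim$.
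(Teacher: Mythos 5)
Your proposal is correct and follows essentially the same route as the paper: embed $\transksimx$ into the coarser known relations of Fig.~\ref{fig:GFQ_relations} (trace inclusions for GFI, multipebble simulations for GFQ) and invoke the $\subseteq$-downward closure of GFI/GFQ. The only cosmetic difference is that for $x=\mathrm{de}$ you stop at multipebble delayed simulation (Lemma~\ref{lem:GFQ-multipebbledelayedsimulation}) where the paper's proof passes on to delayed fixed-word simulation (Lemma~\ref{lem:GFQ-delayedfixedwordsimulation}); both rest on the same lookahead-to-pebble translation the paper records in Remark~\ref{rem:lookahead-pebble}, and your explicit strategy arguments (chunked replies against a fully revealed trace, and backward padding via completeness) merely supply details the paper asserts without proof.
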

\begin{proof}
	Being GFQ/GFI follows from the fact that the transitive closure of lookahead simulation is
        included in the corresponding trace inclusion/multipebble simulation.
        Moreover direct/delayed multipebble simulations are included in
        delayed fixed-word simulation; cf.\ Figure~\ref{fig:GFQ_relations}.  
	These are is GFI (cf.~Sec.~\ref{sec:language_inclusion}, and in particular Theorem~\ref{lem:bwincl_GFI} for backward trace inclusion),
	and GFQ for $x \in \set{\mathrm {di, de}}$ by %Lemma~\ref{lem:GFQ-directtraceinclusion},
	Lemma~\ref{lem:GFQ-delayedfixedwordsimulation},
	and for $x = \mathrm {bw\textrm-di}$ by Theorem~\ref{lem:bwincl_GFQ}.
\end{proof}

\begin{figure} \centering
	
	\subfigure[The original automaton $\A$]{
	
		\begin{tikzpicture}[on grid, node distance=2cm and 2cm]
			\tikzstyle{vertex} = [smallstate]
						
			\path node [vertex] (p0) {$p_0$};
		
			\path node [vertex] (q0) [right = 3cm of p0] {$q_0$};
			\path node [vertex] (q1) [below left = 1.5cm and 1cm of q0] {$q_1$};
			\path node [vertex] (q2) [below right = 1.5cm and 1cm of q0] {$q_2$};
		
			\path node [vertex] (r0) [right = 3cm of q0] {$r_0$};
			\path node [vertex] (r1) [below left = 1.5cm and 1cm of r0] {$r_1$};
			\path node [vertex] (r2) [below right = 1.5cm and 1cm of r0] {$r_2$};
		
			\path node [vertex] (q) [above = 1.5cm of q0 ] {$q$};
			\path node [vertex] (r) [above = 1.5cm of r0 ] {$r$};
		
			\path[->]
		
				(p0) edge [loop above] node {$a, b$} ()
			
				(q0) edge node [left] {$a,b$} (q1)
				(q0) edge node [right] {$a,b$} (q2)
				(q1) edge [bend left = 60] node [above left] {$a$} (q0)
				(q2) edge [bend right = 60] node [above right] {$b$} (q0)
			
				(r0) edge node [left] {$a,b$} (r1)
				(r0) edge node [right] {$a,b$} (r2)
				(r1) edge [loop below] node {$a$} ()
				(r1) edge [bend left = 20] node [above] {$a$} (r2)
				(r2) edge [loop below] node {$b$} ()
				(r2) edge [bend left = 20] node [below] {$b$} (r1)
			
				(q) edge node [right] {$a$} (q0)
				(q) edge node [above right] {$a$} (r0)
				(r) edge node [right] {$a$} (r0);
			
			\path
				(p0) -- node [pos = .3] {$\ksimeq 2$} (q0)
				(q0) -- node [pos = .7] {$\ksimeq 2$} (r0);
		
		\end{tikzpicture}
		
	}
	$\qquad$
	\subfigure[The quotient automaton $\A/\!\transksim 2$]{
	
		\begin{tikzpicture}[on grid, node distance=1.5cm and 1.5cm]
			\tikzstyle{vertex} = [smallstate]
					
			\path node [vertex] (qr) {$\set{q, r}$};
			\path node [vertex] (pqr0) [below = 2.5 of qr] {$\set {p_0, q_0, r_0}$};
			\path node [vertex] (qr1) [below left = 2.5cm and 1.5cm of pqr0] {$\set{q_1, r_1}$};
			\path node [vertex] (qr2) [below right = 2.5cm and 1.5cm of pqr0] {$\set{q_2, r_2}$};
				
			\path[->]
	
				(qr) 	edge 					node [left] 		{$a$} (pqr0)
				(pqr0)	edge 					node [above left]	{$a, b$} (qr1)
				(pqr0)	edge 					node [above right]	{$a, b$} (qr2)
				(qr1)	edge [loop below]		node [below]		{$a$} ()
				(qr1)	edge [bend left = 60]	node [above left]	{$a$} (pqr0)
				(qr1)	edge [bend left = 30]	node [above]		{$a$} (qr2)
				(qr2)	edge [loop below]		node [below]		{$b$} ()
				(qr2)	edge [bend right = 60]	node [above right]	{$b$} (pqr0)
				(qr2)	edge [bend left = 30]	node [below]		{$b$} (qr1);
		
			\draw[->] (pqr0) to [out=30,in=60,looseness=8] node [above right]	{$a,b$} (pqr0);
		
		\end{tikzpicture}
	
	}
	
	\caption{Lookeahead simulation is not preserved under quotienting.}
	
	\label{fig:lookhead_non_preserved}
	
\end{figure}
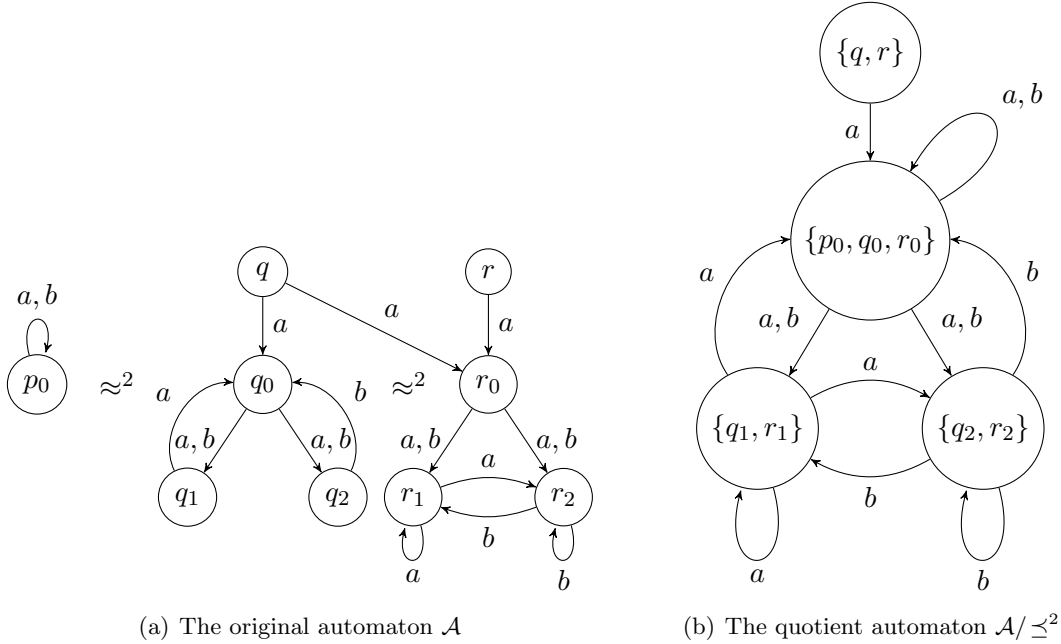
%%% Local Variables:
%%% mode: latex
%%% TeX-master: "ROOT.tex"
%%% End:

\begin{remark}
	Let $\transksim k$ be the transitive closure of $k$-lookahead simulation $\ksim k$.
	While quotienting w.r.t.~ordinary simulation (i.e., lookahead $k = 1$)
	preserves ordinary simulation itself
	in the sense that a quotient class $[p]$ in the quotient automaton $\A/\!\transksim k$
	is simulation equivalent to $p$,
	this is not the case when considering larger lookahead $k > 1$.
	This is a consequence of lack of transitivity; cf.~Fig.~\ref{fig:lookhead_non_preserved},
	which builds on the previous non-transitivity example of Fig.~\ref{fig:lookhead_non_transitive}.
	Here and in the following we define $\ksimeq k$ as $\transksim k \cap \transksimrev k$,
	i.e., the largest equivalence included in $\transksim k$.
	(Notice that $\A/\!\transksim k$ is the same as $\A/\!\ksimeq k$ by definition of quotienting.)
	We have that $p_0 \ksimeq 2 q_0 \ksimeq 2 r_0$, %(since trivially $r_0 \ksimeq 2 p_0$),
	$q_1 \ksimeq 2 r_1$, $q_2 \ksimeq 2 r_2$, and $q \ksimeq 2 r$
	(which follows from the discussion in Remark~\ref{remark:non_transitive}),
	and thus we obtain the quotient automaton $\A/\!\transksim 2$ on the right.
	However, $\set{q, r} \not\ksim 2 r$,
	since Spoiler can play $\set{q, r} \goesto a \set{p_0, q_0, r_0} \goesto a \set{p_0, q_0, r_0}$
	and Duplicator replies with either
		(1) $r \goesto a r_0$, but this is losing since $\set{p_0, q_0, r_0} \not\ksim 2 r_0$ 
        (cf.~the discussion of $p_0 \not\ksim 2 r_0$ in Remark~\ref{remark:non_transitive}), or
		(2) $r \goesto a r_0 \goesto a r_1$,
		but this is losing since Spoiler can then play a $b$ letter (which is not available from $r_1$), or symmetrically
		(3) $r \goesto a r_0 \goesto a r_2$, but this is losing too since Spoiler plays $a$ in this case.
	
	While lookahead simulation is not preserved under quotienting,
	the lemma above shows that the recognized language is nonetheless preserved,
	which is all that we care about for correctness.
\end{remark}

Lookahead simulation offers better reduction under quotienting than ordinary (i.e., $k = 1$) simulation.
We will define a family of automata $\A_n$ of size $O(n^2)$
which is not compressible w.r.t.~ordinary simulation,
but which is compressed to size $O(n)$ w.r.t.~simulation with lookahead $k = 2$.
Therefore, quotienting w.r.t.~lookahead simulation performs better than w.r.t.~ordinary simulation by a linear factor at least.
The construction of $\A_n$ is as follows.
The alphabet is $\Sigma = \set{a, b_1, \dots, b_n}$.
There is a state $p_{\set{i,j}}$ for every unordered pair $\set{i, j} \subseteq \set{1, \dots, n}$,
there is a state $q_i$ for every $i \in \set{1, \dots, n}$,
and finally we have a state $r$.
Transitions are as follows:
$p_{\set{i,j}} \goesto a q_i$, $p_{\set{i,j}} \goesto a q_j$,
and $q_i \goesto {\Sigma\setminus\set{b_i}} r$ for every unordered pair $\set{i, j} \subseteq \set{1, \dots, n}$.
This automaton is incompressible w.r.t.~ordinary simulation
since each two distinct $p_{\set {i,j}}, p_{\set{k,h}}$ are $\ksim 1$-incomparable:
For instance, assume w.l.o.g.~that $i \not\in \set{k, h}$.
Spoiler takes transition $p_{\set {i,j}} \goesto a q_i$,
and now Duplicator takes either transition $p_{\set{k,h}} \goesto a q_k$,
which is losing since Spoiler plays $b_k$ in this case,
or transition $p_{\set{k,h}} \goesto a q_h$,
which is also losing since Spoiler plays $b_h$ in this case.
On the other hand, with lookahead $k = 2$ we can readily see that $p_{\set{i, j}} \ksimeq 2 p_{\set{h, k}}$
(thus falling in the same quotient class),
since now %Spoiler has to announce in the first round whether it will not play $b_i$ or $b_j$ in the second round,
Duplicator can always match Spoiler's choice in the second round because $\Sigma\setminus\set{b_h} \cup \Sigma\setminus\set{b_k} = \Sigma$.

Lookahead simulation offers the optimal tradeoff between $k$-step and $k$-continuous simulation.
Since the lookahead is discarded at each round,
$k$-lookahead simulation is (strictly) included in $k$-continuous simulation
(where the lookahead is never discarded).
However, this has the benefit of only requiring to store $n^2$ configurations,
%(which is needed anyway when computing any binary relation)
which makes computing lookahead simulation space-efficient.
On the other hand, when Duplicator always chooses a maximal reply $m = k$
we recover $k$-step simulation,
which is thus included in $k$-lookahead simulation.
Moreover, thanks to the fact that Duplicator controls the lookahead,
most rounds of the game can be solved without ever reaching the maximal lookahead $k$:
\begin{enumerate*}[label=\arabic*)]
	\item for a fixed attack by Spoiler,
	we only consider Duplicator's responses for small $m = 1, 2, \dots, k$
	until we find a winning one, and
	\item also Spoiler's attacks can be built incrementally	since,
	if he loses for some lookahead, then he also loses for any larger one.
\end{enumerate*}
In practice, this greatly speeds up the computation,
and allows us to use lookaheads in the range $4$-$25$,
depending on the size and structure of the automata;
see Sec.~\ref{sec:experiments} for the experimental evaluation and
benchmark against the GOAL tool \cite{GOAL_survey_paper}.

\begin{remark}\label{rem:lookahead-pebble}
$k$-lookahead simulation can also be expressed as a restriction of $n$-pebble simulation,
where Duplicator is allowed to split pebbles maximally (thus $n$-pebbles),
but after a number $m \le k$ rounds (where $m$ is chosen dynamically by Duplicator)
he has to discard all but one pebble.
Then Duplicator is allowed to split pebbles maximally again, etc.
Thus, $k$-lookahead simulation is contained in $n$-pebble simulation,
though it is generally incomparable with $k$-pebble simulation.
\end{remark}

\begin{remark}\label{rem:lookahead:related}
	In \cite{lange:lookahead:2013,LangeBuffered:2014} very similar lookahead-like simulations are presented.
	In particular, \cite{lange:lookahead:2013} defines two variants of what they call \emph{multi-letter simulations}.
	The \emph{static} variant is the same as multistep simulation from Sec.~\ref{sec:multistep:simulation},
	and the \emph{dynamic} variant corresponds to the case where Duplicator chooses the amount of lookahead at each round,
	\emph{independently of Spoiler's attack}; thus, dynamic multi-letter simulation is included in lookahead simulation,
	since in the latter, Duplicator chooses the amount of lookahead actually used (i.e., the length of the response)
	depending on Spoiler's attack.
	Moreover, \cite{LangeBuffered:2014} introduces what they call \emph{buffered simulations},
	which extend multi-letter simulations by considering unbounded lookahead.
	In particular, what they call \emph{look-ahead buffered simulations}
	correspond to lookahead simulations as presented in Sec.~\ref{sec:lookahead:simulation}
	without a uniform bound on the maximal amount of lookahead that Duplicator can choose at each round,
	and they prove that they are PSPACE-complete to compute.
	Similarly, the more liberal variant that they call \emph{continuous look-ahead buffered simulations}
	corresponds to continuous simulations as presented in Sec.~\ref{sec:continuous:simulation},
	and they show that they are EXPTIME-complete to compute.
	While in principle it might seem that buffered simulations subsume lookahead/continuous simulations,
	in fact from the results of \cite{KleinZimmermann:Lookahead:ICALP:2015}
	it can be established that an exponential amount of lookahead suffices in both cases,
	and thus buffered simulations coincide with lookahead/continuous simulations from this section
	for sufficiently large (but fixed in advance) lookahead.
	%
	%\cite{LangeBuffered:2014}: unbounded lookahead version of our continuous and lookahead fair simulations (just for language inclusion).
	%unbounded lookahead simulation is PSPACE-complete,
	%and unbounded continuous lookahead simulation is EXPTIME-complete (maybe a theoretical hint why they are hard to compute?).
	%Explanation of the adjective "continuous": simulation holds iff there exists a \emph{continuous} function (in the Cantor topology) mapping accepting runs in A to accepting runs in B (over the same word).
	%In particular, from \cite{KleinZimmermann:Lookahead:ICALP:2015} it follows that exponential lookahead suffice for continuous simulation.
\end{remark}

\subsection{Fixpoint logic characterization.}
\label{sec:fixedpoint}

We conclude this section by giving a characterization of lookahead simulation in the modal $\mu$-calculus.
While this characterization could be used as the basis of an algorithm to
compute lookahead simulations symbolically by using fixpoint iteration,
it is more efficient to consider lookahead simulations as a special case of multipebble simulations,
as described in Remark~\ref{rem:lookahead-pebble}.
See Section~\ref{sec:implementation} for details on efficient implementations.

The $\mu$-calculus characterization follows from the following preservation property enjoyed by lookahead simulation:
Let $x \in \{ \mathrm{di, de, f, bw\textrm{-}di} \}$ and $k > 0$.
When Duplicator plays according to a winning strategy,
in any configuration $(p_i, q_i)$ of the resulting play, $p_i \kxsim k x q_i$ holds.
Thus, $k$-lookahead simulation (without acceptance condition) can be characterized as the largest $X \subseteq Q \times Q$
which is closed under a certain monotone predecessor operator.
For convenience, we take the point of view of Spoiler,
and compute the complement relation $W^x = (Q \times Q) \setminus \kxsim k x$ instead.
This is particularly useful for delayed simulation,
since we can avoid recording the obligation bit (see
\cite{etessami:etal:fairsimulations:05})
by using the technique of \cite{piterman:generalized06}.

\subsubsection{Direct and backward simulation.}
Consider the following monotone (w.r.t.~$\subseteq$) predecessor operator $\cpredi X$, for any set $X \subseteq Q \times Q$:
\begin{align*}
	\cpredi X &= \{ (p_0, q_0) \st
		\exists(p_0 \goesto {a_0} p_1 \goesto {a_1} \cdots \goesto {a_{k-1}} p_k)  \\
								& \forall (q_0 \goesto {a_0} q_1 \goesto {a_1} \cdots \goesto {a_{m-1}} q_m)\footnotemark
								  \textrm{ with } 0 < m \leq k, \\
	\textrm{\it either} \quad	& \exists (0 \leq j \leq m) \cdot p_j \in F \textrm{ and } q_j \not\in F, \\
	\textrm{\it or}		\quad 	& (p_m, q_m) \in X \}.
\end{align*}
\footnotetext{Here and in the following,
this is a shorthand for 
``$\forall (q_0 \goesto {b_0} q_1 \goesto {b_1} \cdots \goesto {b_{m-1}} q_m)$
with $a_0 = b_0, \dots, a_{m-1} = b_{m-1}$''.}
Intuitively, $(p,q) \in \cpredi X$ iff, from position $(p, q)$,
in one round of the game Spoiler can either force the game in $X$,
or win immediately by violating the winning condition for direct simulation.
For backward simulation, $\cprebwdi X$ is defined analogously,
except that transitions are reversed and also initial states are taken into account:
\begin{align*}
	\cprebwdi X &= \{ (p_0, q_0) \st
		\exists(p_0 \comesfrom {a_0} p_1 \comesfrom {a_1} \cdots \comesfrom {a_{k-1}} p_k) \\
				& \forall (q_0 \comesfrom {a_0} q_1 \comesfrom {a_1} \cdots \comesfrom {a_{m-1}} q_m) \textrm{ with } 0 < m \leq k, \\
	\textrm{\it either} \quad	& \exists (0 \leq j \leq m) \cdot p_j \in F \textrm{ and } q_j \not\in F, \\
	\textrm{\it or} 	\quad	& \exists (0 \leq j \leq m) \cdot p_j \in I \textrm{ and } q_j \not\in I, \\
	\textrm{\it or}		\quad 	& (p_m, q_m) \in X \}.
\end{align*}
\begin{remark}\label{rem:no_deadlocks}
	The definition of $\cprex x X$ requires that the automaton has no deadlocks;
	otherwise, Spoiler would incorrectly lose if he can only perform at most $k' < k$ transitions.
	%while in the definition of lookahead simulation we only require that Duplicator replies to these $k'$ steps.
	We assume that the automaton is complete to keep the definition simple,
	but our implementation works with general automata.

%        Intuitively, the generalization to incomplete automata works as
%        follows. If Spoiler's move reaches a deadlocked state after $k'$ steps,
%        where $1 \le k' < k$ then Spoiler does not immediately lose. Instead
%        Duplicator needs to reply to this move of length $k'$.
%        In other words, if Spoiler's move ends in a deadlocked state then the
%        lookahead requirements are weakened, because one simply cannot demand
%        any more steps from Spoiler.
\end{remark}
\noindent
For $X = \emptyset$, $\cprex x X$ is the set of states from which Spoiler wins in at most one step.
Thus, Spoiler wins iff he can eventually reach $\cprex x \emptyset$.
Formally, for $x \in \{\mathrm{di, bw\textrm{-}di}\}$, \[ W^x = \mu W \cdot \cprex x W. \]

\subsubsection{Delayed and fair simulation.}

We introduce a more elaborate three-arguments predecessor operator $\cprelong X Y Z$.
Intuitively, a configuration belongs to $\cprelong X Y Z$ iff
Spoiler can make a move s.t., for any Duplicator's reply,
at least one of the following conditions holds:
\begin{enumerate}
	\item Spoiler visits an accepting state, while Duplicator never does so; then, the game goes to $X$.
	\item Duplicator never visits an accepting state; the game goes to $Y$.
	\item The game goes to $Z$ (without any further condition).
\end{enumerate}
%
% Formally, we have the following:
%
\begin{align*}
	\cprelong X Y Z &= \{ (p_0, q_0) \st
		\exists(p_0 \goesto {a_0} p_1 \goesto {a_1} \cdots \goesto {a_{k-1}} p_k) \\
		&\forall (q_0 \goesto {a_0} q_1 \goesto {a_1} \cdots \goesto {a_{m-1}} q_m) \textrm{ with } 0 < m \leq k, \\
		\textrm{\it either} \quad	&	\exists (0 \leq i \leq m) \cdot p_i \in F,
										\forall (i \leq j \leq m) \cdot q_j \not\in F,
										(p_m, q_m) \in X, \\
		\textrm{\it or}		\quad	&	\forall (0 \leq j \leq m) \cdot q_j \not\in F,
										(p_m, q_m) \in Y, \\
		\textrm{\it or}		\quad	&	(p_m, q_m) \in Z \}.
\end{align*}
%
%(Remarks~\ref{rem:no_deadlocks} and \ref{rem:early_stop} also apply to $\cprelong X Y Z$.

For fair simulation, Spoiler wins iff, except for finitely many rounds (least fixpoint $\mu Z$),
he visits accepting states infinitely often
while Duplicator does not visit any accepting state at all (greatest and least fixpoints $\nu X \mu Y$):
\[ W^\mathrm f = \mu Z \cdot \nu X \cdot \mu Y \cdot \cprelong X Y Z. \]
Indeed, for fixed $X$ and $Z$,
a configuration belongs to $\mu Y \cdot \cprelong X Y Z$
if Spoiler can force the game in a finite number of steps to either visit an accepting state and go to $X$ (while Duplicator never visits an accepting state), or go to $Z$ (with the possibility that Duplicator visits an accepting state).
Thus, for fixed $Z$, a configuration belongs to $\nu X \cdot \mu Y \cdot \cprelong X Y Z$
if Spoiler can visit accepting states infinitely often while Duplicator never visits an accepting state,
or go to $Z$.
Finally, a configuration belongs to $W^\mathrm f$ if Spoiler can force the game in a finite number of steps to a configuration
from where he can visit infinitely many accepting states while Duplicator never visits an accepting state,
as required by the fair winning condition for Spoiler.

For delayed simulation, Spoiler wins if, after finitely many rounds,
\begin{enumerate}[1)]
	\item he can visit an accepting state, and from this moment on
	\item he can prevent Duplicator from visiting accepting states in the future.
\end{enumerate}
For condition 1), let $\cpreone X Y := \cprelong X {\emptyset} Y$,
and, for 2), $\cpretwo X Y := \cprelong {\emptyset} X Y$.
From the definition, a configuration belongs to $\cpreone X Y$ if Spoiler can in one step either visit an accepting state (while Duplicator does not do so) and go to $X$, or go to $Y$.
Similarly, a configuration belongs to $\cpretwo X Y$ if Spoiler can in one step either force the game to $X$ while Duplicator does not visit an accepting state, or force the game to $Y$.
Then,
\[ W^\mathrm {de} = \mu W \cdot \cpreone {\nu X \cdot \cpretwo X W} W. \]
Indeed, for any fixed $X$, $\mu W \cdot \cpreone X W$ is the set of configurations
from which Spoiler can force a visit to an accepting state in a finite number of steps
(and Duplicator does not visit an accepting state after Spoiler has done so)
and go to $X$,
and for any fixed $W$, $\nu X \cdot \cpretwo X W$ is the largest set of configurations
from where Spoiler can prevent Duplicator from visiting accepting states, or go to $W$.
Therefore a configuration is in $W^\mathrm {de}$ if Spoiler can force a visit to an accepting state in a finite number of steps,
after which he can prevent Duplicator from visiting accepting states ever after,
as required by the delayed winning condition for Spoiler.

%%% Local Variables:
%%% mode: latex
%%% TeX-master: "ROOT.tex"
%%% End:

\section{The Automata Reduction Algorithm}\label{sec:heavyandlight}

\subsection{Nondeterministic B\"uchi Automata}\label{subsec:heavyandlight-Buchi}

We reduce nondeterministic B\"uchi automata by the quotienting and transition pruning 
techniques from Sections~\ref{sec:quotienting} and \ref{sec:pruning}.
While trace inclusions would be an ideal basis for such techniques,
they (i.e., their membership problems) are PSPACE-complete.
Instead, we use the lookahead simulations from Sec.~\ref{sec:lookahead} 
as efficiently computable under-approximations;
in particular, we use
\begin{itemize}
	\item direct lookahead simulation $\transkdisim$ in place of direct trace inclusion $\directtraceinclusion$,
	\item delayed lookahead simulation $\transkdesim$ in place of delayed fixed-word simulation $\delayedfixedwordsimulation$, % $n$-pebble delayed simulation,
	\item fair lookahead simulation $\transkfsim$ in place of fair trace inclusion $\fairtraceinclusion$, and % (which is GFI).
	\item backward direct lookahead simulation $\transkbwdisim$ in place of
          backward direct trace inclusion $\bwdirecttraceinclusion$.
\end{itemize}
For quotienting, we employ delayed $\transkdesim$, and backward $k$-lookahead $\transkbwdisim$ simulations,
which are GFQ by Lemma~\ref{lem:lookahead_sim_GFI_GFQ}.
For pruning, we apply the results of Sec.~\ref{sec:pruning} and the substitutions above
to obtain the following incomparable GFP relations: % on NBA and NFA: % (if $\A = \A/\!\bwdisim$):
%with $k$-lookahead simulations in practice.
%Since GFP is $\subseteq$-downward closed and
%$\makeprunerel{\cdot}{\cdot}$ is monotone, we obtain that
\begin{itemize}
	\item $\makeprunerel{\id}{\stricttranskdisim}$ (by Theorem~\ref{thm:prune_id_strictdirecttraceinclusion}),
	\item $\makeprunerel{\stricttranskbwdisim}{\id}$ (by Theorem~\ref{thm:bwtrace-id}),
	\item $\makeprunerel{\strictbwdisim}{\transkdisim}$ (by Theorem~\ref{thm:bwsim-fwtrace}),
	\item $\makeprunerel{\transkbwdisim}{\strictdisim}$ (by Theorem~\ref{thm:bwtrace-fwsim}), and
	\item $\makeprunereltransient{\id}{\stricttranskfsim}$ (by Theorem~\ref{thm:prune_transient}).
\end{itemize}

Below we describe two possible ways to combine our simplification techniques: \emph{Heavy-$k$} and \emph{Light-$k$}
(which are parameterized by the lookahead value $k$).

\subsubsection{Heavy-$k$.}

We advocate the following reduction procedure,
which repeatedly applies all the techniques described in this paper
until the automaton cannot be further modified:
\begin{itemize}%[1)]
	\item Remove dead states.
	\item Prune transitions w.r.t.~the GFP relations above (using lookahead $k$).
%$\makeprunerel{\id}{\stricttranskdisim}$,
%$\makeprunerel{\stricttranskbwsim}{\id}$,
%$\makeprunerel{\strictbwdisim}{\transkdisim}$,
%$\makeprunerel{\transkbwsim}{\strictdisim}$
%and $R_t(\stricttranskfsim)$ (see Sec.~\ref{sec:pruning}),
	\item Quotient w.r.t.~$\transkdesim$ and $\transkbwdisim$.
%and quotienting with $\transkdesim$ and $\transkbwsim$.
%(Note that pruning transitions can render states dead, which 
%are then removed.)
\end{itemize}
The resulting simplified automaton cannot be further reduced by any of these techniques.
In this sense, it is a local minimum in the space of automata (w.r.t.~this set
of reduction techniques).
Many different variants are possible where the techniques above are applied in
different orders.
In particular, applying the techniques in a different order might produce a
different local minimum.
In general, there does not exist an optimal order that works best in
every instance. One reason for this is that one needs to decide whether to
first quotient w.r.t.~backward simulation and then to quotient w.r.t.~forward simulation
or vice-versa; cf.~Fig.~\ref{fig_no_opt_order}.

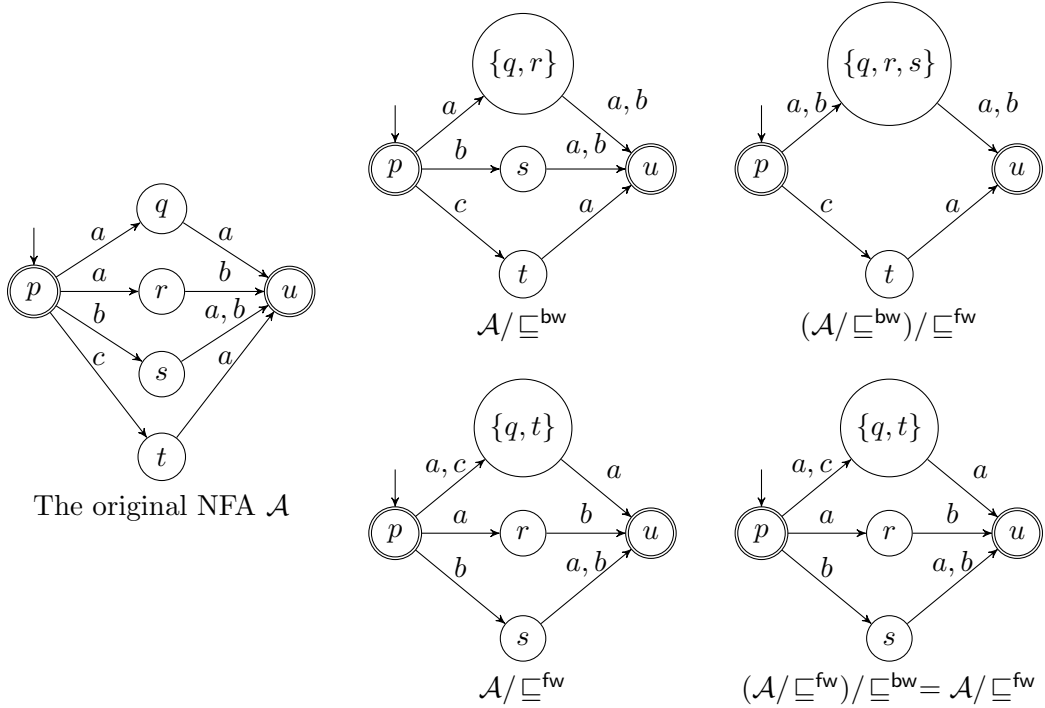
\begin{figure}
	\begin{tabular}{cc}
		\begin{tabular}{c}
			\begin{tikzpicture}[on grid, node distance= 1.1cm and 1.7cm]
				\tikzstyle{vertex} = [smallstate]

				\path node [vertex, initial above, accepting] (p) {$p$};
				\path node [vertex] (q) [above right = of p] {$q$};
		                    \path node [vertex] (r) [right = of p] {$r$};
		                    \path node [vertex] (s) [below right = of p] {$s$};
		                    \path node [vertex] (t) [below = of s] {$t$};
				\path node [vertex, accepting] (u) [right = of r] {$u$};

				\path[->]
					(p) edge node [above] {$a$} (q)
		                            (p) edge node [above] {$a$} (r)
		                            (p) edge node [above] {$b$} (s)
		                            (p) edge node [above] {$c$} (t)
					(q) edge node [above] {$a$} (u)
					(r) edge node [above] {$b$} (u)
					(s) edge node [above] {$a,b$} (u)
		                            (t) edge node [above] {$a$} (u);
			\end{tikzpicture}
			\\
			The original NFA $\A$
		\end{tabular}
		&
		\begin{tabular}{c@{$\qquad$}c}
			\begin{tikzpicture}[on grid, node distance= 1.4cm and 1.7cm]
				\tikzstyle{vertex} = [smallstate]

				\path node [vertex, initial above, accepting] (p) {$p$};
				\path node [vertex] (q) [above right = of p] {$\{q,r\}$};
	                        \path node [vertex] (s) [right = of p] {$s$};
	                        \path node [vertex] (t) [below right = of p] {$t$};
				\path node [vertex, accepting] (u) [right = of r] {$u$};

				\path[->]
					(p) edge node [above] {$a$} (q)
	                                (p) edge node [above] {$b$} (s)
	                                (p) edge node [above] {$c$} (t)
					(q) edge node [above right] {$a,b$} (u)
					(s) edge node [above] {$a,b$} (u)
	                                (t) edge node [above] {$a$} (u);
			\end{tikzpicture}
			&
			\begin{tikzpicture}[on grid, node distance= 1.4cm and 1.7cm]
				\tikzstyle{vertex} = [smallstate]

				\path node [vertex, initial above, accepting] (p) {$p$};
				\path node [vertex] (q) [above right = of p] {$\{q,r,s\}$};
	                        \path node [vertex] (t) [below right = of p] {$t$};
				\path node [vertex, accepting] (u) [right = of r] {$u$};

				\path[->]
					(p) edge node [above] {$a,b\ $} (q)
	                                (p) edge node [above] {$c$} (t)
					(q) edge node [above right] {$a,b$} (u)
	                                (t) edge node [above] {$a$} (u);
			\end{tikzpicture}
			\\
	        $\A/\!\bwsim$ 
			&
			$(\A/\!\bwsim)/\!\fwsim$
			\\
			\\
			\begin{tikzpicture}[on grid, node distance= 1.4cm and 1.7cm]
				\tikzstyle{vertex} = [smallstate]

				\path node [vertex, initial above, accepting] (p) {$p$};
				\path node [vertex] (q) [above right = of p] {$\{q,t\}$};
	                        \path node [vertex] (r) [right = of p] {$r$};
	                        \path node [vertex] (s) [below right = of p] {$s$};
				\path node [vertex, accepting] (u) [right = of r] {$u$};

				\path[->]
					(p) edge node [above] {$a,c\ $} (q)
	                                (p) edge node [above] {$a$} (r)
	                                (p) edge node [above] {$b$} (s)
					(q) edge node [above right] {$a$} (u)
					(r) edge node [above] {$b$} (u)
	                                (s) edge node [above] {$a,b$} (u);
			\end{tikzpicture}
			&
			\begin{tikzpicture}[on grid, node distance= 1.4cm and 1.7cm]
				\tikzstyle{vertex} = [smallstate]

				\path node [vertex, initial above, accepting] (p) {$p$};
				\path node [vertex] (q) [above right = of p] {$\{q,t\}$};
	                        \path node [vertex] (r) [right = of p] {$r$};
	                        \path node [vertex] (s) [below right = of p] {$s$};
				\path node [vertex, accepting] (u) [right = of r] {$u$};

				\path[->]
					(p) edge node [above] {$a,c\ $} (q)
	                                (p) edge node [above] {$a$} (r)
	                                (p) edge node [above] {$b$} (s)
					(q) edge node [above right] {$a$} (u)
					(r) edge node [above] {$b$} (u)
	                                (s) edge node [above] {$a,b$} (u);
			\end{tikzpicture}
			\\
	        $\A/\!\fwsim$ 
			&
			$(\A/\!\fwsim)/\!\bwsim = \A/\!\fwsim$
		\end{tabular}
	\end{tabular}
	\caption{There is no universally optimal order of applying quotienting
        operations. In this example, it is best to first quotient the NFA $\A$
        w.r.t.~backward simulation and then to quotient it w.r.t.~forward
        simulation. Thus one obtains an irrecucible NFA with $4$ states (first
        row above), while
        the reverse order yields an irrecucible NFA with $5$ states (second
        row above).
        To obtain a dual example where it is best
        to first quotient w.r.t.~forward simulation, just reverse the
        direction of all transitions in the original automaton $\A$ and make
        state $u$ initial instead of $p$.
        To obtain a similar example for B\"uchi automata, just add a self-loop
        with action $d$ at state $u$ (resp.~at state $p$ for the dual example).
        }
	\label{fig_no_opt_order}

\end{figure}

%%% Local Variables:
%%% mode: latex
%%% TeX-master: "ROOT.tex"
%%% End:

In practice, the order is determined by efficiency considerations and easily
computable operations are used first. More exactly, our implementation
uses a nested loop, where the inner loop uses only lookahead $1$
(until a fixpoint is reached), while the outer loop uses lookahead $k$. 
In other words, the algorithm uses
expensive operations only when cheap operations have no more effect.
For details about the precise order of the techniques in our implementation,
the reader is referred to \cite{RABIT} (algorithms/Minimization.java). 

\begin{remark}
	Quotienting w.r.t.~simulation is idempotent, since quotienting itself preserves simulation.
	However, in general this is not true for lookahead simulations,
	because these relations are not preserved under quotienting.
	Moreover, quotienting w.r.t.~forward simulations does not preserve backward simulations, and vice-versa.
	Our experiments showed that repeatedly and alternatingly quotienting w.r.t.~$\transkdesim$ and $\transkbwdisim$
	(in addition to our pruning techniques) yields the best reduction effect.
\end{remark}

The Heavy-$k$ procedure {\em strictly subsumes} all simulation-based automata
reduction methods described in the literature 
(removing dead states, quotienting, pruning of `little brother' transitions, 
mediated preorder (see Sec.~\ref{subsec:mediated})), except for the following two:
\begin{enumerate}
	\item 
        The \emph{fair simulation reduction} of \cite{GBS02} 
        is implemented in GOAL \cite{GOAL_survey_paper}, and works by tentatively 
	merging fair simulation equivalent states and then checking if this operation 
	preserved the language. (In general, fair simulation is not
	GFQ.) It potentially subsumes quotienting with $\desim$, provided that
        the chosen merged states are not only fair simulation equivalent, but also delayed
        simulation equivalent.
        However, it does not subsume quotienting with $\transkdesim$.
	We benchmarked our methods against it and found Heavy-$k$ to be much better in
	both effect and efficiency; cf.~Sec.~\ref{sec:experiments}.
	\item The GFQ \emph{jumping-safe preorders} of \cite{buchiquotient:ICALP11,Clemente:PhD} are incomparable to
	the techniques described in this paper. If applied in addition to
        Heavy-$k$ (for quotienting only, since they are GFQ but not GFP),
	they yield a modest extra reduction effect. 
        In our experiments in Sec.~\ref{sec:experiments} we also benchmarked an
        extended version of Heavy-$k$, called \emph{Heavy-$k$-jump}, that
        additionally uses the jumping-safe preorders of
        \cite{buchiquotient:ICALP11,Clemente:PhD}
        for quotienting.
\end{enumerate}

\subsubsection{Light-$k$.}
This reduction procedure is defined purely for comparison reasons.
It demonstrates the effect of the lookahead $k$
in a single quotienting operation and works as follows:
Remove all dead states and then quotient w.r.t.~$\transkdesim$.
Although Light-$k$ achieves much less than Heavy-$k$, it is not necessarily faster.
This is because it uses the more expensive to compute relation $\transkdesim$
directly, while Heavy-$k$ applies other cheaper (pruning) operations
first and only then computes $\transkdesim$ on the resulting smaller
automaton.

\subsection{Nondeterministic Finite Automata}\label{subsec:heavyandlight-NFA}

Most of the techniques from Sec.~\ref{subsec:heavyandlight-Buchi} carry
over to NFA, except for the following differences.
\begin{itemize}
\item
Delayed and fair simulation do not apply to NFA. 
Thus, pruning w.r.t.~$\makeprunereltransient{\id}{\stricttranskfsim}$ is
omitted.
Moreover, instead of quotienting with the transitive closures of lookahead delayed simulation
$\transkdesim$ and lookahead backward direct simulation $\transkbwdisim$,
we quotient NFA with the transitive closures of lookahead forward direct simulation $\transkdisim$
and  lookahead backward simulation $\transkbwsim$.
Those are included in forward $\finincl$ and backward $\bwfinincl$ finite trace inclusion, respectively,
and thus they are GFQ on NFA by Theorem~\ref{thm:GFQ:NFA}.
\item
The transition pruning techniques use $\transkbwsim$ instead of
$\transkbwdisim$ and
$\stricttranskbwsim$ instead of $\stricttranskbwdisim$.
The correctness for NFA follows from the theorems in
Sec.~\ref{sec:pruning:NFA}.
\item
Unlike NBA, every NFA can be transformed into an equivalent one with
just a single accepting state without any outgoing transitions
(unless the language contains the empty word; this case can be handled
separately), as follows:
\begin{enumerate*}[label=\arabic*)]
\item
Add a new accepting state ${\it acc}$.
\item
For every transition $p \goesto a q$ where $q$ is accepting and $q \neq {\it acc}$, add a transition
$p \goesto a {\it acc}$.
\item
Make ${\it acc}$ the only accepting state.
\end{enumerate*}
This transformation adds just one state, but possibly many transitions.
In this new form, the direct forward and backward (lookahead) simulations are significantly
larger, because the acceptance conditions are easier to satisfy.
This greatly increases the effect of the remaining quotienting and pruning reduction
methods, and partly offsets the negative effect caused by the loss of the delayed
and fair simulation based methods.
\item
A variant of the GFQ \emph{jumping-safe preorders} of
\cite{buchiquotient:ICALP11,Clemente:PhD}
can also be applied to NFA. 
Unlike the version for NBA, it does not make use of (jumping) delayed simulation, but
uses (jumping) direct forward and backward simulations.
It is implemented only in the extended Heavy-$k$-jump version of the NFA
reduction algorithm; cf.~Sec.~\ref{sec:experiments}.
\end{itemize}

\subsection{Quotienting w.r.t.~mediated simulation}\label{subsec:mediated}

We show that the quotienting and transition pruning techniques described above
subsume quotienting w.r.t.~{\em mediated preorder} \cite{AbdullaCHV09,AbdullaChenHolikVojnar:TCS:2014}
(but not vice-versa),
in the sense that after applying our reduction algorithm,
quotienting w.r.t.~mediated preorder provably does not yield any further reduction.
Mediated preorder was originally defined for alternating B\"uchi automata
as an attempt at combining backward and forward simulations for automata reduction.
Here, we consider its restriction to nondeterministic B\"uchi automata 
(and the arguments carry over directly to NFA).
\begin{definition}[\cite{AbdullaCHV09,AbdullaChenHolikVojnar:TCS:2014}]
	A relation $M \subseteq Q \times Q$ is a \emph{mediated simulation}%
	\footnote{For two relations $A, B \subseteq Q \times Q$,
	we write $A \circ B$ for the relation $A \circ B \subseteq Q \times Q$
	s.t.~$(x, y) \in A \circ B$ iff there exists $z$ s.t.~$(x, z) \in A$ and $(z, y) \in B$.}
	if
        \begin{enumerate}
        \item
          $M \mathrel{\subseteq} (\disim \circ \bwdisimrev)$, and
        \item
          $(M \circ \disim) \mathrel{\subseteq} M$.
          \end{enumerate}
\end{definition}
It can be shown that mediated simulations are closed under union and composition,
and thus there exists a largest mediated simulation preorder $\medsim$
which is the union of all mediated simulations,
and \cite{AbdullaCHV09,AbdullaChenHolikVojnar:TCS:2014} further shows that $\medsim$ is GFQ.

However, an automaton $\A$ that has
been reduced by the techniques described above cannot be further reduced by
mediated preorder.
First, we have $\A = \A/\!\bwdisim = \A/\!\disim$ by repeated quotienting.
Second, there cannot exist any (distinct) states $p$ and $q$ in $\A$
s.t.~$p \strictdisim q$ and $p \strictbwdisim q$ by the pruning techniques above
(used with simulations as approximations for trace inclusions) and the removal of dead states.
Indeed, if such states $p$ and $q$ exist, then $p$ is removed:
First, every forward transition $p \goesto \sigma p'$ from $p$
is subsumed by a corresponding transition $q \goesto \sigma q'$ from $q'$
s.t.~$p' \disim q'$.
Similarly, every backward transition to $p$ is subsumed by a corresponding transition to $q$.
Therefore, after pruning away all these transitions w.r.t.~$\makeprunerel \strictbwdisim \strictdisim$,
state $p$ becomes dead, and it is thus removed.
Under these conditions, further quotienting with mediated preorder has no effect,
as the following theorem shows.
\begin{lemma}
	Let $\A$ be an automaton s.t.\
	(1) $\disim \cap \disimrev = \id$,
	(2) $\bwdisim \cap \bwdisimrev = \id$, and
	(3) $\disim \cap \bwdisim = \id$.
	Then, $\medsim \cap \medsimrev = \id$, i.e., $\A = \A/\! \medsim$.
\end{lemma}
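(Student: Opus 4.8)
The plan is to exploit that $\medsim$, being the largest mediated simulation, is itself a mediated simulation, so that $\medsim \subseteq (\disim \circ \bwdisimrev)$ and $(\medsim \circ \disim) \subseteq \medsim$. Since $\medsim$ is a preorder we already have $\id \subseteq \medsim \cap \medsimrev$, so it remains to prove $\medsim \cap \medsimrev \subseteq \id$. I would in fact establish the sharper statement that $p \medsim q$ and $q \medsim p$ together imply $p \disim q$: because the hypothesis is symmetric in $p$ and $q$, the same argument also yields $q \disim p$, and then hypothesis (1), $\disim \cap \disimrev = \id$, forces $p = q$.

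So fix $p, q$ with $p \medsim q$ and $q \medsim p$. First I would apply the containment $\medsim \subseteq \disim \circ \bwdisimrev$ to $p \medsim q$ to obtain a mediator $z$ with $p \disim z$ and $q \bwdisim z$ (unfolding $z \bwdisimrev q$ as $q \bwdisim z$). Next, feeding the edge $p \disim z$ into $(\medsim \circ \disim) \subseteq \medsim$ together with $q \medsim p$ gives $q \medsim z$. Applying the first containment once more, now to $q \medsim z$, yields a second mediator $z'$ with $q \disim z'$ and $z \bwdisim z'$. The endgame collapses both mediators: from $q \bwdisim z$ and $z \bwdisim z'$ transitivity gives $q \bwdisim z'$, and combined with $q \disim z'$ hypothesis (3), $\disim \cap \bwdisim = \id$, forces $z' = q$; then $z \bwdisim z' = q$ together with $q \bwdisim z$ feeds hypothesis (2), $\bwdisim \cap \bwdisimrev = \id$, to force $z = q$. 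Hence $p \disim z = q$, as desired.

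The main obstacle is orchestrating the two uses of the defining conditions so that the witnesses genuinely collapse, rather than generating an unbounded chain of fresh mediators. The idea that prevents this runaway is that the second defining condition $(\medsim \circ \disim) \subseteq \medsim$ lets one transport the forward edge $p \disim z$ across $\medsim$, producing the mediated pair $q \medsim z$ whose own mediator $z'$ can then be pinned down by the antisymmetry hypotheses (3) and (2) applied in succession. I would also carefully double-check the reading of the reversed relations, since the transitivity step relies on unfolding $z \bwdisimrev q$ as $q \bwdisim z$ and $z' \bwdisimrev z$ as $z \bwdisim z'$; a misreading of these is essentially the only place where the argument could silently go wrong.
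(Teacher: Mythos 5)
Your proof is correct and follows essentially the same route as the paper's: unfold a mediator from $\medsim \, \subseteq\, (\disim \circ \bwdisimrev)$, use $(\medsim \circ \disim) \subseteq \medsim$ to produce a second mediated pair, collapse its mediator via transitivity of $\bwdisim$ together with hypotheses (3) and (2), and finish with (1). The only difference is cosmetic: you establish $p \disim q$ once and invoke the symmetry of the hypothesis, whereas the paper runs both halves explicitly (introducing mediators for both $x \medsim y$ and $y \medsim x$ plus a third mediator $k$), but the key steps coincide.
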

\begin{proof}
	Let $x \medsim y$ and $y \medsim x$. By definition of $\medsim$,
	there exist mediators $z$ and $w$ s.t.~$x \disim z$ and $y \bwdisim z$,
	and $x \bwdisim w$ and $y \disim w$.
	Since $\medsim \circ \disim \,\subseteq\, \medsim$ we have $x \medsim w$.
	Thus, there exists a mediator $k$ s.t.~$x \disim k$ and $w \bwdisim k$.
	By transitivity of $\bwdisim$, we also have $x \bwdisim k$.
	By (3), we get $x=k$.
	Thus, $x \bwdisim w$ and $w \bwdisim x$.
	By (2), we get $x=w$.
	Thus, $y \disim w=x \disim z$, and, by transitivity, $y \disim z$. 
	Moreover, $y \bwdisim z$ as above.
	By (3) we get $z=y$.
	Thus, $x \disim z = y$ and $y \disim w = x$.
	By (1), we get $x=y$.
\end{proof}

%%% Local Variables:
%%% mode: latex
%%% TeX-master: "ROOT.tex"
%%% End:

\section{Language Inclusion Checking}
\label{sec:preprocessing}
\label{sec:inclusion}

In most of this section we consider the language inclusion problem for NBA. 
For the simpler case of language inclusion on NFA see
Sec.~\ref{subsec:incl-NFA}.

The general language inclusion problem $\A \languageinclusion \B$
is PSPACE-complete \cite{kupfermanvardi:fair_verification};
the complexity reduces to PTIME in certain special instances,
for example when $\B$ is deterministic \cite{Kurshan:Complementing:JCSS:1987}
or, more generally, strongly unambiguous \cite{BousquetLoeding:Unambiguous:LATA:2010}.
It can be solved via complementation of $\B$ \cite{sistla:vardi:wolper:complementation:87,GOAL_survey_paper}
and, more efficiently, by rank-based (cf.~\cite{fogarty_et_al:LIPIcs:2011:3235} and references therein)
or Ramsey-based methods
\cite{seth:buchi,seth:efficient,abdulla:simulationsubsumption,Rabit_CONCUR2011},
or variants of Piterman's construction \cite{Pit06,GOAL_survey_paper};
simulation relations \cite{dill:inclusion:1992}
or succinct pseudo-complementation constructions
\cite{Kurshan:Complementing:JCSS:1987}
(cf.~Remark~\ref{rem:Kurshan})
can provide PTIME under-approximations for this problem,
but do not always manage to prove all cases when inclusion holds.
Since the exact algorithms all have {\em exponential} time complexity, it helps significantly
to first reduce the automata in a preprocessing step.
Better reduction techniques, as described in the previous sections, make it
possible to solve significantly larger instances.
However, our simulation-based techniques can not only be used in
preprocessing to reduce the size of automata, but actually solve most instances of the inclusion problem
{\em directly} by reducing to trivial instances. This is significant, because simulation scales 
{\em polynomially} (almost quadratic average-case complexity; cf.~Sec.~\ref{sec:experiments}).

\subsection{Inclusion-preserving reduction techniques}\label{subsec:incl_preserving_reduction}

Inclusion testing algorithms generally benefit from language-preserving reduction preprocessing
(cf.~Sec.~\ref{sec:heavyandlight}).
However, precisely preserving the languages of $\A$ and $\B$ in the preprocessing is not actually necessary when one is only interested in the answer to the query $\A \languageinclusion \B$.
A preprocessing on $\A,\B$ is said to be \emph{inclusion-preserving}
iff it produces automata $\A',\B'$ s.t.~$\A \languageinclusion \B \iff \A' \languageinclusion \B'$
(regardless of whether $\A \languageequivalence \A'$ or $\B \languageequivalence \B'$).
In the following, we consider two inclusion-preserving preprocessing steps.

\subsubsection{Simplify $\A$.}
\label{sec:simplifyA}

In theory, the problem $\A \languageinclusion \B$ is only hard in $\B$,
but polynomial in the size of $\A$. However, this is only relevant if one actually 
constructs the exponential-size complement of $\B$, which is, of course, to be
avoided. For polynomial simulation-based algorithms it is crucial to also reduce $\A$.
The idea is to remove transitions in $\A$ which are `covered' by better transitions in $\B$.
The development below is similar to the pruning of transitions in Sec.~\ref{sec:pruning},
except that we compare transitions of $\A$ with transitions of $\B$.
%in a spirit similar to the pruning of Sec~\ref{sec:pruning}.

\begin{definition}
	Given $\A = (\Sigma, Q_\A, I_\A, F_\A, \delta_\A)$, 
	$\B = (\Sigma, Q_\B, I_\B, F_\B, \delta_\B)$,
	let $\prunerel \subseteq \delta_\A \times \delta_\B$. % be a relation for comparing transitions in $\A$ and $\B$.
	The pruned version of $\A$ is $\xprune{\A}{\B}{\prunerel} := (\Sigma, Q_\A, I_\A, F_\A, \delta')$
	with \[ \delta' = \{(p,\symb,r) \in \delta_\A \st {\nexists}
	(p',\symb',r') \in \delta_\B.\, (p,\symb,r) \prunerel (p',\symb',r')\} \ . \]
\end{definition}

\noindent
$\A \languageinclusion \B$ implies $\xprune{\A}{\B}{\prunerel} \languageinclusion \B$,
since $\xprune{\A}{\B}{\prunerel} \languageinclusion \A$.
When also the other direction holds (so that pruning is inclusion-preserving),
we say that $\prunerel$ is \emph{good for $\A,\B$-pruning}.
Intuitively, pruning is correct when the removed transitions do not allow $\A$ to accept any word
which is not already accepted by $\B$.
In other words, if there is a counter example to inclusion in $\A$,
then it can even be found in $\xprune{\A}{\B}{\prunerel}$.
\begin{definition}
	A relation $\prunerel \subseteq \delta_\A \times \delta_\B$ is \emph{good for $\A,\B$-pruning} if
	$\A \languageinclusion \B \!\!\iff\!\! \xprune{\A}{\B}{\prunerel} \languageinclusion \B$.
\end{definition}
\noindent
As in Eq.~\ref{eq:prunerel}, we compare transitions by looking at their endpoints:
For state relations $\brel, \frel \subseteq Q_\A \times Q_\B$,
the relation $\makeprunerelAB{\brel}{\frel}$ on transitions is defined as % in Eq.~\ref{eq:prunerel}.
\begin{align*}
%	\label{eq:prunerel:AB}
	\makeprunerelAB{\brel}{\frel} = \{((p,\symb,r),(p',\symb,r')) \in \delta_\A \times \delta_\B \st p \brel p' \textrm{ and } r \frel r' \}.
\end{align*}
Since inclusion-preserving pruning does not need to respect the language,
we can use much coarser relations for comparing endpoints.
Recall that fair trace inclusion $\fairtraceinclusion$ asks to match infinite traces containing infinitely many accepting states
(cf.~Sec.~\ref{sec:trace_inclusions}),
while that backward finite trace inclusion $\accblindbwdirecttraceinclusion$
disregards accepting states entirely and only asks to match finite traces that start in initial states
(cf.~Sec.~\ref{sec:simulations:finitewords}).
%Let $\accblindbwdirecttraceinclusion$ be the variant of $\bwdirecttraceinclusion$
%where accepting states are not taken into consideration,
%i.e., the winning condition of the corresponding simulation game is
%$\mathcal C^\mathrm{bw-}(\pi_0, \pi_1) \iff \forall (i \geq 0) \cdot p_i \in I \implies q_i \in I$.	
%
\begin{theorem}\label{thm:ABpruning}
	$\makeprunerelAB{\accblindbwdirecttraceinclusion}{\fairtraceinclusion}$ is good for $\A,\B$-pruning.
\end{theorem}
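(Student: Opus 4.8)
The plan is to establish the nontrivial implication of the claimed equivalence only. Writing $\prunerel = \makeprunerelAB{\accblindbwdirecttraceinclusion}{\fairtraceinclusion}$ and $\A' = \xprune{\A}{\B}{\prunerel}$, the direction $\A \languageinclusion \B \Rightarrow \A' \languageinclusion \B$ is immediate from $\A' \languageinclusion \A$, as already observed just before the definition of good-for-$\A,\B$-pruning. So I would assume $\A' \languageinclusion \B$ and prove $\A \languageinclusion \B$. The guiding idea---and the reason coarse, acceptance-blind relations are admissible here---is that we need not preserve $\lang\A$, only the inclusion query: whenever a run of $\A$ takes a pruned transition, we splice together a witnessing run of $\B$ directly, instead of rerouting the run inside $\A'$.

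Concretely, first I would take an arbitrary $w = \symb_0\symb_1\cdots \in \lang\A$ together with an initial fair trace $\pi = q_0 \goesto{\symb_0} q_1 \goesto{\symb_1} \cdots$ of $\A$ on $w$, and argue $w \in \lang\B$. If $\pi$ uses no pruned transition, then $\pi$ is already a trace of $\A'$, so $w \in \lang{\A'}$ and hence $w \in \lang\B$ by hypothesis. Otherwise, I fix any step $j$ at which $\pi$ takes a pruned transition $q_j \goesto{\symb_j} q_{j+1}$; by definition of $\prunerel$ there is a transition $p' \goesto{\symb_j} r'$ in $\B$ with $q_j \accblindbwdirecttraceinclusion p'$ and $q_{j+1} \fairtraceinclusion r'$. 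The three ingredients now assemble into an accepting run of $\B$ on $w$: the prefix $\prefix{\pi}{j}$ is an initial finite $\A$-trace on $\symb_0\cdots\symb_{j-1}$ ending in $q_j$, so backward finite trace inclusion $q_j \accblindbwdirecttraceinclusion p'$ supplies an initial finite $\B$-trace on the same word ending in $p'$; since $\pi$ is fair, the suffix $\suffix{\pi}{j+1}$ is an infinite fair $\A$-trace on $\symb_{j+1}\symb_{j+2}\cdots$ starting in $q_{j+1}$, so $q_{j+1} \fairtraceinclusion r'$ supplies a fair $\B$-trace on that word starting in $r'$. Concatenating the $\B$-prefix, the bridging transition $p' \goesto{\symb_j} r'$, and the fair $\B$-suffix yields an initial $\B$-trace on $w$ whose tail visits $F_\B$ infinitely often, hence fair, so $w \in \lang\B$. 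As $w$ was arbitrary, $\A \languageinclusion \B$.

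I expect no serious obstacle, and in fact this argument should be markedly simpler than the genuine pruning (GFP) proofs of Sec.~\ref{sec:pruning} (e.g.\ Theorems~\ref{thm:bwsim-fwtrace}, \ref{thm:bwtrace-fwsim}, \ref{thm:prune_transient}): there is no need for a K\"onig's-lemma limit or for keeping the witnessing run inside $\A'$, because a single pruned transition already exposes an accepting run of $\B$, so we may jump into $\B$ once and never return. The two points I would check explicitly are that $\accblindbwdirecttraceinclusion$ and $\fairtraceinclusion$ are evaluated on the original $\A$ and $\B$---so that $\prefix{\pi}{j}$ only has to be a legal $\A$-trace, even if it itself uses pruned transitions, and the choice of $j$ need not be the first pruned step---and that fairness of the spliced run is provided entirely by the fair $\B$-suffix, which is precisely why the acceptance-blindness of $\accblindbwdirecttraceinclusion$ on the prefix does no harm.
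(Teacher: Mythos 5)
Your proof is correct and is essentially the paper's own argument: the paper establishes the same implication by contraposition, distinguishing exactly your two cases---either the fair trace of $\A$ avoids all pruned transitions and hence survives in $\xprune{\A}{\B}{\prunerel}$, or it uses a pruned transition, in which case one splices together an initial $\B$-prefix supplied by $\accblindbwdirecttraceinclusion$, the bridging $\B$-transition, and a fair $\B$-suffix supplied by $\fairtraceinclusion$. Your explicit remarks---that the relations are evaluated on the original automata (so the prefix need only be a legal $\A$-trace and $j$ need not be the first pruned step) and that fairness of the spliced run comes entirely from the $\B$-suffix---are left implicit in the paper's terser write-up but match its reasoning exactly.
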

\begin{proof}
	Let $\prunerel = \makeprunerelAB{\accblindbwdirecttraceinclusion}{\fairtraceinclusion}$,
	and we want to prove that $\A \languageinclusion \B$ iff $\xprune{\A}{\B}{\prunerel} \languageinclusion \B$.
	The ``only if'' direction is trivial, as remarked above.
	%Since trivially $\xprune{\A}{\B}{\prunerel} \languageinclusion \A$,
	%we obtain 
	%$
	%\A \languageinclusion \B
	%\ \Rightarrow\ \xprune{\A}{\B}{\prunerel} \languageinclusion \B
	%$.
	%Now consider the reverse direction.
	For the ``if'' direction, by contraposition, assume $\xprune{\A}{\B}{\prunerel} \languageinclusion \B$, but
	$\A \not\languageinclusion \B$. There exists a $w \in \lang{\A}$
	s.t.~$w \notin \lang{\B}$. There exists an initial fair trace 
	$\pi = q_0 \goesto {\symb_0} q_1 \goesto {\symb_1} \cdots$ on $w$ in $\A$.
	There are two cases.
	\begin{enumerate}
		\item
		$\pi$ contains a transition
		$q_i \goesto {\symb_i} q_{i+1}$ that is not present in
		$\xprune{\A}{\B}{\prunerel}$.
		Therefore there exists a transition
		$q_i' \goesto {\symb_i} q_{i+1}'$ in $\B$ 
		s.t.~$q_i \accblindbwdirecttraceinclusion q_i'$ and $q_{i+1} \fairtraceinclusion q_{i+1}'$.
		Thus there exists an initial fair trace on $w$ in $\B$ and thus
		$w \in \lang{\B}$. Contradiction.
		\item
		$\pi$ does not contain any transition
		$q_i \goesto {\symb_i} q_{i+1}$ that is not present in
		$\xprune{\A}{\B}{\prunerel}$. Then $\pi$ is also an initial fair trace
		on $w$ in $\xprune{\A}{\B}{\prunerel}$, and thus we obtain
		$w \in \lang{\xprune{\A}{\B}{\prunerel}}$ and
		$w \in \lang{\B}$. Contradiction. \qedhere
	\end{enumerate}
\end{proof}

\noindent
We can approximate $\accblindbwdirecttraceinclusion$ with
the transitive closure $\transkbwsim$
of the corresponding $k$-lookahead simulation $\accblindkbwsim$.
(Recall that $\accblindkbwsim$ is defined like $\kbwdisim$, except that only initial states are considered,
i.e., the winning condition is $\mathcal C^{\mathrm {bw}}_I$ instead of $\mathcal C^{\mathrm {bw}}_{I, F}$
; cf.~Sec.~\ref{sec:simulations:finitewords}.)
%$\mathcal C^\mathrm{bw-}$.
%
Since ``good for $\A,\B$-pruning'' is $\subseteq$-downward closed and $\makeprunerelAB{\cdot}{\cdot}$ is monotone,
we obtain the following corollary of Theorem~\ref{thm:ABpruning}.

\begin{corollary}\label{cor:ABpruning}
	$\makeprunerelAB{\accblindtranskbwsim}{\transkfsim}$ is good for $\A,\B$-pruning.
\end{corollary}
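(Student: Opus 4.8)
The plan is to derive the corollary directly from Theorem~\ref{thm:ABpruning}, exploiting the two structural properties noted just before the statement: that the property of being good for $\A,\B$-pruning is $\subseteq$-downward closed, and that $\makeprunerelAB{\cdot}{\cdot}$ is monotone in both of its arguments. The whole argument reduces to establishing two state-relation containments and then chaining inclusions.

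First I would verify the soundness of the lookahead simulations as under-approximations of the corresponding trace inclusions, namely $\accblindtranskbwsim \,\subseteq\, \accblindbwdirecttraceinclusion$ and $\transkfsim \,\subseteq\, \fairtraceinclusion$. For a single lookahead relation these hold because the lookahead simulation game gives Duplicator strictly less information than the trace-inclusion game, in which Spoiler reveals his entire trace in advance: any winning Duplicator strategy in the $k$-lookahead game can be replayed against a fully revealed Spoiler trace to produce a matching trace satisfying the respective winning condition ($\mathcal C^{\mathrm f}$ in the fair case, $\mathcal C^{\mathrm{bw}}_I$ in the backward case). The backward case additionally relies on backward completeness to bridge the mismatch between the infinite backward traces of the game and the finite initial traces defining $\accblindbwdirecttraceinclusion$, exactly as in the $\bwdisim \,\subseteq\, \bwdirecttraceinclusion$ argument of Sec.~\ref{sec:backward}. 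Since $\accblindbwdirecttraceinclusion$ and $\fairtraceinclusion$ are preorders, and hence transitive, passing to the transitive closures $\accblindtranskbwsim$ and $\transkfsim$ preserves these containments. This is precisely the reasoning already used for Lemma~\ref{lem:lookahead_sim_GFI_GFQ}.

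Then, by monotonicity of $\makeprunerelAB{\cdot}{\cdot}$ in both arguments, the two containments lift to a single containment of transition relations,
\[
	\makeprunerelAB{\accblindtranskbwsim}{\transkfsim} \;\subseteq\; \makeprunerelAB{\accblindbwdirecttraceinclusion}{\fairtraceinclusion}.
\]
By Theorem~\ref{thm:ABpruning} the right-hand side is good for $\A,\B$-pruning, and since this property is $\subseteq$-downward closed, the smaller relation on the left-hand side is good for $\A,\B$-pruning as well, which is exactly the claim.

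I expect no genuine obstacle here: the only mathematical content is the two soundness containments, which follow from the general principle that lookahead simulations under-approximate the corresponding trace inclusions. The single point requiring mild care is the finite-versus-infinite-trace mismatch in the backward case, but this is already handled by the backward-completeness argument established earlier in the paper.
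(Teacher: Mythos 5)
Your proposal is correct and follows exactly the paper's route: the paper derives the corollary immediately from Theorem~\ref{thm:ABpruning} by noting that the transitive closures of the lookahead simulations under-approximate the corresponding trace inclusions ($\accblindtranskbwsim \,\subseteq\, \accblindbwdirecttraceinclusion$ and $\transkfsim \,\subseteq\, \fairtraceinclusion$), then invoking monotonicity of $\makeprunerelAB{\cdot}{\cdot}$ and $\subseteq$-downward closure of the good-for-$\A,\B$-pruning property. You merely spell out the soundness containments (including the backward-completeness detail for the finite-versus-infinite-trace mismatch) that the paper leaves implicit, which is fine.
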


\subsubsection{Simplify $\B$.}
\label{sec:simplifyB}

The following technique is independent of the use of simulation-based reduction,
but it is nonetheless worth mentioning, and moreover we include it in our reduction algorithm.
%A different pruning technique modifies $\B$ instead.
Let $\A \times \B$ be the synchronized product of $\A$ and $\B$.
The idea is to remove states in $\B$ which cannot be reached in $\A \times \B$.
Let $R$ be the set of states in $\A \times \B$ reachable from $I_\A \times I_\B$,
and let $X \subseteq Q_\B$ be the projection of $R$ to the $\B$-component.
%Let $X := \{q \in Q_\B \st \nexists p \in Q_\A.\, (p,q) \in R\}$.
We obtain $\B'$ from $\B$ by removing all states not in $X$ and their associated transitions.
Although $\B' \not\languageequivalence \B$, this operation is clearly inclusion-preserving.

Note that first simplifying $\A$ as in Sec.~\ref{sec:simplifyA}
yields fewer reachable states in $\A \times \B$
and thus increases the effect of the technique for simplifying $\B$.

%%% Local Variables:
%%% mode: latex
%%% TeX-master: "ROOT.tex"
%%% End:

\subsection{Jumping fair simulation as a better GFI relation} \label{sec:jumpsim}

We further generalize the GFI preorder $\transkfsim$ by allowing Duplicator even more freedom.
The idea is to allow Duplicator to take \emph{jumps} during the simulation game (in the spirit of \cite{Clemente:PhD}).
For a preorder $\le$ on $Q$, in the game for \emph{$\le$-jumping $k$-lookahead simulation},
Duplicator is allowed to jump to $\le$-larger states before taking a transition.
Thus, a Duplicator's move is of the form
$q_i \le q_i' \goesto {\symb_i} {q_{i+1}} \le q_{i+1}' \goesto {\symb_{i+1}} \cdots \goesto {\symb_{i+m-1}} q_{i+m}$,
and she eventually builds an infinite $\le$-jumping trace. We say that this trace
is \emph{accepting} at step $i$ iff $\exists q_i'' \in F.\, q_i \le q_i'' \le q_i'$,
and \emph{fair} iff it is accepting infinitely often.
As usual, \emph{$\le$-jumping $k$-lookahead fair simulation} holds
iff Duplicator wins the corresponding game, with the fair winning condition
lifted to jumping traces.
%Given $\le$, jumping $k$-lookahead fair simulation can be computed using the methods of Sec.~\ref{sec:lookahead}.

Not all preorders $\le$ induce GFI jumping simulations.
The preorder $\le$ is called {\em jumping-safe} \cite{Clemente:PhD} if,
for every word $w$, there exists a $\le$-jumping initial fair trace on $w$ 
iff there exists an initial fair non-jumping one.
Thus, jumping-safe preorders allows to convert jumping traces into non-jumping ones.
Consequently, for a jumping-safe preorder $\le$,
$\le$-jumping $k$-lookahead fair simulation is GFI.

One can easily prove that $\bwdirecttraceinclusion$ is jumping-safe, while $\accblindbwdirecttraceinclusion$ is not. 
%Since jumping-safe is $\subseteq$-downward closed, the efficiently computable $\transkbwsim$ is also jumping-safe.
%
We even improve $\bwdirecttraceinclusion$ to a slightly more general jumping-safe relation $\countingbwtraceinclusion$,
by only requiring that Duplicator visits at least as many accepting states as Spoiler does,
but not necessarily at the same time.
Formally, $p_m \countingbwtraceinclusion q_m$ iff,
%for every finite word $w = \symb_0\symb_1 \cdots \symb_{m-1} \in \Sigma^*$, and
for every initial $w$-trace
$\pi_0 = p_0 \goesto {\symb_0} p_1 \goesto {\symb_1} \cdots \goesto {\symb_{m-1}} p_m$, % ending in $p_m = p$,
there exists an initial $w$-trace
$\pi_1 = q_0 \goesto {\symb_0} q_1 \goesto {\symb_1} \cdots \goesto {\symb_{m-1}} q_m$, % ending in $q_m = q$,
s.t.~$|\{i \,|\, p_i \in F\}| \le |\{i \,|\, q_i \in F\}|$.
%, i.e., $\pi_1$ accepts at least as often as $\pi_0$, but not necessarily at the same steps.

\begin{theorem}\label{lem:jumping-fairsim}
	The preorder $\countingbwtraceinclusion$ is jumping-safe.
\end{theorem}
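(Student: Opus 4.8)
The plan is to establish the two implications hidden in the definition of jumping-safety separately, for an arbitrary fixed infinite word $w$. The direction ``non-jumping $\Rightarrow$ jumping'' is immediate: a non-jumping initial fair trace $q_0 \goesto{\symb_0} q_1 \goesto{\symb_1} \cdots$ is a $\countingbwtraceinclusion$-jumping trace with all jumps trivial ($q_i' = q_i$, by reflexivity), and whenever $q_i \in F$ the witness $q_i'' = q_i$ shows it is accepting at step $i$; hence it is fair. So the entire content lies in the converse: starting from a $\countingbwtraceinclusion$-jumping initial fair trace $q_0 \le q_0' \goesto{\symb_0} q_1 \le q_1' \goesto{\symb_1} \cdots$ (with $q_0 \in I$, and with accepting witnesses $q_i'' \in F$ satisfying $q_i \le q_i'' \le q_i'$ at infinitely many steps $i$), I must produce an initial fair \emph{non-jumping} trace on $w$.

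First I would follow the pattern of Theorem~\ref{lem:bwincl_GFQ}: instead of building the fair trace directly, I build by induction on $n$ an initial non-jumping finite $w$-trace $\pi_n$ in $\A$ ending in $q_n'$ whose number of accepting states is at least $A(n) := |\{i \le n \mid \text{the jumping trace is accepting at step } i\}|$. The base case uses $q_0 \in I$ together with $q_0 \countingbwtraceinclusion q_0'$ (and, at an accepting step, the intermediate $q_0'' \in F$) to obtain an initial $\varepsilon$-trace ending in $q_0'$. For the induction step I extend $\pi_n$ by the transition $q_n' \goesto{\symb_n} q_{n+1}$ and then realise the jump $q_{n+1} \le q_{n+1}'$ by applying the defining property of $\countingbwtraceinclusion$ to the finite initial trace just obtained, yielding an initial trace on the same prefix ending in $q_{n+1}'$ with at least as many accepting states. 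At an accepting step $n+1$ I route this matching through the sandwiched state $q_{n+1}'' \in F$, using $q_{n+1} \countingbwtraceinclusion q_{n+1}''$ and then $q_{n+1}'' \countingbwtraceinclusion q_{n+1}'$, so that the reconstructed trace is forced to pass through an accepting state and the count rises to $A(n+1)$.

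Once such $\pi_n$ are available with $A(n) \to \infty$ (which holds because the jumping trace is fair), I would finish exactly as in Theorems~\ref{lem:bwincl_GFQ} and~\ref{lem:bwincl_GFI}: the $\pi_n$ are initial finite $w$-traces visiting unboundedly many accepting states, and since $\A$ is finitely branching, K\"onig's Lemma yields an initial fair (infinite) non-jumping $w$-trace, which is exactly what jumping-safety requires.

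The step I expect to be the main obstacle is precisely the accepting-count bookkeeping at an accepting step, i.e.\ guaranteeing that routing the matching trace through the sandwiched accepting state $q_i'' \in F$ \emph{strictly} raises the count from $A(i)-1$ to $A(i)$. For the positional relation $\bwdirecttraceinclusion$ this is automatic, since its matching preserves the positions of accepting states, so passing through $q_i'' \in F$ forces the endpoint $q_i'$ of the reconstructed trace to be accepting as well; for the coarser counting relation $\countingbwtraceinclusion$ the endpoint need not be accepting, and the inequality on cardinalities only guarantees ``at least as many'' accepting states, which could a priori be achieved by trading an earlier accepting position for the new one. Showing that the two successive counting matches through $q_i''$ nonetheless contribute a genuinely new accepting visit is the delicate point, and the reason the counting definition is calibrated as it is; I expect it to require tracking the accepting count as a telescoping potential over the entire prefix (accumulating the sandwiched accepting states of all previous accepting steps simultaneously) rather than arguing one jump at a time.
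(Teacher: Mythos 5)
Your skeleton — trivial direction by reflexivity, straightening the jumping trace into finite initial non-jumping prefixes, and concluding with K\"onig's Lemma as in Theorem~\ref{lem:bwincl_GFQ} — matches the paper's overall strategy. But the central step is left open, and your own diagnosis of it is accurate: it is a genuine gap, not a routine verification. The invariant you propose, namely that the reconstructed prefix $\pi_n$ carries raw accepting count at least $A(n)$, cannot be pushed through an accepting step by the two successive applications of $\countingbwtraceinclusion$ alone. Applying $q_{i} \countingbwtraceinclusion q_i''$ to a prefix with count $c$ yields a trace ending in $q_i'' \in F$ with count $\ge c$, but the terminal accepting visit may be \emph{absorbed} into those $c$ visits (the counting relation only bounds cardinalities, and nothing prevents the matching trace from trading an earlier accepting position for the terminal one), so the subsequent match through $q_i'' \countingbwtraceinclusion q_i'$ again gives only $\ge c$, never the needed $c+1$. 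Since the absolute count need not strictly increase at accepting steps, your K\"onig limit is not guaranteed to be fair, and the proof does not close.

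The paper resolves this differently, and not by an absolute ``telescoping potential'' as you speculate, but by making the invariant \emph{relative}: it defines a condition $\mathcal C^c_j(\pi_0,\pi_1)$ comparing, between two \emph{jumping} traces, the numbers of steps $i \le j$ that admit a sandwiched accepting witness ($\exists q_i'' \in F$ with $q_i \countingbwtraceinclusion q_i'' \countingbwtraceinclusion q_i'$) — so a step of the partially straightened trace can count as accepting via such a witness even when its state is not itself in $F$. Its property (P) then states: for every $i$, the given jumping fair trace $\pi$ can be replaced by an $i$-good trace $\pi^i$ (no jumps in the first $i$ steps) with $\mathcal C^c_i(\pi,\pi^i)$ and with the \emph{entire infinite suffix from step $i$ identical to that of $\pi$}. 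Keeping the untouched suffix is what lets accepting steps beyond position $i$ be inherited verbatim rather than re-justified, and the one-jump-at-a-time straightening (through the witness $q_i''$, exactly as you propose mechanically) only has to preserve the relative witnessed-count comparison, not force a strict increase of a raw count. Fairness of the K\"onig limit then follows from $\mathcal C^c$ together with fairness of $\pi$. So: your manipulation of individual jumps coincides with the paper's, but the bookkeeping you would need is the relative $\mathcal C^c$-invariant over pairs of jumping traces with identical suffixes, which your finite-prefix/absolute-count formulation cannot express; as written, the proposal is incomplete precisely at the point you flagged.
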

\begin{proof}
	%First, we show that $\countingbwtraceinclusion$ is jumping-safe.
	%
	Since $\countingbwtraceinclusion$ is reflexive, the existence of an initial
	fair trace on $w$ directly implies the existence of a 
	$\countingbwtraceinclusion$-jumping initial fair trace on $w$.
	
	Now, we show the reverse implication.
	Given two initial $\countingbwtraceinclusion$-jumping traces on $w$
	$\pi_0 = p_0 \countingbwtraceinclusion p_0' \goesto {\symb_0} {p_{1}} 
	\countingbwtraceinclusion p_{1}' \goesto {\symb_{1}} \cdots$
	and 
	$\pi_1 = q_0 \countingbwtraceinclusion q_0' \goesto {\symb_0} {q_{1}} 
	\countingbwtraceinclusion q_{1}' \goesto {\symb_{1}} \cdots$
	we define $\mathcal C^c_j(\pi_0, \pi_1)$ iff
	$|\{i \le j\,|\, \exists p_i'' \in F.\,  p_i \countingbwtraceinclusion
	p_i'' \countingbwtraceinclusion p_i'\}| \le 
	|\{i \le j\,|\, \exists q_i'' \in F.\,  q_i \countingbwtraceinclusion
	q_i'' \countingbwtraceinclusion q_i'\}|$.
	We say that an initial $\countingbwtraceinclusion$-jumping trace on $w$
	is {\em $i$-good} iff it does not jump within the first $i$ steps.

	We show, by induction on $i$, the following property (P):
	For every $i$ and every 
	infinite $\countingbwtraceinclusion$-jumping initial trace
	$\pi = p_0 \countingbwtraceinclusion p_0' \goesto {\symb_0} {p_{1}} \countingbwtraceinclusion p_{1}' \goesto {\symb_{1}} \cdots$
	on $w$ there exists 
	an $i$-good $\countingbwtraceinclusion$-jumping initial trace $\pi^i = q_0 \goesto {\symb_0} q_1 \goesto
	{\symb_1} \cdots \goesto {\symb_i} q_i \cdots$ on $w$
	s.t.~$\mathcal C^c_i(\pi, \pi^i)$ and the suffixes of the traces are identical, i.e.,
	$q_i = p_i$ and $\suffix \pi i = \suffix {\pi^i} i$.

	For the case base $i=0$ we take $\pi^0 = \pi$. 
	Now we consider the induction step. 
	By induction hypothesis we get an initial $i$-good trace $\pi^i$
	s.t.~$\mathcal C^c_i(\pi, \pi^i)$ and $q_i = p_i$ and 
        $\suffix \pi i = \suffix {\pi^i} i$.
	If $\pi^i$ is $(i+1)$-good then we can take $\pi^{i+1} = \pi^{i}$.
	Otherwise, $\pi^i$ contains a step 
	$q_i \countingbwtraceinclusion q_i' \goesto {\symb_i} {q_{i+1}}$.
	First we consider the case where there exists a $q_i'' \in F$
	s.t.~$q_i \countingbwtraceinclusion q_i'' \countingbwtraceinclusion q_i'$.
	(Note that the $i$-th step in $\pi^i$ can count as accepting in $\mathcal C^c$
	because $q_i'' \in F$, even if $q_i$ and $q_i'$ are not accepting.)
	By the definition of $\countingbwtraceinclusion$ there exists
	an initial trace $\pi''$ on a prefix of $w$ that ends in $q_i''$
	and visits accepting states at least as often as the non-jumping
	prefix of $\pi^i$ that ends in $q_i$.
	Again by definition of $\countingbwtraceinclusion$ there exists
	an initial trace $\pi'$ on a prefix of $w$ that ends in $q_i'$
	and visits accepting states at least as often as $\pi''$.
	Thus $\pi'$ visits accepting states at least as often as the {\em jumping}
	prefix of $\pi^i$ that ends in $q_i'$ (by the definition of $\mathcal C^c$).
	By composing the traces we get $\pi^{i+1} = \pi' (q_i' \goesto {\symb_i}
	{q_{i+1}}) \suffix {\pi^i} {i+1}$. Thus $\pi^{i+1}$ is an $(i+1)$-good initial trace
	on $w$ and 
        $\suffix \pi {i+1} = \suffix {\pi^i} {i+1} = \suffix {\pi^{i+1}} {i+1}$ and 
	$\mathcal C^c_{i+1}(\pi^i, \pi^{i+1})$ and $\mathcal C^c_{i+1}(\pi, \pi^{i+1})$.
	The other case where there is no $q_i'' \in F$
	s.t.~$q_i \countingbwtraceinclusion q_i'' \countingbwtraceinclusion q_i'$ is
	similar, but simpler.

	Let $\pi$ be an initial $\countingbwtraceinclusion$-jumping fair trace on $w$.
	By property (P) and K\"onig's Lemma there 
	exists an infinite initial non-jumping fair trace $\pi'$ on $w$.
	Thus $\countingbwtraceinclusion$ is jumping-safe.
\end{proof}

\noindent
As a direct consequence, $\countingbwtraceinclusion$-jumping $k$-lookahead fair simulation is GFI.
Since $\countingbwtraceinclusion$ is difficult to compute,
we approximate it by a corresponding lookahead-simulation $\countingkbwsim$ which, in the same spirit,
counts and compares the number of visits to accepting states in every round of the $k$-lookahead backward simulation game.
Let $\countingtranskbwsim$ be the transitive closure of $\countingkbwsim$.

\begin{corollary}
	$\countingtranskbwsim$-jumping $k$-lookahead fair simulation is GFI.
\end{corollary}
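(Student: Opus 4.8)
The plan is to reduce the statement to the already-established facts about $\countingbwtraceinclusion$ by showing that the lookahead approximation $\countingtranskbwsim$ is contained in $\countingbwtraceinclusion$ and that jumping-safety is inherited by subrelations. First I would verify the soundness of the approximation, i.e.\ the inclusion $\countingkbwsim \subseteq \countingbwtraceinclusion$: if Duplicator has a winning strategy in the counting $k$-lookahead backward simulation game from $(p,q)$, then $p \countingbwtraceinclusion q$ holds. Given any initial $w$-trace $\pi_0$ ending in $p$, I would let Spoiler play $\pi_0$ backwards in the game (extending it beyond the initial state by backward completeness, as in the argument relating $\bwdisim$ to $\bwdirecttraceinclusion$). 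Duplicator's responses concatenate into an initial $w$-trace $\pi_1$ ending in $q$, and the counting condition guarantees that $\pi_1$ visits at least as many accepting states as $\pi_0$ within each round; since in every round both players advance by the same number of steps $m$ (the length of Duplicator's response), summing over rounds yields $|\{i \,|\, p_i \in F\}| \le |\{i \,|\, q_i \in F\}|$, which is exactly the counting backward trace inclusion. As $\countingbwtraceinclusion$ is a preorder, hence transitive, it also contains the transitive closure $\countingtranskbwsim$ of $\countingkbwsim$; thus $\countingtranskbwsim \subseteq \countingbwtraceinclusion$.

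Next I would record the elementary observation that jumping-safety is $\subseteq$-downward closed on preorders: if $\le' \subseteq \le$ are preorders and $\le$ is jumping-safe, then so is $\le'$. Indeed, any $\le'$-jumping initial fair trace on a word $w$ is in particular a $\le$-jumping one, so its existence yields an initial non-jumping fair trace on $w$ by the jumping-safety of $\le$; the converse direction holds trivially by reflexivity. Applying this with $\le = \countingbwtraceinclusion$, which is jumping-safe by Theorem~\ref{lem:jumping-fairsim}, and $\le' = \countingtranskbwsim$, I conclude that $\countingtranskbwsim$ is jumping-safe.

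Finally, I would invoke the general principle established earlier, that for any jumping-safe preorder $\le$ the $\le$-jumping $k$-lookahead fair simulation is GFI. Applied to $\countingtranskbwsim$, this immediately gives that $\countingtranskbwsim$-jumping $k$-lookahead fair simulation is GFI, as claimed. I expect the main obstacle to be the first step, the inclusion $\countingkbwsim \subseteq \countingbwtraceinclusion$: the delicate point is that the counting in the lookahead game is performed \emph{locally}, within each bounded round, whereas $\countingbwtraceinclusion$ compares the \emph{global} number of accepting visits along the entire initial trace, so one must check that the per-round bookkeeping---taking care not to double-count states shared at round boundaries, and handling both a possibly short final segment and the reconciliation of the infinite game with finite initial traces via backward completeness---indeed accumulates to the required global inequality.
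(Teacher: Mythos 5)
Your proposal is correct and follows essentially the same route as the paper, which obtains this corollary precisely by noting that $\countingtranskbwsim$ under-approximates the jumping-safe preorder $\countingbwtraceinclusion$ of Theorem~\ref{lem:jumping-fairsim} (so every $\countingtranskbwsim$-jumping fair trace is in particular a $\countingbwtraceinclusion$-jumping one, and the general principle that $\le$-jumping $k$-lookahead fair simulation is GFI for any jumping-safe $\le$ applies). Your detailed justification of the containment $\countingkbwsim \subseteq \countingbwtraceinclusion$, including the per-round versus global bookkeeping of accepting visits, is more explicit than the paper, which asserts this under-approximation without proof as part of the definition of $\countingkbwsim$.
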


Fig.~\ref{fig_jumping_example} shows how the option to jump
w.r.t.\ $\countingbwtraceinclusion$
(resp.\ $\countingtranskbwsim$)
benefits Duplicator, making jumping simulation larger than lookahead simulation.
First, we have $p_0 \not\transkfsim p_1$ for every finite $k$.
If Spoiler plays $p_0 \goesto{a^k} p_0$ (thus revealing his first $k$ steps),
then Duplicator can only respond with either $p_1 \goesto{a^{k'}} q_1$
or $p_1 \goesto{a^{k'}} r_1$ for some $k'$ with $1 \le k' \le k$.
In the former (resp.\ latter) case, Spoiler wins by playing $p_0 \goesto{ac} t_0$
(resp.\ $p_0 \goesto{ab} s_0$) to which Duplicator has no response.
However, $\countingbwtraceinclusion$-jumping $k$-lookahead fair simulation contains $(p_0,p_1)$
(as well as $(p_0,q_1)$, $(p_0,r_1)$, $(q_0,r_1)$ and $(r_0,q_1)$)
even for $k=1$. Since $q_1$ and $r_1$ are equivalent w.r.t.~$\countingbwtraceinclusion$,
Duplicator can jump between then as needed before making a required
$b$ (resp.~$c$) step to $s_1$ (resp.~$t_1$).

\begin{figure}[htbp]
	\begin{tabular}{cc}
		\begin{tikzpicture}[on grid, node distance= 1.3cm and 1.7cm]
			\tikzstyle{vertex} = [smallstate]

			\path node [vertex, initial left] (p) {$p_0$};
			\path node [vertex] (q) [above right = of p] {$q_0$};
                        \path node [vertex] (r) [below right = of p] {$r_0$};
                        \path node [vertex, accepting] (s) [right = of q] {$s_0$};
                        \path node [vertex, accepting] (t) [right = of r] {$t_0$};

			\path[->]
				(p) edge node [above] {$a$} (q)
                                (p) edge node [above] {$a$} (r)
                                (q) edge node [above] {$b$} (s)
                                (r) edge node [above] {$c$} (t)
                        
				(p) edge [loop above] node {$a$} ()
                                (s) edge [loop above] node {$b$} ()
			        (t) edge [loop above] node {$c$} ();
		\end{tikzpicture}
		&
		\quad
                \begin{tikzpicture}[on grid, node distance= 1.3cm and 1.7cm]
			\tikzstyle{vertex} = [smallstate]

			\path node [vertex, initial left] (p) {$p_1$};
			\path node [vertex] (q) [above right = of p] {$q_1$};
                        \path node [vertex] (r) [below right = of p] {$r_1$};
                        \path node [vertex, accepting] (s) [right = of q] {$s_1$};
                        \path node [vertex, accepting] (t) [right = of r] {$t_1$};

			\path[->]
				(p) edge node [above] {$a$} (q)
                                (p) edge node [above] {$a$} (r)
                                (q) edge node [above] {$b$} (s)
                                (r) edge node [above] {$c$} (t)
                        
			        (q) edge [loop above] node {$a$} ()
                                (r) edge [loop above] node {$a$} ()
                                (s) edge [loop above] node {$b$} ()
			        (t) edge [loop above] node {$c$} ();
		\end{tikzpicture}
              \end{tabular}
	\caption{Jumping simulation can be strictly larger than lookahead simulation.}
	\label{fig_jumping_example}
\end{figure}
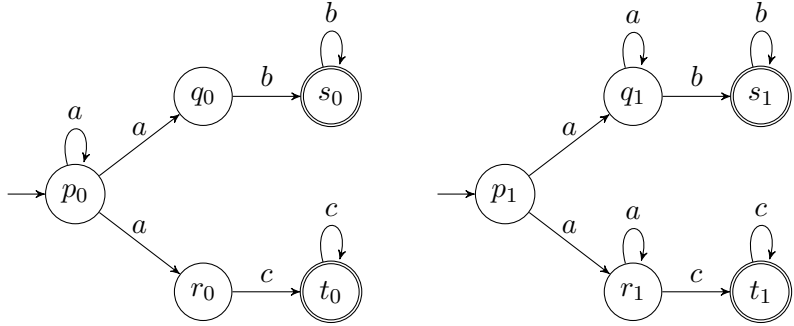

%%% Local Variables:
%%% mode: latex
%%% TeX-master: "ROOT.tex"
%%% End:

An orthogonal alternative to $\countingtranskbwsim$ is also implemented in \cite{RABIT}.
One can use a jumping-safe preorder (called \emph{segmented jumping}) that is
defined directly w.r.t.~$k$-lookahead backward simulations. 
Here Duplicator must visit at least one accepting state in each of her long moves,
regardless of whether Spoiler visited any accepting states, i.e., in each
round of the game Duplicator must accept at least once but possibly less often than
Spoiler. However, combining
segmented jumping with $\countingtranskbwsim$ (i.e., taking their union) 
would not be jumping-safe any more. First, since their union is not
necessarily transitive, one would need to consider the transitive closure of
the union to obtain a preorder. Moreover, the size and structure of the
segments in the segmented jumping relation are not fixed a-priori but chosen
dynamically in the computation of the $k$-lookahead backward simulation.
Thus the transitive closure of the union would allow a scenario where first the
segmented relation decreases the number of visits to accepting states while
preserving at least one visit per segment. Then $\countingtranskbwsim$
could shift the location of these visits to accepting states to positions
earlier in the run while preserving their number. Then 
the segmented relation could again decrease the number of visits to accepting
states, since they are now in different segments. Repeating this alternation
of counting and segmented relations could yield a situation where only one
visit to an accepting state remains in the entire run, which is not
jumping-safe any more. 

A further generalization of the jumping-simulation method has also been implemented in
\cite{RABIT} (activated by using option \texttt{-jf2} instead of the basic option \texttt{-jf}).
Given some jumping-safe preorder $R$, Duplicator is not only allowed
to jump to states that are $R$-larger than Duplicator's current state,
but also to states that are $R$-larger than \emph{Spoiler's} current state.
Note that Duplicator may only jump to states in her own automaton
$\B$, and not to states in Spoiler's automaton $\A$.
It is easy to see that this more liberal use of jumping still yields
(potentially larger) GFI
relations. However, in practice it rarely gives any advantage, and it is
sometimes considerably slower to compute, due to the higher degree of
branching in Duplicator's moves.

%%% Local Variables:
%%% mode: latex
%%% TeX-master: "ROOT.tex"
%%% End:

\subsection{The inclusion testing algorithm} \label{sec:inclalg}

Given these techniques, we propose the following algorithm for testing inclusion $\A \languageinclusion \B$.

\begin{enumerate}
	
	\item[(1)]
		Use the Heavy-$k$ procedure to perform language-preserving reduction to $\A$ and $\B$ separately,
		and additionally apply the inclusion-preserving reduction
                techniques from 
                Sec.~\ref{subsec:incl_preserving_reduction}
				simultaneously to $\A$ (discussed in Sec.~\ref{sec:simplifyA})
				and to $\B$ (discussed in Sec.~\ref{sec:simplifyB}).
		Lookahead simulations are computed not only on $\A$ and $\B$, but
		also {\em between} them (i.e., on their disjoint union).
		Since they are GFI, we check whether they already witness
                inclusion. Since many simulations are computed between partly
                reduced versions of $\A$ and $\B$, this witnesses
                inclusion much more often than checking fair simulation
                between the original versions. 
		This step either stops showing inclusion,
		or produces smaller inclusion-equivalent automata $\A', \B'$.
		%$\A',B'$ s.t.~$\lang{\A} \subseteq \lang{\B}\ \Leftrightarrow\ \lang{\A'} \subseteq \lang{\B'}$.
		
	\item[(2)]
		Check the GFI $\countingtranskbwsim$-jumping $k$-lookahead fair simulation from Sec.~\ref{sec:jumpsim} between $\A'$ and $\B'$,
		and stop if the answer is yes.
		
	\item[(3)]
	If inclusion was not established in steps (1) or (2) then try to find a counterexample
	to inclusion. This is best done by a Ramsey-based method (optionally using
	simulation-based subsumption techniques), e.g.,
	\cite{Rabit_CONCUR2011,RABIT}. 
	Use a small timeout value, since in most
	non-included instances there exists a very short counterexample.
	Stop if a counterexample is found.
	
	\item[(4)]
	If steps (1)-(3) failed (rare in practice), use any complete method,
	(e.g., Rank-based, Ramsey-based or Piterman's construction) to test $\A' \languageinclusion \B'$.
	At least, it will benefit from working on the smaller instance $\A', \B'$
	produced by step (1).
	
\end{enumerate}

\noindent
Note that steps (1)-(3) take polynomial time, while step (4) takes exponential time.
(For the latter, we recommend the improved Ramsey method of \cite{Rabit_CONCUR2011,RABIT}
and the on-the-fly variant of Piterman's construction \cite{Pit06} implemented in GOAL \cite{GOAL_survey_paper}.) 
This algorithm allows to solve much larger instances of the
inclusion problem than previous methods
\cite{sistla:vardi:wolper:complementation:87,GOAL_survey_paper,fogarty_et_al:LIPIcs:2011:3235,seth:buchi,seth:efficient,abdulla:simulationsubsumption,Rabit_CONCUR2011,Pit06},
i.e., automata with 1000-20000 states instead of 10-100 states; 
cf.~Sec.~\ref{sec:experiments}.

The currently implemented version of the above algorithm 
(\cite{RABIT}; RABIT v. $\ge 2.3$) uses some additional tricks. 
E.g., it hedges its bets in order to be fast on both the included and
non-included instances, by adding an initial step (0) in the algorithm.
In step (0) it performs a quick lightweight reduction and searches for 
short counterexamples, in order to quickly catch easy instances where
inclusion does not hold.
Moreover, it can run steps (2), (3) and (4) concurrently in
parallel threads (if invoked with option \texttt{-par}), 
and stops as soon as an answer is found.

%%% Local Variables:
%%% mode: latex
%%% TeX-master: "ROOT.tex"
%%% End:

\subsection{Language Inclusion Testing for NFA}\label{subsec:incl-NFA}

Just like in the reduction algorithm of Sec.~\ref{sec:heavyandlight},
one can also adapt the language inclusion checking algorithm to NFA.
The differences can be summarized as follows:
\begin{enumerate}
\item
We use the modified Heavy-k reduction algorithm for NFA with the changes
described in Sec.~\ref{subsec:heavyandlight-NFA}.
In particular, the NFA are transformed into the form with only one accepting
state, and delayed and fair (lookahead) simulations are not used.
Still, the direct forward and backward (lookahead) simulations are GFI and can
witness language inclusion, as a consequence of Theorem~\ref{thm:GFI:NFA}.
 
The inclusion-preserving reduction of $\A$ from Sec.~\ref{sec:simplifyA}
needs to be adapted to use direct trace inclusion $\directtraceinclusion$ 
(approximated by direct lookahead simulation $\transkdisim$)
instead of fair trace inclusion $\fairtraceinclusion$.
I.e., we use $\makeprunerelAB{\accblindtranskbwsim}{\transkdisim}$ for
$\A,\B$-pruning.
The inclusion-preserving reduction of $\B$ from Sec.~\ref{sec:simplifyB} carries over directly to NFA.

\item
The GFI jumping simulations of Sec.~\ref{sec:jumpsim} can also be adapted
to NFA. For the forward direction we use direct lookahead simulation, instead
of fair lookahead simulation. 
For the jumping-safe relation we can use the larger acceptance-blind backward trace
inclusion $\accblindbwdirecttraceinclusion$ (approximated by the
transitive closure of the corresponding $k$-lookahead simulation
$\accblindtranskbwsim$), instead of the counting backward trace inclusion
$\countingbwtraceinclusion$ (and its approximation $\countingtranskbwsim$).
\item
If the steps above did not witness inclusion, then one can apply 
a complete method to test inclusion $\A' \languageinclusion \B'$
on the derived smaller instance $\A', \B'$.
One type of complete methods are basic antichain-based methods
\cite{Wulf:antichains2006} that use subsumption techniques to reduce the
search space in the search for a counterexample.
More recent methods \cite{Abdulla:whensimulation2010,tool:libvata}
use stronger subsumption techniques in the search for a counterexample, 
which rely on simulation preorder (or similar approximations of language
inclusion).
Another complete method to check NFA inclusion is the {\em bisimulation modulo
 congruence} technique of \cite{Bonchi:bisimCongr2013}.
It can, roughly, be understood as collective subsumption, instead of
the individual one-on-one subsumption of
\cite{Wulf:antichains2006,Abdulla:whensimulation2010,tool:libvata}.
An element of the search space may be discarded because a set of other 
elements (instead of just one other element) makes it redundant.
This potentially allows to reduce the size of the search space even more.
The higher computational effort to check this collective subsumption yields a higher 
worst-case complexity than methods based on one-on-one subsumption,
but for typical practical instances it is often much faster.
\end{enumerate}

Unlike for B\"uchi automata (where our inclusion algorithm has a significant
advantage over previous ones; cf.~Sec.~\ref{subsec:experiments_Buchi_incl}),
the version for NFA is not necessarily always faster than the pure
antichain (resp.~congruence) based ones in 
\cite{Wulf:antichains2006,Abdulla:whensimulation2010,tool:libvata}
(resp.~\cite{Bonchi:bisimCongr2013}).
For NFA, the search space for counterexamples has a simpler structure than for
NBA. Thus the disadvantages of antichain-based methods are less relevant for
NFA. Moreover, NFA allow the construction of congruence bases as in
\cite{Bonchi:bisimCongr2013}. It is open whether a similar kind of
congruences can be established for NBA.
On many instances of NFA inclusion, the antichain-based tool of
\cite{tool:libvata} and the congruence-based tool of
\cite{Bonchi:bisimCongr2013} outperform our implementation \cite{RABIT},
though it can still be faster on some instances where the antichain 
(resp.~congruence base) happen to be very large.

%%% Local Variables:
%%% mode: latex
%%% TeX-master: "ROOT.tex"
%%% End:

%%% Local Variables:
%%% mode: latex
%%% TeX-master: "ROOT.tex"
%%% End:

\section{Experiments}\label{sec:experiments}

We test the effectiveness of Heavy-k reduction 
on Tabakov-Vardi random B\"uchi automata \cite{tabakov:model}, 
on automata derived from LTL formulae, and on 
automata derived from mutual exclusion protocols,
and compare it to the best previously available 
techniques implemented in GOAL \cite{GOAL_survey_paper}.
A scalability test shows that Heavy-k has almost quadratic
average-case complexity and it is vastly more efficient than GOAL.
We also test our methods for language inclusion
on large instances and compare their performance to
previous techniques.
Moreover, we also test the NFA version of Heavy-k reduction
on random NFA. 
Unless otherwise stated, the experiments were run with 
GOAL \cite{GOAL_survey_paper} version 2012-05-02 with Java 6
and RABIT/Reduce \cite{RABIT} version 2.4.0
with Java 7 on Intel Xeon X5550 2.67GHz and 14GB of memory.
(The raw data of the experiments is included in the arXiv version of this
paper \cite{CM:arxiv2018}.)

\subsection{B\"uchi automata}

\subsubsection{Reduction of random NBA}\label{subsec:experiments:minbuchi}

The Tabakov-Vardi model \cite{tabakov:model} generates random automata
according to the following parameters:
\begin{itemize}
\item
The number of states $n$.
\item
The size of the alphabet $|\Sigma|$.
\item
The transition density ${\it td}$. 
It determines the number of transitions in the
automaton as follows. For every symbol in $\Sigma$ there are  
$\lfloor n \cdot {\it td}\rfloor$ transitions labeled with this symbol.
\item
The acceptance density ${\it ad}$. 
This is the percentage of states that are accepting.
\end{itemize}
Apart from these parameters, Tabakov-Vardi random automata do not have any
special structure that could be exploited to make the reduction problem
or the language inclusion problem easier.
Random automata provide general reproducible test cases, on average.
Moreover, they are not biased towards any particular method, since they
do not come from any particular application domain.
A general purpose tool aught to perform well even on these hard test cases.

\begin{figure}[htp]
\begin{center}
\includegraphics[scale=0.6]{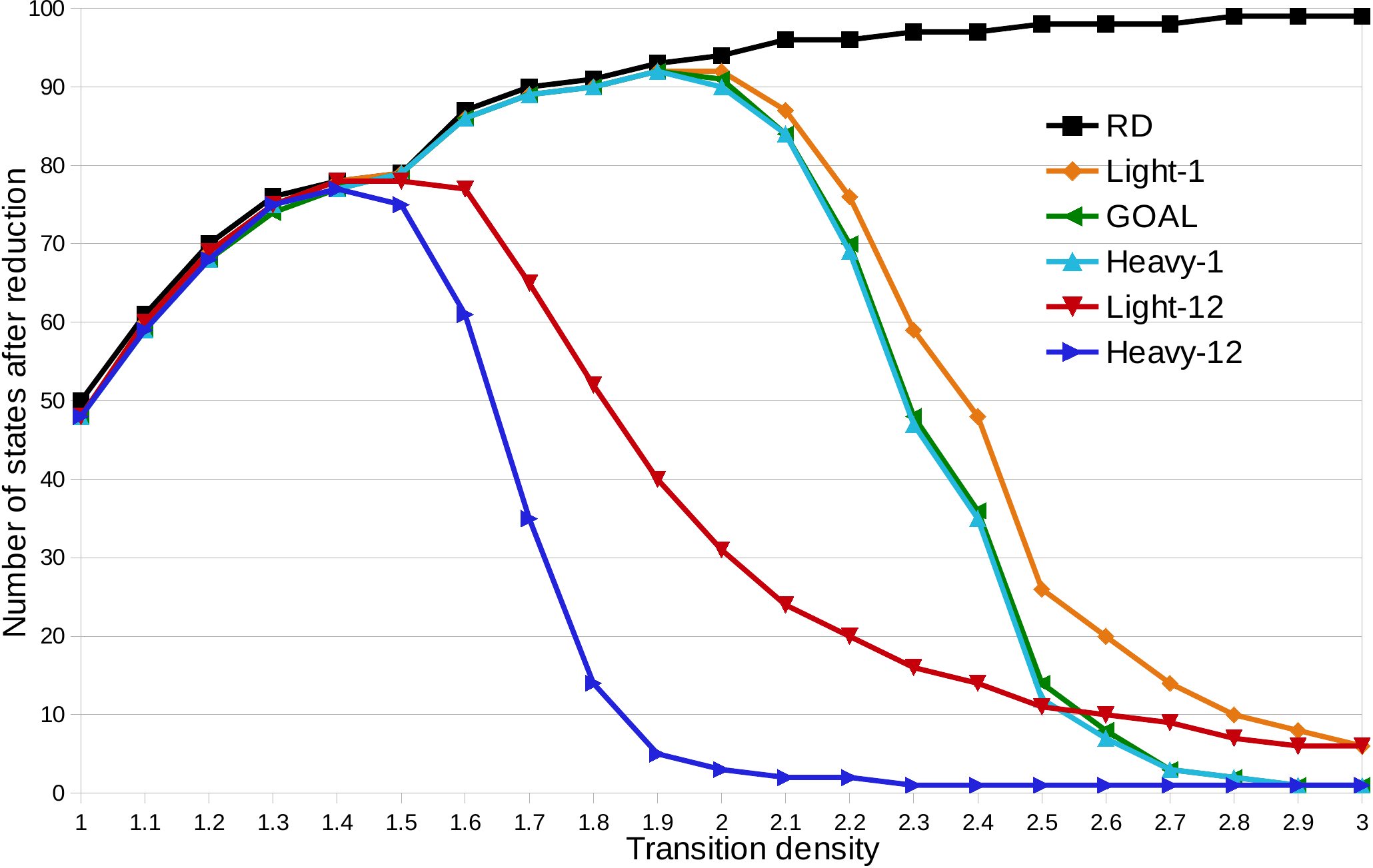}
\end{center}
\caption{
We consider Tabakov-Vardi B\"uchi automata with $n=100$, $|\Sigma|=2$, ${\it ad}=0.5$ and the range of
${\it td}=1.0, 1.1, \dots, 3.0$.
Each curve represents a different method, and we plot the number of states
after reduction. Each data point is the average of
$300$ random automata.
}\label{fig:buchimin}

\begin{center}
\includegraphics[scale=0.6]{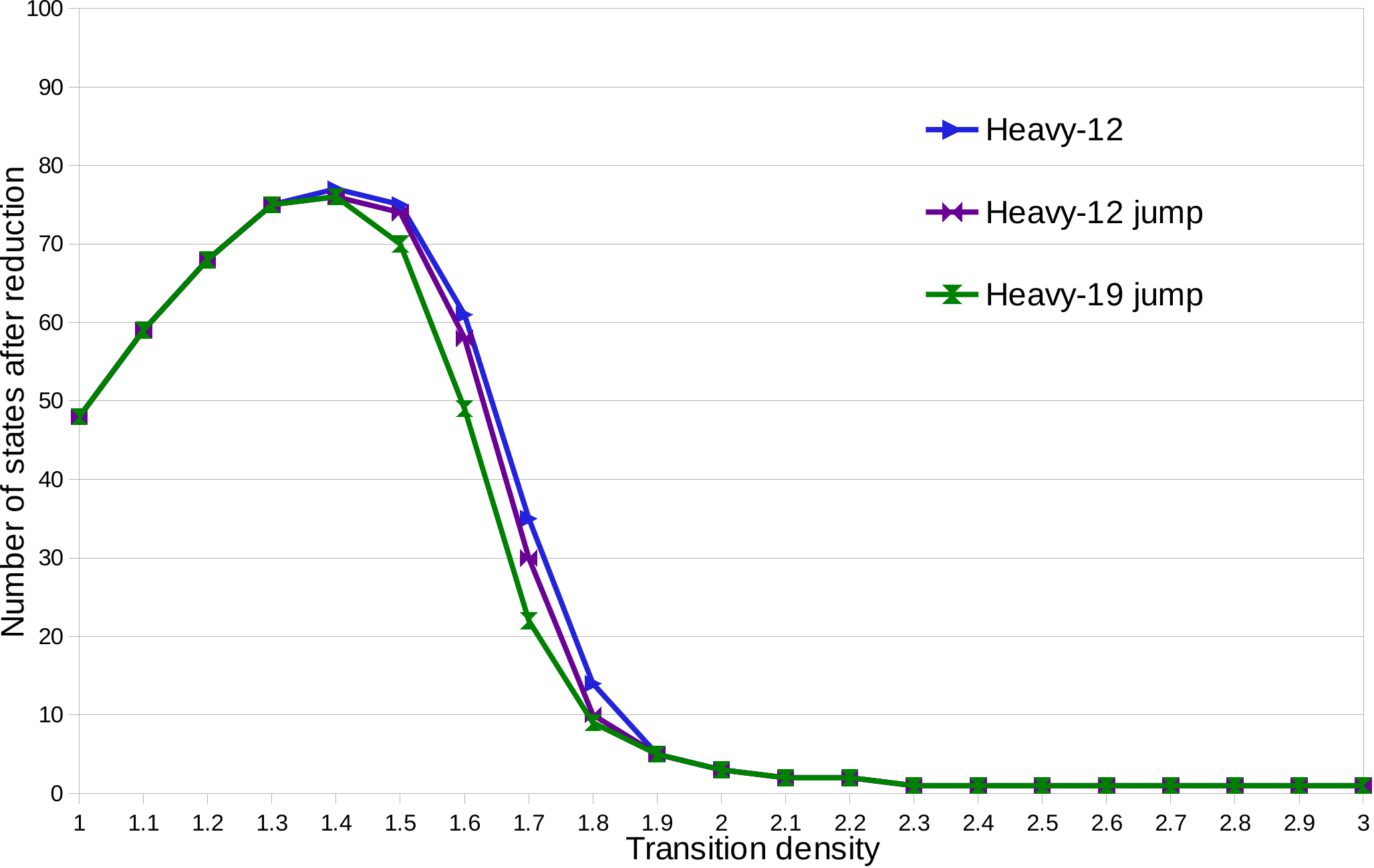}
\end{center}
\caption{
Moreover, we plotted the effects of two methods that augment our
Heavy reduction algorithm by quotienting with (a variant of) 
the jumping-safe preorders of \cite{buchiquotient:ICALP11,Clemente:PhD}:
Heavy-12 jump and Heavy-19 jump. These yield another slight improvement
in reduction, but are slower to compute.
}\label{fig:buchimin2}
\end{figure}

The inherent difficulty of the reduction problem,
and thus also the effectiveness of reduction methods, depends strongly on the class of
random automata, i.e., on the parameters listed above. Thus, one needs
to compare the methods over the whole range, not just for one example.
Variations in the acceptance density ${\it ad}$ do not affect Heavy-k much, 
but very small values make reduction harder for the other methods. 
By far the most important parameter 
is the transition density ${\it td}$, and thus we compare different techniques 
across different values of ${\it td}$. 
Fig.~\ref{fig:buchimin} and \ref{fig:buchimin2} compare the effect of different techniques.
Each curve represents a different method: RD (just remove dead states),
Light-1, Light-12, Heavy-1, Heavy-12, and GOAL. The GOAL curve
shows the best effort of all previous techniques (as implemented in GOAL),
which include RD, quotienting with backward and forward simulation, pruning of
little brother transitions and the fair simulation reduction of
\cite{GBS02}.

Sparse automata with low ${\it td}$ have more dead
states. Thus the RD method achieves a certain reduction at low ${\it td}$,
but this effect vanishes as ${\it td}$ gets higher.
For ${\it td} \le 1.4$ the effect of RD dominates. In that range, the other
techniques have only a very small effect. 
In the range $1.5 \le {\it td} \le 2.0$, GOAL still has hardly any effect
(apart from that of RD), but Light-12 and Heavy-12 achieve a significant
reduction. For ${\it td} \ge 2.0$, GOAL begins to have an effect, but it
is much smaller than that of our best techniques.

Generally, GOAL reduces just slightly worse than Heavy-1, but it is no match for
our best techniques like Heavy-12.
Heavy-12 vastly outperforms all other previous techniques, particularly in the
interesting range between ${\it td}=1.4$ and ${\it td}=2.5$. 
Moreover, the reduction of GOAL (in particular the fair simulation reduction of
\cite{GBS02}) is very slow. For GOAL, the average reduction time per automaton varies
between 39s (at ${\it td}=1.0$) and 612s (maximal at ${\it td}=2.9$).
In contrast, for Heavy-12, the average reduction time per automaton
varies between 0.012s (at ${\it td}=1.0$) and 1.482s (max. at ${\it td}=1.7$).
So Heavy-12 reduces not only much better, but also at least $400$
times faster than GOAL (see also the scalability tests below).

The computation time of Heavy-k depends both 
on the density ${\it td}$ and on the lookahead $k$.
Fig.~\ref{fig:3d} shows the average computation time of Heavy-k on automata
with $100$ states, ${\it ad}=0.5$, $|\Sigma|=2$ and varying transition density ${\it td}$ and 
lookahead $k$. The most difficult cases are those where
size reduction is possible (and thus the algorithm does not give up quickly), but 
where the size of the instance is not massively reduced.
(If some step in the algorithm greatly reduced the size of an instance, then 
subsequent computations on the now smaller automaton would be much faster.) 
For Heavy-k, the peak of the average computation time is
around ${\it td}=1.6,1.7$ (like in the scalability test; see below).

\begin{figure}[htb]
\begin{center}
\includegraphics[scale=0.7]{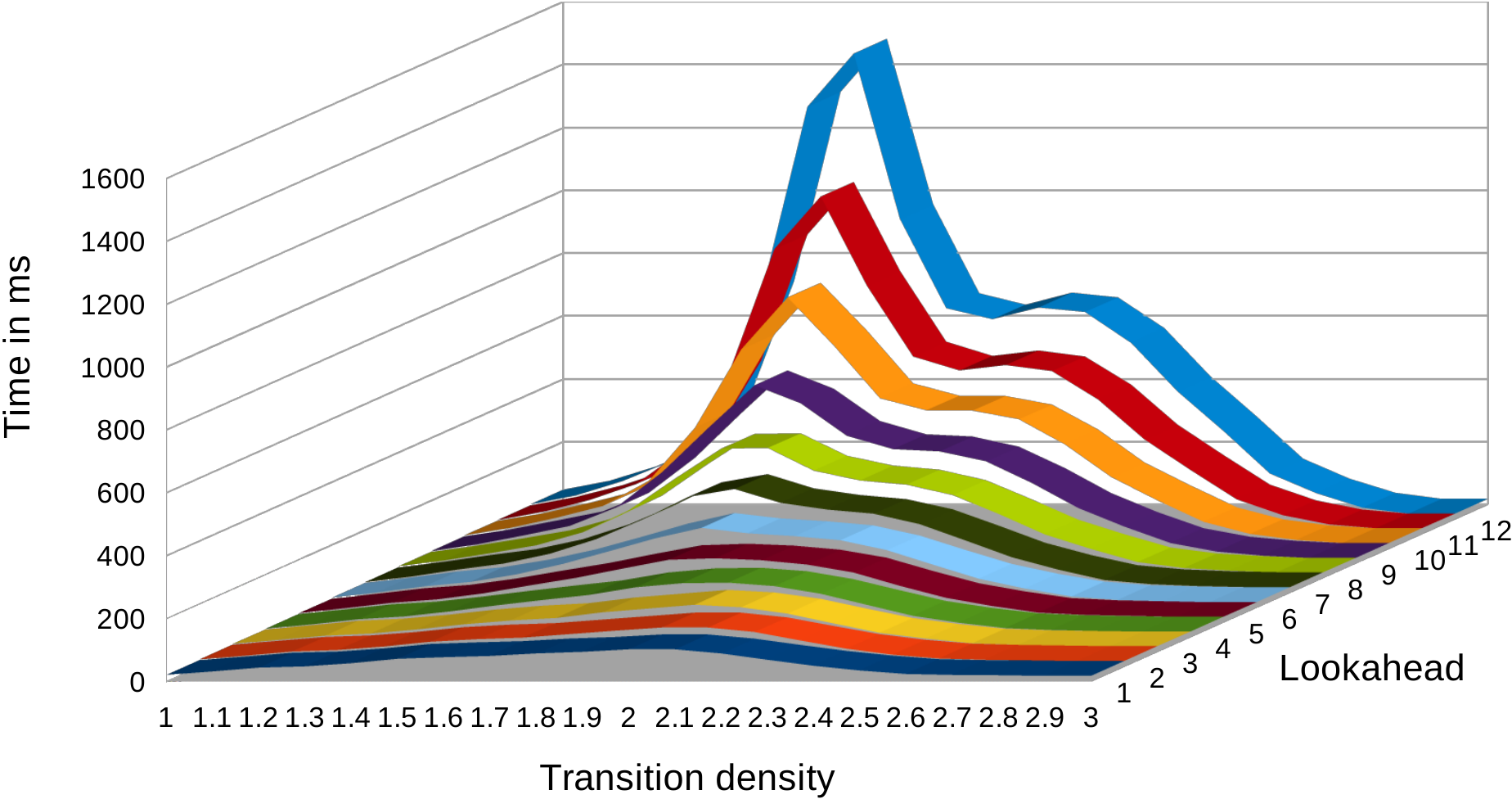}
\end{center}
\caption{Average computation time for reduction with Heavy-k on
  Tabakov-Vardi random B\"uchi automata with $n=100$ states, $|\Sigma|=2$, 
 ${\it ad}=0.5$ and varying transition density ${\it td}$ and lookahead $k$.}\label{fig:3d}  
\end{figure}

For ${\it td} \ge 2.0$, Heavy-12 yields very small automata. Many of these are
even universal, i.e., with just one state and a universal loop.
However, this frequent universality is {\em not} due to trivial reasons 
(otherwise simpler techniques like Light-1 and GOAL would also recognize this).
For example, we argue that in the tested interval of parameters $n$, $|\Sigma|$ and ${\it td}$,
there are not sufficiently many transitions to alone explain that the automaton is universal%
---and thus there are more interesting non-local structural reasons which make the automata universal.
Given Tabakov-Vardi random automata with
parameters $n$, $|\Sigma|$ and ${\it td}$,
let $U(n,|\Sigma|,{\it td})$ be the probability
that every state has at least one outgoing transition for every symbol in $\Sigma$.
Such an automaton would be trivially universal if ${\it ad}=1$.
(Note that ${n \choose k}=0$ for $k>n$.)

\begin{theorem}
	We have
	$U(n,|\Sigma|,{\it td}) = (\alpha(n,T)/\beta(n,T))^{|\Sigma|}$,
	where 
	$T=\lfloor n\cdot {\it td} \rfloor$,
	$\beta(n,T) = {{n^2} \choose T}$,
	and $\alpha(n,T)=\sum_{m=n}^{n^2} {{m-n} \choose {T-n}} \sum_{i=0}^{n} (-1)^i
	{n \choose i} {{m-in-1} \choose {n-1}}$.
\end{theorem}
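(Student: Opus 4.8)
The plan is to first reduce the statement to a single-symbol count, and then to establish the formula for $\alpha(n,T)$ by a direct combinatorial decomposition. Since the Tabakov--Vardi model chooses the $T=\lfloor n\cdot{\it td}\rfloor$ transitions for each symbol independently and uniformly among all $n^2$ source/target pairs, and since the event ``every state has an outgoing $\sigma$-transition'' depends only on the $\sigma$-labelled transitions, the overall probability factors as $U(n,|\Sigma|,{\it td})=p^{|\Sigma|}$, where $p$ is the probability for one fixed symbol. The denominator is then immediate: $\beta(n,T)=\binom{n^2}{T}$ is just the number of ways to pick a $T$-element set of transitions on a fixed symbol. It thus remains to show that the numerator $\alpha(n,T)$ counts exactly those $T$-element transition sets in which every one of the $n$ source states occurs at least once, and that this count equals the stated double sum.

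For the numerator I would set up a bijective decomposition of the ``good'' configurations. View the $n^2$ transitions of one symbol as the cells of an $n\times n$ grid, with rows indexed by source and columns by target; a configuration is good iff every row is nonempty. To each good configuration associate, for every row $j$, the largest occupied column $x_j\in\{1,\dots,n\}$ (well defined by nonemptiness), and separately record the remaining occupied cells, all of which lie strictly to the left of their row maxima. The cells strictly below the maximum in row $j$ number $x_j-1$, and across all rows these ``free'' cells are pairwise disjoint and total $\sum_{j}(x_j-1)=m-n$, where $m:=\sum_j x_j$. Since $n$ transitions are consumed by the maxima, the remaining $T-n$ transitions form an arbitrary $(T-n)$-subset of these $m-n$ free cells, contributing a factor $\binom{m-n}{T-n}$. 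Summing over all tuples $(x_1,\dots,x_n)\in\{1,\dots,n\}^n$ and grouping by $m=\sum_j x_j\in\{n,\dots,n^2\}$ yields
\[
  \alpha(n,T)=\sum_{m=n}^{n^2}\binom{m-n}{T-n}\,\cdot\,\bigl|\{(x_1,\dots,x_n)\in\{1,\dots,n\}^n : \textstyle\sum_j x_j=m\}\bigr| .
\]

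Finally I would evaluate the inner count, the number of compositions of $m$ into $n$ parts each in $\{1,\dots,n\}$, by a standard inclusion--exclusion. Substituting $y_j=x_j-1\in\{0,\dots,n-1\}$ turns it into the number of nonnegative solutions of $\sum_j y_j=m-n$ with each $y_j\le n-1$; inclusion--exclusion over the set of indices that violate the upper bound (i.e.\ are forced to satisfy $y_j\ge n$) gives $\sum_{i=0}^{n}(-1)^i\binom{n}{i}\binom{m-in-1}{n-1}$, which is exactly the inner sum in the claim. Combining the three ingredients then proves $U(n,|\Sigma|,{\it td})=(\alpha(n,T)/\beta(n,T))^{|\Sigma|}$, where degenerate cases (such as $T<n$, which forces $\alpha=0$) are absorbed by the convention $\binom{a}{b}=0$ for $b<0$ or $a<b$.

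I expect the combinatorial decomposition of $\alpha$ to be the main obstacle, or at least the only step requiring genuine care. The clean factor $\binom{m-n}{T-n}$ hinges on two observations: that the per-row ``below-the-maximum'' cells are disjoint and total $m-n$, and that choosing the extra transitions only from these cells guarantees the recorded row maxima are preserved. Together these make the encoding genuinely bijective, so that the reconstruction from $(x_1,\dots,x_n)$ plus the chosen free cells is unambiguous and nothing is double counted. The factorisation over symbols, the identification of $\beta$, and the inclusion--exclusion count of bounded compositions are all routine by comparison.
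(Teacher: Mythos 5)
Your proposal is correct and follows essentially the same route as the paper's proof: the same factorisation over symbols, the same identification of $\beta(n,T)=\binom{n^2}{T}$, and the same encoding of a good configuration by its per-row maximal occupied columns $x_j$ with the remaining $T-n$ items placed among the $m-n$ cells below the maxima, yielding $\alpha(n,T)=\sum_{x_1,\dots,x_n}\binom{(\sum_j x_j)-n}{T-n}$. The only difference is presentational: where the paper cites the inner count as a standard dice-sum problem (Niven), you carry out the inclusion--exclusion explicitly, and you state the bijectivity of the encoding (maxima preserved by restricting extra items to below-maximum cells) a little more carefully than the paper does.
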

\begin{proof}
For each symbol in $\Sigma$ there are 
$T=\lfloor n\cdot {\it td}\rfloor$ transitions and $n^2$ possible places for transitions,
described as a grid.
$\alpha(n,T)$ is the number of ways $T$ items can be placed onto an 
$n\times n$ grid s.t.~every row contains $\ge 1$ item, i.e., every state has an
outgoing transition. $\beta(n,T)$ is the number of possibilities without this
restriction, which is trivially ${{n^2} \choose T}$.
Since the Tabakov-Vardi model chooses transitions for different symbols
independently, we have $U(n,|\Sigma|,{\it td}) = (\alpha(n,T)/\beta(n,T))^{|\Sigma|}$.
It remains to compute $\alpha(n,T)$.
For the $i$-th row let $x_i \in \{1,\dots,n\}$ be the maximal column
containing an item. The remaining $T-n$ items can only be distributed to lower columns. 
Thus $\alpha(n,T) = \sum_{x_1,\dots,x_n} {{(\sum x_i)-n} \choose {T-n}}$.
With $m=\sum x_i$ and a standard dice-sum problem \cite{Niven:1965}
the result follows.
\end{proof}

For $n=100$, $|\Sigma|=2$ we obtain the following values for
$U(n,|\Sigma|,{\it td})$:
$10^{-15}$ for ${\it td}=2.0$, $2.9\cdot 10^{-5}$ for ${\it td}=3.0$,
$0.03$ for ${\it td}=4.0$, $0.3$ for ${\it td}=5.0$, 
$0.67$ for ${\it td}=6.0$, and $0.95$ for ${\it td}=8.0$.
So this transition saturation effect is negligible in our tested
range with ${\it td} \le 3.0$. 

While Heavy-12 performs very well, an even smaller lookahead can already
be sufficient for a good reduction. However, this depends very much on the
density ${\it td}$ of the automata. Fig.~\ref{fig:lookahead} shows the effect of
the lookahead by comparing Heavy-k for varying $k$ on different classes of
random automata with different density.

\begin{figure}[htb]
\begin{center}
\includegraphics[scale=0.7]{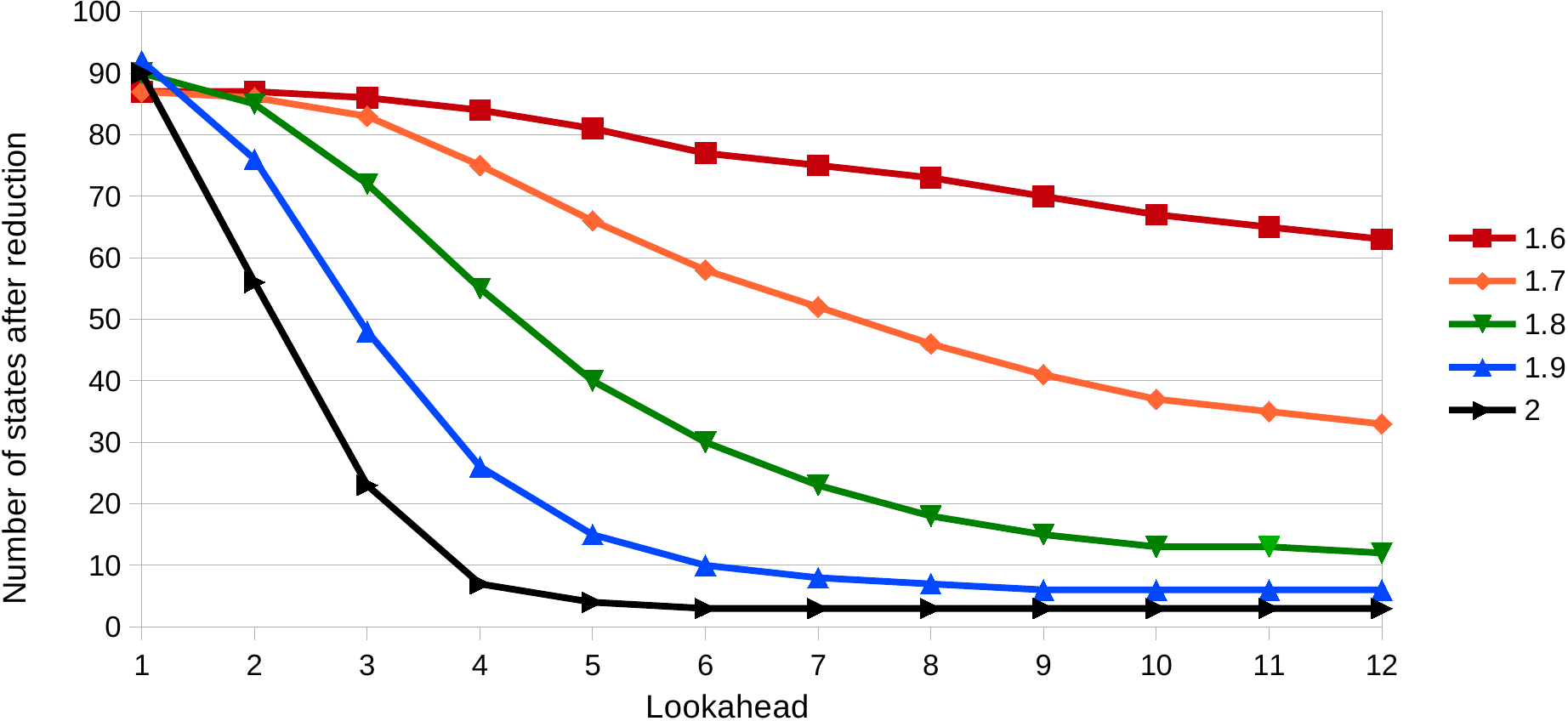}
\end{center}
\caption{The effect of the lookahead on Tabakov-Vardi automata. 
We set $n=100$, $|\Sigma|=2$, and ${\it ad}=0.5$, and vary the transition density 
${\it td}=1.6, 1.7, 1.8, 1.9, 2.0$ and the lookahead from $1,\dots,12$.
Every point is the average of the Heavy-k minimization of $1000$ random automata.
While a lower lookahead suffices for denser automata, more is needed for sparser
  instances.}\label{fig:lookahead}
\end{figure}

\subsubsection{Density of simulations on NBA}\label{subsec:density}

The big advantage of Heavy-12 over Light-12 is due to the pruning techniques.
However, these only reach their full potential at higher lookaheads 
(thus the smaller difference between Heavy-1 and Light-1).
Indeed, the simulation relations get much denser with higher lookahead $k$,
as Fig.~\ref{fig:density} shows.

\begin{figure}[htbp]
\begin{center}
\includegraphics[scale=0.8]{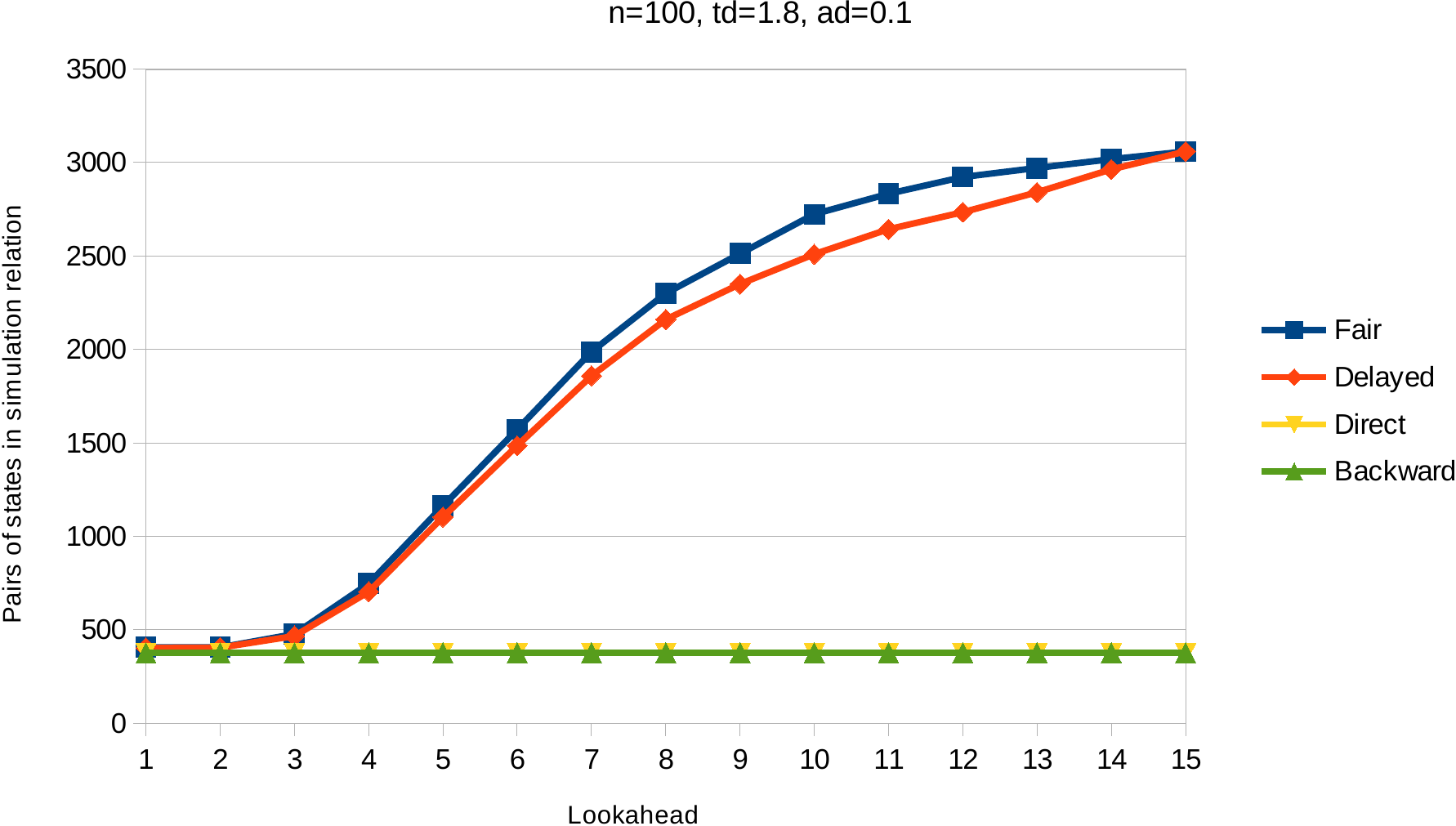}
\end{center}

\vspace*{1cm}
\begin{center}
\includegraphics[scale=0.8]{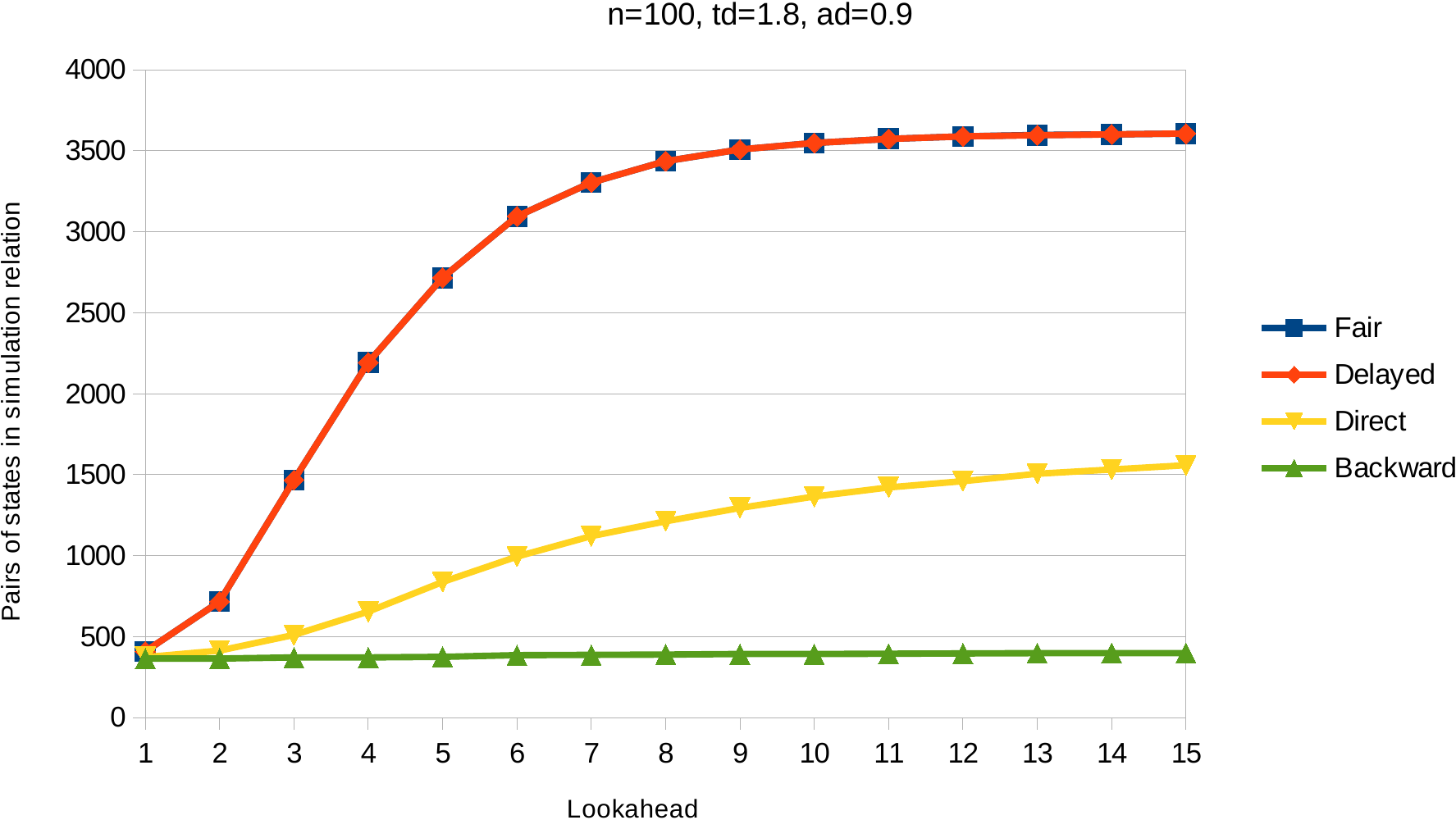}
\end{center}
\caption{
We consider Tabakov-Vardi random B\"uchi automata with $n=100$, $|\Sigma|=2$ and
${\it td}=1.8$ (a non-trivial case; larger ${\it td}$ yield larger
simulations). We let ${\it ad}=0.1$ (resp.~${\it ad}=0.9$), and plot the size of fair, delayed,
direct, and backward $k$-lookahead simulation as $k$ increases from 
$1$ to $15$. Every point is the average of $1000$ automata.
}\label{fig:density}
\end{figure}

Fair and delayed simulation relations are not much larger than direct
simulation for $k=1$, but they benefit strongly from higher $k$.
Backward simulation increases only slightly (e.g., from $363$ pairs for
lookahead $1$ to $397$ pairs for lookahead $15$ in the case of
${\it ad}=0.9$).
Initially, it seems as if backward (resp.\ direct) simulation does not benefit from higher $k$ if 
${\it ad}$ is small (on random automata), but this is wrong.
Even random automata get less random during the Heavy-k reduction process,
making lookahead more effective for backward (resp.\ direct) simulation.
Consider the case of $n=300$, ${\it td}=1.8$ and ${\it ad}=0.1$. 
Initially, the average ratio $|\transkdisimnumber{12}|/|\transkdisimnumber{1}|$
is $1.00036$, but after quotienting with $\transkdesimnumber{12}$
this ratio is $1.103$.

\subsubsection{Sparseness of the reduced NBA}\label{subsec:sparseness}

The number of states of a nondeterministic automaton is not the only measure
of its complexity. The amount of nondeterministic branching is also highly
relevant in many applications, e.g., in model checking \cite{Sebastiani-Tonetta:2003},
as well as the actual position of accepting states
when one analyzes the behavior of specific emptiness checking algorithms \cite{Blahoudek:SPIN:2014}. 
Automata with a high transition density (i.e., a large number of transitions,
relative to the number of states and symbols) have more nondeterministic
branching. Conversely, automata with a low transition density have less
nondeterministic branching. We call the latter type {\em sparse automata}.
A priori, a method that reduces the number of states of automata might
influence its transition density in either direction. In particular, the
density might become higher---e.g., there might be
a tradeoff to describe the same language with fewer states but more
transitions (per state). 
However, we show in Fig.~\ref{fig:sparseness1} that our Heavy-12 reduction method does not incur this tradeoff.
Indeed, it yields automata that are not only smaller, \emph{but also sparser}.

\begin{figure}[htb]
\begin{center}
\includegraphics[scale=0.7]{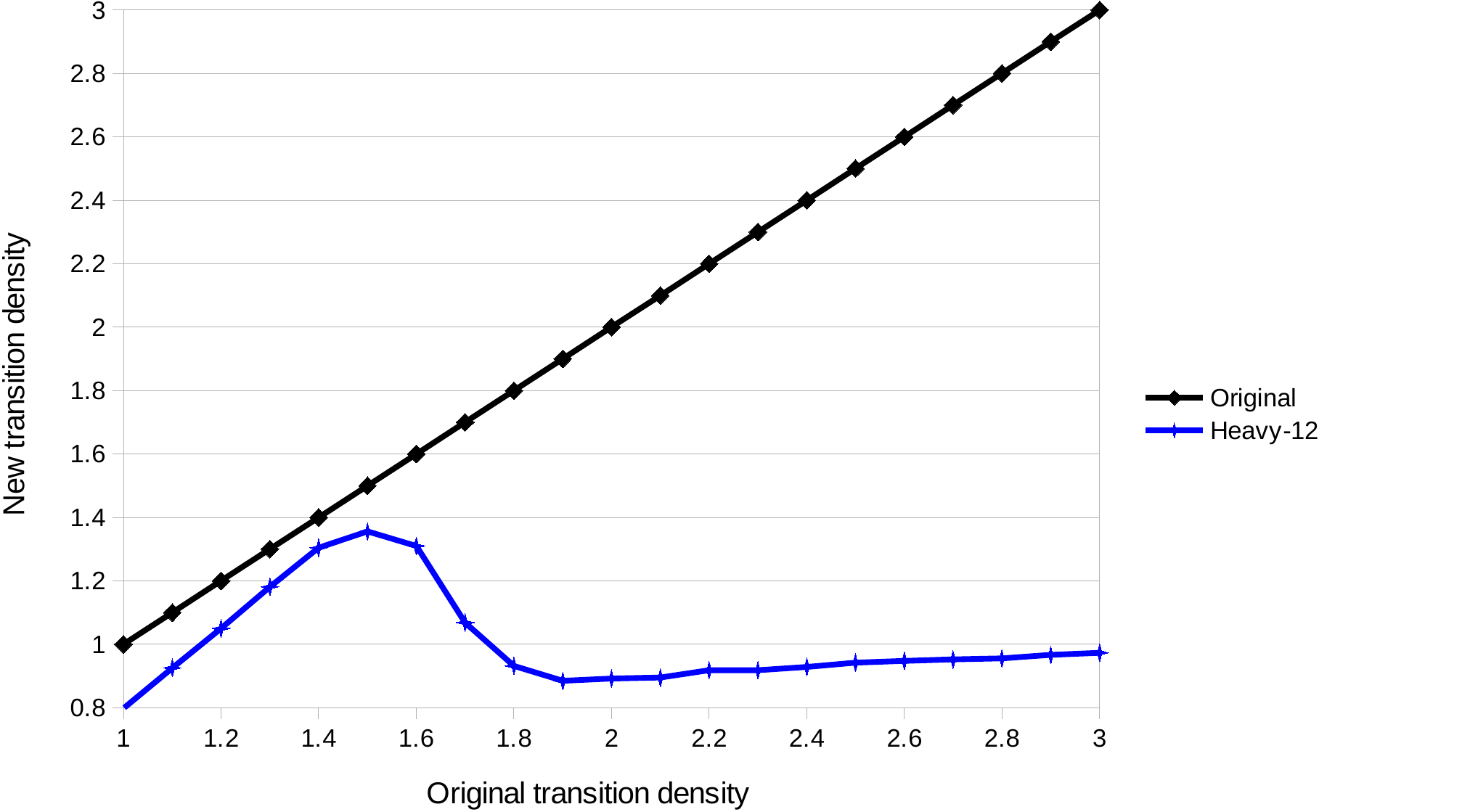}
\end{center}
\caption{
Heavy-12 produces sparse automata.
We consider Tabakov-Vardi random B\"uchi automata with $n=100$, $|\Sigma|=2$,
${\it ad}=0.5$ and ${\it td}=1.0, \dots, 3.0$.
The x-axis is the transition density of the original automata while
the y-axis is the average density of the reduced automata.
The two curves show the average transition density of the original automata
(this is just the identity function) and the average transition density of the
Heavy-12 reduced automata.
Every point is the average of $1000$ automata.
}\label{fig:sparseness1}
\end{figure}

\subsubsection{Reducing NBA derived from LTL}\label{subsec:LTL}

For model checking \cite{Holzmann:Spinbook}, LTL-formulae are converted into B\"uchi automata. This
conversion has been extensively studied and there are many different
algorithms which try to construct the smallest possible automaton
for a given formula; cf.~references in \cite{GOAL_survey_paper}
and \cite{BKRS:TACAS2012} and \cite{Spot}.

It should be noted however, that LTL is designed for human readability and 
does not cover the full class of $\omega$-regular languages. Moreover,
the website and database B\"uchi Store~\cite{buechistore:2011,buechistore:2013} contains handcrafted automata for
almost every human-readable LTL-formula, and almost all of these automata have 
$\le 10$ states.

Moreover, new LTL to B\"uchi automata converters are being developed 
every year~\cite{GOAL_survey_paper,BKRS:TACAS2012,Spot}, and it is not
in the scope of this paper to benchmark all converters.

For the scope of this paper, B\"uchi automata generated from 
random LTL formulae are simply yet another class of test cases
for our size reduction algorithm.
In particular, they are different from the Tabakov-Vardi random automata.

In order to get interesting test cases, we used random LTL formulae that are
larger than typical human-readable ones and obtained larger automata 
on average (see below).

Moreover, for LTL model checking, the size of the automata 
is not the only criterion~\cite{Sebastiani-Tonetta:2003}, since 
more nondeterminism also makes the problem harder. However, our 
transition pruning techniques reduce the amount of nondeterministic branching,
and yield automata that are not only smaller but also sparser (i.e., `less
nondeterministic'); cf.~our results below, and also Sec.~\ref{subsec:sparseness}.

Using a function of GOAL, we created 300 random LTL-formulae of non-trivial
size: length 70, 4 predicates and probability weights 1 for boolean and 2 for
future operators. We then converted these formulae to B\"uchi automata and 
reduced them with GOAL. Of the 14 different converters implemented in GOAL
we chose LTL2BA \cite{GastinOddoux2001} (as implemented in GOAL, 
which behaves slightly differently from the stand-alone LTL2BA tool)
since it was the only one (in GOAL) which could handle such large formulae.
(The second best was COUVREUR \cite{Couvreur:FM:1999} which succeeded on 90\% of the
instances, but produced much larger automata than LTL2BA. The other converters 
ran out of time (4h) or memory (14GB) on most instances.)
We thus obtained 300 automata and reduced them with GOAL. 
The resulting automata vary significantly in size from 1 state to 1722 states.

Then we tested how much {\em further} these automata could be reduced in size by our
Heavy-12 method. In summary,
82\% of the automata could be further reduced in size. 
The average number of states declined from 138 to 78, and the average number
of transitions from 3102 to 1270. Since larger automata have a 
disproportionate effect on averages, we also computed the average reduction
ratio per automaton, i.e., 
$(1/300) \sum_{i=1}^{300} {\it newsize}_i/{\it oldsize}_i$.
(Note the difference between the average ratio and the ratio of averages.)
The average ratio was $0.76$ for states and $0.68$ for transitions.
The computation times for reduction vary a lot due to different automata sizes
(average 4.1s), but were always less than the time used by the LTL to automata translation. 
If one only considers the 150 automata above median size (30 states)
then the results are even stronger. 100\% of these automata could be further
reduced in size.
The average number of states declined from 267 to 149, and the average number
of transitions from 6068 to 2435. 
The average reduction ratio was $0.65$ for states and $0.54$ for transitions.

\subsubsection{Reducing NBA derived from mutual exclusion protocols}\label{sec:protocols}

In Table~\ref{tab:protocol} we consider automata derived from mutual exclusion protocols.
The protocols were described in a language of guarded commands and
automatically translated into B\"uchi automata, whose size is given in the
column `Original'. 
% By row, the protocols are Bakery.1, Bakery.2, Fischer.3.1, Fischer.3.2, 
% Fischer.2, Phils.1.1, Phils.2 and Mcs.1.2.
We reduce these automata with GOAL and with our Heavy-12 method 
and describe the sizes of the resulting automata and the runtime in 
subsequent columns. 

\begin{table}[ht]
\begin{center}% \small
	\begin{tabular}{|l|c|c|c|c|c|c|c|c|}
	  \hline
          Automaton name &
	\multicolumn{2}{|c|}{Original}
	  & \multicolumn{2}{c|}{GOAL} & Time & \multicolumn{2}{c|}{Heavy-12} & Time\\
	  \cline{2-5}\cline{7-8}
	  & Trans. & States & Tr. & St. & GOAL & Tr. & St. &
	  Heavy-12\\
	  \hline
	  bakery.1.c.ba & 2597 & 1506 & N/A & N/A & $>2h$ & 696 & 477 & 5.3s\\
	  \hline
	  bakery.2.c.ba & 2085 & 1146 & N/A & N/A & $>2h$ & 927 & 643 & 7.6s\\
	  \hline
	  fischer.3.1.c.ba & 1401 & 638 & 14 & 10 & 15.38s & 14 & 10 & 0.86s\\
	  \hline
	  fischer.3.2.c.ba & 3856 & 1536 & 212 & 140 & 4529s & 96 & 70 & 3.4\\
	  \hline
	  fischer.2.c.ba & 67590 & 21733 & N/A & N/A & oom(14GB) & 316 & 192 & 253.5s\\
	  \hline
	  phils.1.1.c.ba & 464  & 161 & 362 & 134 & 540.3s & 359 & 134 & 1.5s\\
	  \hline
	  phils.2.c.ba & 2350 & 581 & 284 & 100 & 164.2s & 225 & 97 & 1.8s\\
	  \hline
	  mcs.1.2.c.ba & 21509 & 7968 & 108 & 69 & 2606.7s & 95 & 62 & 42.9s\\
	  \hline
	\end{tabular}
\end{center}
\caption{Reduction of NBA derived from mutual exclusion protocols, comparing
GOAL \cite{GOAL_survey_paper} (version 2012-05-02 on Java 6) 
and RABIT/Reduce method Heavy-12 (version 2.4.0 on Java 7) 
using an Intel i7-740, 1.73 GHz.
In some instances GOAL ran out of time (2h) or memory (14GB).
}\label{tab:protocol}
\end{table}

\subsubsection{Scalability of NBA reduction}

We tested the scalability of Heavy-12 reduction by applying it to
Tabakov-Vardi random automata of increasing size $n$ but fixed ${\it td}$, 
${\it ad}$ and $\Sigma$. We ran four separate tests with ${\it td}=1.4, 1.6, 1.8$ and
$2.0$. In each test we fixed ${\it ad}=0.5$, $|\Sigma|=2$ and increased the number of states
from $n=50$ to $n=1000$ in increments of 50. For each parameter point 
we created $300$ random automata and reduced them with
Heavy-12. We analyze the average size of the reduced automata in percent of
the original size $n$, and how the average computation time increases with
$n$.

For ${\it td}=1.4$ the average size of the reduced automata stays around 
$77\%$ of the original size, regardless of $n$.
For ${\it td}=1.6$ it stays around $65\%$.
For ${\it td}=1.8$ it {\em decreases} from
$28\%$ at $n=50$ to $2\%$ at $n=1000$.
For ${\it td}=2.0$ it {\em decreases} from
$8\%$ at $n=50$ to $<1\%$ at $n=1000$.
See Fig.~\ref{fig:scalability_size}.

\begin{figure}[htbp]
\begin{center}
\includegraphics[scale=0.7]{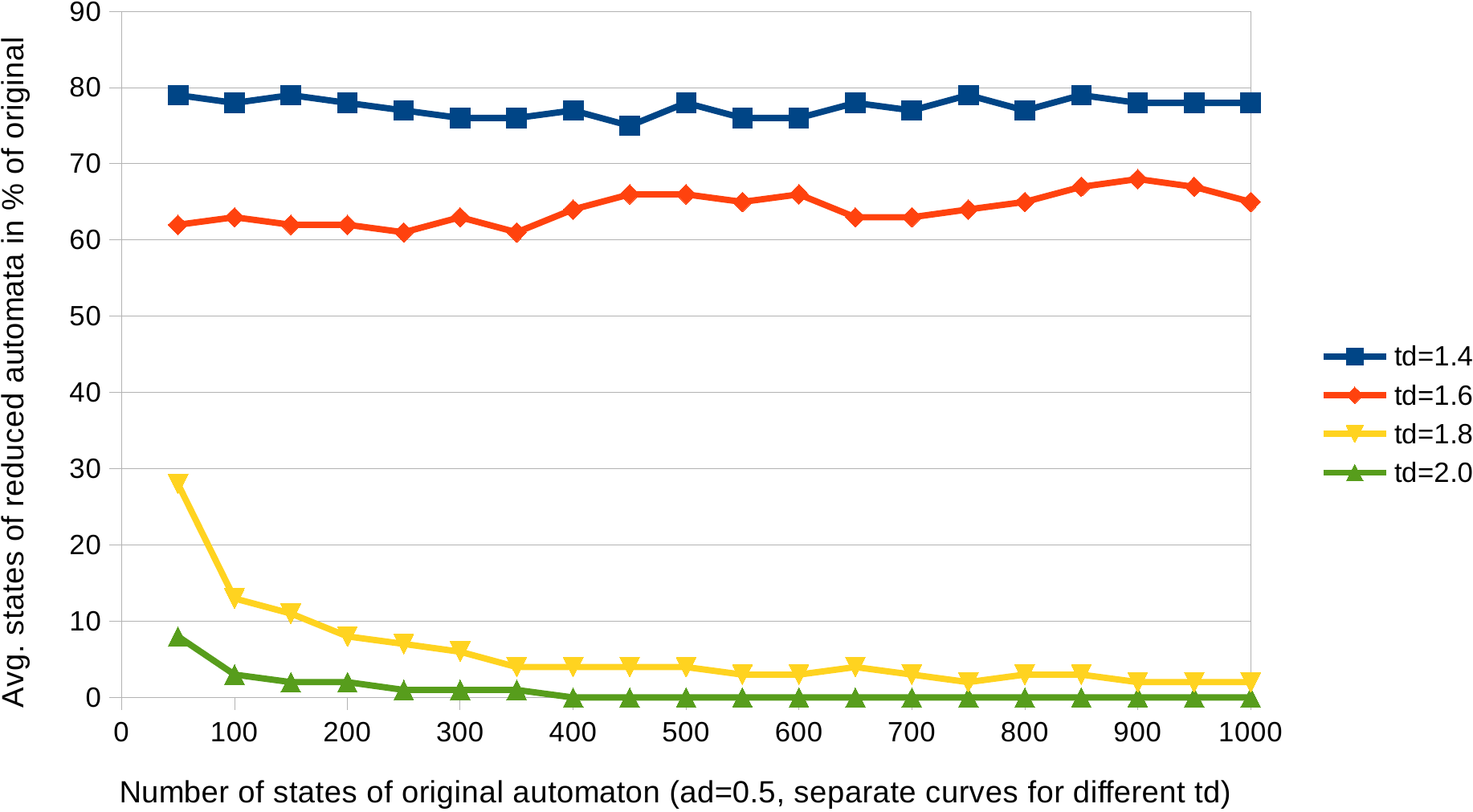}
\end{center}
\caption{Reduction of Tabakov-Vardi random B\"uchi automata with ${\it ad}=0.5$,
$|\Sigma|=2$, and increasing $n=50,100,\dots,1000$.
Different curves for different ${\it td}$.
We plot the average size of the Heavy-12 reduced automata, in percent of
their original size. Every data point is the average of $300$ automata.
}\label{fig:scalability_size}
\end{figure}

Note that the lookahead of 12 did {\em not change} with $n$.
Surprisingly, larger automata do not require larger lookahead for a good reduction.

In Fig.~\ref{fig:scalability_buchi} we plot the average computation time (measured in ms) in $n$ and then compute
the optimal fit of the function ${\it time} = a \cdot n^b$ to the data by the least-squares
method, i.e., this computes the parameters $a$ and $b$ of the function that
most closely fits the experimental data. The important parameter is the
exponent $b$. For ${\it td}=1.4, 1.6, 1.8, 2.0$ we obtain
$0.0036 \cdot n^{2.26}$, $0.012 \cdot n^{2.41}$, $0.02 \cdot n^{2.16}$ and
$0.0046 \cdot n^{2.37}$, respectively.
We also measured the median time used for reduction, and it was always very
close to the average time.

Thus, the average-case complexity of Heavy-12 scales slightly above 
quadratically (with exponents between $2.16$ and $2.41$). 
This is especially surprising given that Heavy-12 does not only
compute one simulation relation but potentially many simulations until 
the repeated reduction reaches a fixpoint.
Quadratic complexity is the very best one can hope for in any method that
explicitly compares states/transitions by simulation relations, since the
relations themselves are of quadratic size. 
Lower complexity is only possible with pure partition refinement techniques 
(e.g., bisimulation, which is $O(n\log n)$ for graphs with a fixed
out-degree), but these achieve even less
reduction than quotienting with direct simulation (i.e., almost nothing 
on hard instances).

\begin{figure}[htbp]
\begin{center}
\includegraphics[scale=0.8]{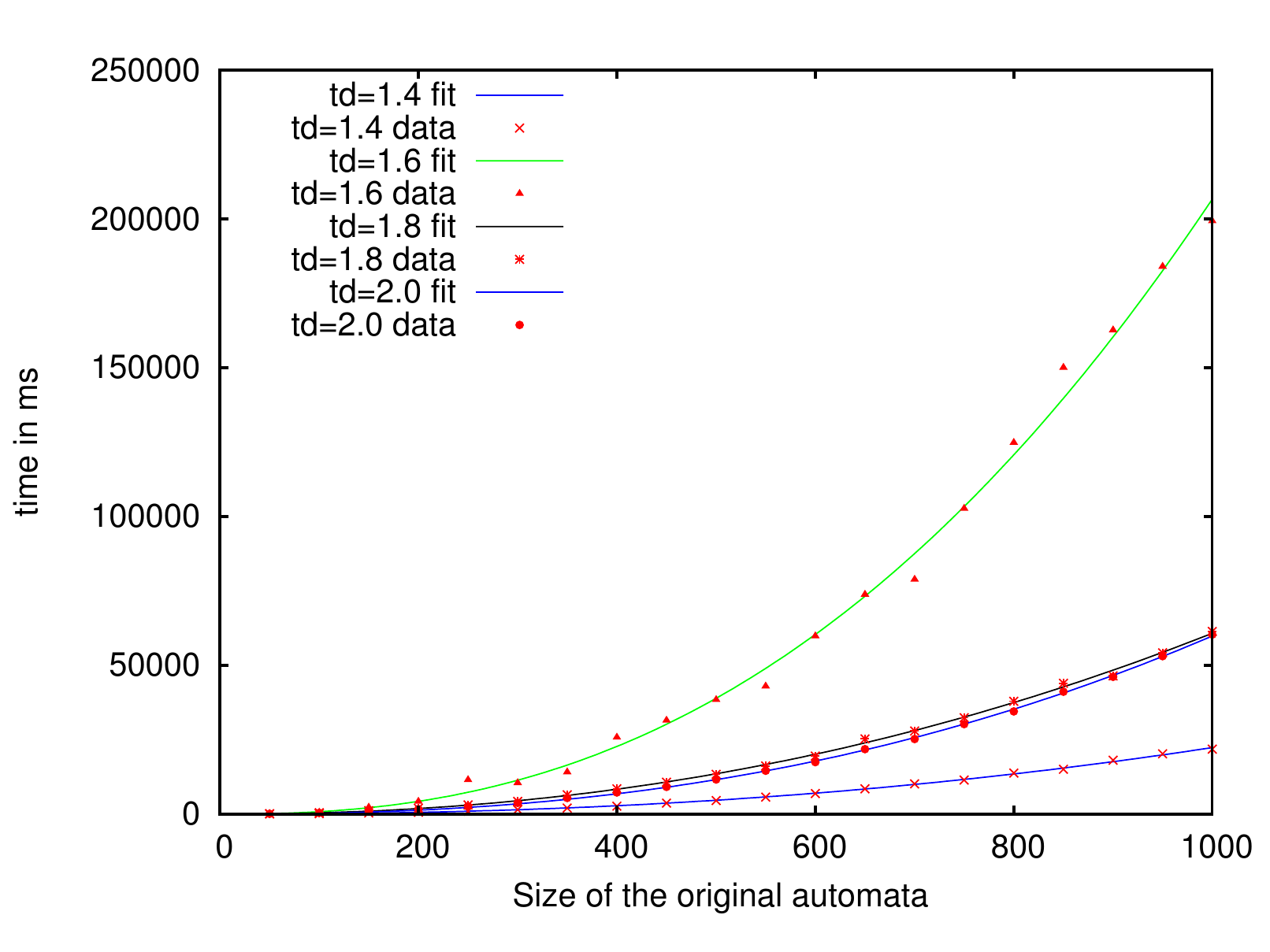}
\end{center}
\caption{Average computation time for Heavy-12 on Tabakov-Vardi B\"uchi
  automata with ${\it ad}=0.5$, $|\Sigma|=2$ and 
  ${\it td}=1.4, 1.6, 1.8, 2.0$, with a least squares fit of the function 
  $y=a \cdot x^b$.
  The x-axis shows the number of states of the original automata and the
  y-axis shows the average runtime in ms.
  Each data point is the average of 300 automata.}\label{fig:scalability_buchi}
\end{figure}

\subsubsection{Language inclusion testing for NBA}\label{subsec:experiments_Buchi_incl}

We evaluated the language inclusion testing algorithm of
Sec.~\ref{sec:inclalg} (with lookahead up-to $15$)
on non-trivial instances and compared its 
performance to previous techniques like ordinary fair simulation checking
and the best effort of GOAL (which uses simulation-based reduction 
followed by checking fair simulation preorder and an on-the-fly variant of Piterman's construction
\cite{Pit06,GOAL_survey_paper}).

We considered pairs of Tabakov-Vardi random automata with $1000$ states each,
$|\Sigma|=2$ and ${\it ad}=0.5$. For each separate case of 
${\it td}=1.6, 1.8$ and $2.0$, we created $300$ such automata pairs and tested
if language inclusion holds. (For ${\it td}<1.6$ inclusion rarely holds,
except trivially if one automaton has the empty language. For ${\it td}>2$
inclusion holds very often and is relatively easy to prove, since the
languages of the automata are often almost universal.)

For ${\it td}=1.6$ our algorithm solved $297$ of $300$ instances (i.e., 99\%):
$45$ included, $252$ not included, and $3$ timeouts (30min).
Of the $45$ included cases, $16$ were shown during the reduction/preprocessing
(step 1), $29$ were shown by jumping fair simulation, using lookaheads
between 9 and 15 (step 2), and none of the included cases were shown by the Ramsey
method (step 4).
(Step 3 can only prove non-inclusion.) 
The average computation time for the included cases was $192.6$ seconds.
Of the $252$ non-included cases, most were shown very quickly by short
counterexamples. The average computation time for the non-included cases 
was $33$ seconds.
In contrast, ordinary fair simulation solved only $13$ included instances.
GOAL (timeout 30min, 14GB memory) solved only $13$ included
instances (the same $13$ as fair simulation) and $155$ non-included instances,
i.e., a success rate of just 56\%. (The results were the same if the timeout
for GOAL was increased to 60min.)

For ${\it td}=1.8$ our algorithm solved $300$ of $300$ instances (i.e., 100\%):
$103$ included, and $197$ non-included.
Of the $103$ included cases, all were shown during reduction/preprocessing
(step 1), and none by steps 2, 3, 4.
The average computation time for the included cases was $118$ seconds.
The average computation time for the $197$ non-included cases 
was $6.6$ seconds.
Ordinary fair simulation solved only 5 included instances.
GOAL (timeout 30min, 14GB memory) solved only 5 included
instances (the same 5 as fair simulation) and $115$ non-included instances,
i.e., a success rate of just 40\%.

For ${\it td}=2.0$ our algorithm solved $300$ of $300$ instances (i.e., 100\%):
$143$ included, and $157$ non-included.
Of the $143$ included cases, all were shown during reduction/preprocessing
(step 1) and none by steps 2, 3, 4.
The average computation time for the included cases was $127$ seconds.
The average computation time for the $157$ non-included cases 
was $5.4$ seconds.
Ordinary fair simulation solved only 1 of the $143$ included instances.
GOAL (timeout 30min, 14GB memory) solved only 1 of $143$ included
instances (the same one as fair simulation) and $106$ of $157$ non-included
instances, i.e., a success rate of just 35.7\%.

\subsection{Finite automata}

\subsubsection{Reduction of random NFA}\label{subsec:NFA_reduction}

Like in Sec.~\ref{subsec:experiments:minbuchi},
we consider Tabakov-Vardi random automata.
However, here we interpret these automata as NFA instead of B\"uchi automata,
and reduce them such that the finite-word language is preserved.

Generally, random NFA are harder to reduce in size than random NBA,
because for NFA it matters when (i.e., in exactly which step) one 
encounters an accepting state. In contrast, for NBA it only matters whether
one encounters accepting states infinitely often.
Thus random NFA have somewhat more complex languages than random NBA
to begin with.

The generated Tabakov-Vardi random automata normally have many accepting states. However,
before applying the reduction methods, we first transform the NFA into equivalent ones with
just a single accepting state without any outgoing transitions. 
(Unless the empty word is in the language, in which case the initial state is
accepting too.)
Note that the same cannot be done for B\"uchi automata. 
This transformation of NFA
makes direct (and backward) simulations significantly
larger, and thus increases the effect of the reduction methods.
The reason is that direct/backward simulations need to match accepting states 
immediately, regardless of whether the input word has already been fully read to the
end or not. This makes it very hard for Duplicator to win the simulation game,
and thus yields very small direct/backward simulations. 
However, if an NFA has just a single accepting state without any outgoing
transitions then this state needs to be matched at most once in a simulation
game, which is much easier.
Note that this transformation does not actually make an NFA more complex, in
the following sense. While one gets some additional transitions 
(albeit of a special type, all going to the one accepting state),
the description of the set of accepting states becomes correspondingly simpler,
since it just consists of a single element.
Fig.~\ref{fig:Nosingle} shows that doing this transformation is very
important. Without it, even the best methods, like Heavy-12, perform very
poorly (see the graph for Heavy-12, multi acc states).
For random NFA with transition density $\le 1.5$, all methods do not achieve 
much more than remove dead states. For such automata, the transformation into
the form with one accepting state does not make much difference, except for
the tradeoff between the number of transitions and the complexity of
describing the set of accepting states. (See also the results
on the transition density in Fig.~\ref{fig:sparseness2}.)

Fig.~\ref{fig:NFAmin} shows the effect of different reduction methods
(that all use the trick of transforming NFA into a form with a single
accepting state).
Like for B\"uchi automata, our Heavy method reduces the size far more than any
simpler technique.

\begin{figure}[htbp]
\begin{center}
\includegraphics[scale=0.7]{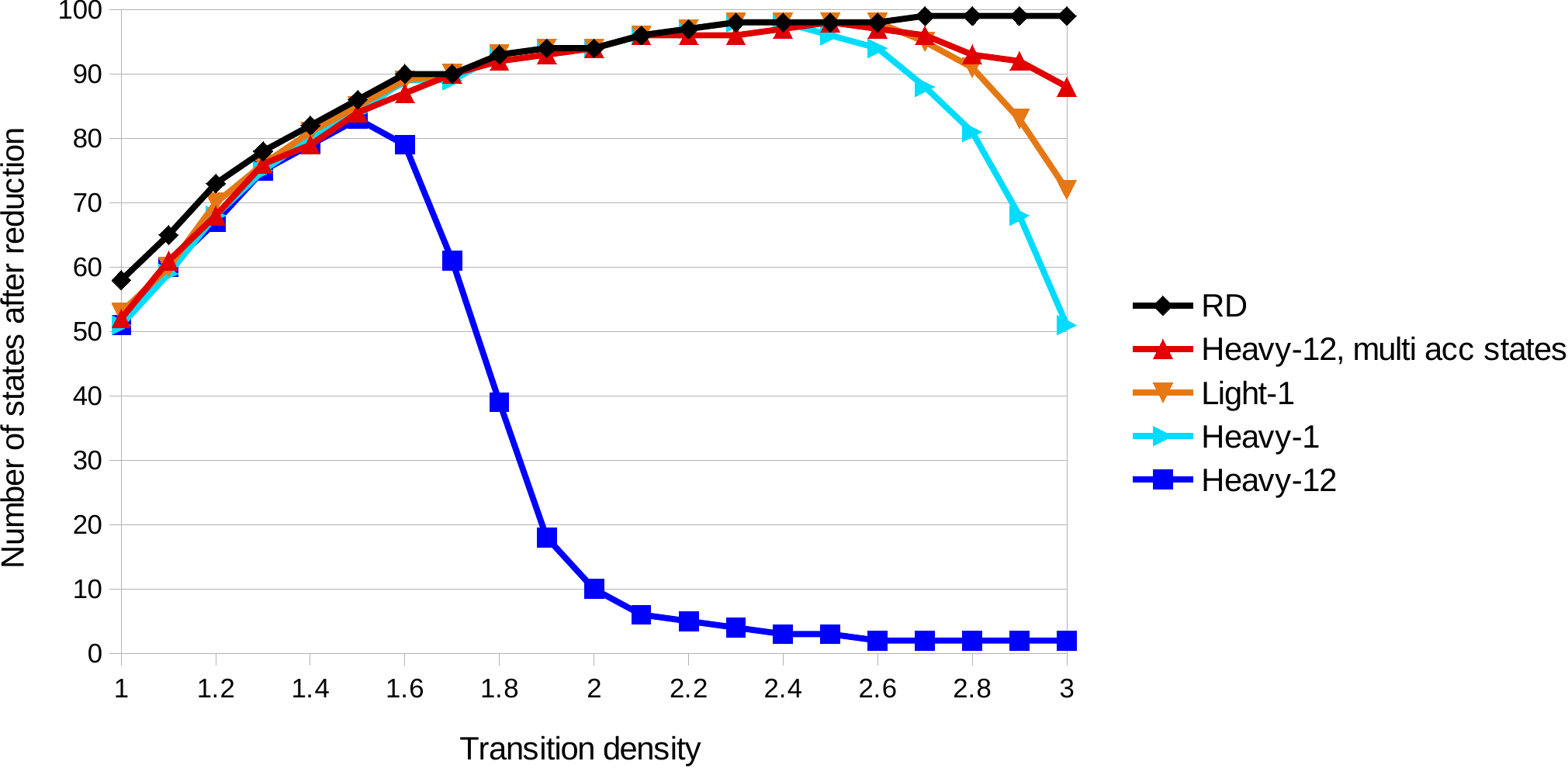}
\end{center}
\caption{Tabakov-Vardi random NFA with $n=100$, $|\Sigma|=2$, ${\it ad}=0.5$ and the range of
${\it td}=1.0, 1.1, \dots, 3.0$. 
Each curve represents a different reduction method: RD (just remove dead
states).
Heavy-12, multi acc states: Like Heavy-12 but \emph{without} the transformation into
a form with a single accepting state.
Light-1, Heavy-1 and Heavy-12 all use the transformation into
a form with a single accepting state.
Each data point is the average of $1000$ random automata.}\label{fig:Nosingle}
\end{figure}

\begin{figure}[htbp]
\begin{center}
\includegraphics[scale=0.7]{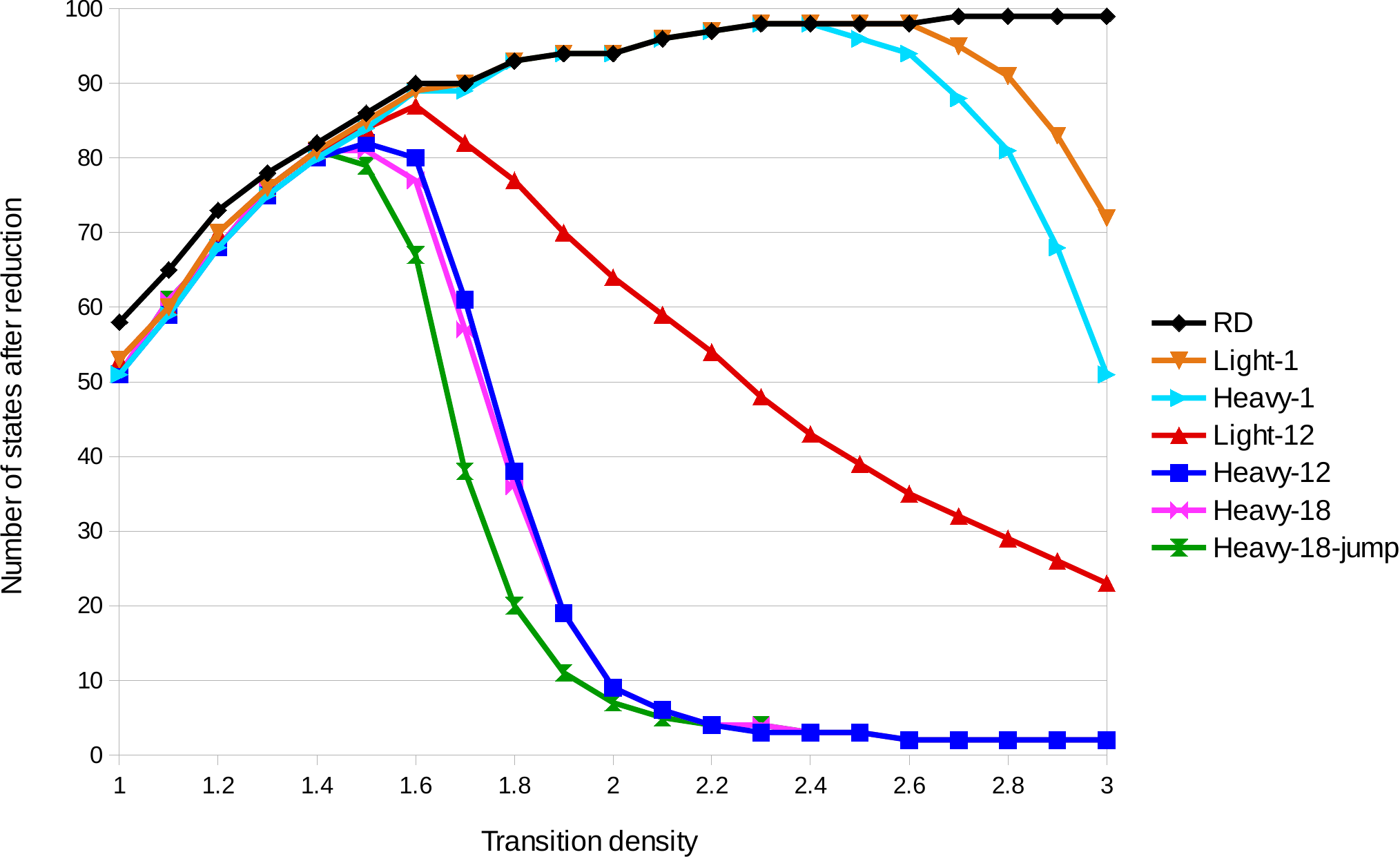}
\end{center}
\caption{Tabakov-Vardi random NFA with $n=100$, $|\Sigma|=2$, ${\it ad}=0.5$ and the range of
${\it td}=1.0, 1.1, \dots, 3.0$. 
Each curve represents a different reduction method: RD (just remove dead states),
Light-1, Heavy-1, Light-12, Heavy-12, Heavy-18 and Heavy-18 jump (which is
Heavy-18 augmented by quotienting with (a variant of) 
the jumping-safe preorders of \cite{buchiquotient:ICALP11,Clemente:PhD}).
Each data point is the average of $1000$ random automata.}\label{fig:NFAmin}
\end{figure}

\subsubsection{Density of simulations on NFA}\label{subsec:density_NFA}

In Fig.~\ref{fig:NFAsimdensity} we measure the density of direct simulation and backward simulation on
Tabakov-Vardi random NFA. We take $n=100$, $|\Sigma|=2$ and
${\it td}=1.8$ (a non-trivial case; larger ${\it td}$ yield larger
simulations). To show the effect of the acceptance density, we consider two
cases: ${\it ad}=0.1$ and ${\it ad}=0.9$.
Like in Sec.~\ref{subsec:NFA_reduction}, these 
NFA had been transformed into equivalent ones with
just a single accepting state. This makes direct simulation on NFA significantly
larger than on B\"uchi automata, particularly if ${\it ad}$ is high.
We plot the size of direct and backward simulation as the lookahead $k$ increases from 
$1$ to $15$. 

\begin{figure}[htp]
\begin{center}
\includegraphics[scale=0.8]{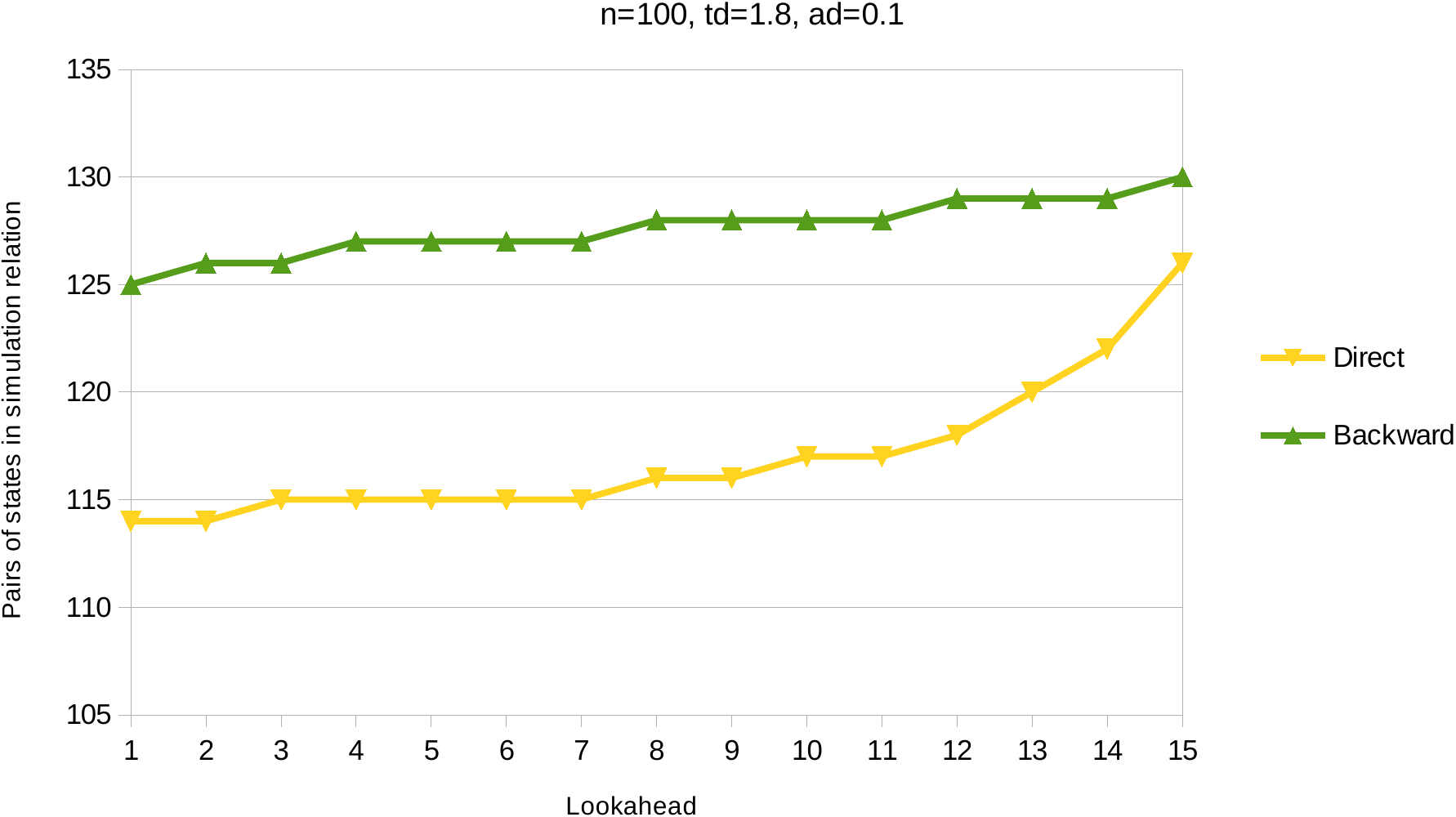}
\end{center}

\vspace*{1cm}
\begin{center}
\includegraphics[scale=0.8]{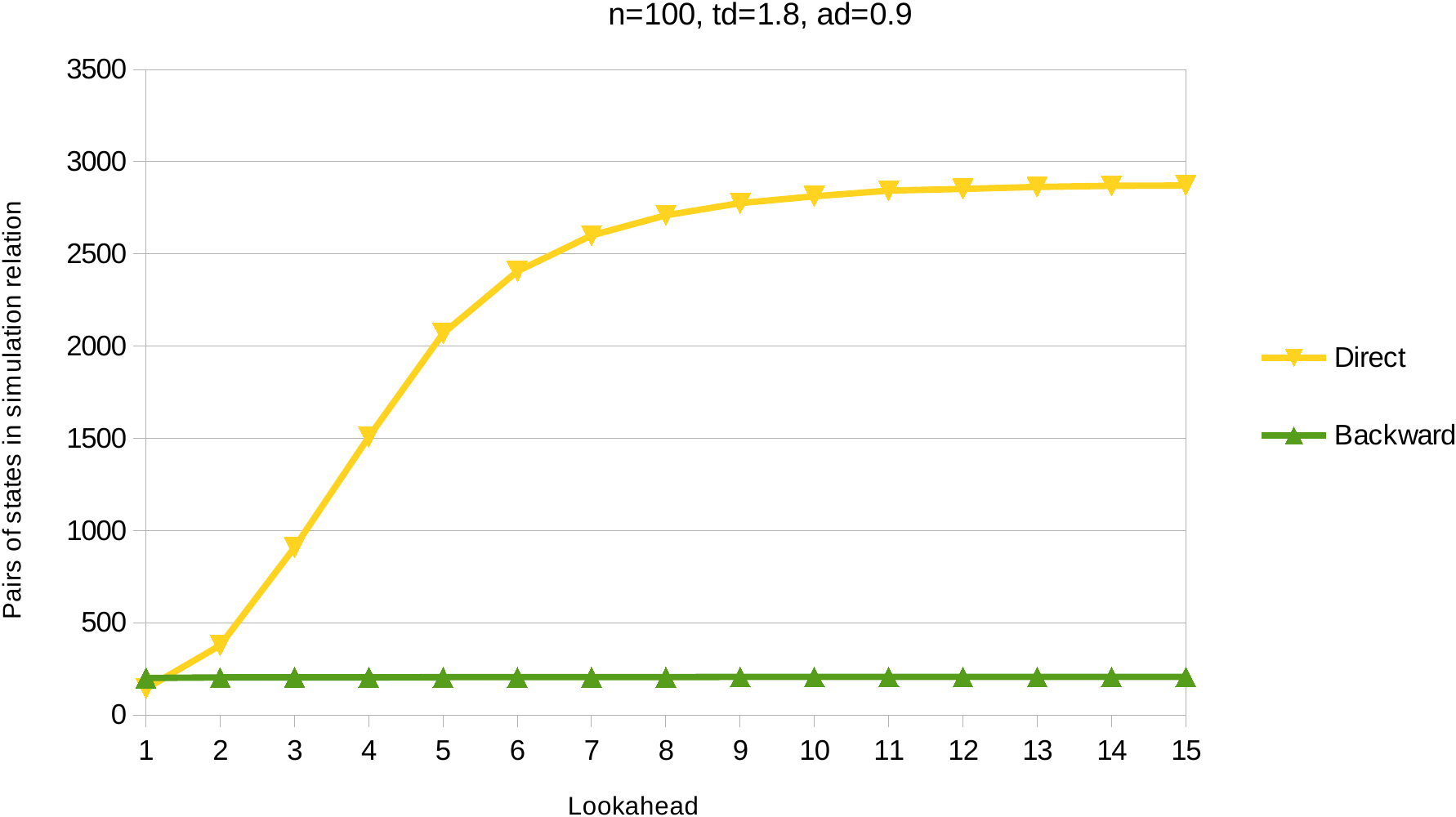}
\end{center}
\caption{Density of direct simulation and backward simulation on 
Tabakov-Vardi random NFA with $n=100$, $|\Sigma|=2$, ${\it td}=1.8$ and ${\it
  ad}=0.1$ (top) and
${\it ad}=0.9$ (bottom), respectively.
On the x-axis, the lookahead increases from $1$ to $15$.
On the y-axis, we measure the size of the simulation relations.
Every data point is the average of $1000$ random automata.
}
\label{fig:NFAsimdensity}
\end{figure}

\subsubsection{Sparseness of reduced NFA}\label{subsec:sparseness_NFA}

Like for B\"uchi automata in Sec.~\ref{subsec:sparseness},
we measure the average transition density of the Heavy-12 reduced random NFA.
Our algorithm first transforms the NFA into a form 
with only one accepting state. This adds a significant number of transitions
and thus increases the transition density. However, 
for NFA with transition density $> 1.5$, 
the Heavy-12 procedure 
then decreases the transition density again.
In Fig.~\ref{fig:sparseness2} we thus plot the original transition density,
the density after the transformation into the form with one accepting state
and the density of the Heavy-12 reduced automata.

\begin{figure}[htb]
\begin{center}
\includegraphics[scale=0.7]{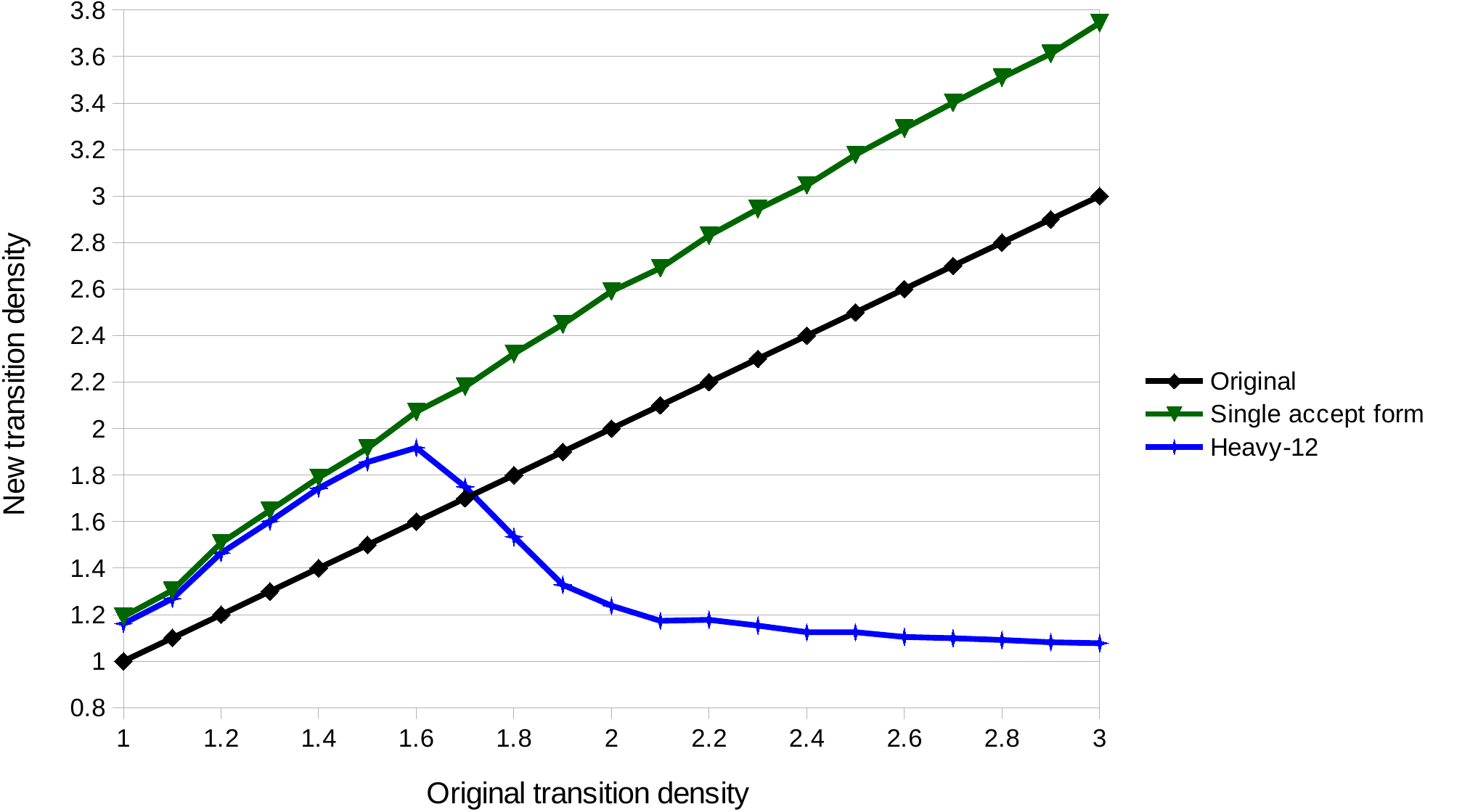}
\end{center}
\caption{
Heavy-12 produces sparse automata.
We consider Tabakov-Vardi random NFA with $n=100$, $|\Sigma|=2$,
${\it ad}=0.5$ and ${\it td}=1.0, \dots, 3.0$.
The x-axis is the transition density of the original automata while
the y-axis is the average density of the new automata.
The three curves show the average transition density of the original automata
(this is just the identity function),
the density after the transformation into the form with one accepting state,
and the average transition density of the Heavy-12 reduced automata.
Every point is the average of $1000$ automata.
}\label{fig:sparseness2}
\end{figure}

\subsubsection{Scalability of NFA reduction}

We test the scalability of reducing NFA with Heavy-12 by testing
Tabakov-Vardi random automata of increasing size $n$ but fixed ${\it td}$, 
${\it ad}$ and $\Sigma$. We ran four separate tests with ${\it td}=1.4, 1.6, 1.8$ and
$2.0$. In each test we fixed ${\it ad}=0.5$, $|\Sigma|=2$ and increased the number of states
from $n=50$ to $n=600$ in increments of $50$. For each parameter point 
we created $300$ random automata and reduced them with
Heavy-12. We analyze the average size of the reduced automata in percent of
the original size $n$, and how the average computation time increases with
$n$.

For ${\it td}=1.4$ the average size of the reduced automata stays around 
$77\%$ of the original size, regardless of $n$.
For ${\it td}=1.6$ it stays around $81\%$.
For ${\it td}=1.8$ it {\em decreases} from
$53\%$ at $n=50$ to $9\%$ at $n=600$.
For ${\it td}=2.0$ it {\em decreases} from
$23\%$ at $n=50$ to $1\%$ at $n=600$.
See Fig.~\ref{fig:NFA_scalability_size}.

\begin{figure}[htbp]
\begin{center}
\includegraphics[scale=0.7]{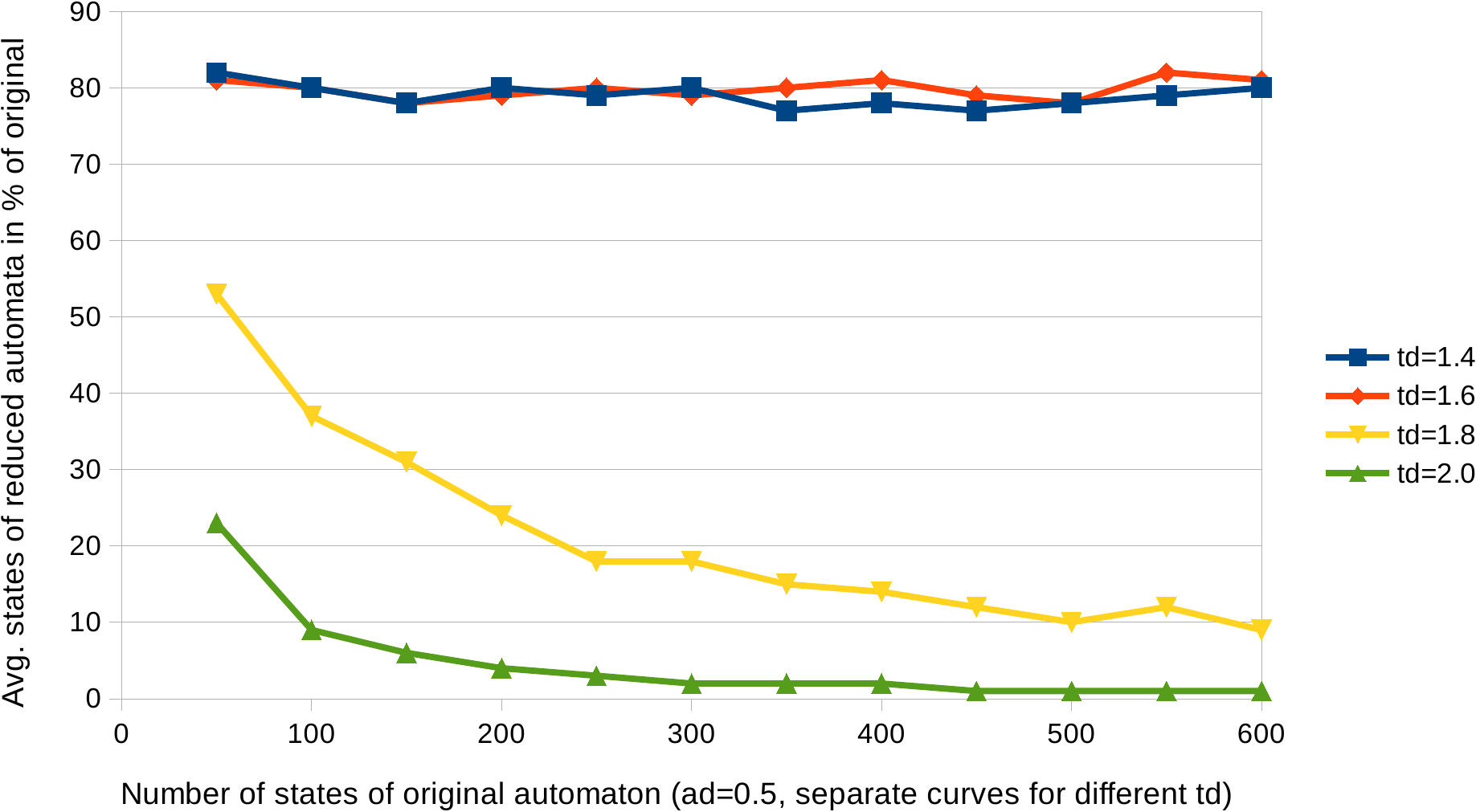}
\end{center}
\caption{Reduction of Tabakov-Vardi random NFA with ${\it ad}=0.5$,
$|\Sigma|=2$, and increasing $n=50,100,\dots,600$.
Different curves for different ${\it td}$.
We plot the average size of the Heavy-12 reduced automata, in percent of
their original size. Every data point is the average of $300$ automata.
}\label{fig:NFA_scalability_size}
\end{figure}

Note that the lookahead of 12 did {\em not change} with $n$.
Surprisingly, larger automata do not require larger lookahead for a good reduction.

Unlike in B\"uchi automata reduction, the average time to reduce NFA was
much higher than the median time (for transition densities $1.6$ and $1.8$).
For example, the average time to reduce a random NFA with $600$ states and
${\it td}=1.6$ was 199s, while the median time was 43s.
For ${\it td}=1.8$ the average and median times were 1316s and 20s, respectively.
Apparently, a few random NFA are very hard instances which increase the
average reduction time.
Therefore, we analyze both the average and the median reduction time for
NFA below.

In Fig.~\ref{fig:scalability_NFA_size_median},
we plot the median computation time (measured in ms) in $n$ and then compute
the optimal fit of the function ${\it time} = a \cdot n^b$ to the data by the least-squares
method as above. For ${\it td}=1.4, 1.6, 1.8, 2.0$ we obtain
$0.0058 \cdot n^{2.22}$, $0.011 \cdot n^{2.37}$, $0.0048 \cdot n^{2.38}$ and
$0.0048 \cdot n^{2.26}$, respectively.
So the median computation times scale slightly above quadratically.

\begin{figure}[htbp]
\begin{center}
\includegraphics[scale=0.7]{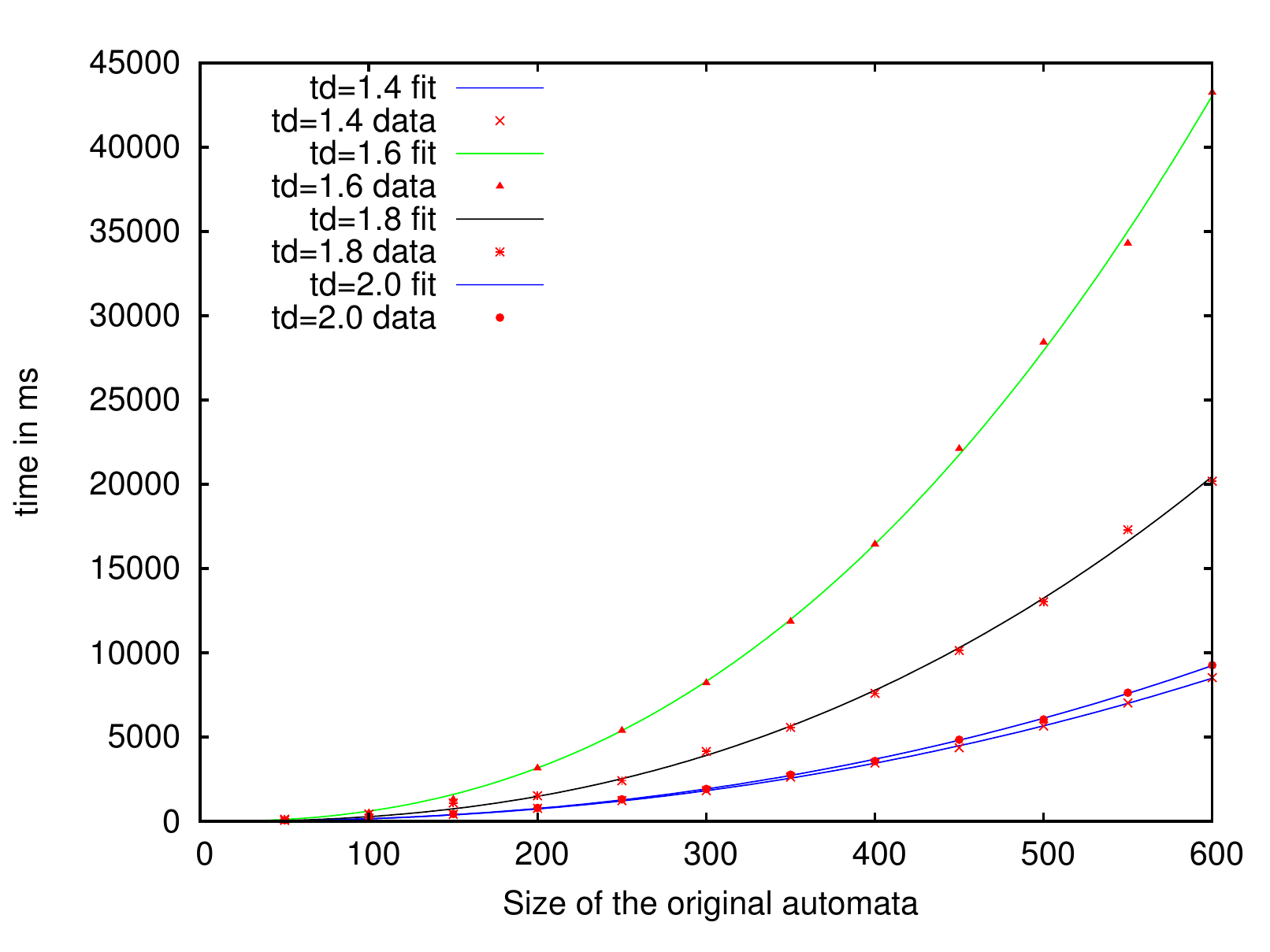}
\end{center}
\captionof{figure}{Median computation time for Heavy-12 on Tabakov-Vardi NFA
  with ${\it ad}=0.5$, $|\Sigma|=2$ and 
  ${\it td}=1.4, 1.6, 1.8, 2.0$, with a least squares fit of the function 
  $y=a \cdot x^b$.
  The x-axis shows the number of states of the original automata and the
  y-axis shows the median runtime in ms.}\label{fig:scalability_NFA_size_median}
\end{figure}

In Fig.~\ref{fig:scalability_NFA_size} 
we plot the average computation time (measured in ms) in $n$ and then compute
the optimal fit of the function ${\it time} = a \cdot n^b$ to the data by the least-squares
method, i.e., this computes the parameters $a$ and $b$ of the function that
most closely fits the experimental data. The important parameter is the
exponent $b$. For ${\it td}=1.4, 1.6, 1.8, 2.0$ we obtain
$0.0033 \cdot n^{2.33}$, $1.15 \cdot 10^{-4} \cdot n^{3.33}$, $3 \cdot 10^{-27} \cdot n^{11.7}$
and $0.008 \cdot n^{2.20}$, respectively.
Clearly, for ${\it td}=1.8$, the curve fits the experimental data extremely poorly
(with exponent $b=11.7$ and scale factor $a=3 \cdot 10^{-27}$).
Apparently, for ${\it td}=1.8$, a few very hard instances create outliers
that distort the averages (unlike the median). 
In Fig.~\ref{fig:bad_scalability_NFA_size} 
(resp.~Fig.~\ref{fig:scalability_NFA_size}) we plot the averages including
(resp.~excluding) the case of ${\it td}=1.8$.

\begin{figure}[htbp]
\begin{center}
\includegraphics[scale=0.7]{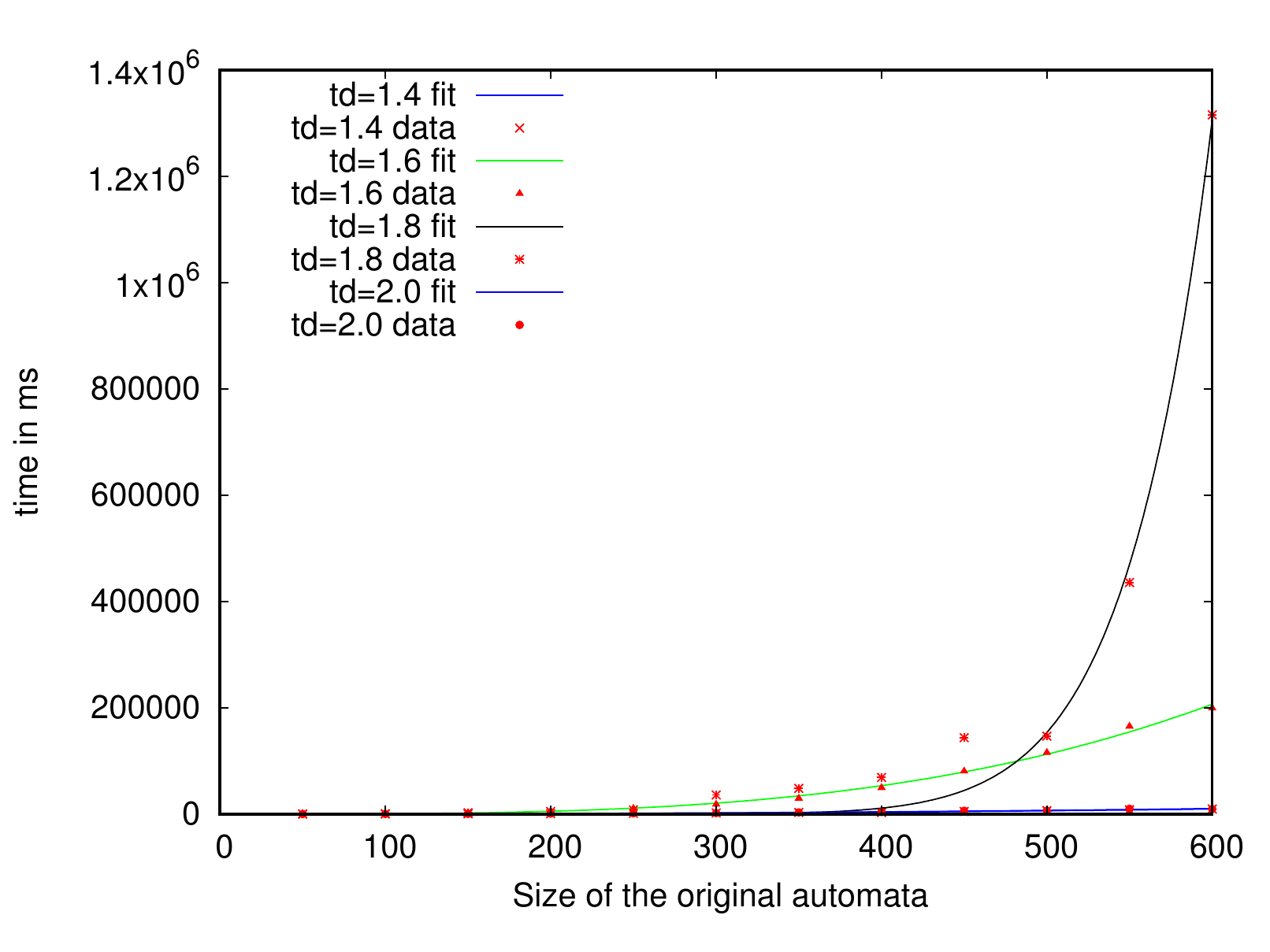}
\end{center}
\captionof{figure}{Average computation time for Heavy-12 on Tabakov-Vardi NFA
  with ${\it ad}=0.5$, $|\Sigma|=2$ and 
  ${\it td}=1.4, 1.6, 1.8, 2.0$, with a least squares fit of the function 
  $y=a \cdot x^b$.
  The x-axis shows the number of states of the original automata and the
  y-axis shows the average runtime in ms.
  Note the poor fit of the curve for ${\it td}=1.8$.
}\label{fig:bad_scalability_NFA_size}

\begin{center}
\includegraphics[scale=0.7]{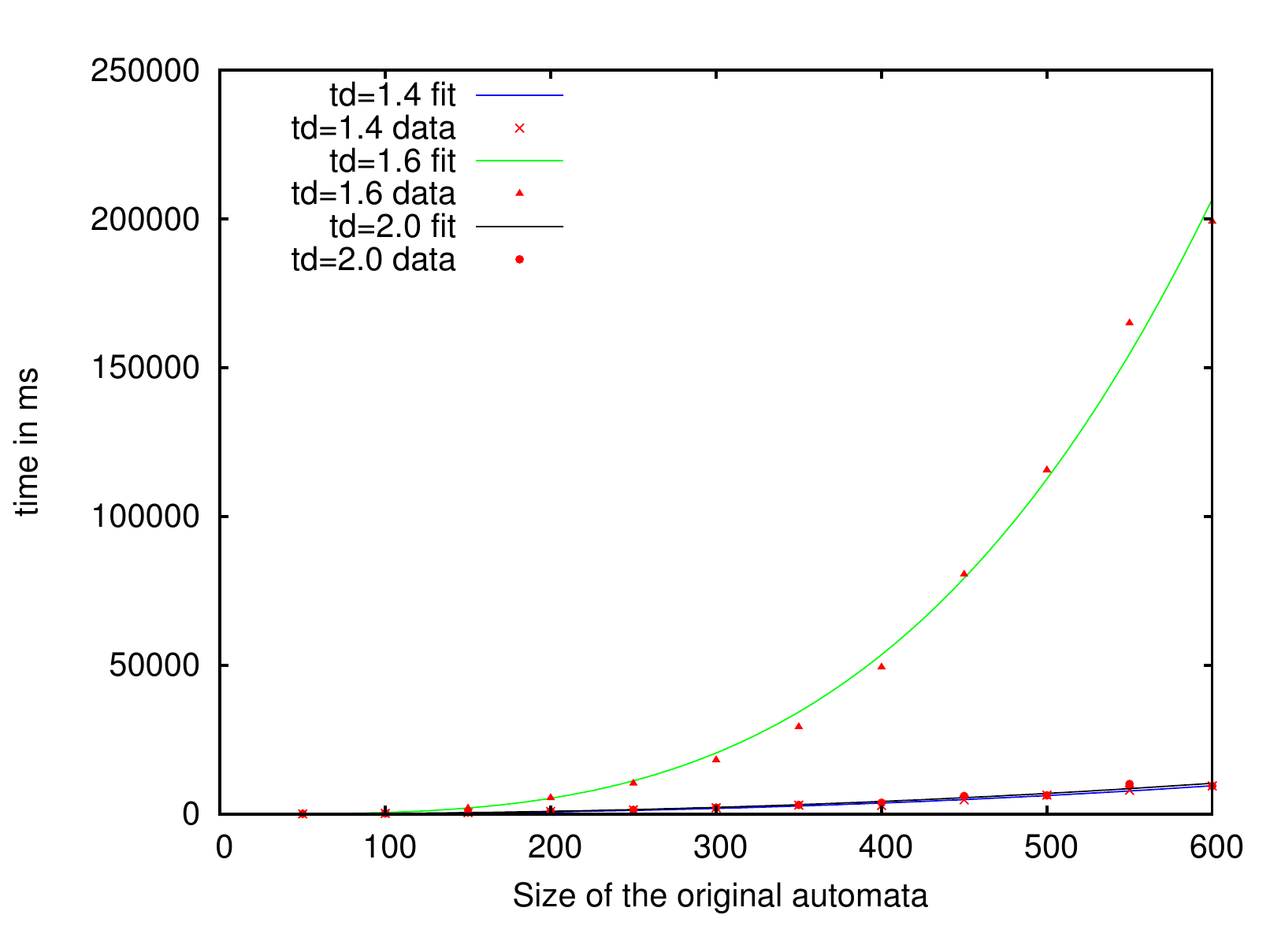}
\end{center}
\captionof{figure}{Average computation time for Heavy-12 on Tabakov-Vardi NFA
  with ${\it ad}=0.5$, $|\Sigma|=2$ and 
  ${\it td}=1.4, 1.6, 2.0$, with a least squares fit of the function 
  $y=a \cdot x^b$.
  The x-axis shows the number of states of the original automata and the
  y-axis shows the average runtime in ms.}\label{fig:scalability_NFA_size}
\end{figure}

%%% Local Variables:
%%% mode: latex
%%% TeX-master: "ROOT.tex"
%%% End:

\newpage
\section{Extensions: Adding Transitions}\label{sec:extensions}

In this section we describe a technique, called \emph{saturation},
that reduces the number of states of automata by adding more transitions.
The idea is that certain transitions may be added to an automaton 
without changing its language when other better transitions are already
present.
Conceptually, this is dual to the transition pruning techniques of Sec.~\ref{sec:pruning}.
This technique is implemented in our tool \cite{RABIT} and can have a
significant effect on some instances. However, it is not part
of the default Heavy-k algorithm (described and tested in Sections
\ref{sec:heavyandlight} and \ref{sec:experiments}), 
due to low efficiency and a tradeoff between the numbers of states
and transitions.
Note that adding transitions itself does not change the number of states.
However, the changed automaton might allow the application of further
quotienting that then reduces the number of states. Moreover, the changed
automaton might be treated with the pruning techniques from
Sec.~\ref{sec:pruning}, and this might remove some transitions other than
the recently added ones. This modification might pave the way for further
quotienting, etc., which finally results in an automaton with fewer states,
and possibly even fewer transitions, than the one produced by the default
Heavy-k method.
 
One downside is that, even if the number of states eventually decreases, not
all the added transitions can be removed again. So one might obtain an
automaton with fewer states but more transitions than the one produced
by Heavy-k. This tradeoff effect between the numbers of states and
transitions exists in practice, though it is not very strong; see the
experiments in Sec.~\ref{sec:extensions:experiments}.

\begin{definition}
	Let $\A = (\Sigma, Q, I, F, \delta)$ be an automaton,
        $\Delta = Q \times \Sigma \times Q$ the set of all possible
        transitions between states in $\A$,
        $\saturationrel \subseteq \Delta \times \Delta$ a reflexive
        binary relation on $\Delta$,
		and, for a set of transitions $\Gamma \subseteq \Delta$,
	\begin{align*}
		S^{-1}(\Gamma) = \{(p',\symb',r') \in \Delta \st
			\exists (p,\symb,r) \in \Gamma \cdot (p',\symb',r')\; \saturationrel \; (p,\symb,r)\}.
	\end{align*}
	The \emph{$\saturationrel$-saturated automaton} is defined as
	$\saturate{\A}{\saturationrel} \defeq (\Sigma, Q, I, F, \delta')$,
	where
%	\begin{align*}
		$\delta' = S^{-1}(\delta)$.
%		I' &= \{ p' \in Q \st \exists (p', \symb', q') \in S^{-1}(\delta \cap I \times \Sigma \times Q) \} \\
%		F' &= \{ q' \in Q \st \exists (p', \symb', q') \in S^{-1}(\delta \cap Q \times \Sigma \times F) \}
%	\end{align*}
	%
\end{definition}
The intuition is that more transitions can be added without changing the
language if better (i.e., $\saturationrel$-larger) transitions already exist.
Since $\saturationrel$ is reflexive, saturation only adds transitions, % initial, and final states
%(i.e., $\delta \subseteq \delta'$, $I \subseteq I'$, and $F \subseteq F'$),
and thus $\A \languageinclusion \saturate{\A}{\saturationrel}$.
When the converse inclusion also holds, we say that $\saturationrel$ is good for saturation.
\begin{definition}
	A relation $\saturationrel \subseteq \Delta \times \Delta$ is \emph{good for saturation} (GFS)
	if $\saturate{\A}{\saturationrel} \languageequivalence \A$.
%	and it is \emph{good for backward saturation} (\GFSbw) if $\bwsaturate{\A}{R} \languageequivalence \A$,
\end{definition}

The GFS property is downward closed in the space of reflexive relations,
i.e., if $\saturationrel$ is GFS and 
$\id \subseteq \saturationrel' \subseteq \saturationrel$, then
$\saturationrel'$ is also GFS.

We study GFS relations which add transitions to already existing states,
and they do so by comparing the endpoints of such transitions over the same input symbol.
(This is similar to our pruning technique from Sec.~\ref{sec:pruning}.)
Formally, given two binary state relations $\brel, \frel \subseteq Q \times Q$
for the source and target endpoints, respectively, we define
\begin{align*}
%	\label{eq:saturationrel}
	\makesaturationrel{\brel}{\frel} = \{((p,\symb,r),(p',\symb,r')) \in \Delta \times \Delta \st p \mathrel{\brel} p' \textrm{ and } r \mathrel{\frel} r' \}.
\end{align*}
$\makesaturationrel{\cdot}{\cdot}$ is monotone in both arguments.

Given an automaton $\A$ and relations $\brel,\frel$ on the states of $\A$,
we will construct a new automaton $\B = \saturate{\A}{\makesaturationrel{\brel}{\frel}}$.
When reasoning about whether $\makesaturationrel{\brel}{\frel}$ is GFS
(i.e., whether $\B \languageequivalence \A$),  
it is important to keep in mind that the relations $\brel,\frel$
are valid only w.r.t.~$\A$, but not necessarily w.r.t.~$\B$.

\subsection{Saturation of NBA}
\label{sec:extensions:theory:NBA}

\begin{table}
	
	\begin{center}
		\begin{tabular}{c|cccccc||cc}
$\brel\backslash\frel$ 	& $\id$	& $\disim$ & $\directtraceinclusion$ &
                                    $\delayedfixedwordsimulation$ &
                  $\delayedtraceinclusion$ & $\fsim$ & $\bwdisimrev$ & $\bwdirecttraceinclusionrev$ \\
			\hline
			$\id$ & $\tickOK$	& $\tickOK$
                        & $\tickOK$			& $\tickOK$
                        & $\tickNO$	&	$\tickNO$ &	$\tickOK$ &
                        $\tickOK$		\\
$\disimrev$ & $\tickOK$ & $\tickOK$ & $\tickOK$ & $\tickOK$ & $\tickNO$ &
                        $\tickNO$ & $\tickNO$ & $\tickNO$ \\
$\directtraceinclusionrev$ & $\tickOK$ & $\tickOK$ & $\tickOK$ & $\tickOK$ & $\tickNO$ &
                        $\tickNO$ & $\tickNO$ & $\tickNO$ \\
$\delayedfixedwordsimulationrev$ & $\tickOK$ & $\tickOK$ & $\tickOK$ & $\tickOK$ & $\tickNO$ &
                        $\tickNO$ & $\tickNO$ & $\tickNO$ \\
$\delayedtraceinclusionrev$ & $\tickNO$ & $\tickNO$ & $\tickNO$ & $\tickNO$ & $\tickNO$ & $\tickNO$ & $\tickNO$ & $\tickNO$ \\
$\fsimrev$ & $\tickNO$ & $\tickNO$ & $\tickNO$ & $\tickNO$ & $\tickNO$ &
                        $\tickNO$ & $\tickNO$ & $\tickNO$ \\
$\bwdisim$ & $\tickOK$ & $\tickNO$ & $\tickNO$ & $\tickNO$ & $\tickNO$ &
                        $\tickNO$ & $\tickOK$ & $\tickOK$ \\
$\bwdirecttraceinclusion$ & $\tickOK$ & $\tickNO$ & $\tickNO$ & $\tickNO$ & $\tickNO$ &
                        $\tickNO$ & $\tickOK$ & $\tickOK$ 							
		\end{tabular}
	\end{center}
	
	\caption{GFS relations $\makeprunerel{\brel}{\frel}$ for NBA.
        $\tickOK$ denotes yes and $\tickNO$ denotes no.}
	\label{fig:GFS_relations}

\end{table}

%%% Local Variables:
%%% mode: latex
%%% TeX-master: "ROOT.tex"
%%% End:

We study which semantic preorders induce GFS relations on NBA.
Our results are summarized in Table~\ref{fig:GFS_relations}.

In the transition pruning techniques of Sec.~\ref{sec:pruning},
the source states of transitions were compared w.r.t.~\emph{backward} simulation
(resp.~trace inclusion), while the target states were compared 
w.r.t.~(various types of) \emph{forward} simulation (resp.~trace inclusion).
However, for saturation this would be incorrect as the counterexample
from Fig.~\ref{fig:bw_fw:not:GFS} shows. 
In this automaton $\A$ (without the dashed transition) 
we have that $(r,a,r) \makesaturationrel \bwdisim
  \disim (p,a,q)$, but adding the dashed transition $(r,a,r)$
changes the language, since $a^\omega$ is now accepted,
i.e., $a^\omega \in \lang{\saturate{\A}{\makesaturationrel \bwdisim
    \disim}} \setminus \lang{\A}$.
Thus, $\makesaturationrel \bwdisim \disim$ is not GFS.
A similar example in Fig.~\ref{fig:fw_bw:not:GFS}
shows that $\makesaturationrel \disimrev \bwdisimrev$ is not GFS either:
Here %$r \disim p$ and $s \bwdisim q$
$(p, a, q) \makesaturationrel \disimrev \bwdisimrev (r, a, s)$,
and thus the dashed transition $p \goesto a q$ is added,
which causes the new word $a^\omega$ to be accepted---%
while this was not the case in the original automaton.
While $\makesaturationrel \bwdisim \disim$ and $\makesaturationrel \disimrev \bwdisimrev$ are not GFS,
one does obtain GFS relations by replacing either $\bwdisim$ or $\disim$ 
by the identity, which immediately follows from the more general results below.

\begin{figure}
	
	\begin{tikzpicture}[on grid, node distance= .6cm and 1.6cm]
		\tikzstyle{vertex} = [smallstate]

		\path node [vertex, initial, accepting] (p) {$p$};
		%\path node (hidden) (x) [below = of p] {}; % just for formatting reasons
		\path node [vertex, accepting] (q) [right = of p] {$q$};
                \path node [vertex, initial, accepting] (r) [right = of q] {$r$};

		\path[->]
			(p) edge node [above] {$a$} (q)
			(q) edge [loop above] node {$b,c$} ()
			(p) edge [loop above] node {$b,c$} ()
                        (r) edge [loop above] node {$c$} ()
                        (r) edge [dashed, loop below] node {$a$} ();
	\end{tikzpicture}
	
	\caption{$\makesaturationrel \bwdisim \disim$ is not GFS.}
	\label{fig:bw_fw:not:GFS}

\end{figure}

\begin{figure}
	
	\begin{tikzpicture}[on grid, node distance= .6cm and 1.6cm]
		\tikzstyle{vertex} = [smallstate]

		\path node [vertex, initial] (p) {$p$};
		\path node [vertex, accepting, right = of p] (q) {$q$};
		\path node [vertex, right = of q] (r) {$r$};
		\path node [vertex, right = of r] (s) {$s$};
		
		\path[->]
			(p) edge [loop above] node {$a$} ()
            (p) edge [dashed] node [above] {$a$} (q)
			(q) edge [loop above] node {$a$} ()
			(r) edge node [above] {$a$} (s)
			;
	\end{tikzpicture}
	
	\caption{$\makesaturationrel \disimrev \bwdisimrev$ is not GFS.}
	\label{fig:fw_bw:not:GFS}

\end{figure}

%%% Local Variables:
%%% mode: latex
%%% TeX-master: "ROOT.tex"
%%% End:

Comparing both source and target states w.r.t.~forward relations
can yield GFS relations, as the following theorem shows.

\begin{theorem}\label{thm:GFS:delayedfixedwordsimulation}
	The relation $\makesaturationrel \delayedfixedwordsimulationrev
        \delayedfixedwordsimulation$ using fixed-word delayed simulation is
        GFS on NBA.
\end{theorem}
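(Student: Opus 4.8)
The plan is to prove the non-trivial inclusion, since $\A \languageinclusion \saturate{\A}{\makesaturationrel{\delayedfixedwordsimulationrev}{\delayedfixedwordsimulation}}$ holds trivially (saturation only adds transitions). So, writing $\B = \saturate{\A}{\makesaturationrel{\delayedfixedwordsimulationrev}{\delayedfixedwordsimulation}}$, I would establish $\B \languageinclusion \A$. First I would unpack the saturation relation: a transition $(u,\symb,v) \in \delta' \setminus \delta$ is added only because there is a witnessing original transition $(u_0,\symb,v_0) \in \delta$ with $u_0 \delayedfixedwordsimulation u$ and $v \delayedfixedwordsimulation v_0$; that is, the new source $u$ fixed-word-delayed-simulates the witness source $u_0$, while the witness target $v_0$ fixed-word-delayed-simulates the new target $v$.

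Given $w = \symb_0\symb_1\cdots \in \lang\B$ with an initial and fair $\B$-trace $\pi = q_0 \goesto{\symb_0} q_1 \goesto{\symb_1} \cdots$, the goal is to build an initial and fair $\A$-trace on $w$. The natural approach is to construct, along the fixed word $w$, an $\A$-trace $\rho = s_0 \goesto{\symb_0} s_1 \cdots$ that stays \emph{above} $\pi$ in fixed-word delayed simulation, i.e.\ maintaining $q_i \delayedfixedwordsimulation s_i$, and that matches the acceptances of $\pi$ in the delayed sense $\mathcal C^{\mathrm{de}}(\pi,\rho)$. Initiality would be secured by taking $s_0 = q_0 \in I$. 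Whenever $\pi$ uses an original transition, $\rho$ follows Duplicator's move in the fixed-word delayed simulation game. Whenever $\pi$ uses an added transition $q_i \goesto{\symb_i} q_{i+1}$ with witness $u_0 \goesto{\symb_i} v_0$, I would exploit $u_0 \delayedfixedwordsimulation q_i \delayedfixedwordsimulation s_i$: Duplicator at $s_i$ can replicate the \emph{genuine} $\A$-move $u_0 \goesto{\symb_i} v_0$ by some $s_i \goesto{\symb_i} s_{i+1}$ with $v_0 \delayedfixedwordsimulation s_{i+1}$, whence $q_{i+1} \delayedfixedwordsimulation v_0 \delayedfixedwordsimulation s_{i+1}$ re-establishes the invariant. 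Since $\pi$ is fair and $\rho$ delayed-simulates it from above, $\rho$ visits $F$ infinitely often and is therefore fair; hence $w \in \lang\A$.

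The hard part will be that fixed-word delayed simulation is \emph{not positional}: Duplicator's reply to a single Spoiler move depends on the entire remaining word and carries a pending delayed-acceptance obligation, so the step-by-step ``replicate one move and preserve $\delayedfixedwordsimulation$'' used above cannot be taken at face value. To make it rigorous I would not reconstruct $\rho$ move-by-move, but rather set up a \emph{single} coherent fixed-word delayed simulation play for the fixed word $w$ (tracking the obligation bit, as in the $\mu$-calculus encoding of Sec.~\ref{sec:fixedpoint}), in which the role of Spoiler is played by a genuine $\A$-trace shadowing $\pi$ and rerouted through the witnesses $u_0 \goesto{\symb_i} v_0$ at the added transitions. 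The delicate point is reconciling the \emph{full} relation $\delayedfixedwordsimulation = \bigcap_w \fixedwordsimulation{de}_w$ appearing in the saturation condition (needed for transitivity, and for bridging between the new target $v$ and the witness target $v_0$) with the per-word, obligation-carrying Duplicator strategy; this is precisely the technical core of the GFQ proof for $\delayedfixedwordsimulation$ in \cite{buchiquotient:ICALP11} (Lemma~\ref{lem:GFQ-delayedfixedwordsimulation}), whose determinization/multipebble analysis I would adapt. Finally, since the construction yields longer and longer initial $\A$-prefixes that visit accepting states consistently with $\pi$, a K\"onig's-lemma argument on the finitely-branching tree of these prefixes extracts the required infinite fair initial $\A$-trace, exactly as in the pruning proofs of Sec.~\ref{sec:pruning}.
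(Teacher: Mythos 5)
Your proposal is correct in outline and shares the paper's overall architecture: prove the nontrivial inclusion $\B \languageinclusion \A$ for $\B = \saturate{\A}{\makesaturationrel{\delayedfixedwordsimulationrev}{\delayedfixedwordsimulation}}$, unpack each saturated transition $p_i \goesto{\symb_i} p_{i+1}$ into a witnessing $\A$-transition $q \goesto{\symb_i} q'$ with $q \delayedfixedwordsimulation p_i$ and $p_{i+1} \delayedfixedwordsimulation q'$, and shadow the fair $\B$-run $\pi$ by an $\A$-run $\rho$ that stays above it, applying transitivity twice per step. Where you differ is in the technical core. You keep the full relation $\delayedfixedwordsimulation = \bigcap_w \delayedwordsimulation{w}$ in your invariant, correctly observe that this cannot be maintained move-by-move (Duplicator's replies are word-dependent and carry a pending obligation), and then propose a heavy repair: one coherent obligation-annotated play adapted from the GFQ analysis of \cite{buchiquotient:ICALP11}, closed off by K\"onig's lemma. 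The paper instead makes one small change of invariant that dissolves the difficulty: it maintains the \emph{word-indexed} relation $p_i \delayedwordsimulation{\suffix w i} r_i$ along the fixed word $w$. Since the saturation witnesses are related by the full relation, they are in particular related by $\delayedwordsimulation{\suffix w i}$; these per-word relations are transitive, and one round of the game on $\suffix w i$ against Spoiler's genuine move $q \goesto{\symb_i} q'$ returns a pair related by $\delayedwordsimulation{\suffix w {i+1}}$, so the induction restarts cleanly at every position with no obligation bookkeeping in the invariant, and $\rho$ is built by directly appending one transition per step --- no K\"onig's lemma is needed (the paper reserves it for the pruning and backward-inclusion proofs, where one genuinely has a tree of candidate runs rather than a single monotonically extended run). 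The one place where the delayed obligation truly matters is the closing claim that $\rho$ is fair because each accepting visit of $\pi$ is eventually answered in $\rho$; the paper dispatches this in one sentence from the delayed winning condition, and your explicit obligation-bit tracking is essentially an expanded justification of exactly that step. So your route would work, but the determinization machinery and the limit argument you import are unnecessary; the family $\{\delayedwordsimulation{w}\}_{w \in \Sigma^\omega}$ and its transitivity are all that is needed.
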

\begin{proof}
	Let $\B = \saturate{\A}{\makesaturationrel \delayedfixedwordsimulationrev \delayedfixedwordsimulation}$.
        (Note that the relations $\delayedfixedwordsimulationrev, \delayedfixedwordsimulation$
        are valid w.r.t.~$\A$, but
        not necessarily w.r.t.~$\B$.)
	We only need to prove the non-trivial inclusion $\B \languageinclusion \A$.
	Let $w = \symb_0 \symb_1 \cdots \in \lang \B$.
	Then there exists an initial fair trace
	$\pi = p_0 \goesto {\symb_0} p_1 \goesto {\symb_1} \cdots$ in $\B$,
	which is not necessarily a trace in $\A$, 
        since it might use new transitions introduced by the saturation procedure.
	However, for every transition $p_i \goesto {\symb_{i}} p_{i+1}$ in $\B$
	there exists a transition $q \goesto {\symb_{i}} q'$ for some states
        $q,q'$ in $\A$
	s.t.~$q \delayedfixedwordsimulation p_i$ and $p_{i+1} \delayedfixedwordsimulation q'$.
	In particular,
	\begin{align*}
		q \delayedwordsimulation {\suffix w i} p_i \qquad \textrm{ and } \qquad
		p_{i+1} \delayedwordsimulation {\suffix w {i+1}} q'.
	\end{align*}
	We inductively construct an initial fair trace $\rho = r_0 \goesto {\symb_0} r_1 \goesto {\symb_1} \cdots$ in $\A$
	s.t.~$p_i \delayedwordsimulation {\suffix w i} r_i$ for every $i \geq 0$. 
	For the base case $i = 0$, we just take $r_0 = p_0$ (thus $\rho$ is initial).
	For the inductive step,
	assume $r_0 \goesto {\symb_0} \cdots \goesto {\symb_{i-1}} r_i$ has already been constructed.
	Since $p_i \goesto {\symb_{i}} p_{i+1}$ in $\B$, 
        there exists a transition $q \goesto {\symb_{i}} q'$ for some states
        $q,q'$ in $\A$
	s.t.~$q \delayedwordsimulation {\suffix w i} p_i$ and
        $p_{i+1} \delayedwordsimulation {\suffix w {i+1}} q'$.
	By inductive assumption, $p_i \delayedwordsimulation {\suffix w i} r_i$,
	and thus $q \delayedwordsimulation {\suffix w i} r_i$ by transitivity.
        Since $q \goesto {\symb_{i}} q'$ in $\A$,
	there exists a transition $r_i \goesto {\symb_{i}} r$ in $\A$
	s.t.~$p_{i+1} \delayedwordsimulation {\suffix w {i+1}} q' \delayedwordsimulation {\suffix w {i+1}} r$.
	Let $r_{i+1} = r$, which again establishes the inductive invariant $p_{i+1} \delayedwordsimulation {\suffix w {i+1}} r_{i+1}$.
	Clearly, $\rho$ is infinite.
	Moreover, since $\pi$ is fair and since by the delayed winning condition
	each occurrence of an accepting state in $\pi$ is eventually followed by an accepting state in $\rho$,
	we have that $\rho$ is fair as well.
	This shows $w \in \lang \A$.
\end{proof}

As a corollary of Theorem~\ref{thm:GFS:delayedfixedwordsimulation},
using any relation included in fixed-word delayed simulation
results in a GFS relation (cf.~the taxonomy of GFQ relations of Fig.~\ref{fig:GFQ_relations}),
such as direct and delayed simulations, together with their multipebble and lookahead variants.
In short, every GFQ relation induces a GFS relation.
This is not an accident, as shown in the following result.

\begin{lemma}
	Let $\equiv \subseteq Q \times Q$ be an equivalence between states.
	Then, $\makesaturationrel \equiv \equiv$ is GFS iff $\equiv$ is GFQ.
\end{lemma}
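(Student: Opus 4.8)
The plan is to reduce the biconditional to a single language identity. Write $\B = \saturate{\A}{\makesaturationrel \equiv \equiv}$. Since saturation only adds transitions, $\A \languageinclusion \B$ always holds, so $\makesaturationrel \equiv \equiv$ is GFS iff $\B \languageinclusion \A$. Dually, quotienting only enlarges the language, so $\A \languageinclusion \A/\!\equiv$ always holds, and $\equiv$ is GFQ iff $(\A/\!\equiv) \languageinclusion \A$. Hence, once I establish $\lang{\B} = \lang{\A/\!\equiv}$, the two remaining conditions $\lang{\B} \subseteq \lang{\A}$ and $\lang{\A/\!\equiv} \subseteq \lang{\A}$ become literally identical, and the lemma follows.

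The bridge between $\B$ and the quotient is a structural observation that I would prove first by unfolding the definitions: a triple $(p,\symb,p')$ is a transition of $\B$ if and only if $([p],\symb,[p'])$ is a transition of $\A/\!\equiv$. Indeed, $(p,\symb,p')$ is a transition of $\B$ exactly when there is some $(q,\symb,q') \in \delta$ with $q \equiv p$ and $q' \equiv p'$, which is precisely the defining condition for $([p],\symb,[p'])$ to lie in the quotient transition relation. Consequently $\B$ is the \emph{expansion} of $\A/\!\equiv$: each class is blown up into all of its members, and between the members of two adjacent classes one finds a complete bipartite set of transitions. In particular, whenever $[c] \goesto \symb [c']$ holds in the quotient, the transition $p \goesto \symb p'$ is present in $\B$ for \emph{every} pair $p \in [c]$, $p' \in [c']$.

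With this in hand I would prove $\lang{\B} = \lang{\A/\!\equiv}$ by relating traces in both directions. For $\lang \B \subseteq \lang{\A/\!\equiv}$, project any initial fair $\B$-trace $p_0 \goesto{\symb_0} p_1 \goesto{\symb_1} \cdots$ onto $[p_0] \goesto{\symb_0} [p_1] \goesto{\symb_1} \cdots$; each step is a quotient transition by the structural observation, and initiality ($p_0 \in I$ gives $[p_0] \in [I]$) and fairness ($p_i \in F$ infinitely often gives $[p_i] \in [F]$ infinitely often) are preserved. For the reverse inclusion, I would lift an initial fair quotient trace $[q_0] \goesto{\symb_0} [q_1] \goesto{\symb_1} \cdots$ by choosing a representative $p_i \in [q_i]$ at each position: the complete bipartite structure guarantees that every such choice produces valid $\B$-transitions, independently at each step. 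Exploiting this freedom, I pick $p_0 \in I$ (possible since $[q_0] \in [I]$) and, at each position $i$ with $[q_i] \in [F]$, pick $p_i \in F$ (possible since $[q_i] \in [F]$ means that class contains an accepting state). The lifted trace is then an initial fair trace in $\B$.

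I expect the fairness direction of the lift to be the only real obstacle. The accepting classes of the quotient are exactly those \emph{containing} an accepting state, so $[q_i] \in [F]$ does not force the chosen representative to be accepting; what rescues the argument is precisely that $\makesaturationrel \equiv \equiv$ compares \emph{both} endpoints by the same equivalence, which is what yields the complete bipartite transition pattern in $\B$. This lets me select an actual accepting state whenever the class is accepting, without constraining the neighboring choices, thereby recovering a genuine fair $\B$-trace visiting $F$ infinitely often. The same lift works verbatim for the NFA reading, replacing \emph{fair} by \emph{final} and ``infinitely often'' by ``at the final step''.
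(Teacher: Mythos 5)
Your proposal is correct and takes essentially the same route as the paper's proof: both establish $\lang{\saturate{\A}{\makesaturationrel \equiv \equiv}} = \lang{\A/\!\equiv}$ by projecting runs of the saturated automaton onto the quotient and lifting quotient runs back via a choice of initial/accepting representatives in each class, after which GFS and GFQ reduce to the same inclusion into $\lang{\A}$. Your explicit ``complete bipartite expansion'' characterization of the saturated transition relation is merely a cleanly packaged version of the step-wise argument the paper carries out directly from the definitions of saturation and quotienting.
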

\begin{proof}
	%Let $\approx$ be the equivalence induced by $\sqsubseteq$.
	%If $\makesaturationrel \id \sqsubseteq$ is GFS,
	%so it is $\makesaturationrel \id \approx$.
	%
	Consider the saturated automaton
	$\B = \saturate{\A}{\makesaturationrel \equiv \equiv}$
	and the quotient automaton $\C = \A/\!\equiv$.
	We show that $\B \approx \C$.
	%
%	Assume $\makesaturationrel \approx \approx$ is GFS.
	%
%	Then, $\A \languageequivalence \B$,
%	and by the definition of quotienting $\A, \B \languageinclusion \C$.
	%
	Take an initial fair run $\pi = [p_0] \goesto {\symb_0} [p_1] \goesto {\symb_1} \cdots$ in $\C$,
	where $[p_i]$ denotes the equivalence class of state $p_i$ w.r.t.~$\equiv$.
	Without loss of generality, let $p_0$ be initial,
	and let $p_i$ be accepting if $[p_i]$ contains an accepting state.
	We build an initial fair run $\pi' = p_0 \goesto {\symb_0} p_1 \goesto {\symb_1} \cdots$ in $\B$.
	By the definition of quotienting, each transition $[p_i] \goesto {\symb_i} [p_{i+1}]$ in $\C$
	originates from a concrete transition $\hat p_i \goesto {\symb_i} \hat p_{i+1}$ in $\A$
	for some $\hat p_i \in [p_i]$ and $\hat p_{i+1} \in [p_{i+1}]$.
	Since $\hat p_i \equiv p_i$ and $\hat p_{i+1} \equiv p_{i+1}$,
	by the definition of saturation
	there exists a transition $p_i \goesto {\symb_i} p_{i+1}$ in $\B$.
	This shows $\C \languageinclusion \B$.
	
	For the other inclusion, consider an initial fair run $\pi = p_0 \goesto {\symb_0} p_1 \goesto {\symb_1} \cdots$ in $\B$.
	By the definition of saturation, each transition $p_i \goesto {\symb_i} p_{i+1}$ in $\B$
	originates from a concrete transition $\hat p_i \goesto {\symb_i} \hat p_{i+1}$ in $\A$
	for some $\hat p_i \equiv p_i$ and $\hat p_{i+1} \equiv p_{i+1}$.
	Thus, by the definition of quotienting,
	$\pi = [p_0] \goesto {\symb_0} [p_1] \goesto {\symb_1} \cdots$ is an initial fair run in $\C$,
	which shows $\B \languageinclusion \C$ and thus concludes the proof.
\end{proof}
On the other hand,
$\makesaturationrel \id \delayedtraceinclusion$ and $\makesaturationrel \delayedtraceinclusionrev \id$,
using the coarser delayed trace inclusion, are not GFS.
(The same phenomenon happens w.r.t.~GFQ relations; cf.~Fig.~\ref{fig:GFQ_relations}.)
In order to see that $\makesaturationrel \id \delayedtraceinclusion$ is not GFS,
consider the automaton $\A$ from Fig.~\ref{fig:decont:not:GFS} (without the dashed transition).
We have $p \delayedtraceinclusion q$:
If Spoiler plays $pq^\omega$,
then Duplicator replies with $qrs^\omega$,
if Spoiler plays $pq^nrs^\omega$ for $n \geq 1$,
then Duplicator replies with $q^{n+1}rs^\omega$,
and in both cases the delayed acceptance condition is satisfied.
Since there is a transition $q \goesto a q$,
the saturated automaton $\saturate{\A}{\makesaturationrel \id \delayedtraceinclusion}$
has the additional dashed transition $q \goesto a p$,
and now it accepts the new word $a^\omega$ not previously accepted.
Similarly, in order to see that $\makesaturationrel \delayedtraceinclusionrev \id$ is not GFS,
consider the automaton $\A$ from Fig.~\ref{fig:decont:not:GFS:two} (without the dashed transition).
We have $q \delayedtraceinclusion r$,
and the transition $q \goesto a q$
induces the additional dashed transition $r \goesto a q$
in the saturated automaton $\saturate{\A}{\makesaturationrel \delayedtraceinclusionrev \id}$,
and again the new word $a^\omega$ is suddenly accepted.

\begin{figure}
	
	\begin{tikzpicture}[on grid, node distance= .6cm and 1.6cm]
		\tikzstyle{vertex} = [smallstate]

		\path node [vertex, initial, accepting] (p) {$p$};
		\path node [vertex] (q) [right = of p] {$q$};
		\path node [vertex, accepting] (r) [right = of q] {$r$};
		\path node [vertex] (s) [right = of r] {$s$};

		\path[->]
			(p) edge node [above] {$a$} (q)
			(q) edge [bend left = 60, dashed] node [below] {$a$} (p)
			(q) edge [loop above] node {$a$} ()
			(q) edge node [above] {$a$} (r)
			(r) edge node [above] {$a$} (s)
			(s) edge [loop above] node {$a$} ();

	\end{tikzpicture}
	
	\caption{$\makesaturationrel \id \delayedtraceinclusion$ using delayed trace inclusion is not GFS.}
	\label{fig:decont:not:GFS}

\end{figure}

\begin{figure}
	
	\begin{tikzpicture}[on grid, node distance= .6cm and 1.6cm]
		\tikzstyle{vertex} = [smallstate]

		\path node [vertex, initial] (q) [right = of p] {$q$};
		\path node [vertex, accepting] (r) [right = of q] {$r$};
		\path node [vertex] (s) [right = of r] {$s$};
		\path node [vertex, accepting] (t) [right = of s] {$t$};
		\path node [vertex] (u) [right = of t] {$u$};

		\path[->]
			(q) edge [loop above] node {$a$} ()
			(q) edge node [above] {$a$} (r)
			(r) edge node [above] {$a$} (s)
			(r) edge [bend left = 60, dashed] node [below] {$a$} (q)
			(s) edge [loop above] node {$a$} ()
			(s) edge node [above] {$a$} (t)
			(t) edge node [above] {$a$} (u)
			(u) edge [loop above] node {$a$} ();

	\end{tikzpicture}
	
	\caption{$\makesaturationrel \delayedtraceinclusionrev \id$	using delayed trace inclusion is not GFS.}
	\label{fig:decont:not:GFS:two}

\end{figure}

%%% Local Variables:
%%% mode: latex
%%% TeX-master: "ROOT.tex"
%%% End:

Also $\makesaturationrel \id \fsim$ and $\makesaturationrel \fsimrev \id$ using fair simulation are not GFS.
For a simple counterexample, consider the automaton $\A$ in Fig.~\ref{fig:fsim:not:GFS} (without the dashed transition).
We have $p \fsim q \fsim r$ (and in fact, the three states are fair simulation equivalent).
Since $p \goesto a q$, 
the saturated automaton $\saturate{\A}{\makesaturationrel \fsimrev \id}$
has the additional dashed transition $q \goesto a q$,
and since $q \goesto a r$, 
the saturated automaton $\saturate{\A}{\makesaturationrel \id \fsim}$
has the additional dashed transition $q \goesto a q$.
In both cases, the saturated automaton accept the new word $a^\omega$ not previously accepted.

\begin{figure}
	
	\begin{tikzpicture}[on grid, node distance= .6cm and 1.6cm]
		\tikzstyle{vertex} = [smallstate]

		\path node [vertex, initial] (p) {$p$};
		\path node [vertex, accepting, right = of p] (q) {$q$};
		\path node [vertex] (r) [right = of q] {$r$};

		\path[->]
			(p) edge [loop above] node {$a$} ()
			(p) edge node [above] {$a$} (q)
			(q) edge node [above] {$a$} (r)
			(q) edge [dashed, loop above] node {$a$} ()
			(r) edge [loop above] node {$a$} ()
			;

	\end{tikzpicture}
	
	\caption{$\makesaturationrel \id \fsim$ and $\makesaturationrel \fsimrev \id$ using fair simulation are not GFS.}
	\label{fig:fsim:not:GFS}

\end{figure}
%%% Local Variables:
%%% mode: latex
%%% TeX-master: "ROOT.tex"
%%% End:

These counterexamples do not apply in the special case 
where the newly added transitions are transient in the saturated automaton.

\begin{theorem}\label{thm:sat_transient}
	The relation $\makesaturationrel \fairtraceinclusionrev
        \fairtraceinclusion$ is GFS on NBA,
	provided that the newly added transitions are transient in the saturated automaton.
\end{theorem}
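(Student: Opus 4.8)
The plan is to establish the nontrivial inclusion $\saturate{\A}{\makesaturationrel{\fairtraceinclusionrev}{\fairtraceinclusion}} \languageinclusion \A$; the reverse inclusion $\A \languageinclusion \saturate{\A}{\makesaturationrel{\fairtraceinclusionrev}{\fairtraceinclusion}}$ is immediate since saturation only adds transitions. Write $\B = \saturate{\A}{\makesaturationrel{\fairtraceinclusionrev}{\fairtraceinclusion}}$ and call a transition of $\B$ \emph{new} if it is not already a transition of $\A$. Fix $w \in \lang{\B}$ together with an initial fair trace $\pi$ on $w$ in $\B$.

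The key use of the transience hypothesis is to bound how often new transitions occur. Since every new transition is transient, it occurs at most once along $\pi$, and since there are only finitely many transitions in total, $\pi$ uses new transitions at only finitely many positions. The proof then proceeds by induction on the number of positions at which $\pi$ uses a new transition. In the base case this number is $0$, so $\pi$ is already a trace of $\A$ and $w \in \lang{\A}$.

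For the inductive step, let $p_{N-1} \goesto{\symb_{N-1}} p_N$ be the \emph{last} new transition used by $\pi$, so that the suffix $\suffix{\pi}{N}$ from $p_N$ uses only $\A$-transitions and is therefore a fair trace of $\A$ on $\suffix{w}{N}$ (fairness of the suffix follows from fairness of $\pi$, since the discarded prefix is finite). By the definition of saturation there is an original transition $p' \goesto{\symb_{N-1}} r'$ in $\delta$ with $p_{N-1} \fairtraceinclusionrev p'$ and $p_N \fairtraceinclusion r'$. I would now reroute the tail of $\pi$ inside $\A$ in three steps: (i) from $p_N \fairtraceinclusion r'$ obtain a fair $\A$-trace $\rho$ from $r'$ matching $\suffix{\pi}{N}$ on $\suffix{w}{N}$; (ii) prepend the original transition $p' \goesto{\symb_{N-1}} r'$ to $\rho$, yielding a fair $\A$-trace $\tau$ from $p'$ on $\suffix{w}{N-1}$; (iii) using $p_{N-1} \fairtraceinclusionrev p'$, which by definition means $p' \fairtraceinclusion p_{N-1}$, obtain a fair $\A$-trace $\sigma$ from $p_{N-1}$ matching $\tau$ on $\suffix{w}{N-1}$. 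Replacing the portion of $\pi$ from $p_{N-1}$ onwards by $\sigma$ produces the trace $\prefix{\pi}{N-1} \cdot \sigma$, which is initial and fair (fair because $\sigma$ is), lies in $\B$ (as $\prefix{\pi}{N-1}$ is a prefix of $\pi$ and $\sigma$, being an $\A$-trace, is a $\B$-trace), and uses strictly fewer new transitions than $\pi$, precisely because we have dropped the last occurrence while $\sigma$ introduces none. Applying the induction hypothesis to this shorter witness gives $w \in \lang{\A}$.

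The subtle point, and the reason the source endpoint is compared by the \emph{reversed} relation $\fairtraceinclusionrev$ rather than by a backward relation, is step (iii): we never try to reach the source $p'$ of the justifying transition (we have no backward information about $p'$), but instead exploit $p' \fairtraceinclusion p_{N-1}$ to push the \emph{entire} fair continuation back onto $p_{N-1}$, a state that the prefix of $\pi$ has already reached in $\B$. The main obstacle is really the counting argument: transience is exactly what rules out the standard counterexample of a new self-loop on an accepting state used infinitely often (as in Fig.~\ref{fig:fsim:not:GFS}), and it is what makes the peeling-off induction terminate. I expect no difficulty beyond carefully tracking that each rerouting stays within $\A$ (so that fair trace inclusion, which is defined relative to $\A$, remains applicable) and that the replacement trace $\sigma$ genuinely removes one new-transition use without creating others.
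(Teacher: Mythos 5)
Your proof is correct and follows essentially the same route as the paper's: transience bounds the occurrences of new transitions along the witnessing trace in $\B$, so the suffix after the last new transition is already a fair trace of $\A$, and the fair continuation is then pushed backwards through a justifying $\A$-transition $q \goesto{\symb} q'$ via the two transfers $p_{i+1} \fairtraceinclusion q'$ and $q \fairtraceinclusion p_i$ (your steps (i)--(iii)). The only, immaterial, difference is bookkeeping: the paper makes a single backward pass constructing a fair $\A$-trace $\pi_i$ from every position $i$ of the prefix (old transitions being handled trivially via reflexivity of $\fairtraceinclusion$), whereas you reroute only at the last new transition and induct on the number of new-transition occurrences.
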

\begin{proof}
	Let $\B = \saturate{\A}{\makesaturationrel \fairtraceinclusionrev \fairtraceinclusion}$,
	and we assume that the new transitions in $\B$ which are not in $\A$ are transient in $\B$.
	Thus, for a word $w = \symb_0 \symb_1 \cdots$,
	an initial and fair trace $\pi = p_0 \goesto {\symb_0} p_1 \goesto {\symb_1} \cdots$ in $\B$
	ultimately does not contain any transition which is not already in $\A$,
	i.e., there exists a $k$ s.t.~$\suffix \pi k$ is a fair trace in $\A$.
	For every $i < k$ and for every transition $p_i \goesto {\symb_{i}} p_{i+1}$ in $\B$,
	there exists a transition $q \goesto {\symb_{i}} q'$ in $\A$ 
	s.t.~$q \fairtraceinclusion p_i$ and $p_{i+1} \fairtraceinclusion q'$.
	We proceed backwards and we build a sequence $\pi_k, \pi_{k-1}, \dots, \pi_0$
	s.t.~$\pi_i$ for $i\leq k$ is a fair trace in $\A$ starting in $p_i$ and reading the suffix $\suffix w i$.
	Then, $\pi_0$ is an initial fair trace witnessing $w \in \lang \A$.
	Assume $\pi_{i+1}$ starting in $p_{i+1}$ is already constructed.
	Since $p_{i+1} \fairtraceinclusion q'$,
	there exists a fair trace $\pi'$ from $q'$ in $\A$ reading $\suffix w {i+1}$,
	and since $q \goesto {\symb_{i}} q'$,
	there exists a fair trace $\pi'$ from $q$ in $\A$ reading $\suffix w {i}$.
	Since $q \fairtraceinclusion p_i$, we deduce the existence of the fair trace $\pi_i$ from $p_i$ in $\A$ reading $\suffix w {i}$.
\end{proof}

Note that the criterion in Theorem~\ref{thm:sat_transient} 
requires that the added transitions are transient in the new saturated automaton
rather than in the original one.
This is different from the transition pruning criterion 
in Theorem~\ref{thm:prune_transient} that requires certain transitions to be
transient in the original automaton, and thus also in the new pruned automaton.
This makes it difficult to apply Theorem~\ref{thm:sat_transient} in practice, 
since adding some transition might cause another added transition to become
non-transient and vice-versa, i.e., there is not always a unique maximal
solution.

Dually to Theorem~\ref{thm:GFS:delayedfixedwordsimulation}, we obtain GFS
relations if both source and target states are compared w.r.t.~backward
relations (but note that the directions of the relations are inverted here).

\begin{theorem}\label{thm:GFS:bwdirecttraceinclusion}
	The relation $\makesaturationrel \bwdirecttraceinclusion
        \bwdirecttraceinclusionrev$ using backward direct trace inclusion is
        GFS on NBA.
\end{theorem}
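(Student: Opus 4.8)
The plan is to establish the non-trivial inclusion $\B \languageinclusion \A$ for $\B = \saturate{\A}{\makesaturationrel{\bwdirecttraceinclusion}{\bwdirecttraceinclusionrev}}$; the reverse inclusion $\A \languageinclusion \B$ is immediate since saturation only adds transitions. So I fix $w = \symb_0\symb_1\cdots \in \lang\B$ together with an initial fair trace $\pi = p_0 \goesto{\symb_0} p_1 \goesto{\symb_1}\cdots$ in $\B$. Unwinding the definition of $\makesaturationrel{\bwdirecttraceinclusion}{\bwdirecttraceinclusionrev}$, the key observation is that whenever $\pi$ uses a transition $p_i \goesto{\symb_i} p_{i+1}$ that is \emph{new} (not already present in $\A$), there is a witnessing transition $q \goesto{\symb_i} q'$ in $\A$ with $p_i \bwdirecttraceinclusion q$ and $q' \bwdirecttraceinclusion p_{i+1}$. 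Since backward direct trace inclusion talks about initial \emph{finite} traces ending in a state, these relations constrain the past rather than the future. Dually to Theorem~\ref{thm:GFS:delayedfixedwordsimulation}, I therefore will not build a trace of $\A$ forward; instead I will construct finite initial traces of $\A$ for longer and longer prefixes of $w$ and then pass to an infinite limit.

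Concretely, I would prove by induction on $i$ the invariant that there is an initial finite $\symb_0\cdots\symb_{i-1}$-trace $\rho_i$ of $\A$ ending in $p_i$ that matches $\prefix\pi i$ on accepting states \emph{pointwise}, i.e.\ $\mathcal C^\mathrm{bw}_F(\prefix\pi i, \rho_i)$ holds, so that $p_j \in F$ implies $(\rho_i)_j \in F$ for all $j \le i$. The base case $i=0$ takes the one-state trace $p_0$, which is initial in $\A$. For the step, if $p_i \goesto{\symb_i} p_{i+1}$ already lies in $\A$ I simply extend $\rho_i$ by it. Otherwise I chain the two inclusions through the witness: from $p_i \bwdirecttraceinclusion q$ I turn $\rho_i$ into an initial trace ending in $q$ (preserving the pointwise accepting-state condition), append the genuine $\A$-transition $q \goesto{\symb_i} q'$, and finally use $q' \bwdirecttraceinclusion p_{i+1}$ to obtain $\rho_{i+1}$ ending in $p_{i+1}$. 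Because each application of $\bwdirecttraceinclusion$ matches accepting states \emph{at the same positions} (the pointwise condition $\mathcal C^\mathrm{bw}_F$), and because $\rho_{i+1}$ ends exactly in $p_{i+1}$, the pointwise invariant is preserved at every index $j \le i+1$ by transitivity of the pointwise implications.

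Finally I would arrange the family $\set{\rho_i}_{i \ge 0}$ into the finitely-branching tree of finite initial $w$-prefix traces of $\A$ and invoke König's Lemma to extract an infinite initial $w$-trace $\rho_\infty$. The step I expect to be the real obstacle is arguing that $\rho_\infty$ is \emph{fair}, not merely infinite; this is the same delicate point as in the proof of Theorem~\ref{lem:bwincl_GFQ}, and unbounded accepting \emph{counts} alone would not suffice. Here the pointwise invariant is exactly what rescues the argument: letting $J = \setof{j}{p_j \in F}$ be the set of accepting positions of $\pi$ (infinite, since $\pi$ is fair), the invariant guarantees that for each $j \in J$ every $\rho_i$ with $i \ge j$ carries an accepting state at position $j$. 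As each finite prefix of any infinite branch of the tree coincides with a prefix of some $\rho_i$, the branch has an accepting state at every position of $J$ and is hence fair. This produces an initial fair trace of $\A$ on $w$, so $w \in \lang\A$, which completes the proof.
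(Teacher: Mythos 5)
Your proof is correct and follows essentially the same route as the paper's: the same inductive construction of finite initial traces $\rho_i$ of $\A$ ending in $p_i$, obtained by chaining $p_i \bwdirecttraceinclusion q$, the witnessing $\A$-transition $q \goesto{\symb_i} q'$, and $q' \bwdirecttraceinclusion p_{i+1}$, followed by K\"onig's Lemma. If anything, your write-up is slightly more careful than the paper's: you maintain the pointwise invariant $\mathcal C^\mathrm{bw}_F(\prefix{\pi}{i}, \rho_i)$ --- which the construction indeed delivers, and which is exactly what makes the limit branch fair --- whereas the paper records only the weaker counting invariant (``at least as many accepting states'') and calls the K\"onig step routine.
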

\begin{proof}
	Let $\B = \saturate{\A}{\makesaturationrel \bwdirecttraceinclusion \bwdirecttraceinclusionrev}$.
	We prove the non-trivial inclusion	$\B \languageinclusion \A$.
	Let $w = \symb_0 \symb_1 \cdots \in \lang\B$.
	There exists an initial fair trace $\pi = p_0 \goesto {\symb_0} p_1 \goesto {\symb_1} \cdots$ in $\B$,
	which is not necessarily a trace in $\A$, since it might use new transitions introduced by the saturation procedure.
	However, for every transition $p_i \goesto {\symb_{i}} p_{i+1}$ in $\B$
	there exists a transition $q \goesto {\symb_{i}} q'$ in $\A$ 
	s.t.~$p_i \bwdirecttraceinclusion q$ and $q' \bwdirecttraceinclusion p_{i+1}$.
	By the definition of $\bwdirecttraceinclusion$,
	we construct inductively a sequence $\pi_0, \pi_1, \dots$ of finite traces in $\A$
	s.t.~each $\pi_i$ is initial, ends in $p_i$, and contains at least
        as many accepting states as does $\prefix{\pi} i$.
        The base case of $i=0$ is trivial.
	For the induction step we assume 
        that such a $\pi_i$ is already constructed, and consider the transition $q \goesto {\symb_{i}} q'$.
	Since $p_i \bwdirecttraceinclusion q$, there exists an initial trace
        in $\A$ ending in $q$.
	We extend this trace by the transition 
        $q \goesto {\symb_{i}} q'$ in $\A$ above 
	and use $q' \bwdirecttraceinclusion p_{i+1}$ to extract the
        required initial trace $\pi_{i+1}$ 
        in $\A$ ending in $p_{i+1}$.
	A routine application of K\"onig's Lemma shows the existence of an initial and fair trace $\pi_\infty$ in $\A$,
	thus showing $w \in \lang \A$.
\end{proof}

%The GFS status of the examined saturation relations is summarized in Fig.~\ref{fig:GFS_relations}.
%\input{GFS_table}

\subsection{Saturation of NFA}
\label{sec:extensions:theory:NFA}

The full picture of GFS preorders is much simpler for finite words than for infinite ones.
Criteria based on delayed and fair simulation (resp.~trace-inclusion) cannot be used for saturation of NFA, of course.
However, one can use forward and backward trace inclusion over finite words,
yielding Theorems~\ref{thm:GFS:NFA:forward} and \ref{thm:GFS:NFA:backward} below.
Their proofs are straightforward adaptions of the proofs of 
Theorems~\ref{thm:GFS:delayedfixedwordsimulation} and \ref{thm:GFS:bwdirecttraceinclusion}, respectively,
with the difference that over finite words one can use induction on the length of the accepted word,
and thus avoid K\"onig's Lemma.
Finally, there is no analogue of Theorem~\ref{thm:sat_transient} about adding transient transitions,
since this is only useful when states are compared w.r.t.~fair trace inclusion,
a notion that does not apply to NFA.

\begin{theorem}\label{thm:GFS:NFA:forward}
	The relation $\makesaturationrel \fininclrev \finincl$ using forward trace inclusion is GFS on NFA.
\end{theorem}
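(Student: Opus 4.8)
The plan is to adapt the proof of Theorem~\ref{thm:GFS:delayedfixedwordsimulation} to finite words, replacing fixed-word delayed simulation by forward finite trace inclusion $\finincl$ and replacing the K\"onig's Lemma argument by a plain induction on the length of the accepted word. Write $\B = \saturate{\A}{\makesaturationrel \fininclrev \finincl}$. Since saturation only adds transitions we automatically have $\A \languageinclusion \B$, so it suffices to prove $\B \languageinclusion \A$. Throughout, for a state $s$ let $L_s$ denote the set of finite words admitting a final $s$-trace \emph{in $\A$}; then $\finincl$, computed on $\A$, is exactly $\subseteq$ between these languages, i.e.\ $p \finincl q$ iff $L_p \subseteq L_q$. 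It is essential to keep in mind that $\finincl$ is valid only w.r.t.\ $\A$ and not w.r.t.\ $\B$.

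The key observation I would isolate first is a transition-witness lemma: for every transition $s \goesto \symb t$ of $\B$ there is a transition $p \goesto \symb r$ of $\A$ with $p \finincl s$ and $t \finincl r$, i.e.\ $L_p \subseteq L_s$ and $L_t \subseteq L_r$. This holds by the definition of $\makesaturationrel \fininclrev \finincl$ and covers uniformly both the newly added transitions and the transitions already present in $\A$ (for the latter take $p = s$, $r = t$, using reflexivity of $\finincl$). The point of the chosen directions of the two inclusions is that they make the $\A$-languages chain correctly: from $p \goesto \symb r$ in $\A$ one gets $\symb L_r \subseteq L_p$, and combining with $L_t \subseteq L_r$ and $L_p \subseteq L_s$ yields $\symb L_t \subseteq L_s$. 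In words, the added transition $s \goesto \symb t$ introduces no word into the $\A$-language $L_s$ that was not already there.

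With this in hand, I would prove by induction on $|v|$ that, for every state $s$, a final $v$-trace from $s$ in $\B$ implies a final $v$-trace from $s$ in $\A$ (i.e.\ $v \in L_s$); instantiating $s$ at an initial state then gives $\lang \B \languageinclusion \lang \A$. The base case $v = \varepsilon$ is immediate since $\A$ and $\B$ share the same accepting states. For the step, write $v = \symb v'$ and peel off the first transition $s \goesto \symb t$ of the $\B$-trace, so that there is a final $v'$-trace from $t$ in $\B$; the induction hypothesis gives $v' \in L_t$. Applying the transition-witness lemma to $s \goesto \symb t$ produces $p \goesto \symb r$ in $\A$ with $L_p \subseteq L_s$ and $L_t \subseteq L_r$; then $v' \in L_t \subseteq L_r$, prepending $p \goesto \symb r$ gives $v = \symb v' \in L_p$, and finally $L_p \subseteq L_s$ yields $v \in L_s$, as required.

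The routine parts are the two language-inclusion manipulations; the only place demanding care, and what I expect to be the main (conceptual rather than technical) obstacle, is fixing the orientation of the two trace inclusions. The source endpoint must be compared with the reversed relation $\fininclrev$ and the target endpoint with $\finincl$, exactly mirroring $\makesaturationrel \delayedfixedwordsimulationrev \delayedfixedwordsimulation$ in Theorem~\ref{thm:GFS:delayedfixedwordsimulation}; the opposite orientation breaks the chain $\symb L_t \subseteq L_s$ and indeed fails to be GFS. A secondary point to state explicitly is that all occurrences of $\finincl$ refer to $\A$-languages even though the traces we start from live in $\B$, which is what makes the induction well-founded.
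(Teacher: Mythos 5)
Your proposal is correct and follows essentially the same route as the paper's proof: a plain induction on the length of the word, using at each step the defining property of $\makesaturationrel \fininclrev \finincl$ to replace a $\B$-transition $s \goesto \symb t$ by an $\A$-witness $p \goesto \symb r$ and then transferring final traces via the two inclusions in exactly the paper's order. Recasting the argument in terms of the state languages $L_s$ (exploiting that on NFA $\finincl$ is precisely $L_p \subseteq L_q$) and making the reflexivity case explicit are only presentational differences, not a different proof.
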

\begin{proof}
	Let $\B = \saturate{\A}{\makesaturationrel \fininclrev \finincl}$.
	We prove the non-trivial inclusion	$\B \languageinclusion \A$.
        We show, by induction on $n$, that for every word $w$ of length $|w|=n$ and every finite
	final trace $\pi_0$ on $w$ in $\B$ that starts from some state $p_0$, there exists a 
	corresponding finite final trace 
        $\pi_1$ on $w$ in $\A$ of length $n$ from $p_0$.
        The base case of $n=0$ is trivial.
        For the induction step, let $w=\symb_0 w'$ with $|w'|=n-1$ and
        let $\pi_0 = p_0 \goesto {\symb_0} p_1\ \pi_0'$ be the trace in $\B$.
        There exists a transition $q_0 \goesto {\symb_0} q_1$ in $\A$ 
        s.t.~$q_0 \finincl p_0$ and $p_1 \finincl q_1$.
        By the induction hypothesis, we know that there exists a final trace $\pi_1'$
        on $w'$ of length $n-1$ from $p_1$ in $\A$.
        Since $p_1 \finincl q_1$, there also exists a final trace $\pi_1''$ on
	$w'$ of length $n-1$ from $q_1$ in $\A$.
        Thus we have a final trace $\pi_1''' = q_0 \goesto {\symb_0} q_1\ \pi_1''$ 
        on $w$ of length $n$ from $q_0$ in $\A$.
        Since $q_0 \finincl p_0$, there also exists a final trace
        $\pi_1$ on $w$ of length $n$ from $p_0$ in $\A$.
\end{proof}

\begin{theorem}\label{thm:GFS:NFA:backward}
	The relation $\makesaturationrel \bwfinincl \bwfininclrev$ using backward trace inclusion is GFS on NFA.
\end{theorem}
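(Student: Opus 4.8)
The plan is to mirror the proof of Theorem~\ref{thm:GFS:bwdirecttraceinclusion}, replacing the appeal to K\"onig's Lemma by a finite induction on the length of the accepted word. Write $\B = \saturate{\A}{\makesaturationrel \bwfinincl \bwfininclrev}$. Since saturation only adds transitions, the inclusion $\A \languageinclusion \B$ is immediate, so it suffices to establish the reverse inclusion $\B \languageinclusion \A$.

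First I would fix a word $w = \symb_0 \cdots \symb_{n-1} \in \lang \B$ together with an initial and final trace $\pi = p_0 \goesto {\symb_0} p_1 \goesto {\symb_1} \cdots \goesto {\symb_{n-1}} p_n$ in $\B$, so that $p_0 \in I$ and $p_n \in F$. The trace $\pi$ need not be a trace of $\A$, since it may use transitions created by saturation. However, by the definition of $\makesaturationrel \bwfinincl \bwfininclrev$, whenever a transition $p_i \goesto {\symb_i} p_{i+1}$ of $\pi$ is not already present in $\A$, there is a transition $q \goesto {\symb_i} q'$ in $\A$ with $p_i \bwfinincl q$ and $q' \bwfinincl p_{i+1}$. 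Getting these two directions right, and keeping in mind that here $\bwfinincl$ is the acceptance-blind notion that only constrains \emph{initial} traces, is the one place where care is needed.

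Then I would prove, by induction on $i$ ranging from $0$ to $n$, that $\A$ admits an initial finite trace on the prefix $\symb_0 \cdots \symb_{i-1}$ ending in the state $p_i$. The base case $i = 0$ is the single-state trace $p_0$, which is initial since $p_0 \in I$. For the inductive step, given such a trace ending in $p_i$, I would consider the transition $p_i \goesto {\symb_i} p_{i+1}$: if it lies in $\A$, we simply append it; otherwise, using the witness $q \goesto {\symb_i} q'$ above, the relation $p_i \bwfinincl q$ yields an initial trace of $\A$ on the same prefix ending in $q$, appending $q \goesto {\symb_i} q'$ extends it to one ending in $q'$, and finally $q' \bwfinincl p_{i+1}$ converts it into an initial trace ending in $p_{i+1}$.

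For $i = n$ this produces an initial trace of $\A$ on all of $w$ ending in $p_n \in F$, hence a final trace, so $w \in \lang \A$. This is genuinely simpler than the NBA case of Theorem~\ref{thm:GFS:bwdirecttraceinclusion}: because NFA acceptance depends only on the final state, there is no need to track the number or placement of accepting states along the way, and the finite induction makes K\"onig's Lemma unnecessary. I expect the only real obstacle to be bookkeeping the orientation of $\bwfinincl$ and $\bwfininclrev$ at the source and target endpoints, which the monotonicity and definitions already pin down.
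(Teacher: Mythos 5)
Your proof is correct and follows essentially the same route as the paper's: a finite induction on the length of the word (the paper phrases it as induction on $|w|$ peeling off the last transition, you phrase it as a forward induction along the prefixes of the trace in $\B$), using the saturation witness $q \goesto{\symb_i} q'$ with $p_i \bwfinincl q$ and $q' \bwfinincl p_{i+1}$ to convert initial traces, exactly as in the paper's adaptation of Theorem~\ref{thm:GFS:bwdirecttraceinclusion} without K\"onig's Lemma. The orientations of $\bwfinincl$ at the two endpoints, which you flag as the delicate point, are handled correctly.
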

\begin{proof}
	Let $\B = \saturate{\A}{\makesaturationrel \bwfinincl \bwfininclrev}$.
	We prove the non-trivial inclusion	$\B \languageinclusion \A$.
        We show, by induction on $n$, that for every word $w$ with $|w|=n$ and every 
	finite initial trace $\pi_0$ on $w$ in $\B$ that ends in some state $p_n$, there exists a 
	corresponding finite initial trace 
        $\pi_1$ on $w$ in $\A$ that ends in $p_n$.
        The base case of $n=0$ is trivial.
        For the induction step, let $w=w'\symb_n$ with $|w'|=n-1$ and
        let $\pi_0 = \pi_0'\ p_{n-1} \goesto {\symb_n} p_n$ be the trace in $\B$.
        There exists a transition $q_{n-1} \goesto {\symb_n} q_n$ in $\A$ 
        s.t.~$p_{n-1} \bwfinincl q_{n-1}$ and $q_n \bwfinincl p_n$.
        By the induction hypothesis (applied to word $w'$, trace $\pi_0'$ and state
	$p_{n-1}$), we know that there exists an initial trace $\pi_1'$
        on $w'$ in $\A$ that ends in $p_{n-1}$.
        Since $p_{n-1} \bwfinincl q_{n-1}$, there also exists an initial trace $\pi_1''$ on
	$w'$ in $\A$ that ends in $q_{n-1}$.
        Thus we have an initial trace $\pi_1''' = \pi_1''\ q_{n-1} \goesto {\symb_n} q_n$ 
        on $w$ in $\A$ that ends in $q_n$.
        Since $q_n \bwfinincl p_n$, there also exists an initial trace
        $\pi_1$ on $w$ in $\A$ that ends in $p_n$.
\end{proof}

Fig.~\ref{fig_saturation_example} shows a worked example where a previously
irreducible 6-state NFA is transformed into an equivalent 5-state NFA by
applying saturation and transition pruning. (A corresponding example for
B\"uchi automata can be obtained by adding a self-loop at state $u$.)

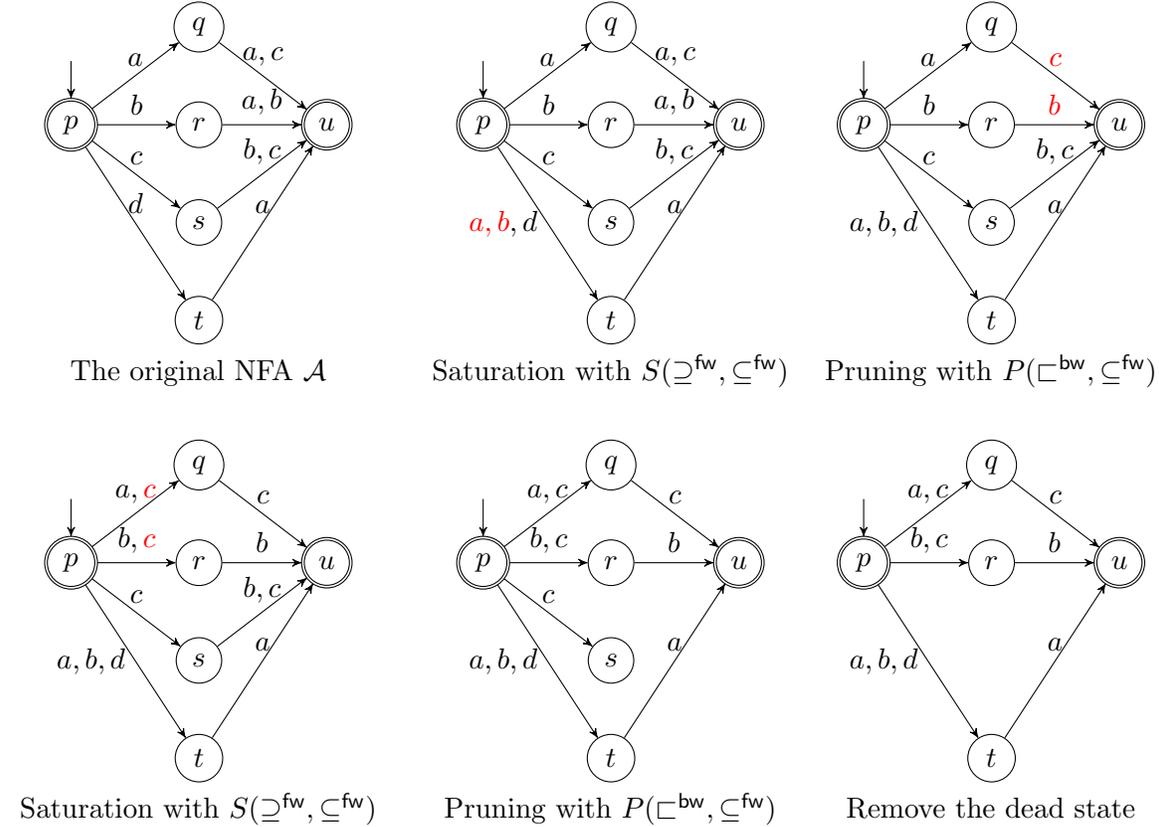
\begin{figure}[htbp]
	\begin{tabular}{ccc}
		\begin{tikzpicture}[on grid, node distance= 1.3cm and 1.7cm]
			\tikzstyle{vertex} = [smallstate]

			\path node [vertex, initial above, accepting] (p) {$p$};
			\path node [vertex] (q) [above right = of p] {$q$};
                        \path node [vertex] (r) [right = of p] {$r$};
                        \path node [vertex] (s) [below right = of p] {$s$};
                        \path node [vertex] (t) [below = of s] {$t$};
			\path node [vertex, accepting] (u) [right = of r] {$u$};

			\path[->]
				(p) edge node [above] {$a$} (q)
                                (p) edge node [above] {$b$} (r)
                                (p) edge node [above] {$c$} (s)
                                (p) edge node [above] {$d$} (t)
				(q) edge node [above] {$a,c$} (u)
				(r) edge node [above] {$a,b$} (u)
				(s) edge node [above] {$b,c$} (u)
                                (t) edge node [above] {$a$} (u);
		\end{tikzpicture}
		&
		\quad
\begin{tikzpicture}[on grid, node distance= 1.3cm and 1.7cm]
			\tikzstyle{vertex} = [smallstate]

			\path node [vertex, initial above, accepting] (p) {$p$};
			\path node [vertex] (q) [above right = of p] {$q$};
                        \path node [vertex] (r) [right = of p] {$r$};
                        \path node [vertex] (s) [below right = of p] {$s$};
                        \path node [vertex] (t) [below = of s] {$t$};
			\path node [vertex, accepting] (u) [right = of r] {$u$};

			\path[->]
				(p) edge node [above] {$a$} (q)
                                (p) edge node [above] {$b$} (r)
                                (p) edge node [above] {$c$} (s)
                                (p) edge node [left] {${\color{red}a,b},d$} (t)
				(q) edge node [above] {$a,c$} (u)
				(r) edge node [above] {$a,b$} (u)
				(s) edge node [above] {$b,c$} (u)
                                (t) edge node [above] {$a$} (u);
		\end{tikzpicture}
		\quad
		&
\begin{tikzpicture}[on grid, node distance= 1.3cm and 1.7cm]
			\tikzstyle{vertex} = [smallstate]

			\path node [vertex, initial above, accepting] (p) {$p$};
			\path node [vertex] (q) [above right = of p] {$q$};
                        \path node [vertex] (r) [right = of p] {$r$};
                        \path node [vertex] (s) [below right = of p] {$s$};
                        \path node [vertex] (t) [below = of s] {$t$};
			\path node [vertex, accepting] (u) [right = of r] {$u$};

			\path[->]
				(p) edge node [above] {$a$} (q)
                                (p) edge node [above] {$b$} (r)
                                (p) edge node [above] {$c$} (s)
                                (p) edge node [left] {$a,b,d$} (t)
				(q) edge node [above] {${\color{red}c}$} (u)
				(r) edge node [above] {${\color{red}b}$} (u)
				(s) edge node [above] {$b,c$} (u)
                                (t) edge node [above] {$a$} (u);
		\end{tikzpicture}
		\\
		The original NFA $\A$
		&
		\quad
                Saturation with $\makesaturationrel \fininclrev \finincl$
		\quad
		&
		Pruning with $\makeprunerel{\strictbwsim}{\finincl}$
                \\
                \vspace{2mm}
\\
\begin{tikzpicture}[on grid, node distance= 1.3cm and 1.7cm]
			\tikzstyle{vertex} = [smallstate]

			\path node [vertex, initial above, accepting] (p) {$p$};
			\path node [vertex] (q) [above right = of p] {$q$};
                        \path node [vertex] (r) [right = of p] {$r$};
                        \path node [vertex] (s) [below right = of p] {$s$};
                        \path node [vertex] (t) [below = of s] {$t$};
			\path node [vertex, accepting] (u) [right = of r] {$u$};

			\path[->]
				(p) edge node [above] {$a,{\color{red}c}$} (q)
                                (p) edge node [above] {$b,{\color{red}c}$} (r)
                                (p) edge node [above] {$c$} (s)
                                (p) edge node [left] {$a,b,d$} (t)
				(q) edge node [above] {$c$} (u)
				(r) edge node [above] {$b$} (u)
				(s) edge node [above] {$b,c$} (u)
                                (t) edge node [above] {$a$} (u);
		\end{tikzpicture}
		&
		\quad
\begin{tikzpicture}[on grid, node distance= 1.3cm and 1.7cm]
			\tikzstyle{vertex} = [smallstate]

			\path node [vertex, initial above, accepting] (p) {$p$};
			\path node [vertex] (q) [above right = of p] {$q$};
                        \path node [vertex] (r) [right = of p] {$r$};
                        \path node [vertex] (s) [below right = of p] {$s$};
                        \path node [vertex] (t) [below = of s] {$t$};
			\path node [vertex, accepting] (u) [right = of r] {$u$};

			\path[->]
				(p) edge node [above] {$a,c$} (q)
                                (p) edge node [above] {$b,c$} (r)
                                (p) edge node [above] {$c$} (s)
                                (p) edge node [left] {$a,b,d$} (t)
				(q) edge node [above] {$c$} (u)
				(r) edge node [above] {$b$} (u)
                                (t) edge node [above] {$a$} (u);
		\end{tikzpicture}
		\quad
		&
\begin{tikzpicture}[on grid, node distance= 1.3cm and 1.7cm]
			\tikzstyle{vertex} = [smallstate]

			\path node [vertex, initial above, accepting] (p) {$p$};
			\path node [vertex] (q) [above right = of p] {$q$};
                        \path node [vertex] (r) [right = of p] {$r$};
                        \path node [vertex] (t) [below = of s] {$t$};
			\path node [vertex, accepting] (u) [right = of r] {$u$};

			\path[->]
				(p) edge node [above] {$a,c$} (q)
                                (p) edge node [above] {$b,c$} (r)
                                (p) edge node [left] {$a,b,d$} (t)
				(q) edge node [above] {$c$} (u)
				(r) edge node [above] {$b$} (u)
                                (t) edge node [above] {$a$} (u);
		\end{tikzpicture}
		\\
		Saturation with $\makesaturationrel \fininclrev \finincl$
		&
		\quad
                Pruning with $\makeprunerel{\strictbwsim}{\finincl}$
		\quad
		&
		Remove the dead state
	\end{tabular}
	\caption{A worked example with application of saturation and
          pruning. The initial NFA $\A$ cannot be reduced any more by just
          quotienting and pruning. However, a repeated application of
          saturation and pruning (e.g., invoke the Reduce tool \cite{RABIT}
          with option \texttt{-sat2}) yields a smaller automaton with fewer
          states (5 instead of 6) \emph{and} fewer transitions (10 instead of 11). 
        }
	\label{fig_saturation_example}

\end{figure}

%%% Local Variables:
%%% mode: latex
%%% TeX-master: "ROOT.tex"
%%% End:

\subsection{Experimental Evaluation}\label{sec:extensions:experiments}

We implemented an automaton reduction method called \emph{Heavy-k-jump-sat}
that extends the method Heavy-k-jump of Sec.~\ref{sec:heavyandlight} by an
extra outer loop that adds as many extra transitions as possible, based on the
criteria described in Sec.~\ref{sec:extensions:theory:NBA}.
We call this
\emph{transition saturation}, thus the suffix -sat in the name of the method.
As usual, we use $\transkdisim$ to approximate $\directtraceinclusion$
(which approximates $\finincl$ for NFA),
$\transkdesim$ to approximate $\delayedfixedwordsimulation$,
$\transkbwdisim$ to approximate $\bwdirecttraceinclusion$ and
$\transkbwsim$ to approximate $\bwfinincl$.

For an input B\"uchi automaton $\A_{\it init}$, Heavy-k-jump-sat works as follows:
\begin{enumerate}
\item
Reduce $\A_{\it init}$ with Heavy-k-jump and obtain $\A'$.
\item
Let $\A$ be a copy of the current automaton, i.e., $\A \defeq \A'$.
Saturate $\A'$ with transitions 
w.r.t.~$\saturationrel \defeq \makesaturationrel \transkdesimrev \transkdesim$, 
i.e., $\A' \defeq \saturate{\A'}{\saturationrel}$.
\item
Quotient $\A'$ w.r.t.~$\transkbwdisim$.
\item
Saturate $\A'$ with transitions w.r.t.~$\saturationrel \defeq \makesaturationrel \transkbwdisim
\transkbwdisimrev$, i.e., $\A' \defeq \saturate{\A'}{\saturationrel}$.
\item
Reduce $\A'$ with Heavy-k-jump.
\item
If the current automaton $\A'$ has fewer states, or the same number of states and
fewer transitions, than the automaton $\A$ (the one last seen before executing step 2.), then
goto step 2. Otherwise terminate and return $\A$, the smallest automaton seen
so far. (Note that $\A'$ might have more transitions than $\A$.)
\end{enumerate}

For NFA we use the saturation criteria from
Sec.~\ref{sec:extensions:theory:NFA}.
Heavy-k-jump-sat for NFA works as described above, except that
at Step (2)\ we saturate 
w.r.t.~$\makesaturationrel \transkdisimrev \transkdisim$ 
instead of
$\makesaturationrel \transkdesimrev \transkdesim$,
at Step (3)\ we quotient with $\transkbwsim$
instead of 
$\transkbwdisim$,
and at Step (4)\ we saturate with
$\makesaturationrel \transkbwsim \transkbwsimrev$
instead of
$\makesaturationrel \transkbwdisim \transkbwdisimrev$.

The correctness of Heavy-k-jump-sat follows from
Theorem~\ref{thm:GFS:delayedfixedwordsimulation}
and Theorem~\ref{thm:GFS:bwdirecttraceinclusion}
(resp.~Theorems~\ref{thm:GFS:NFA:forward} and \ref{thm:GFS:NFA:backward}
for NFA) and the correctness of Heavy-k-jump.

Note that the algorithm above is not optimal, in the sense that a more
aggressive application of the saturation techniques might sometimes yield an
even smaller automaton. While the number of states can never increase, the
number of transitions might fluctuate (go up and down) many times if
the steps (2)--(5) were applied repeatedly, before the number of states finally decreases
again. This is because Heavy-k-jump does not necessarily remove the same
transitions that the saturation methods have added.
The termination criterion in step (6) is more strict, since it stops 
immediately if no progress is seen, even though a continuation might possibly
yield an even smaller result.
The version above has been chosen for pragmatic reasons of balancing speed and
effectiveness. Alternatively, one might stop only when a loop is detected---% 
i.e., if the same automaton is seen twice, that is,
if $\A$ and $\A'$ are isomorphic at step (6). However, 
this could take a very long time if the number of transitions fluctuates, and
it rarely yields any significant advantage. 
On Tabakov-Vardi random NBA/NFA, the more aggressive version Heavy-k-jump-sat2
produced a different result (compared to that produced by Heavy-k-jump-sat) 
in only $<1\%$ of the test cases.

We now compare the behavior of Heavy-k-jump and Heavy-k-jump-sat. 
Given some input automaton $\A_{\it init}$, let $\A$ be the reduced automaton 
produced by Heavy-k-jump and $\A_s$ be the result of Heavy-k-jump-sat.
It follows directly from the definitions above that one of the following two
cases holds.
\begin{itemize}
\item
$\A_s$ has strictly less states than $\A$.
In this case there is no restriction on the number of transitions of $\A_s$. 
It can be lower, equal or higher than the number of transitions in $\A$.
\item
$\A_s$ has exactly the same number of states as $\A$.
In this case the number of transitions of $\A_s$ is lower than or equal to
the number of transitions in $\A$.
\end{itemize}
Thus Heavy-k-jump-sat prioritizes reducing the number of states over reducing
the number of transitions. In other words, there can be a tradeoff 
where Heavy-k-jump-sat produces an automaton with fewer states but more
transitions, compared to the one produced by Heavy-k-jump.
(Recall the empirical result from Sec.~\ref{subsec:sparseness} that
Heavy-k-jump, on average, produces automata that are not only smaller but also 
sparser than the original.)

For example, some of the NBA derived from mutual exclusion protocols
considered in Sec.~\ref{sec:protocols} can be reduced even further.
The automaton fischer.2.c.ba was reduced to 192 states and 316 transitions by
Heavy-12, to 190 states and 314 transitions by
Heavy-12-jump,
and to 177 states and 392 transitions by Heavy-12-jump-sat.
The automaton fischer.3.2.c.ba was reduced to 70 states and 96 transitions by
Heavy-12 and Heavy-12-jump, and to 27 states and 53 transitions by
Heavy-12-jump-sat.
In the first automaton we had a tradeoff between states and transitions, while
in the second automaton both were reduced.

However, empirically, on most automata this tradeoff effect between states and transitions is
not very strong. Our tests on Tabakov-Vardi random automata show that 
Heavy-k-jump-sat very often produces automata with \emph{both} fewer states and
fewer transitions, when compared to Heavy-k-jump.

Fig.~\ref{fig:BuchiH12JS} shows that the extra
effect of the saturation methods (i.e., the difference between
Heavy-k-jump and Heavy-k-jump-sat)
is very modest for B\"uchi automata, when we use our standard lookahead of $k=12$.
For transition densities ${\it td} \le 1.4$ the number of states is marginally
reduced at the expense of having a slightly higher number of transitions.
For $1.5 \le {\it td} \le 1.8$ both states and transitions are slightly
reduced. For ${\it td} \ge 1.9$ there is no difference, because the automata
produced by Heavy-12-jump are already very small.
In the interesting region of $1.5 \le {\it td} \le 1.8$, Heavy-12-jump-sat
yields automata with fewer states than Heavy-12-jump in about 10\%--25\% of
the cases, while the number of transitions is only larger in 5\%-8\% of the
cases.

\begin{figure}[htbp]
\begin{center}
\includegraphics[scale=0.7]{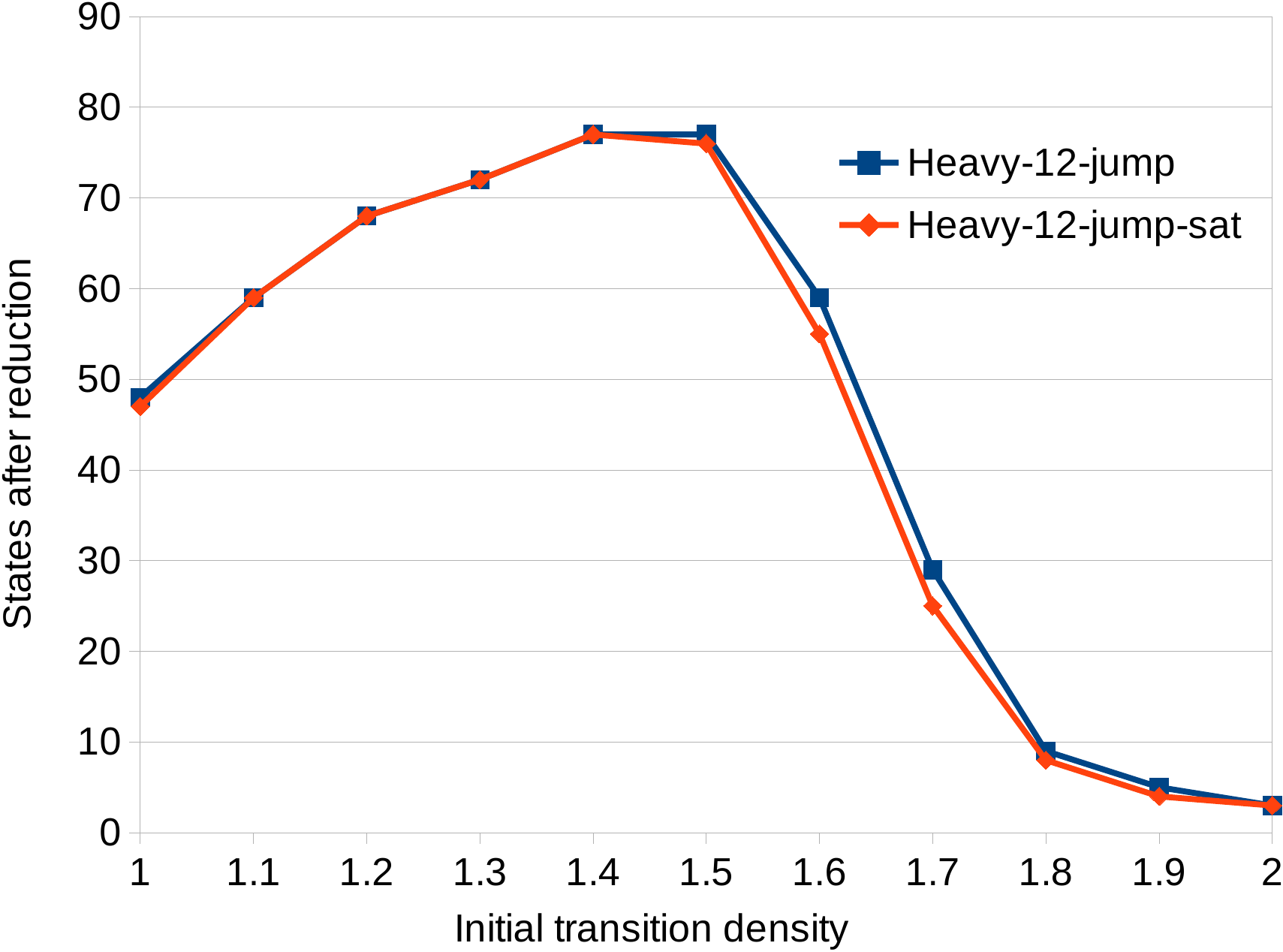}

\vspace{3mm}
\includegraphics[scale=0.7]{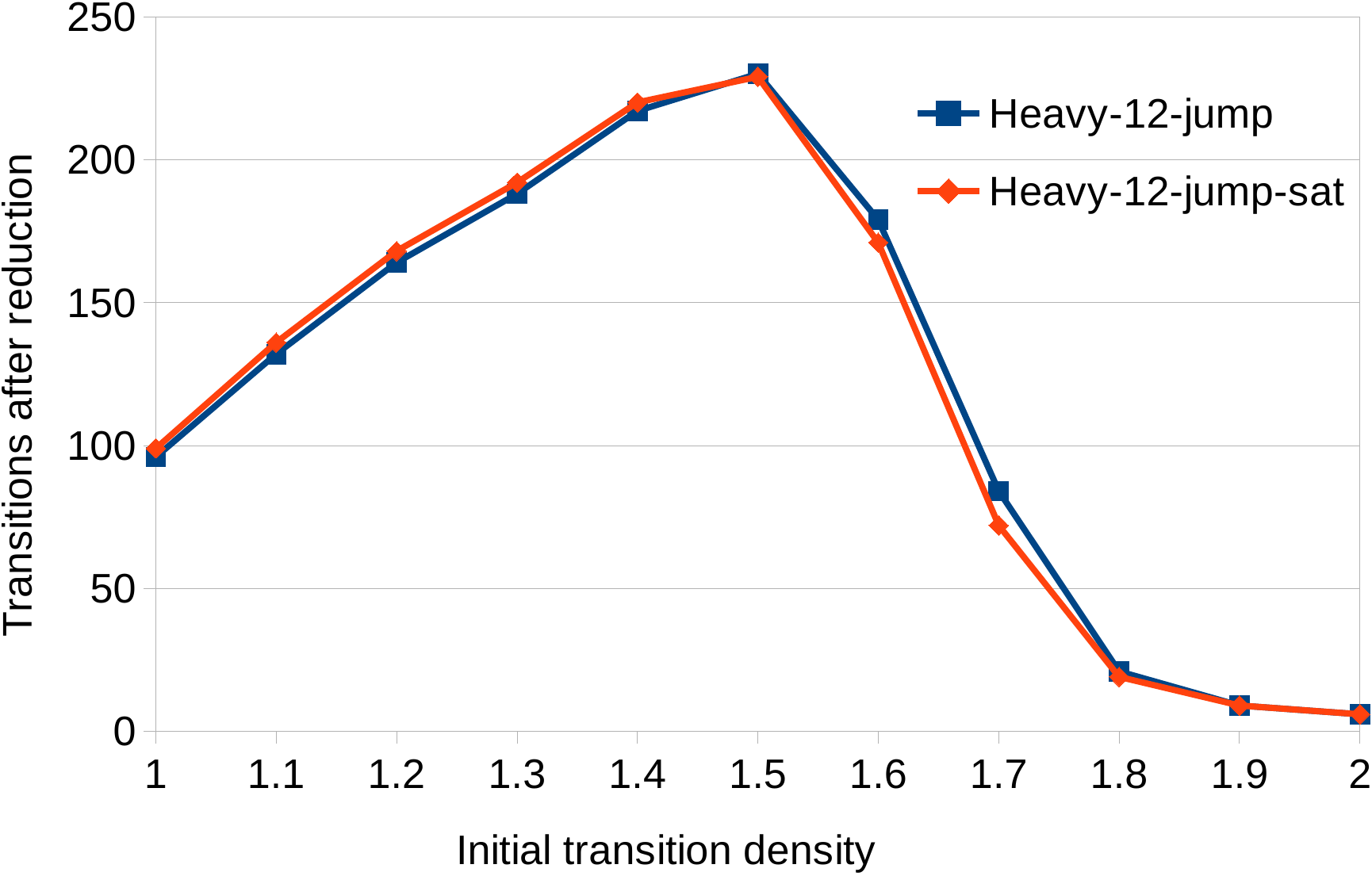}
\end{center}
\caption{We consider Tabakov-Vardi random B\"uchi automata with $n=100$, $|\Sigma|=2$,
${\it ad}=0.5$ and ${\it td}=1.0, \dots, 2.0$.
The x-axis is the transition density of the original automata.
In the upper/lower picture the y-axis is the average number of
states/transitions
of the reduced automata after
applying Heavy-12-jump and Heavy-12-jump-sat, respectively.
There is hardly any difference between the methods for ${\it td} < 1.4$ or 
${\it td} > 2.0$.
Every data point is the average of $1000$ automata.
}\label{fig:BuchiH12JS}
\end{figure}

In contrast, the saturation methods have a significant effect for NFA,
as shown in 
Fig.~\ref{fig:FiniteH12JS}.
For transition densities ${\it td} \le 1.4$ the number of states is marginally
reduced at the expense of having a moderately higher number of transitions.
For $1.5 \le {\it td} \le 1.9$ both states and transitions are significantly
reduced. For ${\it td} \ge 2.0$ there is no difference, because the automata
produced by Heavy-12-jump are already very small.
In the interesting region of $1.5 \le {\it td} \le 1.9$, Heavy-12-jump-sat
yields automata with fewer states than Heavy-12-jump in about 30\%--60\% of
the cases, while the number of transitions is only larger in 8\%-10\% of the
cases.
  
\begin{figure}[htbp]
\begin{center}
\includegraphics[scale=0.6]{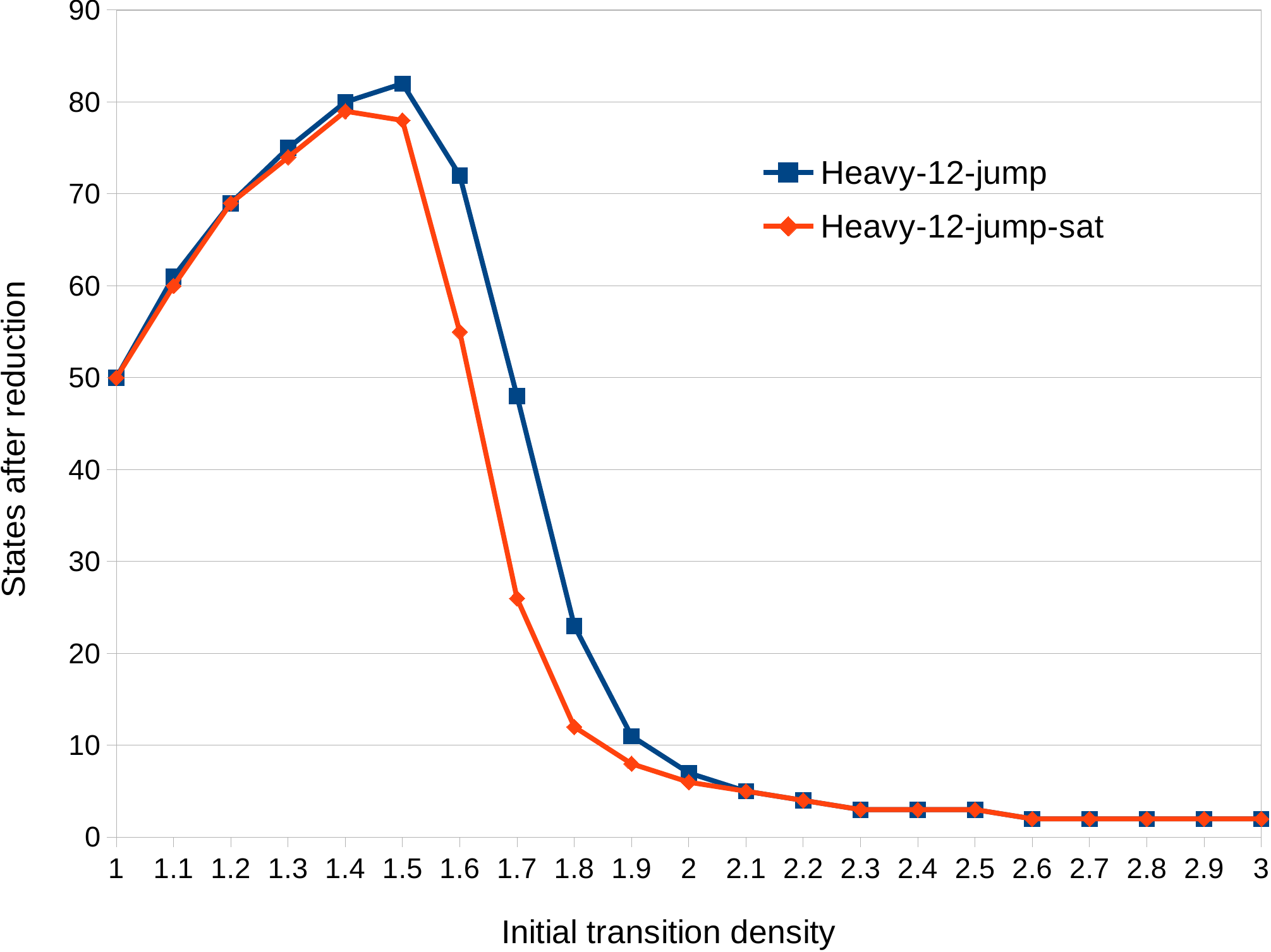}

\vspace{3mm}
\includegraphics[scale=0.6]{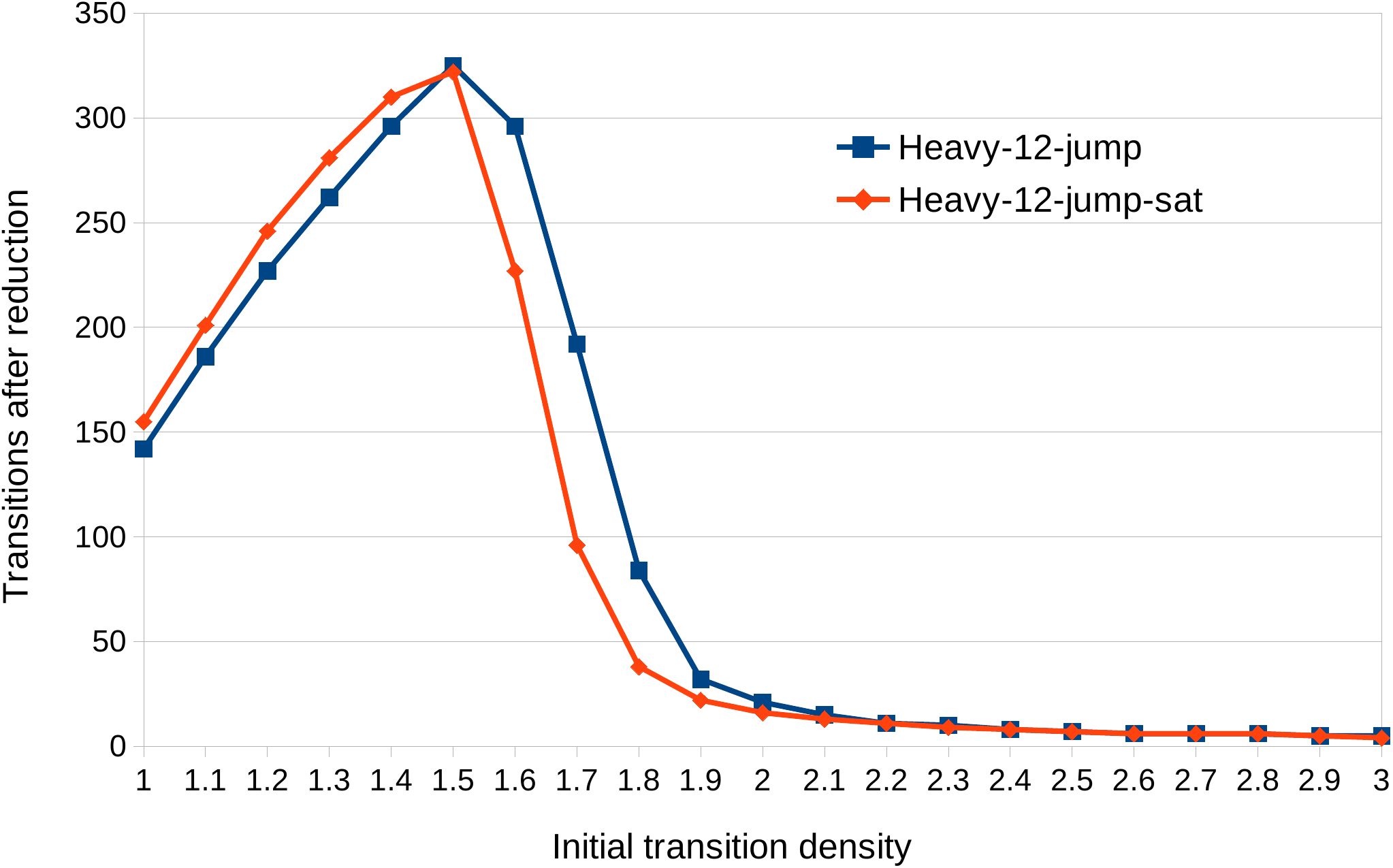}
\end{center}
\caption{We consider Tabakov-Vardi random NFA with $n=100$, $|\Sigma|=2$,
${\it ad}=0.5$ and ${\it td}=1.0, \dots, 3.0$.
The x-axis is the transition density of the original automata.
In the upper/lower picture the y-axis is the average number of
states/transitions
of the reduced automata after
applying Heavy-12-jump and Heavy-12-jump-sat, respectively.
Every data point is the average of $1000$ automata.
}\label{fig:FiniteH12JS}
\end{figure}

In Fig.~\ref{fig:HJS_buchi_speed}
we compare the speed of Heavy-12-jump and Heavy-12-jump-sat
on B\"uchi automata and NFA, respectively.
The results heavily depend on the transition density of the input automata,
but in the interesting region of $1.5 \le {\it td} \le 1.8$,
Heavy-12-jump-sat is about 2-4 times slower.

\begin{figure}[htbp]
{\bf\noindent\large B\"uchi automata}
\begin{center}
\includegraphics[scale=0.7]{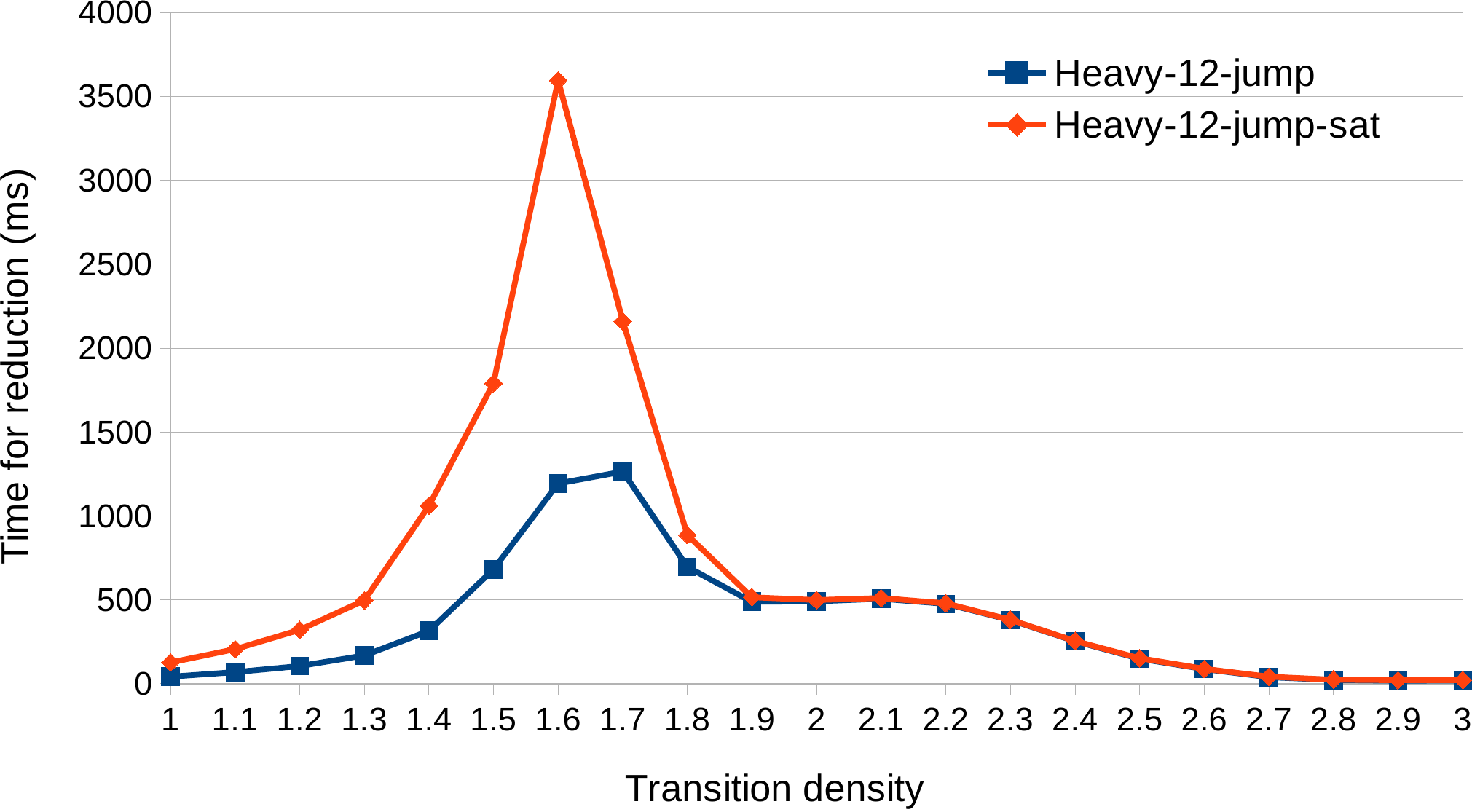}
\end{center}

\vspace{3mm}
{\bf\noindent\large Finite automata}
\begin{center}
\includegraphics[scale=0.7]{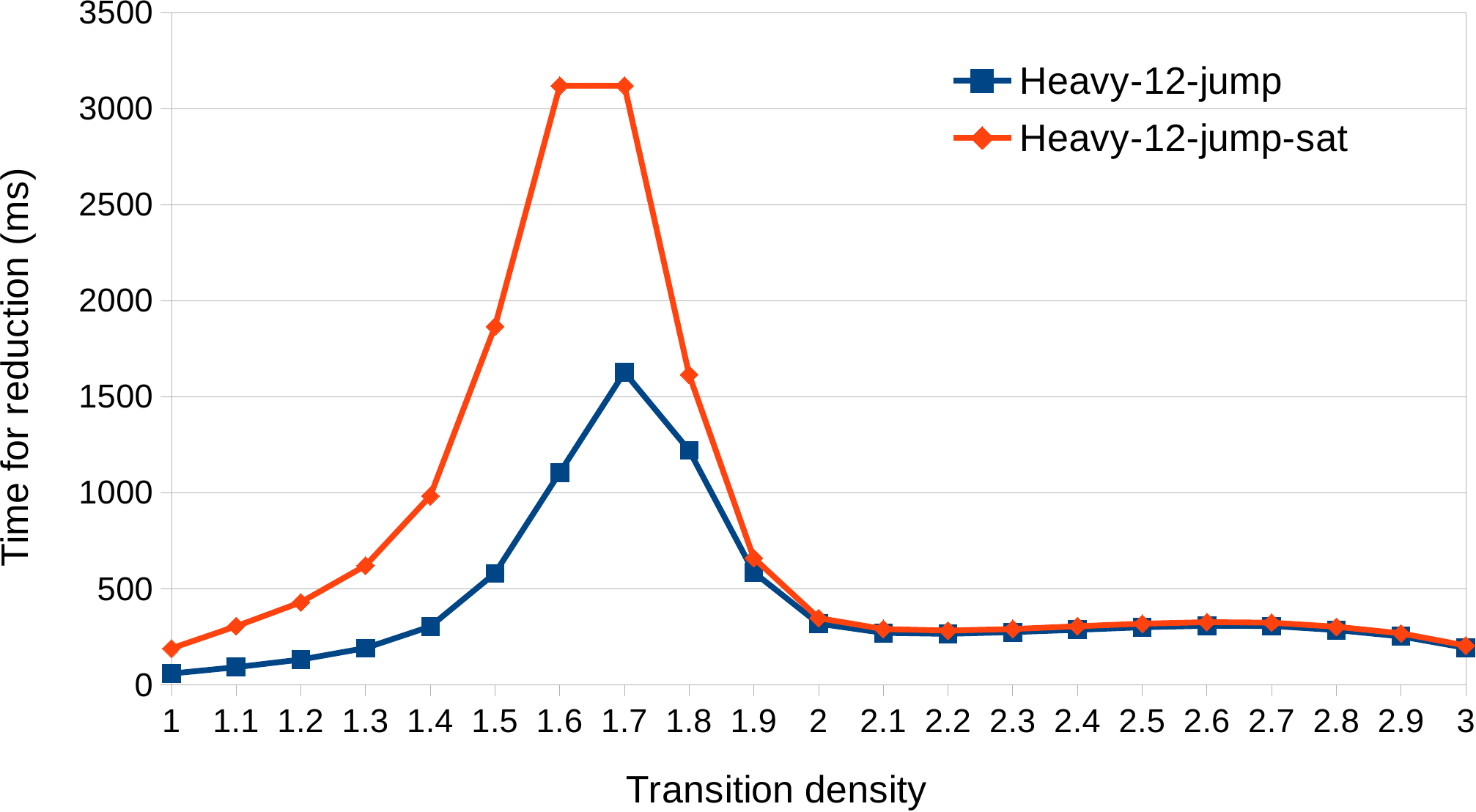}
\end{center}
\caption{We consider Tabakov-Vardi random B\"uchi automata (upper picture) and
Tabakov-Vardi random NFA (lower picture) with $n=100$, $|\Sigma|=2$,
${\it ad}=0.5$ and ${\it td}=1.0, \dots, 3.0$.
The x-axis is the transition density of the original automata while
the y-axis is the average time (in ms) to reduce the automata with
Heavy-12-jump and Heavy-12-jump-sat, respectively.
Every data point is the average of $1000$ automata.
Java 7 on Intel 2 Q8300, 2.50GHz.
}\label{fig:HJS_buchi_speed}
\end{figure}

%%% Local Variables:
%%% mode: latex
%%% TeX-master: "ROOT.tex"
%%% End:

\newpage
\section{Notes on the Implementation}\label{sec:implementation}

\subsection*{The tools}
Our tools Reduce and RABIT \cite{RABIT} implement the reduction algorithm
of Sections~\ref{sec:heavyandlight} and \ref{sec:extensions} and
the inclusion testing algorithm of Sec.~\ref{sec:inclusion}, respectively.
They are distributed in one package, written in Java (version $\ge 7$) 
and are licensed under the GPLv2.

The Reduce and RABIT tools currently support only the \texttt{.ba} format to describe
automata. This format is also supported by GOAL \cite{GOAL_survey_paper}.
However, the package contains a utility to convert NFA in \texttt{.ba} format
into the \texttt{.timbuk} format used by Libvata \cite{tool:libvata}.
The \texttt{.ba} format is very basic. First it gives the unique initial
state.
Then comes a list of labeled transitions (one per line) and
then the list of accepting states (one per line).
Example:
\begin{verbatim}
[1]
a,[1]->[2]
b,[2]->[1]
c,[1]->[3]
[2]
[3]
\end{verbatim}

The default reduction algorithm in Reduce is Heavy-k-jump. 
Other versions like Heavy-k, Heavy-k-jump-sat and a more aggressive version
Heavy-k-jump-sat2 can be invoked with options \texttt{-nojump},
\texttt{-sat} and \texttt{-sat2}, respectively.
By default it assumes that the input is an NBA. It switches to NFA with the
option \texttt{-finite}. The lookahead $k\ge 1$ is given as a parameter.
Example: \texttt{java -jar Reduce.jar example.ba 12 -sat} reduces the NBA 
example.ba with Heavy-12-jump-sat.

The tool RABIT tests inclusion between NBA (or NFA with option
\texttt{-finite}). Since it implements many optimizations, it should be
invoked with option \texttt{-fast} for best performance.
(The default is no optimization, which is very slow.)
The lookahead may be specified as a parameter, but by default RABIT
uses a heuristic that depends on the sizes and shapes of the input automata.
Example: \texttt{java -jar RABIT.jar A.ba B.ba -fast -jf} tests inclusion between
the NBA A.ba and B.ba. The additional option \texttt{-jf} has the effect that,
after the automata reduction, a new thread is created that runs in parallel to
the Ramsey-based antichain method. This thread alternatingly checks the GFI 
$\countingtranskbwsim$-jumping $k$-lookahead fair simulation 
and the segmented jumping $k$-lookahead fair simulation
from
Sec.~\ref{sec:jumpsim} for an ever increasing $k=1,2,3,\dots$. 
(Both threads stop as soon as one of them finds a solution.)

\subsection*{Algorithmic issues}
The most critical part of the code is the computation of the $k$-lookahead
simulations, since this takes the majority of the time on non-trivial cases.
It becomes computationally feasible by using several optimizations described
informally below. For details the reader is referred to \cite{RABIT}
(algorithms/Simulation.java and algorithms/ParallelSimulation.java).
\begin{itemize}
\item
We represent binary relations between states by boolean matrices,
i.e., $(p,q)$ is in the relation iff the matrix element $(p,q)$ is true.
By using the $\mu$-calculus characterization of lookahead simulations in
Sec.~\ref{sec:fixedpoint}, they can be computed by a
fixpoint iteration that converges to the lookahead simulation.
E.g., for direct simulation one starts with all matrix elements set to true
and refines downward. (It is more complex for delayed- and fair simulations.)
This takes place \emph{in situ}, i.e., there is only one copy of the matrix.
Thus changing an element in the matrix possibly affects tests and
changes of other matrix elements already in the same round of the iteration,
instead of the next round. This reduces the number of iterations
significantly. (Achieving this in situ effect might be a problem for certain
types of symbolic representations, e.g., BDDs.) 
\item
Consider one round of the fixpoint refinement for $k$-lookahead direct simulation. 
In every such refinement round one needs to check, for every matrix
element $(p,q)$ that is still true, whether it should remain true.
As explained in Sec.~\ref{sec:lookahead:simulation}, for every check of 
a pair of states $(p,q)$, Spoiler builds his 
attacks incrementally, i.e., he explores the tree of all possible attacks 
starting at $p$ by depth-first search up-to depth $k$.
For every partial Spoiler attack in this tree, Duplicator searches for a
possible defense. Whenever Duplicator can defend (from $q$) against a branch of
non-maximal depth, deeper exploration of this branch is omitted.
Thus, most of these checks of $(p,q)$ are resolved without exploring the full
tree of all attacks from $p$ up-to maximal depth.
The following item elaborates different ways how Duplicator can search for
a valid defense.
\item
Given an attack by Spoiler from some state $p$ of some depth $k' \le k$, 
Duplicator needs to check whether there is a defense from state $q$.
A basic algorithm would explore the tree of Duplicator's moves up-to depth
$k'$ (and stop once a defense is found). 
This basic version is implemented in \cite{RABIT}, but not normally used
(except in cases of very small lookahead), because other versions are often
more efficient.
The basic version is wasteful (for higher lookahead), because
many of Spoiler's attacks share common prefixes.
A more efficient variant (also implemented in \cite{RABIT})
views lookahead simulation as a special case of
multipebble simulation, as explained in Remark~\ref{rem:lookahead-pebble}.
When checking a pair of states $(p,q)$,
Duplicator maintains and propagates several pebbles (starting with just one
pebble on state $q$) that encode all her possible moves
against Spoiler's current attack from $p$ (up-to depth $k$).
Unlike in general multipebble simulation, this use of pebbles is local to the
current round of the game, since Duplicator needs to commit to just one pebble
after at most $k$ steps. 

This version needs to efficiently handle many sets
of pebbles, i.e., subsets of states of the automaton. In an automaton with 
$n$ states, the states are represented by integers in the set $\{0,\dots,n-1\}$, and 
sets of pebbles as subsets thereof. The only needed set operations are to 
add elements and to iterate through all elements of a set, but not to
explicitly check membership. Java generic sets are
not optimal here, due to their internal overhead 
(here the elements are just integers and typically $n \le 30000$). 
Our implementation uses a combination of integer arrays
(to describe a list of size $|S|$; for sets $S$ with $|S|^2 \le 4n$) 
and boolean arrays (of length $n$; otherwise) to
represent these sets. In the former case, adding a new element is
$\mathcal{O}(|S|)$ (since duplicates must be avoided), 
but iterating through the set is optimal. Thus it is used
only for sets that are small, relative to $n$.
In the latter case, adding a new element is $\mathcal{O}(1)$, but iterating
through the set takes $\mathcal{O}(n)$ steps (instead of $\mathcal{O}(|S|)$ steps), 
which is inefficient if $|S| \ll n$. So this representation is used only for larger
sets. Since new sets are derived from previous sets by the propagation of the 
pebbles in the automaton, it is possible to predict (roughly, but reasonably
accurately) whether a new set will be small or large in the sense described above. 
\item
Even before the main fixpoint refinement loop starts, one can do a 
quick pre-processing step that sets many matrix elements to false.
If there exists a short word $w$ that can be read from state $p$
but not from state $q$ then it is trivial that $q$ cannot simulate $p$.
Our implementation checks this condition for all words up-to a short length,
typically 4-8 (depending on the size of the alphabet). 
The parallel version (see below) does this operation in a separate
thread with words $w$ of increasing length (up-to a certain maximum
depending on memory requirements).
\item
Finally, the iterations over the boolean matrix can be parallelized. 
The matrix is split into many small parts, and each part is handled by 
a worker-thread from the pool of available worker threads.
This uses Java 7 fork-join concurrency and is invoked by the option
\texttt{-par}.
While each worker-thread writes only to a small part of the matrix, it
still needs read access to the whole matrix. Thus the computation cannot
easily be distributed, and shared memory access is still a bottleneck.
Due to the monotonicity of the fixpoint refinement algorithm, missed
updates are not a problem in the parallel version.
The parallel version can be several times faster than the single-threaded one,
depending on the hardware and on the input instance.
However, all our benchmarks were done with the single-threaded version.
\end{itemize}

\noindent
An optimized algorithm to compute ordinary simulations (i.e., with lookahead
$k=1$) was described in \cite{HHK:FOCS95,HS:MEMICS2009}. 
We explain its main idea for the case of forward direct simulation.
When some matrix elements are changed (from true to false) in an iteration
of the fixpoint refinement, then only some elements may change in the next
round, while other elements cannot possibly change (yet) because they
are not directly affected by the recently changed elements.
By keeping track of these dependencies, one can avoid redundant tests of 
elements that will not change (or at least not yet in this iteration).
Our implementation uses this technique only in the case of lookahead $k=1$,
but not for higher lookaheads, because the dependency information becomes more complex
and keeping track of it is not cost-effective.
Worse yet, at higher lookaheads the computation time is not evenly distributed
over all matrix elements (i.e., pairs of states). Instead the distribution is
highly skewed towards a minority of hard cases. Typically, these are
pairs of states where lookahead simulation does not hold, but where
non-simulation is only established very late, i.e., after many iterations.
In many earlier iterations Duplicator still wins easily (in small time)
and the element stays true. Avoiding these redundant tests would yield only a small
benefit, but still incur the required overhead.
The element is only set to false in a late iteration where Duplicator finally
admits defeat after having vainly searched through the entire tree of all
possible candidates for a defense (against a certain Spoiler attack) up-to the maximal allowed lookahead.
This last test is thus a costly operation that cannot be avoided.

%%% Local Variables:
%%% mode: latex
%%% TeX-master: "ROOT.tex"
%%% End:

\section{Conclusion and Future Work}\label{sec:conclusion}

Our automata reduction technique Heavy-k, and its extensions
Heavy-k-jump, Heavy-k-jump-sat and Heavy-k-jump-sat2,
perform significantly better than previous methods implemented in GOAL 
\cite{GOAL_survey_paper}.
In particular, they can be applied to solve PSPACE-complete
automata problems like language inclusion for much larger instances than
before.
Our tools Reduce and RABIT \cite{RABIT} implement these
algorithms in Java (version $\ge 7$) 
and are licensed under the GPLv2.

Future work includes more efficient algorithms for computing lookahead
simulations by using symbolic representations of the relations/automata by BDDs or
related formalisms.
It would also be very useful to have a practically feasible way to compute
under-approximations of language inclusion for NBA/NFA that are
\emph{orthogonal} to multipebble/lookahead-simulations. Then one could
obtain an even better approximation by considering the transitive closure of
the union of all approximations.

Quotienting techniques for alternating automata over infinite words are well-understood~
\cite{wilke:fritz:simulations:05,FritzWilke:DLT:2006,multipebble:concur10}
(cf.~also \cite{AbdullaCHV09,AbdullaChenHolikVojnar:TCS:2014}).
An interesting research direction is to extend our transition pruning and saturation methods
from nondeterministic to alternating finite, B\"uchi, generalized B\"uchi, and parity automata.

Finally, transition pruning techniques have more recently been applied to reduce automata on
finite trees \cite{AHM:TACAS2016}
and infinite trees \cite{Almeida:PhD2017}.

%%% Local Variables:
%%% mode: latex
%%% TeX-master: "ROOT.tex"
%%% End:

\subsection*{Acknowledgments.}
We acknowledge the anonymous reviewers for their extensive comments on earlier drafts of this article.

% \newpage
\bibliographystyle{abbrv}
\bibliography{bibliography}

\end{document}